\documentclass[11pt, DIV=15]{scrarticle}
\usepackage[utf8]{luainputenc}
\usepackage[T1]{fontenc}
\usepackage[UKenglish]{babel}
\usepackage{amsmath, amsfonts, amsthm, amssymb, mathtools, stmaryrd}

\usepackage[hyphens]{url}
\usepackage{breakurl}
\usepackage{tikz}
\usepackage{tikz-cd}
\usepackage{tkz-graph}
\usetikzlibrary{arrows,backgrounds,positioning,fit,cd,babel}

\usepackage[pdfusetitle,hidelinks]{hyperref}
\usepackage[capitalise,noabbrev]{cleveref}

\usepackage{csquotes}
\usepackage[osf,sc]{mathpazo}
\addtokomafont{disposition}{\normalfont\bfseries} 
\usepackage{relsize}
\usepackage{microtype}

\usepackage{thm-restate}
\usepackage{subcaption}
\usepackage{enumitem}
\usepackage{comment}
\usepackage{orcidlink}

\usepackage{todonotes}

\linespread{1.06}
\recalctypearea

\usepackage[style=numeric-comp,maxbibnames=99,backend=biber,bibencoding=utf8,sorting=nty]{biblatex} 

\usepackage{orcidlink}
\usepackage{booktabs}

\addbibresource{literature.bib}
\AtEveryBibitem{
	\clearfield{urldate}
	\clearfield{urlyear}	
	\clearfield{urlmonth}
	\clearfield{issn}	
	\clearfield{month}
	\clearfield{day}		
	\clearfield{language}
	\clearfield{Language}
	\clearfield{langid}
	\clearfield{note}
}
\DeclareSourcemap{
	\maps[datatype=bibtex]{
		\map[overwrite]{
			\step[fieldsource=doi, final]
			\step[fieldset=url, null]
			\step[fieldset=eprint, null]
			\step[fieldset=isbn, null]
		}  
	}
}

\theoremstyle{definition}
\newtheorem{definition}{Definition}[section]

\newtheorem{example}[definition]{Example}
\newtheorem{remark}[definition]{Remark}

\theoremstyle{plain}
\newtheorem{lemma}[definition]{Lemma}
\newtheorem{corollary}[definition]{Corollary}
\newtheorem{theorem}[definition]{Theorem}
\newtheorem{conjecture}[definition]{Conjecture}

\newtheorem{proposition}[definition]{Proposition}
\newtheorem{observation}[definition]{Observation}
\newtheorem{fact}[definition]{Fact}
\Crefname{fact}{Fact}{Facts}
\newtheorem{proviso}[definition]{Proviso}

\theoremstyle{remark}
\newtheorem{claim}{Claim}[definition]

\Crefname{claim}{Claim}{Claims}
\newenvironment{claimproof}[1][Proof of Claim]{\begin{proof}[#1] }{ \end{proof}}
\setlist[enumerate, 1]{font=\upshape, noitemsep, nolistsep}
\setlist[enumerate, 2]{font=\upshape, noitemsep, nolistsep}
\setlist[itemize, 1]{noitemsep, nolistsep,font=\upshape}
\setlist[itemize, 2]{noitemsep, nolistsep,font=\upshape}

\DeclarePairedDelimiter{\norm}{\lVert}{\rVert}
\DeclarePairedDelimiter{\abs}{\lvert}{\rvert}

\newcommand{\Cc}{{\cal C}}

\newcommand{\Ii}{{\cal I}}

\newcommand{\Xx}{{\cal X}}

\DeclareMathOperator{\tw}{tw}
\DeclareMathOperator{\hdtw}{hdtw}
\DeclareMathOperator{\mn}{mn}
\DeclareMathOperator{\sub}{sub}
\DeclareMathOperator{\emb}{emb}
\DeclareMathOperator{\perm}{perm}

\DeclareMathOperator{\vc}{vc}

\DeclareMathOperator{\cl}{cl}
\DeclareMathOperator{\id}{id}
\DeclareMathOperator{\cc}{lvc}

\tikzset{
	vertex/.style={draw,circle,fill=gray},
	every node/.style={anchor=center},
	lbl/.style={color=lightgray}
}

\title{Symmetric Algebraic Circuits and \\ Homomorphism Polynomials\footnote{This work was presented at the \emph{17th Innovations in Theoretical Computer Science Conference} (\textsmaller{ITCS} 2026) \cite{conference_version}.}}
\author{Anuj Dawar \orcidlink{0000-0003-4014-8248} \and Benedikt Pago \orcidlink{0000-0001-6377-1230} \and Tim Seppelt \orcidlink{0000-0002-6447-0568}}


\renewcommand{\phi}{\varphi}
\renewcommand{\epsilon}{\varepsilon}

\newcommand{\Aut}{\mathbf{Aut}}
\newcommand{\Sym}{\mathbf{Sym}}
\newcommand{\Alt}{\mathbf{Alt}}
\newcommand{\Stab}{\mathbf{Stab}}
\newcommand{\StabP}{\Stab^{\bullet}}
\newcommand{\Orb}{\mathbf{Orb}}

\DeclareMathOperator{\child}{children}

\newcommand{\bbN}{\mathbb{N}}
\newcommand{\bbQ}{\mathbb{Q}}
\newcommand{\bbF}{\mathbb{F}}

\definecolor{lightgray}{rgb}{0.60, 0.60, 0.61} 
\definecolor{gray}{rgb}{0.31, 0.31, 0.33} 
\definecolor{yellow}{rgb}{0.99, 0.78, 0.07}

\usepackage{xargs}   
\newcommandx{\tim}[2][1=]{\todo[author=Tim,color=yellow,#1]{#2}}
\newcommandx{\benedikt}[2][1=]{\todo[author=Benedikt,color=orange,#1]{#2}}

\let\sup\relax
\DeclareMathOperator{\sup}{sup}

\DeclareMathOperator{\imm}{imm}
\DeclareMathOperator{\sgn}{sgn}
\DeclareMathOperator{\Match}{\# Match}
\DeclareMathOperator{\maxOrb}{maxOrb}
\DeclareMathOperator{\maxSup}{maxSup}

\newcommand{\VP}{\mathsf{VP}}
\newcommand{\VNP}{\mathsf{VNP}}
\newcommand{\VFPT}{\mathsf{VFPT}}
\newcommand{\VW}{\mathsf{VW}[1]}

\newcommand{\FPT}{\mathsf{FPT}}
\newcommand{\sharpW}{\#\mathsf{W}[1]}

\begin{document}
	\maketitle

\begin{abstract}
	The central open question of algebraic complexity is whether $\VP \neq \VNP$, which is saying that the permanent cannot be represented by families of polynomial-size algebraic circuits. 
	For symmetric algebraic circuits, this has been confirmed by Dawar and Wilsenach (2020, 2025), who showed exponential lower bounds on the size of symmetric circuits for the permanent. In this work, we set out to develop a more general symmetric algebraic complexity theory. Our main result is that a family of symmetric polynomials admits small symmetric circuits if and only if they can be written as a linear combination of homomorphism counting polynomials of graphs of bounded treewidth. We also establish a relationship between the symmetric complexity of subgraph counting polynomials and the vertex cover number of the pattern graph. As a concrete example, we examine the symmetric complexity of immanant families (a generalisation of the determinant and permanent) and show that a known conditional dichotomy due to Curticapean (2021) holds unconditionally in the symmetric setting. 
\end{abstract}

\section{Introduction}	
The study of \emph{algebraic circuit complexity} (also called \emph{arithmetic circuit complexity}) aims to understand the power of circuits to succinctly express (or compute) polynomials.  In short, we are interested in establishing how many operations of addition and multiplication are needed in a circuit that computes a polynomial $p \in \bbF[\Xx]$, for some field $\bbF$ and set of variables $\Xx$.  We are usually interested in how this complexity grows with $n$ for a family of polynomials $(p_n)_{n \in \bbN}$.  The central conjecture in the field (known as $\VP \neq \VNP$) is that there are no polynomial-size circuits for the \emph{permanent} in the way that there are for the \emph{determinant}.

Both the determinant and permanent are examples of polynomials over matrices.  That is to say that we can treat the set of variables $\Xx$ as the entries $x_{ij}$ of a square matrix.  Moreover, the permanent is \emph{symmetric} in the sense that permuting the rows and columns of the matrix does not change the polynomial.  The determinant has a smaller group of symmetries, being invariant under a permutation applied \emph{simultaneously} to the rows and columns.  This motivates the study of \emph{symmetric algebraic circuits} introduced in~\cite{dawar_symmetric_2020,dawar_symmetric_2025}, which are circuits where symmetries of the polynomial computed are reflected in the automorphisms of the circuits.  
In \cite{dawar_symmetric_2020,dawar_symmetric_2025}, an exponential gap between the symmetric circuit complexity of the determinant and the permanent was established:
There are polynomial-size symmetric circuits (in the sense of
invariance under simultaneous row and column permutations) that
compute the determinant but any family of symmetric circuits computing
the permanent is of exponential size.  It is the latter, exponential
lower bound, that is the main technical achievement of that
paper. Notably, this result also opens up a new approach to the
question of separating $\VNP$ from $\VP$ and challenges us to find the minimum group of symmetries for a polynomial for which we can prove unconditional lower bounds.
Thus, the study of symmetric circuits is motivated by the possibility of proving strong lower bounds which are presently out of reach without the assumption of symmetry.

Among the restrictions of circuit models that have been extensively studied in the literature, symmetry is arguably one of the newest and most interesting ones: Symmetric circuits not only admit provable lower bounds but are still surprisingly powerful, in that they can efficiently compute the determinant. 
By contrast, for example the restricted model of \emph{multilinear
  formulas} requires super-polynomial size for both determinant and
permanent \cite{Raz04}, and the same is true for \emph{bounded-depth
  multilinear circuits} \cite{RazY08}. Also for the well-studied class
of \emph{monotone} circuits, we have exponential lower bounds for the
permanent \cite{JerrumS82}.  This does not apply to the determiinant
as it is not a monotone function, but with a suitably adapted
definition of monotonicity the same methods also yield exponential
lower bounds for the determinant~\cite{KayalS14}.

In the present paper, we approach the study of symmetric algebraic circuits more systematically than has been done before. We develop a framework that links the expressive power of such circuits to the theory of graph \emph{homomorphism counting}. Using this, we are able to completely characterise -- within a large and interesting class of polynomials -- which polynomials admit efficient symmetric circuits and which ones do not. This characterisation is the main result of the paper.

To be more precise, the polynomials we consider are over (not necessarily square) matrices of variables, have rational coefficients and are invariant under arbitrary permutations of the rows and columns.  These are families $(p_{n,m})_{n,m \in \bbN}$ of polynomials where $p_{n,m} \in \bbQ[\Xx_{n,m}]$, and $\Xx_{n,m}$ is the set of variables $\{x_{ij} \mid i \in [n], j \in [m]\}$.  The invariance condition we require is that for any pair of permutations $\pi \in \Sym_n$, $\sigma \in \Sym_m$, if $p_{n,m}^{(\pi,\sigma)}$ denotes the polynomial obtained from $p_{n,m}$ by replacing every occurrence of $x_{ij}$ with $x_{\pi(i)\sigma(j)}$, then $p_{n,m} = p_{n,m}^{(\pi,\sigma)}$. 
Examples of such families are the permanent, for $m = n$, or more generally, for $m \geq n$, the \emph{rectangular} permanent. The latter has been used recently in a lower bound proof for bounded-depth algebraic circuits \cite{forbes2024low}.

The reason for focussing on this class of \emph{matrix-symmetric} polynomials is that they have a very natural semantic interpretation: Every $\Sym_n \times \Sym_m$-symmetric $p_{n,m} \in \bbQ[\Xx_{n,m}]$ describes a function from weighted bipartite undirected graphs $G$ with $(n,m)$ vertices (i.e.\ $G$ has a fixed bipartition $A \uplus B$ with $|A| = n$ and $|B| = m$) to $\bbQ$. The resulting value $p_{n,m}(G) \in \bbQ$ is the evaluation of $p_{n,m}$ in the bi-adjacency matrix of $G$, whose rows are indexed with the vertices in $A$, and columns with the vertices in $B$. The symmetry of $p_{n,m}$ ensures that the value $p_{n,m}(G)$ does not depend on the row and column ordering of the bi-adjacency matrix, and hence the $\bbQ$-valued function expressed by $p_{n,m}$ is in fact an isomorphism-invariant graph parameter. 

It follows from a classical result of \textcite{lovasz_operations_1967} that the isomorphism-invariant $\mathbb{Q}$-valued functions on $n$-vertex graphs are precisely the linear combinations of homomorphism counting functions $\sum \alpha_F \hom(F, -)$ for finitely many graphs $F$ and rational coefficients $\alpha_F$. 
In other words,
the isomorphism-invariant $\mathbb{Q}$-valued functions on graphs are precisely the \emph{non-uniform graph motif parameters}, i.e.\ sequences $(\sum \alpha_{F, n} \hom(F, -))_{n \in \mathbb{N}}$ of linear combinations of homomorphism counts, one for every input graph size $n$.
In an influential paper \cite{curticapean_homomorphisms_2017}, Curticapean, Dell, and Marx 
studied functions which can be uniformly expressed in this way.
Their \emph{uniform graph motif parameters} are the functions of the form $\sum \alpha_F \hom(F, -)$ whose domain are all graphs regardless of their size.
They show that the fixed-parameter counting complexity of such uniform graph motif parameters is governed by the treewidth of the patterns $F$ with non-zero coefficients~$\alpha_F$.

\paragraph{Characterising matrix-symmetric polynomials admitting small symmetric circuits.}
We broaden the scope of \cite{curticapean_homomorphisms_2017} twofold by precisely characterising the symmetric circuit complexity of all, i.e.\ not necessarily uniform, graph motif parameters.
To that end, we observe that the matrix-symmetric polynomials $p \in \mathbb{Q}[\mathcal{X}_{n,m}]$ are precisely those that can be written as linear combinations of homomorphism polynomials, cf.\ \cref{lem:gnm-sym}:
For a bipartite multigraph   $F$  with bipartition $A \uplus B$, the \emph{homomorphism polynomial} $\hom_{F,n,m} \in \bbQ[\Xx_{n,m}]$ is defined as
\[
\hom_{F,n,m} \coloneqq \sum_{h \colon A \uplus B \to [n] \uplus [m]} \prod_{ab \in E(F)} x_{h(a)h(b)}.
\]
The name is justified by the fact that $\hom_{F,n,m}$ evaluates to the number of homomorphisms from $F$ to a $(n,m)$-vertex bipartite graph $G$ when substituting the bi-adjacency matrix  of $G$ for the matrix of variables $\Xx_{n,m}$.

Our main result asserts that the symmetric complexity of a family $(p_{n,m})$ of matrix-symmetric polynomials is completely governed by the treewidth of the graphs whose homomorphism polynomials arise in linear expansions\footnote{In stark contrast to uniform graph motif parameters studied by \textcite{curticapean_homomorphisms_2017}, non-uniform graph motif parameters generally do not admit a unique expansion as linear combinations of homomorphism counts.} of $(p_{n,m})$.
By symmetric complexity we mean the smallest possible \emph{orbit size} of a family $(C_{n,m})_{n,m \in \bbN}$ of symmetric circuits representing $(p_{n,m})_{n,m \in \bbN}$. This is the number of gates of $C_{n,m}$ in
the largest orbit of the action of $\Sym_n \times \Sym_m$ on $C_{n,m}$ (see \cref{sec:shortPreliminaries}). 
Note that lower bounds on orbit size imply lower bounds on the total number of gates in a circuit but not vice versa. 
We also show that orbit size, not total size, is the correct measure that allows us to build our theory, cf.\ \cref{thm:hardPolynomialsInT}.

To state our main result, let $\mathfrak{T}_{n,m}^k$ denote, for every $k,n,m \in \bbN$, the collection of all polynomials that can be obtained as linear combinations of homomorphism polynomials $\hom_{F,n,m}$ for graphs $F$ of treewidth less than $k$.
\begin{theorem}[restate=thmMain,label=thm:main1,name=]
	For every family of polynomials $p_{n,m} \in \mathbb{Q}[\mathcal{X}_{n,m}]$, the following are equivalent:
	\begin{enumerate}
		\item there exists a constant $k \in \mathbb{N}$ such that $p_{n,m} \in \mathfrak{T}_{n,m}^k$ for all $n,m \in \mathbb{N}$,\label{it:main1}
		\item the $p_{n,m}$ admit $\Sym_n \times \Sym_m$-symmetric circuits of orbit size polynomial in $n+m$.\label{it:main2}
	\end{enumerate}
\end{theorem}

It is instructive to consider again the permanent $\perm_n = \sum_{\pi \in \Sym_n} \prod_{i \in [n]} x_{i\pi(i)}$. 
This is not a homomorphism polynomial
and not even a uniform graph motif parameter
 but it can be seen as counting, given a bipartite graph $G$ instantiated as a bi-adjacency matrix $\Xx_{n,n}$, the number of subgraphs isomorphic to an $n$-matching. 
By M\"obius inversion, cf.\ \cref{thm:sub-hom}, subgraph counts can be written as a linear combination of homomorphism counts.
One consequence of \cref{thm:main1} together with the exponential lower bound on symmetric circuits for the permanent \cite{dawar_symmetric_2025} is that $\perm_n$ cannot be expressed as a linear combination of homomorphism polynomials of bounded treewidth.  
Alternatively, we can understand one direction of \cref{thm:main1} as a vast (in fact, as vast as possible) generalisation of the lower bound on the permanent. 
Indeed, the theorem captures \emph{all} super-polynomial lower bounds on the orbit size of symmetric circuits for matrix-symmetric polynomials.

\paragraph{Lower bounds via counting width.}
\cref{thm:main1} stipulates that, to derive a super-polynomial lower bound for some family $p_{n,m}$, one must show that there is no constant $k \in \bbN$ such that $p_{n,m} \in \mathfrak{T}^k_{n,m}$ for all $n,m \in \bbN$. 
This, however, seems to be difficult with the presently available techniques. 
Therefore, our next goal is to give more usable characterisations of $\mathfrak{T}^k_{n,m}$
via a descriptive complexity measure for graph parameters, namely in terms of counting width. 
The \emph{counting width} \cite{dawar_definability_2017,dawar_notions_2025} of a graph parameter $p$ is defined as the smallest $k \in \bbN$ such that whenever two graphs $G$, $H$ are indistinguishable by the $k$-dimensional Weisfeiler-Leman algorithm, then $p(G) = p(H)$, see also \cref{def:counting-width}.  
The Weisfeiler-Leman indistinguishability on graphs is a standard relaxation of graph
isomorphism (see~\cite{CFI,kiefer2020WL}), and coincides with equivalence in the $(k+1)$-variable fragment of first-order logic with counting quantifiers.

Counting width is well-studied in finite model theory and affords lower bounds, usually via the so-called Cai-F\"urer-Immerman construction \cite{CFI}.
For Boolean
circuits, a tight relationship
between the counting width and orbit size of symmetric
circuits is known~\cite{anderson_symmetric_2017}.

The aforementioned exponential lower bound for symmetric circuits for the permanent was in fact established via a counting width lower bound for the number of perfect matchings. In general, unbounded counting width implies super-polynomial symmetric circuit lower bounds (via techniques for example from \cite{anderson_symmetric_2017,  dawar_symmetric_2025}), so by Theorem \ref{thm:main1}, polynomials with unbounded counting width cannot be in $\mathfrak{T}_{n,m}^k$, for any fixed $k$.
The question is, does the converse hold: Is bounded counting width a sufficient criterion for the existence of orbit-small symmetric circuits?
We answer this question positively, at least in certain restricted, but interesting cases, cf.\ \cref{fig:overview}:
\begin{enumerate}
	\item\label{it:hom} when the $p_{n,m}$ are single homomorphism polynomials (rather than linear combinations of such), cf.\ \cref{thm:dichotomy-chromatic},
	\item\label{it:sub} when the $p_{n,m}$ are single subgraph polynomials for graphs of sublinear size, i.e.\ counting  the number of subgraphs isomorphic to the pattern, cf.\ \cref{thm:sublinear-intro}, and
	\item\label{it:lincomb} when the $p_{n,m}$ are linear combinations of homomorphism polynomials of sublinear size, cf. \ \cref{thm:lincomb-intro}.
\end{enumerate}

\begin{figure}
	\centering
	\begin{small}
		\begin{tikzpicture}
			\node [draw=black, inner sep=5, anchor=west, text width=6cm] (a) {$(p_{n,m})$ admits symmetric circuits of polynomial size};
			
			\node [draw=black, inner sep=5, anchor=west, text width=6cm, fill=gray!10] at (0,-2) (b1) {$(p_{n,m})$ admits symmetric circuits of polynomial orbit size};
			
			\node [draw=black, inner sep=5, anchor=west, text width=6cm, fill=gray!10] at (9,-2) (b2) {the $p_{n,m}$ are  linear combinations of bounded-treewidth homomorphism polynomials, i.e.\ $p_{n,m} \in \mathfrak{T}^k_{n,m}$.};
			
			\draw [<->] (b1) edge node [midway, above] {\cref{thm:main1}} (b2);
			\node [draw=black, inner sep=5, anchor=west, text width=6cm] at (0,-4) (c) {$(p_{n,m})$ has bounded counting width};
			
			\draw [->] (a) -- (b1);
			\draw [->] (b1) edge node [midway, right] {\cite{anderson_symmetric_2017}, cf.\ \cref{thm:counting-width-explicit}} (c);
			
			\draw [dashed] (c.west) -- (-.5, -4);
			\draw [dashed] (-.5,-4) edge node [midway, left, rotate=90, anchor=south] {\cref{thm:dichotomy-chromatic,thm:lincomb-intro,thm:sublinear-intro}} (-.5, 0);
			\draw [dashed, ->] (-.5, 0) -- (a.west);
			
			\draw [dashed,->] (c.east) -| (b2.south) node[midway, above left] {\Cref{thm:lincomb-intro,thm:multilinear}};
			
		\end{tikzpicture}
	\end{small}
	\caption{Implications between properties of families of matrix-symmetric polynomials $(p_{n,m})$. The dashed implications hold under additional assumptions.}
	\label{fig:overview}
\end{figure}

Moreover, in the case of \cref{it:hom,it:sub} and when \cref{it:lincomb} is restricted to linear combinations of polynomially many homomorphism counts, 
bounded counting width characterises the families of matrix-symmetric polynomials admitting symmetric circuits of polynomial size (rather than orbit size).
For \cref{it:hom,it:sub}, we give a combinatorial characterisation of the families of patterns whose homomorphism/subgraph polynomials have bounded counting width in terms of treewidth and vertex cover number, respectively.

Finally, to give another concrete example for the merits of the
symmetric circuit framework, in~\cref{sec:immanants} we show that a
complexity dichotomy for the \emph{immanant} families due to~\cite{curticapean2021full}, whose hard cases are conditional on certain complexity-theoretic assumptions, holds unconditionally for symmetric circuits. The immanants interpolate between the permanent and determinant polynomials, which are in fact the extreme cases. 
Thus, the symmetric dichotomy for the immanants completes the picture begun in~\cite{dawar_symmetric_2025}.

\paragraph{Related work.}
\textbf{Symmetric circuits} are an emerging area of research in recent years. 
We have already mentioned the lower bound for symmetric circuits for the permanent from \cite{dawar_symmetric_2025}. 
In a similar vein, it has been shown that the determinant admits no small $\Alt_n \times \Alt_n$-symmetric circuits, even though it does have polynomial size $\Sym_n$-symmetric ones \cite{dawar2021lower}.
Other known lower bounds are for example for the symmetric formula complexity of computing the product of permutation matrices \cite{HeR23}, and a lower bound for certain types of symmetric AC$^0$ circuits for the parity function \cite{Rossman19}. More recent work establishes lower bounds for symmetric bounded-depth circuits with modular counting gates \cite{KawalekW25}. 
In \cite{anderson_symmetric_2017}, a precise characterisation of the power of uniform polynomial-size symmetric Boolean circuits with threshold gates was given: Such circuit families are equally powerful as fixed-point logic with counting, a formalism that is also related to the Weisfeiler-Leman algorithm. In particular, the graph properties computable with these circuits have bounded counting width, too. Similarly, in \cite{dawar_rank_logic}, rank logic is characterised via a class of symmetric circuits with more involved types of gates. 

Another direction that has been explored are symmetric algebraic circuits in the context of algebraic proof complexity, more specifically in the circuit-based \emph{Ideal Proof System} \cite{symIPS}. We hope that this direction will benefit from the framework we develop here, possibly leading to new lower bounds in proof complexity.

Another aspect of our work are the \textbf{homomorphism polynomials}. Interestingly, their complexity has been studied before with respect to a different restricted model, namely \emph{monotone} circuits. In \cite{KomarathPR23, PrateekRadu25} it is shown that the monotone circuit complexity of homomorphism polynomials is controlled by the treewidth of the pattern graphs in a similar way as we show this for symmetric circuits (and moreover, the circuit depth relates to the depth of the tree decompositions). It should be mentioned, though, that our setting is more general in two ways: We consider \emph{linear combinations} of homomorphism polynomials, rather than single ones, and our set-up is non-uniform, while in \cite{KomarathPR23, PrateekRadu25}, the pattern graph is taken to be the same fixed graph for all sizes of host graphs. 
Certain types of homomorphism polynomials (defined differently from the ones we study) have also been shown to be $\VP$-complete \cite{durand2014homomorphism}.

Our results also compare nicely to what is known on the \emph{computational} complexity of \textbf{homomorphism} and  \textbf{subgraph counting}. 
As already mentioned, it was shown in \cite{curticapean_homomorphisms_2017} that the computational complexity of graph parameters expressible as linear combinations of homomorphism counts (so-called \emph{graph motif parameters}) depends only on the treewidth of the pattern graphs. The problem is in $\mathsf{FPT}$ if the treewidth is bounded, and $\mathsf{\# W}[1]$-hard, otherwise. 
Our Theorem \ref{thm:main1} yields the same tractability criterion for the symmetric circuit complexity of the polynomials expressing these parameters but our proof techniques are very different. 
Moreover, as already mentioned, \cite{curticapean_homomorphisms_2017} concerns only uniform graph motif parameters, whereas we also cover the non-uniform case where we are counting different patterns for each host graph size.

The relationship between the complexity of subgraph counting and the \emph{vertex cover number} of the pattern graphs that we obtain in \cref{thm:sublinear} also appears in \cite{curticapean_homomorphisms_2017}. In \cite[Theorem 1.3]{neuen_homomorphism-distinguishing_2024}, a tight connection between the counting width of subgraph counts (for a fixed constant-size pattern) and the vertex cover number of the pattern is established. 
Our  \cref{thm:sublinear} generalises this result to the case of sublinearly growing patterns.

\paragraph*{Acknowledgments.}We acknowledge fruitful discussions with Radu Curticapean, Filip Kučerák, Deepanshu Kush, and Benjamin Rossman.

The first and second author were funded by UK Research and Innovation (UKRI) under
the UK government’s Horizon Europe funding guarantee: grant number EP/X028259/1.
The third author was supported by the European Union (CountHom, 101077083). Views and opinions expressed are however those of the author(s) only and do not necessarily reflect those of the European Union or the European Research Council Executive Agency. Neither the European Union nor the granting
authority can be held responsible for them.

\section{Preliminaries}
\label{sec:shortPreliminaries}

We write $\mathbb{N} = \{0,1,\dots\}$ and $[n] = \{1, \dots, n\}$ for $n \geq 1$.
For a tuple $\boldsymbol{x} \in X^k$ and $i \in [k]$, $x \in X$, write $\boldsymbol{x}[i/x] = x_1 \dots x_{i-1} x x_{i+1} \dots x_k \in X^k$ and $\boldsymbol{x}[i/] = x_1 \dots x_{i-1} x_{i+1} \dots x_k \in X^{k-1}$.
For a set $A$,
write $\Pi(A)$ for the set of partitions of $A$.
For a partition $\pi \in \Pi(A)$,
write $A/\pi$ for its set of parts.
Abusing notation, we also write $\pi$ for the canonical map $A \to A/\pi$.

\subsection{Permutation groups and supports}
\label{sec:supports}
Let $\Gamma$ be a group acting on a set $X$.  
For $S \subseteq X$, let $\Stab_\Gamma(S) \coloneqq \{ \pi \in \Gamma \mid \pi(S) = S\}$ be the \emph{stabiliser} of $S$ and let $\StabP_\Gamma(S) \coloneqq \{ \pi \in \Gamma \mid \pi(s) = s \text{ for every } s\in S\}$ be the \emph{pointwise stabiliser} of $S$.
The \emph{orbit} of $S$ is denoted $\Orb_{\Gamma}(S) = \{ \pi(S) \mid \pi \in \Gamma  \}$. It holds $|\Orb_\Gamma(S)| = |\Gamma| / |\Stab_\Gamma(S)| = [\Gamma : \Stab_\Gamma(S)]$ by the Orbit-Stabiliser Theorem.
We drop the subscript $\Gamma$ if the group is clear from the context.

Let $\Delta \leq \Gamma$ be a subgroup. A set $S \subseteq X$
is a \emph{support} of $\Delta$ if $\StabP_\Gamma(S) \leq
\Delta$.  In what follows, we are mainly interested in the
case where $\Gamma$ is the group $\Sym_n$ acting on $[n]$ in
the natural way, or it is the group $\Sym_I \times \Sym_J$
acting on $I \uplus J$ where $I = [n], J = [m]$ denote the bipartition of a bipartite graph.  For
simplicity, we just refer to the group $\Gamma$ when we mean
its action on the relevant set.  In these special cases it is known that there always exists a unique minimal support.
\begin{lemma}
	\label{lem:minSupportExists}
	Every subgroup $\Delta \leq \Sym_n$ that has a support of size $< n/2$ has a unique minimal support  $\sup(\Delta)$.
	Every subgroup $\Delta \leq \Sym_I \times \Sym_J$ that has a support $S$ with $|S \cap I| < |I|/2$ and $|S \cap J| < |J|/2$ has a unique minimal support $\sup(\Delta)$.
	In both cases $\StabP(\sup(\Delta)) \leq \Delta \leq \Stab(\sup(\Delta))$.
\end{lemma}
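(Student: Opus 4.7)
The plan is to establish the following closure property of supports: if $S, T \subseteq [n]$ are supports of $\Delta \leq \Sym_n$ with $|S|+|T| < n$, then $S \cap T$ is also a support of $\Delta$. Granting this, uniqueness of the minimal support is immediate, because the intersection of all supports of size strictly less than $n/2$ is then itself a support of size $<n/2$ and contained in every other such support, so it deserves the name $\sup(\Delta)$. The bipartite case reduces to the $\Sym_n$-argument applied separately inside $I$ and inside $J$.

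To prove the closure property, I would take $\pi \in \StabP(S \cap T)$, write $\pi$ as a product of transpositions $(i\,j)$ with $\{i,j\} \not\subseteq S \cap T$, and show each such transposition is in $\Delta$ by a short case analysis. If $\{i,j\} \cap S = \emptyset$ then $(i\,j) \in \StabP(S) \leq \Delta$, and symmetrically if $\{i,j\} \cap T = \emptyset$. The only non-trivial situation is when, say, $i \in S \setminus T$ and $j \in T \setminus S$. Here I would exploit the size hypothesis to pick a mediator $k \in [n] \setminus (S \cup T)$, which exists precisely because $|S|+|T|<n$, and use the identity $(i\,j) = (i\,k)(k\,j)(i\,k)$: the two outer transpositions fix $T$ pointwise (as $i,k \notin T$) while the inner transposition fixes $S$ pointwise (as $k,j \notin S$), so all three factors already lie in $\Delta$. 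For the $\Sym_I \times \Sym_J$-case, I would note that transpositions live either within $I$ or within $J$ and run the same argument on each side, using that the hypotheses $|S \cap I| + |T \cap I| < |I|$ and $|S \cap J| + |T \cap J| < |J|$ guarantee a mediator inside the appropriate factor.

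For the inclusions in the last sentence of the lemma, $\StabP(\sup(\Delta)) \leq \Delta$ is just the defining property of $\sup(\Delta)$ being a support. For the right inclusion $\Delta \leq \Stab(\sup(\Delta))$, I would observe that for any $\pi \in \Delta$ the image $\pi(\sup(\Delta))$ is again a support of $\Delta$, since $\StabP(\pi(\sup(\Delta))) = \pi \StabP(\sup(\Delta)) \pi^{-1} \leq \pi \Delta \pi^{-1} = \Delta$. Being a support of the same (sufficiently small) size, uniqueness forces $\pi(\sup(\Delta)) = \sup(\Delta)$, so $\pi$ lies in the setwise stabiliser.

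The main obstacle is genuinely the mixed third case of the transposition argument: everything hinges on being able to choose the mediator $k$ outside $S \cup T$, which is exactly what the hypothesis $|S|,|T|<n/2$ (and its bipartite analogue) is tailored to provide. Without this size bound the closure of supports under intersection can fail and uniqueness of a minimal support is genuinely lost; everything else amounts to orbit-stabiliser bookkeeping.
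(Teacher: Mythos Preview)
Your proof is correct and matches the paper's approach: the paper simply cites the literature for the same intersection-closure argument (and the mediator-transposition trick) that you spell out in full, and likewise defers the sandwich inclusions to a reference where they are proved exactly as you do. One small slip: when decomposing $\pi \in \StabP(S\cap T)$ into transpositions you want $\{i,j\} \cap (S\cap T) = \emptyset$ (both points outside), not merely $\{i,j\} \not\subseteq S\cap T$, but your subsequent case analysis already uses the correct condition, so this is purely notational.
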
	
\begin{proof}
	The existence of a unique minimal support is shown by proving that the intersection of any two supports with the respective size bound is again a support. For $\Delta \leq \Sym_n$, this is proved in Lemma 26 in \cite{blass1999choiceless}. 
	For $\Delta \leq \Sym_I \times \Sym_J$, it is easy to check that the proof of that lemma also goes through.
	The assertion $\StabP(\sup(\Delta)) \leq \Delta \leq \Stab(\sup(\Delta))$ is shown in \cite{anderson_symmetric_2017}.
\end{proof}

We also use the following fact without explicitly referring to it later. 
\begin{lemma}[\cite{anderson_symmetric_2017}]
	\label{lem:groupActionOnSupports}
	Let $\Delta \leq \Gamma$ for an action of $\Gamma$ on $X$ such that the minimal support $\sup(\Delta)$ is defined. Let $\Delta' \coloneqq \pi \Delta \pi^{-1}$ for $\pi \in \Gamma$. Then $\sup(\Delta') = \pi(\sup(\Delta))$.
\end{lemma}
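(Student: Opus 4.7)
The plan is to verify all three characterising properties of $\sup(\Delta')$ for the candidate set $\pi(\sup(\Delta))$: it is a support of $\Delta'$, it satisfies the size hypothesis that ensures uniqueness of a minimal support, and every other support of $\Delta'$ meeting that hypothesis contains it. The key conceptual point is that the operations ``take a pointwise stabiliser'' and ``conjugate by $\pi$'' interact well, namely $\StabP_\Gamma(\pi(S)) = \pi \StabP_\Gamma(S) \pi^{-1}$ for any $S \subseteq X$, which is an immediate consequence of the fact that $\sigma$ fixes each element of $\pi(S)$ iff $\pi^{-1}\sigma\pi$ fixes each element of $S$.

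First I would prove the support property: let $\sigma \in \StabP_\Gamma(\pi(\sup(\Delta)))$. Then by the identity above, $\pi^{-1}\sigma\pi \in \StabP_\Gamma(\sup(\Delta))$, and since $\sup(\Delta)$ is in particular a support of $\Delta$, this element lies in $\Delta$. Conjugating back gives $\sigma \in \pi \Delta \pi^{-1} = \Delta'$, so $\pi(\sup(\Delta))$ is indeed a support of $\Delta'$. Second, I would check the size hypothesis. In the $\Sym_n$ case this is trivial since $\pi$ is a bijection of $[n]$ and hence preserves cardinality. In the $\Sym_I \times \Sym_J$ case, every $\pi \in \Gamma$ restricts to bijections $I \to I$ and $J \to J$, so $|\pi(S) \cap I| = |S \cap I|$ and similarly for $J$. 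Thus the size condition of \cref{lem:minSupportExists} is preserved and $\sup(\Delta')$ is well-defined, together with the just-established fact that $\pi(\sup(\Delta))$ is one candidate.

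Finally, for minimality I would run the symmetric argument. By what we have shown applied to $\Delta'$ and $\pi^{-1}$, the set $\pi^{-1}(\sup(\Delta'))$ is a support of $\pi^{-1} \Delta' \pi = \Delta$ satisfying the size hypothesis. Hence by minimality of $\sup(\Delta)$ we obtain $\sup(\Delta) \subseteq \pi^{-1}(\sup(\Delta'))$, i.e.\ $\pi(\sup(\Delta)) \subseteq \sup(\Delta')$. Combined with the reverse inclusion coming from the minimality of $\sup(\Delta')$ applied to the support $\pi(\sup(\Delta))$, we conclude $\sup(\Delta') = \pi(\sup(\Delta))$.

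I do not expect any serious obstacle: the entire argument rests on the single identity $\StabP_\Gamma(\pi(S)) = \pi \StabP_\Gamma(S) \pi^{-1}$ and the observation that elements of $\Gamma$ preserve the bipartition $I \uplus J$ in the product case. The only place where one must be careful is precisely this last observation, since the size hypothesis in \cref{lem:minSupportExists} is stated per part and not globally; a $\pi$ that swapped $I$ with $J$ (which does not happen for $\Gamma = \Sym_I \times \Sym_J$) would otherwise potentially break it.
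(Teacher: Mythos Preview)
Your proof is correct. The paper does not actually give its own proof of this lemma; it is stated with a citation to \cite{anderson_symmetric_2017} and no argument is provided in the text. Your argument via the conjugation identity $\StabP_\Gamma(\pi(S)) = \pi\,\StabP_\Gamma(S)\,\pi^{-1}$ together with the symmetric minimality argument is the standard and natural route, and your care about the per-part size condition in the $\Sym_I \times \Sym_J$ case is exactly the right point to check.
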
	
When $\Gamma = \Sym_I \times \Sym_J$ acting on $I \uplus J$
we write $\sup_L(\Delta) \coloneqq \sup(\Delta) \cap I$ and
$\sup_R(\Delta) \coloneqq \sup(\Delta) \cap J$ to denote the
restrictions of the support to the left and right part of the
bipartition, respectively.  Generally, for any set $S \subseteq I \uplus J$, we write $S_L$ to denote $S \cap I$ and $S_R$ for $S \cap J$.

We also often need to think of supports as ordered tuples rather than unordered sets, in which case we write $\vec{\sup}(\Delta)$, or $\vec{\sup}_L(\Delta)$, $\vec{\sup}_R(\Delta)$. The ordering of the tuples is specified in the respective context.

\subsection{Symmetric algebraic circuits}

An \emph{algebraic circuit} over a variable set $\Xx$ and a field $\bbF$ is a connected \textsmaller{DAG} with multiedges
such that each vertex is labelled with an element of $\Xx \cup \bbF \cup \{+,\times\}$. 	The vertices of a circuit are called \emph{gates}, the edges \emph{wires}.
Gates labelled with elements from $\Xx \cup \bbF$ are called
\emph{input gates} and they are not allowed to have incoming
edges. Every element of $\Xx \cup \bbF$ is allowed to appear at most once as the label of a gate -- this is no restriction because one can simply identify the respective gates.  We assume that a circuit contains exactly one gate without outgoing wires, and this is called the \emph{output gate}.

An algebraic circuit represents (or computes) a polynomial in
$\bbF[\Xx]$ in the obvious way: the gates $+$ and $\times$
are simply interpreted as addition and multiplication on
polynomials, and we care about the polynomial at the output
gate. Arrows are directed from a computation result towards
the next gate where that result is used as an input. The set
of \emph{children} of a gate $g$, denoted $\child(g)$, is the
set of gates $h$ such that $(h,g)$ is a wire of the circuit.  
Multiedges are allowed so that for example the polynomial
$x^2$ can be represented with just one input and one
multiplication gate. The \emph{size} of a circuit $C$, is
defined as the number of gates plus the number of wires, counted with multiplicities. It is denoted $\norm{C}$.

Let $\Gamma$ be a group that acts on $\Xx$. Then a circuit $C$ is \emph{$\Gamma$-symmetric} if the action of every $\pi \in \Gamma$ on the input gates $\mathcal{X}$ extends to an automorphism of the circuit:
For every input gate $g$, let $\ell(g)$ denote its label. This is either a variable or a field element. 
Field elements are fixed by every $\pi \in \Gamma$.
We say that a permutation $\pi \in \Gamma$ \emph{extends to a circuit automorphism} of the  circuit $C$ 
if there exists a $\sigma \in \Sym(V(C))$ such that $\ell(\sigma(g)) = \pi(\ell(g))$ for every input gate, 
and such that $\sigma$ is an automorphism of the multigraph structure of $C$ (i.e.\ it maps edges to edges, non-edges to non-edges, and preserves the operation types of the gates). 
For more details, see \cite{dawar_symmetric_2025}. 

A $\Gamma$-symmetric circuit is called \emph{rigid} if it has no non-trivial circuit automorphism that fixes every input gate. This is equivalent to saying that for every $\pi \in \Gamma$, there is a \emph{unique} circuit automorphism $\sigma$ that extends $\pi$. In that case, we also write $\pi(g)$ for internal gates $g$, to denote the application of the unique circuit automorphism that extends $\pi$. Symmetric circuits can always be assumed to be rigid:

\begin{lemma}
	\label{lem:rigidifyCircuits}
	Let $C$ be a $\Gamma$-symmetric circuit.
	Then there exists a $\Gamma$-symmetric \emph{rigid} circuit
	$C'$ that represents the same polynomial with $\norm{C'} \leq
	\norm{C}$.
\end{lemma}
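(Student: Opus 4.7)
The plan is to obtain $C'$ from $C$ by iteratively collapsing gates that lie in the same orbit of the subgroup of circuit automorphisms which fix every input gate pointwise. Let $N \leq \Aut(C)$ denote this subgroup. Since input gates are uniquely labelled and every $\pi \in \Gamma$ permutes labels bijectively, any two circuit automorphisms of $C$ extending the same $\pi$ differ by an element of $N$, so $C$ is rigid precisely when $N$ is trivial. Moreover, $N$ is normal in $\Aut(C)$: conjugating $\sigma \in N$ by any $\tau \in \Aut(C)$ still yields an automorphism fixing inputs, because $\tau$ permutes the input gates among themselves.

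I would then construct the quotient circuit $C/N$ whose gates are the $N$-orbits of gates of $C$. Each orbit $[g]$ inherits a unique operation label since elements of $N$ preserve operation types, and the multiset of children of $[g]$ is defined as the image under $g \mapsto [g]$ of the child multiset of any representative; this is independent of the choice of representative because any $\sigma \in N$ sends the child multiset of $g$ to that of $\sigma(g)$. The output gate is fixed by every automorphism (being the unique gate without outgoing wires), hence forms a singleton orbit and is taken as the output of $C/N$. A straightforward induction on depth shows that all gates in a common $N$-orbit compute the same polynomial in $\bbF[\Xx]$, so the polynomial computed by $C/N$ at its output equals that of $C$. The size bound $\norm{C/N} \leq \norm{C}$ follows because the quotient gate $[g]$ contributes one gate and $|\child(g)|$ wires (for any representative $g$), while in $C$ the orbit $[g]$ contributes $|[g]|$ gates and $|[g]| \cdot |\child(g)|$ wires. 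For the $\Gamma$-symmetry of $C/N$, any $\pi \in \Gamma$ lifts to at least one $\tau \in \Aut(C)$ by hypothesis, and the induced map on $N$-orbits only depends on $\pi$, not on the choice of lift, by the opening observation.

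The main technical care is in verifying that the child multiset of a quotient gate is well-defined and that the induced action of $\Gamma$ on $C/N$ is likewise well-defined; both reduce to the fact that $N$ acts by circuit automorphisms, hence respects labels, operation types, and wire multiplicities. After one quotient the resulting circuit may still fail to be rigid, but this can be remedied by iteration: as long as the current circuit admits a non-trivial automorphism fixing its inputs, apply the construction again. Each such step strictly reduces the number of gates, so the process terminates in finitely many steps at a rigid $\Gamma$-symmetric circuit $C'$ computing the same polynomial as $C$ with $\norm{C'} \leq \norm{C}$, as required.
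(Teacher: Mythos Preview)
Your proposal is correct and follows essentially the same approach that the paper defers to (the construction in \cite{anderson_symmetric_2017}): quotient the circuit by the group $N$ of automorphisms that fix every input gate. You have spelled out the details that the paper omits, including the normality of $N$, the well-definedness of the child multisets and of the induced $\Gamma$-action on the quotient, and the preservation of the computed polynomial. Your observation that a single quotient may fail to be rigid and that one must iterate (with strict decrease in gate count at each nontrivial step) is a genuine point that the paper's one-line reference glosses over; it is needed because collapsing $N$-orbits can create new coincidences among gates that were previously distinguished only by which representatives of lower orbits they pointed to. An alternative that avoids iteration is a bottom-up canonicalisation (identify gates with the same operation and the same already-identified child multiset), but your orbit-quotient approach is equally valid and arguably cleaner conceptually.
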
	
\begin{proof}
	For Boolean circuits, the transformation of a non-rigid into a rigid symmetric circuit is described in the proof of \cite[Lemma 7]{anderson_symmetric_2017}. 
	It is not hard to see that a similar transformation can be carried out for algebraic circuits, and that it satisfies $\norm{C'} \leq \norm{C}$.
\end{proof}	
Note that in a rigid $\Gamma$-symmetric circuit, we can identify $\Gamma$ with the automorphism group of the circuit. 
Another advantage of rigid circuits is
that we obtain a well-defined notion of \emph{support} of gates, and
that there is a strong link between the size of these supports and the
orbit sizes of the gates.  
In the following, we always assume that $\Gamma$ is $\Sym_n$ or $\Sym_n \times \Sym_m$.
In the former case, our variable set is $\Xx_n = \{ x_{ij} \mid i,j \in [n] \}$, and $\pi \in \Sym_n$ acts simultaneously on both indices, so $\pi(x_{ij}) = x_{\pi(i)\pi(j)}$.
In the latter case, the variable set is $\Xx_{n,m} = \{x_{ij} \mid i
\in [n], j \in [m]\}$. Then a pair of permutations $(\pi, \pi') \in
\Sym_n \times \Sym_m$ applied to $x_{ij}$ yields $x_{\pi(i)\pi'(j)}$.
However, the support of a subgroup $\Delta$ of $\Gamma$ is always defined
with respect to the action of $\Gamma$ on $[n]$ in the case of
$\Sym_n$ and on $[n]\uplus [m]$ in the case of $\Sym_n \times \Sym_m$.

\paragraph*{Complexity measures for symmetric circuits}
We study the complexity of symmetric polynomials in terms of properties of their symmetric circuits. 
An important measure for symmetric circuits is $\maxOrb(C)$, the \emph{maximum orbit size} of any gate in $C$, formally:
\[
\maxOrb(C) \coloneqq \max_{g \in V(C)} | \{ \sigma(g) \mid \sigma \in \Gamma \}  |.
\]

If $C$ is rigid, and $\Gamma$ and $C$ are such that every stabiliser
group of a gate admits a unique minimal support, then we can also
speak about the \emph{support size} of $C$. In particular, this is
possible if $\Gamma \in \{\Sym_n, \Sym_n \times
\Sym_m\}$ and all gates have a support of size $\leq n/2$, because then, Lemma \ref{lem:minSupportExists} applies.
We write $\sup(g) \subseteq [n] \uplus [m]$ to denote the \emph{minimal support} of the group $\Stab_{\Sym_n \times \Sym_m}(g)$ in $\Sym_n \times \Sym_m$, and as before, $\sup_L(g)$ and $\sup_R(g)$ are its restrictions to $[n]$ and $[m]$, respectively. Similarly, in a $\Sym_n$-symmetric circuit, $\sup(g) \subseteq [n]$ denotes the minimal support of the group $\Stab_{\Sym_n}(g)$.
The \emph{support size} of a circuit $C$ is then:	
\[
\maxSup(C) \coloneqq \max_{g \in V(C)} | \sup(g) |.
\]	
We have the following relations between $\maxOrb$ and $\maxSup$. 
\begin{lemma}[restate=constantSupportOfGates,label=lem:constantSupportOfGates,name=]
	\label{lem:constantSupportOfGates}
	Let $(\Gamma_{n,m})_{n,m \in \bbN}$ be either the sequence $(\Sym_n)_{n \in \bbN}$ or $(\Sym_{n} \times \Sym_{m})_{n,m \in \bbN}$.
	Let $(C_{n,m})_{n,m \in \bbN}$ be a sequence of $\Gamma_{n,m}$-symmetric rigid circuits.
	\begin{enumerate}
		\item If $\maxOrb(C_{n,m})$ is polynomially bounded in $n+m$, then there exists a constant $k \in \bbN$ such that for all large enough $n,m \in \bbN$, it holds $\maxSup(C_{n,m}) \leq k$.
		\item If $\maxOrb(C_{n,m})$ is at most $2^{o(n+m)}$, then $\maxSup(C_{n,m}) \leq o(n+m)$.
	\end{enumerate}	
\end{lemma}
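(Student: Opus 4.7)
The plan is twofold. First, I would show that the minimal support $\sup(g)$ is well-defined for every gate $g$ of $C_n$; second, I would derive the support bound from the Orbit--Stabiliser theorem combined with elementary estimates on binomial coefficients.

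The first step is the principal technical obstacle: we must verify that $\Stab_\Gamma(g)$ admits a support of size $<n/2$ (and $<m/2$ in the bipartite case), so that \cref{lem:minSupportExists} applies and $\sup(g)$ is well-defined. This follows from classical structural results on small-index subgroups of symmetric groups, as used in \cite{anderson_symmetric_2017}: for $n$ sufficiently large, any subgroup of $\Sym_n$ of index at most $n^c$ (respectively $2^{o(n)}$) admits a support of constant (resp.\ sublinear) size. The bipartite case is handled componentwise by projecting onto the two factors of $\Sym_n \times \Sym_m$. This is precisely why both parts of the lemma need the ``for all large enough $n$'' qualifier.

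Granted this, fix a gate $g$ of $C_n$ and write $s \coloneqq |\sup(g)|$ in the $\Sym_n$-case, or $s_L \coloneqq |\sup_L(g)|$, $s_R \coloneqq |\sup_R(g)|$ in the bipartite case. Since $\Stab_\Gamma(g) \leq \Stab_\Gamma(\sup(g))$ by \cref{lem:minSupportExists}, the Orbit--Stabiliser theorem yields
\[
|\Orb(g)| = [\Gamma : \Stab_\Gamma(g)] \geq [\Gamma : \Stab_\Gamma(\sup(g))] = |\Orb_\Gamma(\sup(g))|.
\]
For $\Gamma = \Sym_n$, the right-hand side equals $\binom{n}{s}$, since the setwise stabiliser of an $s$-subset of $[n]$ is $\Sym_s \times \Sym_{n-s}$. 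For $\Gamma = \Sym_n \times \Sym_m$, by independence of the two factors, it equals $\binom{n}{s_L}\binom{m}{s_R}$.

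For part~1, $\maxOrb(C_n) \leq n^c$ yields $\binom{n}{s} \leq n^c$. Since $\binom{n}{t}$ is strictly increasing on $\{0, \ldots, \lfloor n/2 \rfloor\}$ with $\binom{n}{c+1} = \Theta(n^{c+1})$, this forces $s \leq c$ for all large enough $n$; one takes $k = c$ (or $2c$ in the bipartite case, applying the argument to $s_L$ and $s_R$ separately). For part~2, if $\maxOrb(C_n) \leq 2^{o(n)}$ but $s \geq \epsilon n$ for some constant $\epsilon \in (0, 1/2]$ along a subsequence, then by Stirling $\binom{n}{\epsilon n} = 2^{n H(\epsilon) - O(\log n)} = 2^{\Omega(n)}$, contradicting the orbit bound. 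Hence $s = o(n)$, and analogously $s_L, s_R = o(n)$ in the bipartite case, using $m \leq n$ to absorb the $\binom{m}{s_R}$ factor into $2^{o(n)}$.
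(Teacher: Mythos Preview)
Your proof is correct and follows essentially the same route as the paper: both invoke the classical structure theorem for small-index subgroups of $\Sym_n$ (the paper cites Theorems~4.6 and~4.10 of \cite{dawar_rank_logic} and Theorems~14--15 of \cite{dawar_symmetric_2024}) to obtain a constant-size support, and both handle the bipartite case by projecting onto the two factors $\Sym_n$ and $\Sym_m$. Your second step, bounding $s$ via $\binom{n}{s} \leq |\Orb(g)|$, is actually redundant---the structure theorem you invoke in your first step already delivers the constant (resp.\ sublinear) bound on the support size directly, and the paper accordingly stops there---but it does no harm.
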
	
\begin{proof}
	In the case $\Gamma_n = \Sym_n$, this is stated in
	Theorems~6.2 and~6.3 in \cite{dawar_symmetric_2025}.
	One can check that for $\Gamma_n = \Sym_n \times \Sym_m$, the proofs of these theorems also go through. We refrain from spelling out all details but the core argument is this: 
	If the orbit size of a gate is polynomially bounded, then by the orbit-stabiliser theorem, the index of its stabiliser group in $\Gamma_n$ is also polynomial, i.e.\ can be bounded by $\binom{n}{k}$, for some constant $k$.
	Theorem~6.2 in \cite{dawar_symmetric_2025} relies on Theorems 4.6 and 4.10 in \cite{dawar_rank_logic}. The former of them states the following: If $\Delta \leq \Sym_n$ is a subgroup of index at most $\binom{n}{k}$, then $\Delta$ contains an alternating group $A$ on $(n-k)$ points. The proof of \cite[Theorem 4.10]{dawar_rank_logic} then shows that if $\Delta$ is the stabiliser group of a gate in a circuit, then the gate is stabilised not only by $A$, but by the full symmetric group on the $n-k$ many points. 
	This means that the remaining $k$ points constitute a support for the gate. Now if $\Delta \leq \Sym_n \times \Sym_m$ is the stabiliser group of a gate, then we can consider $\Delta_1, \Delta_2 \leq \Delta$ defined as the subgroups of $\Delta$ consisting of those members of the product that have the neutral element in the first or second component, respectively. 
	Because the orbit size of the gate is polynomially bounded,
	the index of $\Delta_1, \Delta_2$ in $\Sym_n, \Sym_m$,
	respectively, is at most $\binom{n}{k}$ (or $\binom{m}{k}$),
	for some constant $k$. Then the same reasoning as before
	applies (this is analogous to the argument made, in the context of
	the alternating group, in~\cite{dawar2021lower}).
	This finishes the first part of the theorem. The second part follows from the first part as shown in the proof of  \cite[Theorem~6.3]{dawar_symmetric_2025}, and this is independent of whether $\Gamma_n = \Sym_n$ or $\Gamma_n = \Sym_n \times \Sym_m$.
\end{proof}	

The converse of \cref{lem:constantSupportOfGates}, part 1, is also true:

\begin{lemma}[restate=constantSupportImpliesPolyOrbit,label=lem:constantSupportImpliesPolyOrbit,name=]
	\label{lem:constantSupportImpliesPolyOrbit}
	Let $(C_{n,m})_{n \in \bbN}$ be a sequence of rigid $\Gamma_{n,m}$-symmetric circuits, for $\Gamma \in \{\Sym_n, \Sym_n \times \Sym_m\}$. Assume that there is a constant $k \in \bbN$ such that $\maxSup(C_{n,m}) \leq k$, for all $n,m \in \bbN$.
	Then $\maxOrb(C_{n,m}) \leq (n+m)^k$.
\end{lemma}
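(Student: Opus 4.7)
The statement is essentially an immediate consequence of the Orbit--Stabiliser theorem combined with the definition of support. The plan is to bound $|\Orb_\Gamma(g)|$ for each gate $g \in V(C_n)$ by translating the hypothesis on $\maxSup(C_n)$ into an upper bound on the index $[\Gamma : \Stab_\Gamma(g)]$ via the pointwise stabiliser of a small support.

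Concretely, I would fix an arbitrary gate $g$ of $C_n$ and write $S \coloneqq \sup(g)$, so that $|S| \leq k$. Since $S$ is a support of $\Stab_\Gamma(g)$, we have by definition $\StabP_\Gamma(S) \leq \Stab_\Gamma(g)$, and hence
\[
|\Orb_\Gamma(g)| \;=\; [\Gamma : \Stab_\Gamma(g)] \;\leq\; [\Gamma : \StabP_\Gamma(S)]
\]
by the Orbit--Stabiliser theorem. It remains to compute the right-hand side in the two cases.

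For $\Gamma = \Sym_n$, the pointwise stabiliser of $S \subseteq [n]$ is isomorphic to $\Sym_{[n] \setminus S}$, so its index equals $n!/(n-|S|)! \leq n^{|S|} \leq n^k$. For $\Gamma = \Sym_n \times \Sym_m$, writing $S_L = S \cap [n]$ and $S_R = S \cap [m]$, the pointwise stabiliser of $S$ is $\Sym_{[n] \setminus S_L} \times \Sym_{[m] \setminus S_R}$, whose index in $\Gamma$ is
\[
\frac{n!}{(n-|S_L|)!} \cdot \frac{m!}{(m-|S_R|)!} \;\leq\; n^{|S_L|} \cdot m^{|S_R|} \;\leq\; (n+m)^{|S_L|+|S_R|} \;\leq\; (n+m)^k.
\]
Taking the maximum over all gates $g$ yields the claimed bound on $\maxOrb(C_n)$.

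There is no real obstacle here; the only point requiring a line of justification is the containment $\StabP_\Gamma(\sup(g)) \leq \Stab_\Gamma(g)$, which is exactly the final assertion of \cref{lem:minSupportExists} and is what makes the minimal support a legitimate witness for bounding orbit sizes in the first place.
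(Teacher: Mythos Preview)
Your proposal is correct and is essentially the same argument as the paper's: both bound $|\Orb_\Gamma(g)|$ by the number of images of the (ordered) support tuple, using that $\StabP_\Gamma(\sup(g)) \leq \Stab_\Gamma(g)$. You phrase it via the Orbit--Stabiliser theorem and an explicit index computation, while the paper spells out directly that two permutations agreeing on $\vec{\sup}(g)$ send $g$ to the same gate; these are the same idea.
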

\begin{proof}		
	We consider the case $\Gamma = \Sym_n \times \Sym_m$, the other case is similar.
	Since $C \coloneqq C_{n,m}$ is rigid, every $(\pi, \pi') \in \Sym_n \times \Sym_m$ has a unique extension to a circuit automorphism $\sigma$. 
	Let $g \in V(C)$ be a gate. Its stabiliser subgroup in $\Sym_n \times \Sym_m$ contains $\StabP
	_{\Sym_n \times \Sym_m}(\sup(g))$. 
	This means that for any two pairs $(\pi_1, \pi_1'), (\pi_2, \pi_2') \in \Sym_n \times \Sym_m$, it holds: If $(\pi_1, \pi_1')(s) = (\pi_2, \pi_2')(s)$ for every $s \in \sup(g)$, and $\sigma_1, \sigma_2 \in \Aut(C_n)$ are the unique circuit automorphisms that $(\pi_1, \pi_1'), (\pi_2, \pi_2')$ extend to, then $\sigma_1(g) = \sigma_2(g)$. Thus, if $\sigma_1(g) \neq \sigma_2(g)$, then also $(\pi_1, \pi_1')$ and $(\pi_2, \pi_2')$ must disagree on some $s \in \sup(g)$. 
	Therefore, the number of possible distinct images of $g$ under the circuit automorphisms that the action of $\Sym_n \times \Sym_m$ extends to is at most the number of different images of $\vec{\sup}(g)$, where this denotes $\sup(g)$ written as a tuple in an arbitrary fixed order:
	\begin{align*}
		&|\{ \sigma(g) \mid \sigma \text{ is the unique circuit automorphism extending a } (\pi, \pi') \in \Sym_n \times \Sym_m   \}|\\
		\leq
		&|\{ (\pi, \pi')(\vec{\sup}(g)) \mid (\pi, \pi') \in \Sym_n \times \Sym_m   \}| \leq (n+m)^k.
	\end{align*}
	The last inequality is due to the fact that $|\sup(g)| \leq k$, so the number of different $k$-tuples in $[n] \uplus [m]$ is at most $(n+m)^k$.
	Since $C$ does not have other circuit automorphisms due to its rigidity, this is indeed an upper bound on the orbit size of the gate.
\end{proof}

If we are considering a sequence $(C_{n,m})_{n,m \in \bbN}$ of $\Gamma_{n,m}$-symmetric circuits, then we say they have \emph{polynomially bounded orbit size} if there is a polynomial $p$ such that asymptotically, $\maxOrb(C_{n,m}) \leq p(n+m)$. Note that the overall size $\norm{C_{n,m}}$ may grow super-polynomially even if the orbit size is polynomially bounded. In that case, symmetry itself is not the reason why the circuits are large: They simply contain a super-polynomial number of distinct orbits.

\paragraph*{Relationship between supports of children and parent gates}

The children of a gate $g$ can be partitioned into orbits with respect to the action of $\Stab(g)$ because any permutation that induces a circuit automorphism fixing $g$ must also fix the children of $g$ setwise. We can also consider the orbit partition of children of $g$ with respect to $\StabP(\sup(g)) \leq \Stab(g)$, which may be finer. 
\begin{lemma}
\label{lem:intersectionOfChildSupports}
Let $C$ be a $\Sym_n \times \Sym_m$-symmetric circuit such that $n,m \geq 3 \cdot \maxSup(C)$. 
Let $g \in V(C)$ an internal gate, $h$ a child of $g$.
Let $O_h \coloneqq \Orb_{\StabP(\sup(g))}(h)$ and $S(h) \coloneqq \bigcap_{h' \in O_h} \sup(h')$.
Then $S(h) \subseteq \sup(g)$, and for every $h' \in O_h$, $(\sup(h') \setminus S(h)) \cap \sup(g) = \emptyset$. 
\end{lemma}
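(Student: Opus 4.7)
The plan is to leverage the identity $\sup(\pi(h)) = \pi(\sup(h))$ from \cref{lem:groupActionOnSupports}, together with the slack provided by $n, m \geq 3 \maxSup(C)$. The latter ensures both that $\StabP(\sup(g))$ acts sufficiently transitively on the complement of $\sup(g)$ in $[n] \uplus [m]$, and that every child of $g$ has a well-defined minimal support, since all supports involved have size at most $\maxSup(C) < n/2$ and $< m/2$, so \cref{lem:minSupportExists} applies.

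First I would dispose of the second assertion, which is essentially tautological. Fix $h' \in O_h$ and suppose $z \in \sup(h') \cap \sup(g)$; the aim is to show $z \in S(h)$. Given any $h'' \in O_h$, since $O_h$ is a single orbit of $\StabP(\sup(g))$, there exists $\sigma \in \StabP(\sup(g))$ with $h'' = \sigma(h')$, so $\sup(h'') = \sigma(\sup(h'))$ by \cref{lem:groupActionOnSupports}. Because $\sigma$ fixes $\sup(g)$ pointwise and $z \in \sup(g)$, we have $\sigma(z) = z$, hence $z \in \sup(h'')$. Letting $h''$ range over $O_h$ yields $z \in S(h)$, which is the contrapositive of the claim.

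Next, for the first assertion I would argue by contradiction. Suppose $x \in S(h) \setminus \sup(g)$, and assume without loss of generality $x \in [n] \setminus \sup_L(g)$. For any $y \in [n] \setminus \sup_L(g)$, the transposition swapping $x$ and $y$ lies in $\StabP_{\Sym_n}(\sup_L(g))$, so coupling it with the identity in $\Sym_m$ produces $\pi \in \StabP(\sup(g))$ with $\pi(y) = x$. Then $\pi(h) \in O_h$ and $\sup(\pi(h)) = \pi(\sup(h))$; since $x \in S(h) \subseteq \sup(\pi(h))$, we conclude $y = \pi^{-1}(x) \in \sup(h)$. Varying $y$ over all of $[n] \setminus \sup_L(g)$ gives $\sup_L(h) \supseteq [n] \setminus \sup_L(g)$, so $|\sup(h)| \geq n - \maxSup(C) \geq 2 \maxSup(C) > \maxSup(C)$, contradicting the definition of $\maxSup(C)$. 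The symmetric case $x \in [m] \setminus \sup_R(g)$ is handled the same way.

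The argument is not deep but does require juggling several group-theoretic facts simultaneously, so the main care will be in bookkeeping: rigidity of $C$ is what lets $\Sym_n \times \Sym_m$ act on internal gates at all, \cref{lem:groupActionOnSupports} supplies equivariance of supports, and the $3 \maxSup(C)$ bound is exactly what simultaneously guarantees well-definedness of the relevant supports and furnishes enough elements outside $\sup(g)$ on the appropriate side of the bipartition to drive the transposition argument.
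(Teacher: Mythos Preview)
Your proof is correct and follows essentially the same approach as the paper: both parts hinge on the equivariance $\sup(\pi(h)) = \pi(\sup(h))$ and the fact that $\StabP(\sup(g))$ can move elements outside $\sup(g)$ freely. The only cosmetic difference is in the first assertion, where the paper directly exhibits a single $h' \in O_h$ with $s \notin \sup(h')$ (by pushing $\sup(h)\setminus\sup(g)$ away from $s$), whereas you instead pull every $y \notin \sup_L(g)$ into $\sup(h)$ via transpositions and derive a size contradiction; both are equally valid and rely on the same slack $n,m \geq 3\maxSup(C)$.
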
	
\begin{proof}
For the first part, let $s \in S(h)$, so in particular, $s \in \sup(h)$. Assume for a contradiction that $s \notin \sup(g)$. Then let $(\pi, \pi') \in \Sym_n \times \Sym_m$ be such that $(\pi, \pi')(\sup(h) \setminus \sup(g))$ does not contain $s$, and $(\pi, \pi')$ fixes $\sup(g)$ pointwise. Such a permutation exists because we are assuming $n$ and $m$ to be large enough compared to the support sizes. Then $h' \coloneqq (\pi, \pi')(h) \in O_h$ but $s \notin \sup(h')$, so $s \notin S(h)$. Contradiction.
For the second part, assume for a contradiction that there exists $s \in (\sup(h') \setminus S(h)) \cap \sup(g)$, for some $h' \in O_h$. Since $s \in \sup(g)$, every permutation in $\StabP(\sup(g))$ fixes $s$. So $s$ must then be in $\sup(h')$ for every $h' \in O_h$. So it is in $S(h)$. Contradiction.	
\end{proof}

	\subsection{Graphs and Homomorphisms}

A \emph{simple graph} is a tuple $G = (V(G), E(G))$ of a finite set $V(G)$ and a set $E(G) \subseteq \binom{V(G)}{2}$.
A \emph{multigraph} is a tuple $G = (V(G), E(G))$ of a finite set $V(G)$ and a multiset $E(G)$ of elements in $\binom{V(G)}{2}$. 
When a distinction is not necessary, we call both simple graphs and multigraphs \emph{graphs}.
For a graph $G$, write $\norm{G} \coloneqq \abs{V(G)} + \abs{E(G)}$.

A \emph{tree decomposition} of a graph $F$ is a pair $(T, \beta)$ of a tree $T$ and a map $\beta \colon V(T) \to 2^{V(F)}$ such that the following conditions hold:
\begin{enumerate}
	\item $\bigcup_{t \in V(T)} \beta(t) = V(F)$,
	\item for every edge $e \in E(F)$, there exists a node $t \in V(T)$ such that $e \subseteq \beta(t)$,
	\item for evert vertex $v \in V(F)$, the subgraph of $T$ induced by the vertices $t \in V(T)$ such that $v\in \beta(t)$ is connected.
\end{enumerate}
The \emph{width} of a tree decomposition $(T, \beta)$ is $\max_{v \in V(T)} |\beta(v)| -1$. 
The \emph{treewidth} $\tw(F)$ of a graph $F$ is the minimum width of a tree decomposition of $F$.

Let $F$ and $G$ be graphs.
A \emph{homomorphism} $h \colon F\to G$ is a map $h \colon V(F) \to V(G)$ such that $h(uv) \in E(G)$ for all $uv \in E(F)$.
We write $\hom(F, G)$ for the number of homomorphisms from $F$ to $G$.

Homomorphism counts behave nicely with respect to graph operations.
For two graphs $F_1$, $F_2$, write $F_1 + F_2$ for their \emph{disjoint union}, i.e.\ $V(F_1 + F_2) \coloneqq V(F_1) \uplus V(F_2)$ and $E(F_1 + F_2) \coloneqq E(F_1) \uplus E(F_2)$.
For multiple graphs $F_1, \dots, F_n$, we write $\coprod_{i=1}^n F_i$ for $F_1 + \dots + F_n$.
For two graphs $G_1$, $G_2$, write $G_1 \times G_2$ for their \emph{categorical product}, i.e.\ $V(G_1 \times G_2) \coloneqq V(G_1) \times V(G_2)$ and $(v_1, v_2)$ and $(w_1, w_2)$ are adjacent in $G_1 \times G_2$ if, and only if, $v_1w_1 \in E(G_1)$ and $v_2w_2 \in E(G_2)$.
The following identities hold for all graphs $F_1$, $F_2$, $G_1$, $G_2$, and all connected graphs $K$, cf.\ e.g.\ \cite[(5.28)--(5.30)]{lovasz_large_2012}:
\begin{align}
	\hom(F_1 + F_2, G) &= \hom(F_1, G) \hom(F_2, G), \label{eq:coproduct} \\
	\hom(F, G_1 \times G_2) &= \hom(F, G_1) \hom(F, G_2), \text{ and }\label{eq:product} \\
	\hom(K, G_1 + G_2) &= \hom(K, G_1) + \hom(K, G_2). \label{eq:disjoint}
\end{align}

For graphs $G$ and $H$, write $G \boxplus H$ for the graph with vertex set $V(G \boxplus H) \coloneqq V(G) \uplus V(H)$ and $E(G \boxplus H) \coloneqq E(G) \uplus E(H) \uplus \{vw \mid v \in V(G), w\in V(H)\}$.
For a graph $F$ and $U \subseteq V(F)$,
write $F[U]$ for the \emph{subgraph induced by $U$} and $F -U$ for the subgraph obtained by deleting $U$, i.e.\ $V(F[U]) \coloneqq U$, $E(F[U]) \coloneqq E(F) \cap \binom{U}{2}$, and $F-U \coloneqq F[V(F) \setminus U]$.
For every graph $F$, it holds that
\begin{equation}\label{eq:boxplus}
	\hom(F, G \boxplus H) = \sum_{U \subseteq V(F)} \hom(F[U], G) \hom(F -U, H).
\end{equation}

For graphs $G_1$ and $G_2$, write $G_1 \cdot G_2$ for their \emph{lexicographic product}, i.e.\ $V(G_1 \cdot G_2) \coloneqq V(G_1) \times V(G_2)$ and $(v_1, v_2)$ and $(w_1, w_2)$ are adjacent in $G_1 \cdot G_2$ if, and only if, $v_1 = w_1$ and $v_2w_2 \in E(G_2)$, or $v_1w_1 \in E(G_1)$.
For a graph $F$ and a partition $\mathcal{R}$ of $V(F)$,
write $F/\mathcal{R}$ for the graph with vertex set $\mathcal{R}$ and $RS \in E(F/\mathcal{R})$ if, and only if, $R \neq S$ and there exist $r \in R$ and $s \in S$ such that $rs \in E(F)$.
By \cite[Theorem~16]{seppelt_logical_2024},
for every graph $F$,
\begin{equation}\label{eq:lexprod}
	\hom(F, G_1 \cdot G_2) = \sum_{\mathcal{R} \in \Gamma(F)} \hom(F/\mathcal{R}, G_1) \hom(\coprod_{R \in \mathcal{R}} F[R], G_2).
\end{equation}	
where $\Gamma(F)$ denotes the set of all partitions $\mathcal{R}$ of $V(F)$ such that for all $R \in \mathcal{R}$ the graph $F[R]$ is connected.

\subsection{Counting Logic and Bijective Games}

Counting logic is first-order logic
augmented by counting quantifiers of the form $\exists^{\geq i} x$,
for every $i \in \bbN$. A graph (or another finite structure) $G$
satisfies a formula $\exists^{\geq i} x \phi(x)$, written as $G
\models \exists^{\geq i} x \phi(x)$, if there exist at least $i$ many
distinct $v \in V(G)$ such that $G \models \phi(v)$. The $k$-variable
fragment of counting logic is denoted $\Cc^k$.
More details on counting logic and references may be found
in~\cite{dawar_siglog15}.  For our purposes, the main interest is the
equivalence relation that this induces on graphs.  Two structures $G,H$
are called \emph{$\Cc^k$-equivalent}, denoted $G
\equiv_{\mathcal{C}^k} H$, if $G$ and $H$ satisfy exactly the same
sentences of $\Cc^k$.  This family of equivalence relations has many
characterisations, in combinatorics, logic, algebra and linear
optimisation.  In particular, it is known that $G
\equiv_{\mathcal{C}^{k+1}} H$ if, and only if, $G$ and $H$ are not
distinguished by the $k$-dimensional Weisfeiler-Leman test for graph
isomorphism (see~\cite{CFI}).

For our purposes, a particularly useful characterisation is in terms
of \emph{$k$-pebble bijective games}, introduced by
Hella~\cite{hella96}.  This is a game played by two players called
\emph{Spoiler} and \emph{Duplicator} on the graphs $G$ and $H$ using~$k$
pairs of pebbles~$(a_1,b_1),\dots,(a_k,b_k)$.  In a game position,
some (or all) of the pebbles~$a_1,\ldots,a_k$ are placed on vertices 
of~$G$ while the matching pebbles among~$b_1,\ldots,b_k$ are
placed on vertices of~$H$.  Where it causes no confusion, we
do not distinguish notationally between the pebble~$a_i$ (or~$b_i$)
and the vertex on which it is placed.  In each move of the game, Spoiler
chooses a pair of pebbles~$(a_i,b_i)$ and Duplicator has to respond by
giving a bijection~$f: V(G) \rightarrow V(H)$ which agrees with
the map~$a_j \mapsto b_j$ for all~$j \neq i$.    Spoiler chooses $a
\in V(G)$ and the pebbles $a_i$ and $b_i$ are placed on $a$ and $f(a)$
respectively.  If the resulting partial map from~$G$ to~$H$ given
by~$a_i \mapsto b_i$ is not a partial isomorphism, then Spoiler has
won the game.   We say that Duplicator has a \emph{winning
	strategy} if, no matter how Spoiler plays, she can play forever
without losing.  It is known that $G  \equiv_{\mathcal{C}^k} H$ if, and only if, Duplicator has a winning
strategy in the~$k$-pebble bijective game played on $G$ and $H$.

\textcite{CFI} showed that for each $k$ there
are graphs $G$ and $H$ with $O(k)$ vertices which are not
isomorphic but such that $G  \equiv_{\mathcal{C}^k} H$.  The graphs
$G$ and $H$ are constructed from a base graph $F$ which is
sufficiently richly connected (in particular it has treewidth larger
than $k$) by replacing each vertex $v$ of $F$ by a small gadget whose
size depends on the degree of $v$.  The construction is known as the
\textsmaller{CFI} construction, and the gadgets as the
\textsmaller{CFI} gadgets.  There are many variations of the
construction in the literature, and we rely on the construction
from~\cite{roberson_oddomorphisms_2022}, known as the
\textsmaller{CFI} construction without internal
vertices (see also~\cite{rossman_csl25}). Details can be found in Appendix \ref{sec:cfi}.

\section{Overview of results}

We start with a summary of our results and proof techniques. Full technical details of all proofs are presented in Sections~\ref{sec:circuitsAndHomPolynomials}--\ref{sec:immanants}.  
\subsection{Characterisation of symmetric circuit complexity by homomorphism polynomials }
We start by observing that any symmetric polynomial is expressible as a linear combination of \emph{homomorphism} or \emph{subgraph polynomials}.
Let $F$ be a bipartite multigraph with bipartition $A \uplus B$.
For integers $n,m \in \mathbb{N}$,
the \emph{homomorphism polynomial $\hom_{F, n,m} \in \mathbb{Q}[\mathcal{X}_{n,m}]$} and the \emph{subgraph polynomial $\sub_{F, n,m} \in \mathbb{Q}[\mathcal{X}_{n,m}]$} of $F$ are defined as
\begin{align*}
	\hom_{F, n, m} &\coloneqq \sum_{h \colon A \uplus B \to [n] \uplus [m]} \prod_{ab \in E(F)} x_{h(a)h(b)}, 
	\quad \quad
	\sub_{F, n, m} \coloneqq \frac{1}{|\Aut(F)|}\sum_{h \colon A \uplus B \hookrightarrow [n] \uplus [m]} \prod_{ab \in E(F)} x_{h(a)h(b)}.
\end{align*}
Here, we think of $h$ as mapping $A$ to $[n]$ and $B$ to $[m]$.
For example, $\hom_{K_2, n, m} = \sum_{v \in [n]} \sum_{w \in [m]} x_{vw}$ and $\sub_{K_{n,m}, n,m} = \prod_{v\in [n]} \prod_{w \in [m]} x_{vw}$. 
If $A_G \in \{0,1\}^{[n] \times [m]}$ is the bi-adjacency matrix of an undirected bipartite graph $G$ with bipartition $V(G) = [n] \uplus [m]$, then $\hom_{F,n,m}(A_G)$ evaluates to the number of bipartition-respecting homomorphisms from $F$ to $G$, and $\sub_{F, n,m}(A_G)$ is the number of occurrences of $F$ as a subgraph in $G$. We also write $\hom_{F,n,m}(G)$ and $\sub_{F, n,m}(G)$ for the polynomials evaluated in the bi-adjacency matrix of $G$.
Let
$
\mathfrak{G}_{n,m} \coloneqq \left\{\sum \alpha_i \hom_{F_i, n,m} \mid \text{bipartite multigraphs } F_i, \alpha_i \in \mathbb{Q} \right\} \subseteq \mathbb{Q}[\mathcal{X}_{n,m}].
$
Then we have:

\begin{restatable}{lemma}{symmetricPolynomials}
	\label{lem:gnm-sym}
	For $n,m \in \mathbb{N}$ and $p \in \mathbb{Q}[\mathcal{X}_{n,m}]$,
	the following are equivalent:
	\begin{enumerate}
		\item the polynomial $p$ is $\Sym_n \times \Sym_m$-symmetric,\label{it:gnm1}
		\item $p \in \mathfrak{G}_{n,m}$,\label{it:gnm2}
		\item $p = \sum \alpha_i \sub_{F_i, n,m}$ for some bipartite multigraphs $F_i$ and $\alpha_i \in \mathbb{Q}$.\label{it:gnm3}
	\end{enumerate}
\end{restatable}
	
For the proof,  \cref{it:gnm3} can be concluded from \cref{it:gnm1} by partitioning the monomials in $p$ into their orbits, and showing that each of these orbits is a subgraph polynomial for some pattern $F$. Via Möbius inversion (\cref{thm:sub-hom}), subgraph polynomials can be expressed as linear combinations of homomorphism polynomials.

Thus, both subgraph and homomorphism polynomials can be seen as spanning sets for the vector space of matrix-symmetric polynomials. 
We are now interested in the ones that admit polynomial size symmetric circuits.
For $k,n,m \in \mathbb{N}$, write
\[
\mathfrak{T}_{n,m}^k \coloneqq \left\{\sum \alpha_i \hom_{F_i, n,m} \ \middle|\  \text{bipartite multigraphs } F_i \text{ such that } \tw(F_i) < k, \alpha_i \in \mathbb{Q} \right\} \subseteq \mathfrak{G}_{n,m}.
\]
for the set of finite $\mathbb{Q}$-linear combinations of homomorphism polynomials of bipartite multigraphs of treewidth less than~$k$.

Our main result, \cref{thm:main1}, states that the families of polynomials in $\mathfrak{T}_{n,m}^k$, for constant $k$, are precisely the ones that admit symmetric circuits of polynomial orbit size. The full proof can be found in Section \ref{sec:circuitsAndHomPolynomials} but the idea is the following: 

\begin{proof}[Proof sketch of \cref{thm:main1}]
The easier direction is to show that if $p_{n,m} \in \mathfrak{T}_{n,m}^k$, then it has circuits of polynomial orbit size.
It turns out that each $\hom_{F, n, m}$ can be represented by a symmetric circuit with support size at most $\tw(F) < k$. 
The circuit computes $\hom_{F, n, m}$ inductively along a tree decomposition of $F$, using ideas as in e.g.\ \cite{dvorak_recognizing_2010}. 
Then any polynomial in $\mathfrak{T}_{n,m}^k$ is just a linear combination of such circuits.
By \cref{lem:constantSupportImpliesPolyOrbit}, a circuit with support size $\leq k$ has orbit size at most $(n+m)^k$, as desired. Note that a polynomial in $\mathfrak{T}_{n,m}^k$ may be a linear combination involving a super-polynomial number of terms. Then only the orbit size of the circuit will be polynomial, but the total number of gates will not be.

The other direction of Theorem \ref{thm:main1} requires substantial technical effort. It is proved using so-called \emph{labelled} homomorphism polynomials: Imagine that a pattern $F$ has $k$ distinctly labelled vertices $a_1, \dots, a_k$. Then for any $k$-tuple of vertices $b_1, \dots, b_k$ in the host graph $G$, we can ask how many homomorphisms $h$ exist from $F$ to $G$ such that $h(a_i) = b_i$, for each $i \in [k]$. A labelled homomorphism polynomial is one that counts only these label-preserving homomorphisms. 

We want to prove that every polynomial represented by a symmetric circuit of polynomial orbit size is in $\mathfrak{T}_{n,m}^k$, for some $k \in \bbN$.
\cref{lem:constantSupportOfGates} tells us that the support size of every gate in such a circuit must be bounded by some constant $k$. We prove by induction on the circuit structure that at every gate $g$ in the circuit, the polynomial computed at $g$ is a linear combination of \emph{labelled} homomorphism polynomials of patterns of treewidth $\leq k$.
The crucial insight is that the supports of the gates always correspond exactly to the images of the labels. That is, semantically, the gate $g$ counts homomorphisms that map the labels $a_1, \dots, a_k$ of the pattern graphs to the vertices
$b_1, \dots, b_k$ enumerating $\sup(g)$.
The root gate of a symmetric circuit always has empty support (as it is fixed by every permutation in $\Sym_n \times \Sym_m$), and so, the polynomial computed at the root is a linear combination of homomorphism polynomials for patterns $F$ with $\tw(F) \leq k$ and without labels. Thus, the circuit computes a polynomial in $\mathfrak{T}_{n,m}^k$.
\end{proof}

We also prove in Section \ref{sec:circuitsAndHomPolynomials} that under the common assumption $\VP \neq \VNP$,
\cref{thm:main1} is best-possible in the sense that the
characterisation cannot be improved to capture total circuit size
rather than orbit size: There exist polynomials in
$\mathfrak{T}_{n,m}^k$, for constant $k$, that do not admit polynomial
size circuits (neither symmetric nor asymmetric), unless $\VP =
\VNP$. 
\begin{theorem}[restate=hardPolynomialsInT,label=thm:hardPolynomialsInT,name=]
	There is a $\VNP$-hard family of polynomials $(p_{n})_{n \in \bbN}$ such that $p_{n} \in \mathfrak{T}_{n,n}^2$ for all $n \in \mathbb{N}$.
\end{theorem}	

\subsection{Symmetric complexity of homomorphism polynomials}

In order to make the characterisation in \cref{thm:main1} more usable, we study $\mathfrak{T}^k_{n,m}$ in relation with the class of polynomials of bounded counting width.

\begin{definition}[\cite{anderson_symmetric_2017}]
	Let $p$ be a graph parameter on $(n,m)$-vertex bipartite graphs.
	The \emph{counting width of $p$ on simple graphs} is the least integer $k$ such that, for all $(n,m)$-vertex simple bipartite graphs $G$ and $H$, if $G$ and $H$ are $\mathcal{C}^k$-equivalent, then $p(G) = p(H)$.
\end{definition}

%
%

It follows with known arguments that whenever a family $(p_{n,m})_{n,m\in \mathbb{N}}$ admits circuits of polynomial orbit size, then its counting width is bounded (see \cref{thm:counting-width}). We ask in which cases the converse is true. As a start, we identify where the difficulty lies in this question. Namely, if we only care about the semantics of the polynomials on simple graphs, that is $\{0,1\}$-valued rather than $\bbQ$-valued matrices, then we do have a conclusive answer. On $\{0,1\}$-assignments, any polynomial $p$ is equivalent to its \emph{multilinearisation}. This is the polynomial obtained from $p$ by replacing every exponent greater than $1$ with $1$. Up to multilinearisation, bounded counting width is indeed sufficient for the existence of polynomial orbit size symmetric circuits:
	
	\begin{restatable}{theorem}{multilinear}
		\label{thm:multilinear}
		Let $k,n,m \in \mathbb{N}$.
		For $p \in \mathbb{Q}[\mathcal{X}_{n,m}]$, the following are equivalent:
		\begin{enumerate}
			\item there exists $q \in \mathfrak{T}^k_{n,m}$ such that $p$ and $q$ are equal up to multilinearisation,
			\item the counting width of $p$ on $(n,m)$-vertex simple bipartite graphs is at most $k$.
		\end{enumerate}
	\end{restatable}
	
Thus, if we only care about evaluating the polynomial $p$ on
$\{0,1\}$ (equivalently, we regard it as a parameter on simple,
unweighted, bipartite graphs) the symmetric tractability question is
completely settled.  But this does not settle the general case , where we wish to
consider it on all $\bbQ$-valued assignments (see,
e.g.~\cref{ex:matching}).  However, we can characterise it in more
specific cases.  In particular, 
we obtain a precise characterisation of the bounded counting width case for linear combinations of homomorphism polynomials with sublinear-size patterns:

\begin{theorem}[see \cref{thm:lincomb}]\label{thm:lincomb-intro} 
	For $n,m \in \mathbb{N}$, 
	let $F_{n,m,i}$ be bipartite multigraphs and $\alpha_{n,m,i}
        \in \mathbb{Q} \setminus \{0\}$ for $i \in [s_{n,m}]$ for some
        $s_{n,m} \in \mathbb{N}$. 
	Let $p_{n,m} \coloneqq \sum_i \alpha_{n,m,i} \hom_{F_{n,m,i}, n,m}$.
	Suppose that, for all $\nu, \mu \in \mathbb{N}$,
	there are $n', m' \in \mathbb{N}$ 
	such that $\max_i |F_{\nu n, \mu m,i}| \leq \min \{n,m\}$ for all $n > n'$ and $m > m'$.		
	Then the following are equivalent:
	\begin{enumerate}
		\item $\max_i \tw(F_{n,m,i})$ is bounded,
		\item the counting width of $p_{n,m}$ on $(n,m)$-vertex simple graphs is bounded,
		\item the $p_{n,m}$ admit $\Sym_n\times \Sym_m$-symmetric circuits of orbit size polynomial in $n+m$.
	\end{enumerate}
\end{theorem}

If the linear expansions of the $p_{n,m}$ are comprised of polynomially many homomorphism polynomials,
then the above conditions are equivalent to   the $p_{n,m}$ admitting $\Sym_n\times \Sym_m$-symmetric circuits of total size polynomial in $n+m$.
In the case where $p_{n,m} = \hom_{F_{n,m}, n,m}$ for all $n,m \in \bbN$, i.e.\ we just have a single homomorphism polynomial for each $n,m$, we obtain the above result also if $|F_{n,m}|$ is not in $o(n+m)$, see \cref{thm:dichotomy-chromatic}.

We prove \cref{thm:lincomb} by generalising notions introduced by \textcite{roberson_oddomorphisms_2022} for describing the distinguishing power of graph isomorphism relaxations via homomorphism counts. 
By \cite{dvorak_recognizing_2010,dell-grohe-rattan},
two graphs $G$ and $H$ are $\mathcal{C}^k$-equivalent if, and only if, they are \emph{homomorphism indistinguishable} over the graphs of treewidth less than $k$, i.e.\ $\hom(F,G) = \hom(F, H)$ for all $F$ such that $\tw(F) < k$.
Hence, a graph parameter $p$ has bounded counting width if, and only if, it is determined by homomorphism counts from bounded-treewidth graphs.
For uniform graph motif parameters $\sum \alpha_F \hom(F, -)$, it was known \cite{seppelt_logical_2024,neuen_homomorphism-distinguishing_2024,lanzinger2024on} that this happens if and only if $\max_F \tw(F)$ is bounded.
We generalise this result to non-uniform graph motif parameters, whose patterns and coefficients may depend on the size of the host graph.

\subsection{Symmetric complexity of subgraph polynomials}	
By \cref{lem:gnm-sym}, linear combinations of subgraph and homomorphism polynomials can be transformed into one another. Nevertheless, we also study subgraph polynomials on their own because it seems likely that in this case we find another natural parameter of the pattern graphs that fully controls the counting width and symmetric complexity. Indeed, we show that for patterns of sublinear size, this parameter is $\vc(F)$, the \emph{vertex cover number}. 

\begin{theorem}[see \cref{thm:sublinear}]\label{thm:sublinear-intro}
	Let $(F_{n,m})_{n,m \in \mathbb{N}}$ be a family of simple bipartite graphs such that, 
	for all $\nu, \mu \in \mathbb{N}$,
	there exist $n',m' \in \mathbb{N}$,
	such that $|F_{\nu n, \mu m}| \leq \min \{n,m\}$ for all $n > n'$ and $m > m'$.
	The following are equivalent:
	\begin{enumerate}
		\item $\vc(F_{n,m})$ is bounded,
		\item the counting width of $\sub_{F_{n,m},n,m}$ on $(n,m)$-vertex simple graphs is bounded,
		\item the $\sub_{F_{n,m}, n, m}$ admit $\Sym_n \times \Sym_m$-symmetric circuits of orbit size polynomial in $n+m$,
		\item the $\sub_{F_{n,m}, n, m}$ admit $\Sym_n \times \Sym_m$-symmetric circuits of size polynomial in $n+m$.
	\end{enumerate}
\end{theorem}

The size bound in this theorem is necessary as it is easy to find examples of subgraph polynomials (of size $\Theta(n+m)$) whose complexity is not governed by the pattern's vertex cover number.
Consider e.g.\ $p_{n,m} = \sub_{K_{n,m},n,m} = \prod_{v \in [n]} \prod_{w \in [m]} x_{vw}$.  The natural symmetric circuit for this linear-size pattern has just one product gate that ranges over all variables, so this family $(p_{n,m})_{n,m \in \bbN}$ has low symmetric circuit complexity (see also \cref{ex:knm}). But, $\vc(K_{n,m})$ is clearly unbounded. 

More generally, for any function $f \colon \bbN \to \bbN$ with $f(n) \leq n$, and the complete bipartite graphs $K_{f(n),f(n)}$ as patterns, we  show that the only tractable cases for $\sub_{K_{f(n), f(n)},n,n}$ are when $\min \{ f(n), n-f(n) \}$ is constant (\cref{thm:knn}). This suggests that it might be the minimum vertex cover number of the pattern graph and its complement that determines the symmetric complexity of $\sub_{F_{n,m},n,m}$. We call this parameter the \emph{logical vertex cover number} $\cc_{n,m}(F_{n,m})$.
In particular, we can show for all $F$ that the counting width of $\sub_{F,n,m}$ and $\sub_{\overline{F},n,m}$ is the same (\cref{thm:complements}), giving further evidence for this hypothesis,
which is formally stated as \cref{conj:sub}.

\subsection{Symmetric complexity of the immanants}
So far, we have studied $\Sym_n \times \Sym_m$-symmetric families. There are also polynomials in the variables $\Xx_{n,n}$ which are symmetric under the simultaneous action of $\Sym_n$ on rows and columns but not under $\Sym_n \times \Sym_n$.  One such example is the \emph{determinant}, and more generally, other \emph{immanants}. 
We have chosen to develop our theory for $\Sym_n \times
\Sym_m$-symmetric families but we believe that most of it can be
adapted to the $\Sym_n$-symmetric case. Then, the relevant
homomorphism or subgraph polynomials involve \emph{directed} and not
necessarily bipartite graphs. Rather than doing that here, we consider
the immanants as a concrete $\Sym_n$-symmetric example and show that
their symmetric circuit complexity is subject to a dichotomy similar
to what is known for their computational complexity~\cite{curticapean2021full}.  
 
\emph{Immanants} are families of $\Sym_n$-symmetric polynomials
$(p_n)_{n \in \bbN}$ that are defined via so-called \emph{irreducible
  characters} of the symmetric group. An irreducible character of
$\Sym_n$ is a homomorphism $f$ from $\Sym_n$ to the multiplicative group
$\mathbb{C}\setminus\{0\}$ which is constant on the conjugacy classes
of $\Sym_n$.  Given such an $f$, the corresponding immanant is defined as
$
\imm_{f} = \sum_{\pi \in \Sym_n} f(\pi) \prod_{i \in [n]} x_{i,\pi(i)}.
$
If $f$ is constantly $1$, then $\imm_f$ is the permanent. The
determinant is obtained by letting $f = \sgn$.  While all immanents are
$\Sym_n$-symmetric, the symmetry groups of particular members of this
family may be larger.  In particular, the permanent is, in fact,
$\Sym_n \times \Sym_n$ while the determinant is symmetric under a
subgroup of this of index $2$ consisting of those pairs of
permutations $(\pi,\sigma)$ for which $\sgn(\pi) = \sgn(\sigma)$.  The
$\Sym_n$ symmetric circuit complexity of the permanent and the determinant was
studied in~\cite{dawar_symmetric_2025} and for the larger symmetry
groups in~\cite{dawar2021lower}.

In~\cite{curticapean2021full}, Curticapean shows a complexity dichotomy for the immanants: In the tractable case, $\imm_{f}$ is in $\VP$ (i.e.\ admits polynomial size algebraic circuits) and is computable in polynomial time. In the intractable case, it is not in $\VP$, unless $\VFPT = \VW$, and also, it is not computable in polynomial time, unless $\FPT = \sharpW$. The complexity is controlled by a certain parameter of $f$. 

We show that the same parameter of $f$ produces the analogous
dichotomy with regards to the complexity of $\Sym_n$-symmetric
algebraic circuits for the immanants, but without the need for
complexity-theoretic assumptions for the hardness part.  This
generalizes the resuls of~\cite{dawar_symmetric_2025}.
The irreducible characters $f$ correspond naturally to partitions
of $[n]$ (see, for example,~\cite[Prop.~1.10.1]{Sagan}).  If $\lambda$ is a partition of $[n]$, we denote by $\imm_{\lambda}$ the immanant defined by the irreducible character corresponding to $\lambda$.
Let $b(\lambda) \coloneqq n-s$, where $s$ denotes the number of parts and $n$ the size of the set that is partitioned by $\lambda$. This is the parameter that controls the complexity of $\imm_{\lambda}$.

A family $\Lambda$ of partitions is said to \emph{support growth} $g \colon \bbN \to \bbN$ if for every $n \in \bbN$ there is a partition $\lambda^{(n)}$ in $\Lambda$ with $b(\lambda^{(n)}) \geq g(n)$ and size $\Theta(n)$. According to the dichotomy from \cite{curticapean2021full}, $(\imm_{\lambda})_{\lambda \in \Lambda}$ is tractable if there is a constant that bounds $b(\lambda)$ for all $\lambda \in \Lambda$. Otherwise, if $\Lambda$ supports growth $\omega(1)$, then the immanant family is (conditionally) intractable. If $\Lambda$ even supports growth $\omega(n^k)$, then $(\imm_{\lambda})_{\lambda \in \Lambda}$ is $\VNP$-complete.
Our analogous result for symmetric circuits is the following.
\begin{theorem}[restate=immanantDichotomy, label=thm:immanantDichotomy,name=]
	Let $\Lambda$ be a family of partitions.
	\begin{enumerate}
		\item If there exists a $k \in \bbN$ such that $b(\lambda) \leq k$ for all $\lambda \in \Lambda$, then the $(\imm_{\lambda})_{\lambda \in \Lambda}$ admit $\Sym_n$-symmetric circuits of polynomial size in $n$.
		\item If $\Lambda$ supports growth $g \in \omega(1)$, then the counting width of $(\imm_{\lambda})_{\lambda \in \Lambda}$ on directed $\bbQ$-edge-weighted $n$-vertex graphs is unbounded, where $n$ is the integer that is partitioned by the respective $\lambda$.
		Thus, all $\Sym_n$-symmetric circuits for the immanant family have super-polynomial size in $n$. 
		\item If $\Lambda$ supports growth $g \in \omega(n^k)$ for a constant $k$, then the counting width of $(\imm_{\lambda})_{\lambda \in \Lambda}$ on directed $\bbQ$-edge-weighted $n$-vertex graphs is $\Omega(n)$, so all $\Sym_n$-symmetric circuits have size $2^{\Omega(n)}$.
	\end{enumerate}	
\end{theorem}

	\section{Circuits with Bounded Support and Homomorphism Polynomials}
	\label{sec:circuitsAndHomPolynomials}
	In this section, we prove \cref{thm:main1} which characterises the polynomials which admit $\Sym_n \times \Sym_m$-symmetric circuits of polynomial orbit size.
	To that end, we first observe that  the polynomials which admit $\Sym_n \times \Sym_m$-symmetric circuits of arbitrary orbit size are precisely the linear combinations of subgraph or homomorphism polynomials.
	
	Let $F$ be a bipartite multigraph with bipartition $A \uplus B$.
	For integers $n,m \in \mathbb{N}$,
	the \emph{homomorphism polynomial $\hom_{F, n,m} \in \mathbb{Q}[\mathcal{X}_{n,m}]$} and the \emph{subgraph polynomial $\sub_{F, n,m} \in \mathbb{Q}[\mathcal{X}_{n,m}]$} of $F$ are defined as
	\begin{align*}
		\hom_{F, n, m} &\coloneqq \sum_{h \colon A \uplus B \to [n] \uplus [m]} \prod_{ab \in E(F)} x_{h(a)h(b)}, \\
		\sub_{F, n, m} &\coloneqq \frac{1}{|\Aut(F)|}\sum_{h \colon A \uplus B \hookrightarrow [n] \uplus [m]} \prod_{ab \in E(F)} x_{h(a)h(b)}.
	\end{align*}
	Here, $h$ ranges over maps taking $A$ to $[n]$ and $B$ to $[m]$.
	For example, $\hom_{K_2, n, m} = \sum_{v \in [n]} \sum_{w \in [m]} x_{vw}$ and $\sub_{K_{n,m}, n,m} = \prod_{v\in [n]} \prod_{w \in [m]} x_{vw}$. 
	If $A_G \in \{0,1\}^{[n] \times [m]}$ is the bi-adjacency matrix of an undirected bipartite graph $G$ with bipartition $V(G) = [n] \uplus [m]$, then $\hom_{F,n,m}(A_G)$ evaluates to the number of bipartition-respecting homomorphisms from $F$ to $G$, and $\sub_{F, n,m}(A_G)$ is the number of occurrences of $F$ as a subgraph in $G$. We also write $\hom_{F,n,m}(G)$ and $\sub_{F, n,m}(G)$ for the polynomials evaluated in the bi-adjacency matrix of $G$.
	
	For every multigraph $F$, the coefficients of all monomials in $\sub_{F,n,m}$ are $1$. If $F$ is a simple graph, then $\sub_{F,n,m}$ is multilinear. 
	This is generally not the case for $\hom_{F, n,m}$, not even when $F$ is simple.
	
	As a first step, we show that the set
	\[
		\mathfrak{G}_{n,m} \coloneqq \left\{\sum \alpha_i \hom_{F_i, n,m} \mid \text{bipartite multigraphs } F_i, \alpha_i \in \mathbb{Q} \right\} \subseteq \mathbb{Q}[\mathcal{X}_{n,m}].
	\]
	contains precisely the $\Sym_n \times \Sym_m$-symmetric
        polynomials, and we also get the same collection if we take linear
        combinations of subgraph polynomials instead.

	\symmetricPolynomials*
	\begin{proof}
		For \ref{it:gnm1} $\Rightarrow$ \ref{it:gnm3},
		partition the monomial set of $p$ into its $\Sym_n \times \Sym_m$-orbits. 
		We aim to show that each orbit is the polynomial $\sub_{F,n,m}$, for some $F$, multiplied with some non-zero constant. 
		Let $\Omega$ be such an orbit, i.e.\ a set of monomials. 
		Let $q \in \Omega$ denote an arbitrary representative. 
		It describes a bipartite graph $F$ with bipartition $V(F) = \{ i \in [n] \mid \text{ there exists } j \text{ s.t. } x_{ij} \text{ divides } q \}  \allowbreak  \uplus \allowbreak \{ j \in [m] \mid \text{ there exists } i \text{ s.t. } x_{ij} \text{ divides } q \}$.
		An edge $ij$ appears in $F$ with the multiplicity given by the degree of $x_{ij}$ in $q$.
		
		Now $q$ encodes the identity embedding $\iota_{\id} \colon V \hookrightarrow [n] \uplus [m]$. 
		For every $(\pi,\sigma) \in \Sym_n \times \Sym_m$, 
		we have $(\pi, \sigma)(q) = \prod_{vw \in E(F)} x_{\pi(\iota_{\id}(v))\sigma(\iota_{\id}(w))}$. 
		Each embedding $\iota \colon V \hookrightarrow [n] \uplus [m]$ can be written as $(\pi,\sigma) \circ \iota_{\id}$ for some $(\pi,\sigma) \in \Sym_n \times \Sym_m$. 
		Therefore, the orbit $\Omega$ is indeed $\sub_{F,n,m}$, multiplied with some constant.
		
		The implication \ref{it:gnm3} $\Rightarrow$ \ref{it:gnm2} follows from \cref{thm:sub-hom}.
		For \ref{it:gnm2} $\Rightarrow$ \ref{it:gnm1}, let $F$ be a bipartite multigraph and $(\pi, \sigma) \in \Sym_n \times \Sym_m$.
		Then
		\[
			(\pi,\sigma)(\hom_{F, n, m}) = \sum_{h \colon A \uplus B \to [n] \uplus [m]} \prod_{ab \in E(F)} x_{\pi(h(a)) \sigma(h(b))}
			= \sum_{h \colon A \uplus B \to [n] \uplus [m]} \prod_{ab \in E(F)} x_{h(a)) h(b)} = \hom_{F, n,m}
		\]
		observing that the set of maps $h \colon A \uplus B \to [n] \uplus [m]$ is in bijection with the set of maps $(\pi, \sigma) \circ h$ for $h \colon A \uplus B \to [n] \uplus [m]$.
	\end{proof}
	\begin{remark} 
		\label{rem:squareSymmetricPolynomials}
	One can similarly characterise the $\Sym_n$-symmetric polynomials in terms of homomorphism/subgraph polynomials for \emph{directed} and not necessarily bipartite graphs. A $\Sym_n$-symmetric polynomial is one that is fixed by every $\pi \in \Sym_n$ acting on $\Xx_{n,n}$ by mapping $x_{ij}$ to $x_{\pi(i)\pi(j)}$. In fact, we believe that all or most of the theory we develop in this paper for $\Sym_n \times \Sym_m$-symmetric polynomials applies to $\Sym_n$-symmetric ones as well, but we prove our results only for the $\Sym_n \times \Sym_m$-symmetric setting.
	\end{remark}	
	Even though the definition of $\mathfrak{G}_{n,m}$ involves homomorphism polynomials of infinitely many graphs,
	it follows from \cref{lem:gnm-sym} that, for every $n,m\in \mathbb{N}$, finitely many such polynomials already generate $\mathfrak{G}_{n,m}$ as a $\mathbb{Q}$-algebra:
	
	\begin{corollary}
		For every $n,m \in \mathbb{N}$,
		there exist finitely many bipartite multigraphs $F_1, \dots, F_r$ such that, for every $p \in \mathfrak{G}_{n,m}$,
		there exists a polynomial $q \in \mathbb{Q}[y_1, \dots, y_r]$ such that 
		\(
		p = q(\hom_{F_1, n,m}, \dots, \hom_{F_r, n,m}).
		\)
	\end{corollary}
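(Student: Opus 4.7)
The plan is to derive the corollary as an immediate consequence of Hilbert's finiteness theorem for invariant rings, combined with the characterisation of $\mathfrak{G}_{n,m}$ provided by \cref{lem:gnm-sym}.

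First, I would observe that by the equivalence \ref{it:gnm1}$\Leftrightarrow$\ref{it:gnm2} of \cref{lem:gnm-sym}, the set $\mathfrak{G}_{n,m}$ coincides precisely with the ring of invariants $\mathbb{Q}[\mathcal{X}_{n,m}]^{\Sym_n \times \Sym_m}$ under the natural coordinatewise action of $\Sym_n \times \Sym_m$ on the polynomial ring. Since this is a linear action of a finite group on a polynomial ring over a field of characteristic zero, Hilbert's classical finiteness theorem guarantees that this invariant ring is finitely generated as a $\mathbb{Q}$-algebra.

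Next, I would fix any such finite $\mathbb{Q}$-algebra generating set $g_1, \dots, g_s$ of $\mathfrak{G}_{n,m}$. Applying \cref{lem:gnm-sym} once more, each $g_i$ can be written as a $\mathbb{Q}$-linear combination of finitely many homomorphism polynomials, say $g_i = \sum_j \alpha_{ij} \hom_{F_{ij}, n, m}$. Collecting all the bipartite multigraphs $F_{ij}$ appearing in these representations over all $i$ yields a single finite family $F_1, \dots, F_r$ with the property that each $g_i$ is a $\mathbb{Q}$-linear combination, hence a polynomial combination, of $\hom_{F_1, n, m}, \dots, \hom_{F_r, n, m}$. Since the $g_i$ generate $\mathfrak{G}_{n,m}$ as a $\mathbb{Q}$-algebra, so do the $\hom_{F_j, n, m}$.

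There is no real obstacle beyond invoking Hilbert's finiteness theorem itself; the remainder of the argument is a routine unwinding of \cref{lem:gnm-sym}. If one wishes to give an explicit bound on $r$ or on the size of the $F_j$, one may replace the appeal to Hilbert's theorem by Noether's sharper version, which shows that homogeneous invariants of degree at most $N \coloneqq |\Sym_n \times \Sym_m| = n!\,m!$ already suffice as algebra generators. Since $\hom_{F, n, m}$ is homogeneous of degree $|E(F)|$, and since $\hom_{F, n, m}$ differs from $\hom_{F', n, m}$ only by a power of $n$ or $m$ when $F'$ is obtained from $F$ by removing isolated vertices, one may then take the $F_j$ to range over representatives of the (finitely many) isomorphism classes of bipartite multigraphs with at most $N$ edges and no isolated vertices.
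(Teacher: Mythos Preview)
Your proof is correct and follows essentially the same approach as the paper: invoke finiteness of the invariant ring of a finite group (the paper cites \cite[Proposition~3.0.1]{derksen_computational_2015}, which is the same content as Hilbert's/Noether's theorem), then apply \cref{lem:gnm-sym} to express each generator as a linear combination of finitely many homomorphism polynomials. Your additional remark about Noether's degree bound and the resulting explicit description of the $F_j$ goes beyond what the paper states but is correct and a nice addition.
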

	\begin{proof}
		By the invariant theory of finite groups, cf.\ e.g.\ \cite[Proposition~3.0.1]{derksen_computational_2015}, 
		the set of $\Sym_n \times \Sym_m$-symmetric polynomials in $\mathbb{Q}[\mathcal{X}_{n,m}]$ is finitely generated as a $\mathbb{Q}$-algebra.
		By \cref{lem:gnm-sym}, each of its generators is a finite linear combination of homomorphism polynomials from some bipartite multigraphs.
		Hence, the set of $\Sym_n \times \Sym_m$-symmetric polynomials in $\mathbb{Q}[\mathcal{X}_{n,m}]$ is equal to the $\mathbb{Q}$-algebra generated by some finite set of homomorphism polynomials of bipartite multigraphs.
	\end{proof}
	
	We now turn to the characterisation of symmetric polynomials admitting symmetric circuits of polynomial orbit size.
	To that end, for $k,n,m \in \mathbb{N}$, write
	\[
		\mathfrak{T}_{n,m}^k \coloneqq \left\{\sum \alpha_i \hom_{F_i, n,m} \ \middle|\  \text{bipartite multigraphs } F_i \text{ such that } \tw(F_i) < k, \alpha_i \in \mathbb{Q} \right\} \subseteq \mathfrak{G}_{n,m}.
	\]
	for the set of finite $\mathbb{Q}$-linear combinations of homomorphism polynomials of bipartite multigraphs of treewidth less than~$k$.
	Our main theorem is the following:

	\thmMain*
	
	As a first step towards \cref{thm:main1}, we remark that homomorphism polynomials of graphs of bounded treewidth admit polynomial-size symmetric circuits. 
	
	\begin{theorem}[restate=thmHomCircuit, label=thm:hom-circuit-main, name=]
		Let $k, n, m \in \mathbb{N}$.
		Let $F$ be a bipartite multigraph of treewidth less than~$k$.
		Then there exists a $\Sym_n \times \Sym_{m}$-symmetric rigid circuit $C_F$ for $\hom_{F,n,m}$ of size at most $5k!^2k \cdot (n+m)^{k+1} \cdot \norm{F}^2$ satisfying $\maxSup(C_F) \leq k$.
	\end{theorem}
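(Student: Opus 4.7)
The plan is to construct $C_F$ via a standard dynamic programme over a tree decomposition of $F$ of width less than $k$. Fix a nice tree decomposition $T$ of $F$ with bags of size at most $k$ and $\abs{T} \in O(k\norm{F})$, consisting of leaf, introduce, forget, and join nodes, with every edge of $F$ contained in some bag. First I would check that one may assume that in every bag all left-bipartition vertices appear before all right-bipartition vertices in the fixed vertex ordering, which keeps the bookkeeping clean.

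For each bag $B$ and each bipartition-respecting map $\bar v \colon B \to [n] \uplus [m]$, introduce a gate $g_{B,\bar v}$. Its intended value is the sum over all homomorphisms $h$ extending $\bar v$ from the subgraph spanned by the bags in the subtree rooted at $B$ of $\prod_{uv} x_{h(u)h(v)}$ over the edges of that subgraph. The wiring follows the usual DP rules: at a leaf, $g_{B,\bar v}$ is a product of the input gates $x_{\bar v(u)\bar v(v)}$ for $uv \in E(F[B])$; at an introduce node adding $v$ with child bag $B'$, the gate $g_{B,\bar v}$ is obtained by multiplying $g_{B',\bar v|_{B'}}$ with the variables on the edges newly incident to $v$; at a forget node removing $v$ with child bag $B'$, the gate $g_{B,\bar v}$ is the sum of $g_{B',\bar v[v\mapsto w]}$ over all $w$ on the appropriate side of the bipartition; at a join node with children $B_1,B_2$ sharing $B$, it is the product $g_{B_1,\bar v}\cdot g_{B_2,\bar v}$. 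The output gate is $\sum_{\bar v} g_{B_{\mathrm{root}},\bar v}$. Correctness is the standard inductive argument that each edge of $F$ contributes exactly once and each extension of the bag assignment is enumerated exactly once.

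Next I verify the symmetric structure. The map $(\pi,\sigma)\colon g_{B,\bar v}\mapsto g_{B,(\pi,\sigma)\circ\bar v}$ permutes the internal gates and acts on input gates by the natural action on $\Xx_{n,m}$; each of the four wiring rules above is equivariant under post-composition of $\bar v$ by $(\pi,\sigma)$, so this assignment extends to a circuit automorphism. Hence $C_F$ is $\Sym_n\times\Sym_m$-symmetric. The stabiliser of $g_{B,\bar v}$ coincides with the pointwise stabiliser of $\bar v$ as a function, so its unique minimal support is the image $\bar v(B)\subseteq [n]\uplus[m]$, which has size at most $\abs{B}\leq k$; this gives $\maxSup(C_F)\leq k$.

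For the size bound, each bag contributes at most $(n+m)^k$ gates, for a total of $O(k\norm{F}(n+m)^k)$ gates. The dominant contribution to the wire count comes from forget nodes, whose gates each carry $(n+m)$ incoming wires, yielding $O(k\norm{F}(n+m)^{k+1})$ wires; the variable-multiplication wires at introduce and leaf nodes contribute at most $O(\norm{F}^2(n+m)^k)$ further wires. The bookkeeping constants arising from the ordering of vertices within each bag (at most $k!$ on each side of the bipartition, whence $k!^2$ overall) absorb into the stated coefficient, giving a total size of at most $5k!^2 k\cdot(n+m)^{k+1}\cdot\norm{F}^2$. Finally, \cref{lem:rigidifyCircuits} converts $C_F$ into a rigid circuit of no greater size. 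The main obstacle is to realise the construction as a \emph{single} symmetric circuit rather than as a family of isomorphic circuits, and to track the constants through the bipartite bag orderings; the equivariance of the DP rules under the $\Sym_n\times\Sym_m$-action is precisely what makes the symmetric realisation possible.
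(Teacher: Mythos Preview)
Your approach---a dynamic programme over a tree decomposition, with gates indexed by (bag, assignment) pairs, symmetry realised by the natural $\Sym_n\times\Sym_m$-action on assignments, supports bounded by bag size, rigidification at the end---is essentially the paper's. The paper uses a different normal form for the decomposition (all bags of size exactly $k$, adjacent bags differing by one vertex, bounded branching) rather than a nice decomposition, but the idea is the same.

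There is, however, a genuine gap at join nodes. If the join bag $B$ contains an edge $uv\in E(F[B])$, then in \emph{each} of the two subtrees below the join both $u$ and $v$ get introduced at some point, and your introduce-node rule multiplies in $x_{\bar v(u)\bar v(v)}$ there; the product $g_{B_1,\bar v}\cdot g_{B_2,\bar v}$ therefore contains this factor squared rather than once. In your stated semantics (induced subgraph on the union of bags in the subtree), the edge sets of the two children intersect precisely in $E(F[B])$, so your join rule computes $g_{B,\bar v}\cdot\prod_{uv\in E(F[B])}x_{\bar v(u)\bar v(v)}$ instead of $g_{B,\bar v}$; the claim that ``each edge of $F$ contributes exactly once'' is exactly what fails. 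The standard repair---change the invariant so that edges among the current bag vertices are not yet counted and multiply them in only at forget nodes, or equivalently use a nice decomposition with explicit introduce-edge nodes---fixes this without affecting the symmetry, support, or size analyses. Two incidental remarks: you only need $\StabP(\bar v(B))\leq\Stab(g_{B,\bar v})$, not equality, for the support bound; and your construction does not actually produce the $k!^2$ factor (that arises in the paper from indexing by ordered tuples of bag vertices), so your bound is in fact a little smaller than stated.
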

	
	\Cref{thm:hom-circuit-main} yields the forward implication in \cref{thm:main1}: If $p \in \mathfrak{T}_{n,m}^k$, then it is a linear combination of polynomials $\hom_{F,n,m}$, where $\tw(F) \leq k$ for each $F$ in the linear combination. With \cref{thm:hom-circuit-main}, we get a rigid symmetric circuit $C_F$ with $\maxSup(C_F) \leq k$ for each of these $\hom_{F,n,m}$. To obtain $p$, we just take a linear combination of these circuits and rigidify it again using \cref{lem:rigidifyCircuits}. This yields a rigid circuit $C$ representing $p$ that still satisfies $\maxSup(C) \leq k$. By \cref{lem:constantSupportImpliesPolyOrbit}, it has orbit size at most $(n+m)^k$.  
	
	The remainder of this section is concerned with the backward implication in \cref{thm:main1} and the proof of \cref{thm:hom-circuit-main}.

	\subsection{Homomorphism Polynomials of Labelled Graphs}

	The main tool for proving \cref{thm:main1} are homomorphism polynomials of labelled graphs.
		
	\begin{definition}\label{def:labelled-hom-poly}
		Let $\ell, r \in \mathbb{N}$.
		An \emph{$(\ell, r)$-labelled bipartite graph} $\boldsymbol{F} = (F, \boldsymbol{a}, \boldsymbol{b})$ is a tuple of a bipartite multigraph $F$ with bipartition $A \uplus B = V(F)$,
		an $\ell$-tuple $\boldsymbol{a} \in A^\ell$, and an $r$-tuple $\boldsymbol{b} \in B^r$.
		Write $\mathcal{G}(\ell, r)$ for the class of all $(\ell, r)$-labelled bipartite graphs.
		
		For $n, m \in \mathbb{N}$ and tuples $\boldsymbol{v} \in [n]^\ell$ and $\boldsymbol{w} \in [m]^r$, write
		\[
		\boldsymbol{F}_{n,m}(\boldsymbol{v}, \boldsymbol{w}) \coloneqq \sum_{\substack{h \colon A \uplus B \to [n] \uplus [m]\\ h(\boldsymbol{a}) = \boldsymbol{v} \\ h(\boldsymbol{b}) = \boldsymbol{w}}} \prod_{ab \in E(F)} x_{h(a)h(b)} \in \mathbb{Q}[\mathcal{X}_{n,m}]
		\]
		for the \emph{homomorphism polynomial of $\boldsymbol{F}$ at $\boldsymbol{v},  \boldsymbol{w}$}.
		The map $\boldsymbol{F}_{n,m} \colon [n]^\ell \times [m]^r \to \mathbb{Q}[\mathcal{X}_{n,m}]$ given by $(\boldsymbol{v}, \boldsymbol{w}) \mapsto \boldsymbol{F}_{n,m}(\boldsymbol{v}, \boldsymbol{w})$ is the \emph{homomorphism polynomial map} of $\boldsymbol{F}$. Write
		\[
		\mathfrak{G}_{n,m}(\ell, r) \coloneqq \left\{\sum \alpha_i \boldsymbol{F}^i_{n,m} \ \middle|\  \alpha_i \in \mathbb{Q}, \boldsymbol{F}^i \in \mathcal{G}(\ell, r)\right\} \subseteq \mathbb{Q}[\mathcal{X}_{n,m}]^{[n]^\ell \times [m]^r}
		\]
		for the set of finite $\mathbb{Q}$-linear combinations of homomorphism polynomial maps of $(\ell,r)$-labelled bipartite graphs.
	\end{definition}

	Note that for $\ell = 0 = r$, the homomorphism polynomial map $\boldsymbol{F}_{n,m}$ simplifies to the homomorphism polynomial $\hom_{F, n, m}$, i.e.\ essentially $\mathfrak{G}_{n,m}(0,0) = \mathfrak{G}_{n,m}$. 
	
	\begin{example}\label{ex:edge}
		Let $\boldsymbol{F} = (F, a, b)$ denote the $(1,1)$-labelled bipartite graph with $V(F) = \{a,b\}$ and $E(F) = \{ab\}$.
		For $n,m \in \mathbb{N}$, $v \in [n]$, and $w \in [m]$,
		we have $\boldsymbol{F}_{n,m}(v, w) = x_{vw} \in \mathbb{Q}[\mathcal{X}_{n,m}]$.
	\end{example}
	
	Next, we define what it means for a labelled bipartite graph to have bounded treewidth.
	Furthermore, we generalise $\mathfrak{T}^k_{n,m}$.

	\begin{definition}\label{def:treewidth-labelled}
		Let $\ell, r, k \in \mathbb{N}$.
		An \emph{$(\ell, r)$-labelled bipartite graph} $\boldsymbol{F} = (F, \boldsymbol{a}, \boldsymbol{b})$ has \emph{treewidth} less than $k$ if there exists a tree decomposition $(T, \beta)$ of $F$ of width less than $k$ with a vertex $s \in V(T)$ such that $a_1, \dots, a_\ell, b_1, \dots, b_r \in \beta(s)$.
		Write $\mathcal{T}^k(\ell, r) \subseteq \mathcal{G}(\ell, r)$ for the class of all $(\ell, r)$-labelled bipartite graphs of treewidth less than~$k$.
		For $n,m \in \mathbb{N}$, write
		\[
			\mathfrak{T}_{n,m}^k(\ell, r) \coloneqq \left\{\sum \alpha_i \boldsymbol{F}^i_{n,m} \ \middle|\  \alpha_i \in \mathbb{Q}, \boldsymbol{F}^i \in \mathcal{T}^k(\ell, r)\right\}
			\subseteq \mathfrak{G}_{n,m}(\ell, r)
		\]
		for the set of $\mathbb{Q}$-linear combinations of homomorphism polynomial maps of $(\ell,r)$-labelled bipartite graphs of treewidth less than~$k$.
	\end{definition}
	
	To reiterate, the $k$ in $\mathfrak{T}_{n,m}^k(\ell, r)$ indicates the maximal admissible size of a bag in a tree decomposition rather than the treewidth, which is defined as the maximal bag size minus one.
	This convention simplifies subsequent arguments.

	Note that essentially $\mathfrak{T}^k_{n,m} = \mathfrak{T}^k_{n,m}(0,0)$. For $\boldsymbol{v} \in [n]^\ell, \boldsymbol{w} \in [m]^r$, we also define the set of \emph{instantiated homomorphism polynomial maps} as  
	\(
	\mathfrak{T}_{n,m}^k(\boldsymbol{v},\boldsymbol{w}) \coloneqq \left\{ \phi(\boldsymbol{v}, \boldsymbol{w}) \ \middle|\  \phi \in \mathfrak{T}^k_{n,m}(\ell,r) \right\} \subseteq \mathbb{Q}[\mathcal{X}_{n,m}].
	\)

	\begin{example}\label{ex:one}
		For $\ell, r \in \mathbb{N}$,
		let $\boldsymbol{J} = (J, \boldsymbol{a}, \boldsymbol{b})$ be the $(\ell, r)$-labelled edge-less $(\ell, r)$-vertex bipartite graphs with labels residing on distinct vertices. 
		Clearly, $\boldsymbol{J} \in \mathcal{T}^{\ell+r}(\ell, r)$.
		For all $n, m \in \mathbb{N}$, $\boldsymbol{v} \in [n]^\ell$, and $\boldsymbol{w} \in [m]^r$,
		\[
			\boldsymbol{J}_{n,m}(\boldsymbol{v}, \boldsymbol{w}) \coloneqq \sum_{\substack{h \colon A \uplus B \to [n] \uplus [m]\\ h(\boldsymbol{a}) = \boldsymbol{v} \\ h(\boldsymbol{b}) = \boldsymbol{w}}} 1 = 1
		\]
		That is, $\boldsymbol{J}_{n,m} \in \mathfrak{T}^{\ell+r}_{n,m}(\ell, r)$ is the homomorphism polynomial map which maps all tuples to $1$.
	\end{example}

	\subsection{Operations on Labelled Graphs and Homomorphism Polynomial Maps}
	\label{sec:ops}
	
	Fix  $n,m, \ell, r, k \in \mathbb{N}$ throughout this section.
	In this section, we prove various closure properties of $\mathfrak{T}_{n,m}^k(\ell, r)$.
	\subsubsection{Swapping Left and Right Labels}
	
	We first note that the left and right labels can be interchanged.
	In order to simplify notation, we subsequently focus on operations involving only the left labels.
	The following purely syntactic lemma justifies this approach.
	To that end,
	define for $\boldsymbol{F} = (F, \boldsymbol{a}, \boldsymbol{b}) \in \mathcal{G}(\ell, r)$
	the $(r,\ell)$-labelled bipartite graph
	 $\boldsymbol{F}^* \coloneqq (F^*, \boldsymbol{b}, \boldsymbol{a})  \in \mathcal{G}(r, \ell)$ where $F^*$ is the bipartite multigraph obtained from $F$ by interchanging the parts of the bipartition.
	Analogously, for $\phi = \sum \alpha_i \boldsymbol{F}^i_{n,m} \in \mathfrak{G}_{n,m}(\ell, r)$,
	define $\phi^* \coloneqq  \sum \alpha_i (\boldsymbol{F}^i)^*_{m,n} \in \mathfrak{G}_{m,n}(r, \ell)$.
        The following
        is then immediate.
	\begin{lemma}\label{lem:reverse}
		Let $\boldsymbol{F} \in \mathcal{G}(\ell, r)$
		and $\phi \in \mathfrak{G}_{n,m}(\ell, r)$.
		\begin{enumerate}
			\item If $\boldsymbol{F} \in \mathcal{T}^k(\ell, r)$,
			then $\boldsymbol{F}^* \in \mathcal{T}^k(r, \ell)$.
			\item For $\boldsymbol{v} \in [n]^\ell$ and
                          $\boldsymbol{w} \in [m]^r$, it holds that $\boldsymbol{F}_{n,m}(\boldsymbol{v}, \boldsymbol{w}) = \boldsymbol{F}^*_{m,n}(\boldsymbol{w}, \boldsymbol{v})$.
			\item If $\phi \in \mathfrak{T}^k_{n,m}(\ell, r)$, then $\phi^* \in \mathfrak{T}^k_{m,n}(r, \ell)$.
		\end{enumerate}
		
	\end{lemma}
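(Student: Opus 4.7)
The plan is to verify each of the three assertions essentially by unpacking definitions; the statement is flagged as "immediate" in the excerpt, so no new ideas are needed and the proof amounts to checking that the $*$-operation merely relabels which side of the bipartition plays the role of ``left'' and ``right'' without altering any combinatorial invariant we care about.

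For part~(1), I would start from a tree decomposition $(T,\beta)$ of $F$ of width less than $k$ and a distinguished bag $s \in V(T)$ with $\{a_1,\dots,a_\ell,b_1,\dots,b_r\} \subseteq \beta(s)$. Because $F^*$ has the same vertex set and the same edge (multi)set as $F$ — only the bipartition classes are swapped — the pair $(T,\beta)$ is verbatim a tree decomposition of $F^*$ of width less than $k$. The bag $\beta(s)$ still contains all labels of $\boldsymbol{F}^* = (F^*,\boldsymbol{b},\boldsymbol{a})$, so $\boldsymbol{F}^* \in \mathcal{T}^k(r,\ell)$.

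For part~(2), I would observe that the sum defining $\boldsymbol{F}_{n,m}(\boldsymbol{v},\boldsymbol{w})$ ranges over maps $h\colon A \uplus B \to [n]\uplus[m]$ with $h(A)\subseteq [n]$, $h(B)\subseteq [m]$, and $h(\boldsymbol{a})=\boldsymbol{v}$, $h(\boldsymbol{b})=\boldsymbol{w}$. The sum defining $\boldsymbol{F}^*_{m,n}(\boldsymbol{w},\boldsymbol{v})$ ranges over maps $h'\colon B\uplus A \to [m]\uplus[n]$ with $h'(B)\subseteq [m]$, $h'(A)\subseteq [n]$, and $h'(\boldsymbol{b})=\boldsymbol{w}$, $h'(\boldsymbol{a})=\boldsymbol{v}$. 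These two indexing sets are manifestly the same, and since $E(F)=E(F^*)$ the monomials $\prod_{ab\in E(F)} x_{h(a)h(b)}$ agree term-by-term (recall that the variable $x_{ij}$ carries a left index $i\in[n]$ and right index $j\in[m]$, which is exactly how both sums interpret the endpoints). Hence the two polynomials coincide.

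For part~(3), I would expand $\phi = \sum_i \alpha_i \boldsymbol{F}^i_{n,m}$ with $\boldsymbol{F}^i \in \mathcal{T}^k(\ell,r)$ and compute, for arbitrary $\boldsymbol{v} \in [n]^\ell$ and $\boldsymbol{w} \in [m]^r$,
\[
\phi^*(\boldsymbol{w},\boldsymbol{v}) = \phi(\boldsymbol{v},\boldsymbol{w}) = \sum_i \alpha_i \boldsymbol{F}^i_{n,m}(\boldsymbol{v},\boldsymbol{w}) = \sum_i \alpha_i (\boldsymbol{F}^i)^*_{m,n}(\boldsymbol{w},\boldsymbol{v}),
\]
using part~(2) in the last step. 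Since $(\boldsymbol{F}^i)^* \in \mathcal{T}^k(r,\ell)$ by part~(1), this exhibits $\phi^*$ as an element of $\mathfrak{T}^k_{m,n}(r,\ell)$. I do not anticipate any real obstacle; the only thing to be careful about is keeping track of which side of the bipartition each label tuple lives on, so that applying the $*$-operation twice returns the original object and the indexing conventions line up in part~(2).
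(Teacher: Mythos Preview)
Your proposal is correct and matches the paper's treatment: the paper simply declares the lemma ``immediate'' and omits a proof, and your argument is exactly the definition-unpacking that justifies this. The only minor remark is that in part~(2) the variable set for $\boldsymbol{F}^*_{m,n}$ is $\mathcal{X}_{m,n}$ rather than $\mathcal{X}_{n,m}$, so strictly speaking the equality lives under the identification $x_{ij}\leftrightarrow x_{ji}$, but this is the obvious symmetry and poses no real issue.
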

	
	\subsubsection{Unlabelling and Sums}
	Next, we consider the unlabelling operation.
	For $\ell \geq 1$, $\boldsymbol{F} = (F, \boldsymbol{a}, \boldsymbol{b}) \in \mathcal{G}(\ell, r)$, and $i \in [\ell]$,
	define $\Sigma_i \boldsymbol{F} \coloneqq (F, \boldsymbol{a}[i/], \boldsymbol{b}) \in \mathcal{G}(\ell-1, r)$
	as the graph obtained by dropping the $i$-th left label.
	Analogously, 
	for $\phi \in \mathfrak{G}_{n,m}(\ell, r)$,
	define $\Sigma_i\phi \in \mathfrak{G}_{n,m}(\ell-1, r)$ via $(\Sigma_i \phi)(\boldsymbol{v}[i/], \boldsymbol{w}) \coloneqq \sum_{v \in [n]}\phi(\boldsymbol{v}[i/v], \boldsymbol{w})$ for all $\boldsymbol{v} \in [n]^\ell$ and $\boldsymbol{w} \in [m]^r$. 

	\begin{lemma}\label{lem:unlabelling}
		For $\ell \geq 1$,
		let $\boldsymbol{F} \in \mathcal{G}(\ell, r)$
		and $\phi \in \mathfrak{G}_{n,m}(\ell, r)$.
		\begin{enumerate}
			\item If $\boldsymbol{F} \in \mathcal{T}^k(\ell, r)$,
			then $\Sigma_i\boldsymbol{F} \in \mathcal{T}^k(\ell-1, r)$.
			\item For $\boldsymbol{v} \in [n]^\ell$ and $\boldsymbol{w} \in [m]^r$,  it is
					\[
			(\Sigma_i \boldsymbol{F})_{n,m}(\boldsymbol{v}[i/],\boldsymbol{w}) = \sum_{v \in [n]} \boldsymbol{F}_{n,m}(\boldsymbol{v}[i/v], \boldsymbol{w}).
			\]
			\item If $\phi \in \mathfrak{T}^k_{n,m}(\ell, r)$, then $\Sigma_i \phi \in \mathfrak{T}^k_{n,m}(\ell-1, r)$. 
		\end{enumerate}
	\end{lemma}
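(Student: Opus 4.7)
The proof is a straightforward three-part verification where each part follows essentially immediately from the definitions; the main conceptual point is that the unlabelling operation $\Sigma_i$ is defined to match the desired semantic sum on the nose, and leaves the underlying (multi)graph structure untouched.

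For part 1, I would just reuse the tree decomposition. Given $\boldsymbol{F} = (F, \boldsymbol{a}, \boldsymbol{b}) \in \mathcal{T}^k(\ell, r)$ witnessed by a tree decomposition $(T, \beta)$ of $F$ of width less than $k$ with a vertex $s \in V(T)$ containing $a_1, \dots, a_\ell, b_1, \dots, b_r$ in its bag, the same $(T, \beta)$ witnesses $\Sigma_i \boldsymbol{F} = (F, \boldsymbol{a}[i/], \boldsymbol{b}) \in \mathcal{T}^k(\ell-1, r)$, since the graph is unchanged and $\beta(s)$ still contains the smaller label set $\{a_1, \dots, a_{i-1}, a_{i+1}, \dots, a_\ell, b_1, \dots, b_r\}$.

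For part 2, I would directly unfold the definitions in \cref{def:labelled-hom-poly}. The left-hand side is the sum of $\prod_{ab \in E(F)} x_{h(a)h(b)}$ over all $h \colon A \uplus B \to [n] \uplus [m]$ with $h(\boldsymbol{a}[i/]) = \boldsymbol{v}[i/]$ and $h(\boldsymbol{b}) = \boldsymbol{w}$, that is, with $h(a_j) = v_j$ for all $j \neq i$ and $h(\boldsymbol{b}) = \boldsymbol{w}$, and no constraint on $h(a_i)$. Partitioning this sum according to the value $v \coloneqq h(a_i) \in [n]$ regroups it precisely as $\sum_{v \in [n]} \boldsymbol{F}_{n,m}(\boldsymbol{v}[i/v], \boldsymbol{w})$.

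Part 3 is then immediate by $\mathbb{Q}$-linearity of $\Sigma_i$ on $\mathfrak{G}_{n,m}(\ell, r)$: if $\phi = \sum_j \alpha_j \boldsymbol{F}^j_{n,m}$ with $\boldsymbol{F}^j \in \mathcal{T}^k(\ell, r)$, then applying the definition of $\Sigma_i \phi$ and swapping the two finite sums yields
\[
	(\Sigma_i \phi)(\boldsymbol{v}[i/], \boldsymbol{w}) = \sum_j \alpha_j \sum_{v \in [n]} \boldsymbol{F}^j_{n,m}(\boldsymbol{v}[i/v], \boldsymbol{w}) = \sum_j \alpha_j (\Sigma_i \boldsymbol{F}^j)_{n,m}(\boldsymbol{v}[i/], \boldsymbol{w})
\]
using part 2 in the last step, and each $\Sigma_i \boldsymbol{F}^j$ lies in $\mathcal{T}^k(\ell-1, r)$ by part 1, so $\Sigma_i \phi \in \mathfrak{T}^k_{n,m}(\ell-1, r)$. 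There is no real obstacle here; the only thing to be careful about is to keep the index $i$ of the unlabelling operation notationally separate from summation indices used in the linear combination.
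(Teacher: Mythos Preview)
Your proof is correct and follows exactly the straightforward unfolding of definitions that the paper implicitly has in mind (the paper states this lemma without proof). One minor point you glossed over in part~2 is the case where $a_i = a_j$ for some $j \neq i$, but your partition-by-value argument still goes through verbatim there, since the terms with $v \neq v_j$ simply vanish.
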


	\subsubsection{Disjoint Union and Tensor Products}

	The \emph{disjoint union} of labelled graphs is defined as follows:
	For elements $\boldsymbol{F} = (F, \boldsymbol{a}, \boldsymbol{b}) \in \mathcal{G}(\ell, r)$ and $\boldsymbol{F}' = (F', \boldsymbol{a}', \boldsymbol{b}') \in \mathcal{G}(\ell', r')$,
	define $\boldsymbol{F} \otimes \boldsymbol{F}' \in \mathcal{G}(\ell+\ell', r+r')$
	as $(F \uplus F', \boldsymbol{a}\boldsymbol{a}', \boldsymbol{b}\boldsymbol{b}')$.
	Analogously, for $\phi \in \mathfrak{G}(\ell ,r )$ and $\psi \in \mathfrak{G}(\ell', r')$,
	define $\phi \otimes \psi \in \mathfrak{G}(\ell+\ell', r+r')$ by $(\phi \otimes \psi)(\boldsymbol{v}, \boldsymbol{w}) \coloneqq \phi(v_1 \dots v_\ell, w_1 \dots w_\ell) \cdot \psi(v_{\ell+1} \dots v_{\ell+\ell'}, w_{r+1} \dots w_{r+r'})$ for all $\boldsymbol{v} \in [n]^{\ell+\ell'}$ and $\boldsymbol{w} \in [m]^{r+r'}$.
	
	We use $\otimes$ to denote disjoint unions in order to be aligned with the notation of \cite{mancinska_quantum_2020,seppelt_homomorphism_2024} for homomorphism tensors where the disjoint union of labelled graphs corresponds to the Kronecker product of their homomorphism tensors.
	\begin{lemma}\label{lem:disjoint-union}
		For $\ell, \ell', r,r', k, k', n, m \in \mathbb{N}$,
		let $\boldsymbol{F} \in \mathcal{G}(\ell, r)$,  $\boldsymbol{F}' \in \mathcal{G}(\ell', r')$,  $\phi \in \mathfrak{G}(\ell, r)$, and $\psi \in \mathfrak{G}(\ell', r')$.
		\begin{enumerate}
			\item If $\boldsymbol{F} \in \mathcal{T}^k(\ell, r)$ and $\boldsymbol{F}' \in \mathcal{T}^{k'}(\ell', r')$, then $\boldsymbol{F} \otimes \boldsymbol{F}' \in \mathcal{T}^{k''}(\ell+\ell', r+r')$ for $k'' \coloneqq \max\{k,k', \ell+\ell'+r+r'\}$.
			\item For $\boldsymbol{v} \in [n]^{\ell+\ell'}$ and $\boldsymbol{w} \in [m]^{r+r'}$, 
			\[
				(\boldsymbol{F} \otimes \boldsymbol{F}')_{n,m}(\boldsymbol{v}, \boldsymbol{w}) = \boldsymbol{F}_{n,m}(v_1 \dots v_\ell, w_1 \dots w_r) \cdot \boldsymbol{F}'_{n,m}(v_{\ell+1} \dots v_{\ell+ \ell'}, w_{r+1} \dots w_{r+r'}).
			\]
			\item If $\phi \in \mathfrak{T}^k(\ell, r)$ and $\phi' \in \mathfrak{T}^{k'}(\ell', r')$,
			then $\phi \otimes \phi' \in \mathfrak{T}^{k''}(\ell+\ell', r+r')$ for $k'' \coloneqq \max\{k,k', \ell+\ell'+r+r'\}$.
		\end{enumerate}
	\end{lemma}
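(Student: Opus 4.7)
The plan is to prove the three parts in order, with parts 2 and 3 following almost mechanically once part 1 is established.

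For part 1, I will build a tree decomposition of $F \uplus F'$ from witnessing decompositions for $\boldsymbol{F}$ and $\boldsymbol{F}'$. By assumption we have tree decompositions $(T, \beta)$ of $F$ of width less than $k$ with a bag $s \in V(T)$ containing $a_1, \dots, a_\ell, b_1, \dots, b_r$, and $(T', \beta')$ of $F'$ of width less than $k'$ with a bag $s' \in V(T')$ containing $a'_1, \dots, a'_{\ell'}, b'_1, \dots, b'_{r'}$. I will form a new tree $T''$ by taking the disjoint union of $T$ and $T'$, adding a fresh node $t$ with bag $\beta''(t) \coloneqq \{a_1, \dots, a_\ell, a'_1, \dots, a'_{\ell'}, b_1, \dots, b_r, b'_1, \dots, b'_{r'}\}$, and connecting $t$ to both $s$ and $s'$. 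The bag at $t$ has size $\ell + \ell' + r + r'$, while all other bags have size at most $\max\{k, k'\}$, so the width is less than $\max\{k,k',\ell+\ell'+r+r'\}$. Connectedness of the bag-subtrees is preserved because each label is already in $s$ or $s'$, which are adjacent to $t$. The designated bag $t$ contains all labels of $\boldsymbol{F} \otimes \boldsymbol{F}'$, so $\boldsymbol{F} \otimes \boldsymbol{F}' \in \mathcal{T}^{k''}(\ell+\ell', r+r')$.

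For part 2, I will unfold the definition. Since $V(F \uplus F')$ is the disjoint union $V(F) \uplus V(F')$ and $E(F \uplus F') = E(F) \uplus E(F')$, any map $h \colon V(F \uplus F') \to [n] \uplus [m]$ with $h(\boldsymbol{a}\boldsymbol{a}') = \boldsymbol{v}$ and $h(\boldsymbol{b}\boldsymbol{b}') = \boldsymbol{w}$ is uniquely determined by its restrictions $h_1 \coloneqq h\!\restriction_{V(F)}$ and $h_2 \coloneqq h\!\restriction_{V(F')}$, which range independently over the maps fixing the respective prefixes/suffixes of $\boldsymbol{v}$ and $\boldsymbol{w}$. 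Splitting the edge product $\prod_{ab \in E(F \uplus F')} x_{h(a)h(b)}$ across the two components and exchanging sum and product gives the claimed factorisation.

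For part 3, linearity does the work: if $\phi = \sum_i \alpha_i \boldsymbol{F}^i_{n,m}$ with $\boldsymbol{F}^i \in \mathcal{T}^k(\ell,r)$, and $\psi = \sum_j \beta_j \boldsymbol{G}^j_{n,m}$ with $\boldsymbol{G}^j \in \mathcal{T}^{k'}(\ell',r')$, then by bilinearity of the pointwise product defining $\otimes$ on maps and by part 2,
\[
\phi \otimes \psi \;=\; \sum_{i,j} \alpha_i\beta_j\, (\boldsymbol{F}^i \otimes \boldsymbol{G}^j)_{n,m},
\]
and each summand lies in $\mathfrak{T}^{k''}_{n,m}(\ell+\ell', r+r')$ by part 1 applied to $\boldsymbol{F}^i$ and $\boldsymbol{G}^j$.

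The only non-formulaic step is the tree decomposition construction in part 1, and even there the main point — that a single fresh bag containing all $\ell+\ell'+r+r'$ labels suffices to keep the two decompositions joined and to serve as the distinguished label-bag — is the natural thing to try. Parts 2 and 3 are bookkeeping.
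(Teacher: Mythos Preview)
Your proposal is correct and follows essentially the same approach as the paper. The paper only spells out part~1, and does so exactly as you do: it constructs the tree decomposition by making the root nodes of the two given decompositions adjacent to a fresh node whose bag consists of all the labelled vertices of $\boldsymbol{F} \otimes \boldsymbol{F}'$; your treatment of parts~2 and~3 (which the paper leaves implicit) is the natural unwinding of the definitions and bilinearity.
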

	\begin{proof}
		For the first assertion, note that a tree decomposition for $\boldsymbol{F} \otimes \boldsymbol{F}'$ can be constructed by making the root nodes of the tree decompositions of $\boldsymbol{F}$ and $\boldsymbol{F}'$ adjacent to a fresh node containing the vertices labelled in $\boldsymbol{F} \otimes \boldsymbol{F}'$.
		This yields a tree decomposition with bags of size at most $k'' \coloneqq \max\{k,k', \ell+\ell'+r+r'\}$.
	\end{proof}

	\subsubsection{Gluing and Point-wise Products}
	The final elementary operation is gluing.
	For elements $\boldsymbol{F} = (F, \boldsymbol{a}, \boldsymbol{b})$ and $\boldsymbol{F}' = (F', \boldsymbol{a}', \boldsymbol{b}')$ of $\mathcal{G}(\ell, r)$,
	define $\boldsymbol{F} \odot \boldsymbol{F}' \in \mathcal{G}(\ell, r)$
	by taking the disjoint union of $F$ and $F'$ and identifying for $i \in [\ell]$ and $j \in [r]$ the vertices $a_i$ with $a'_i$ and $b_i$ with $b'_i$.
	Since left/right labels come from the left/right side of the bipartitions the resulting graph is indeed bipartite.
	If $\ell = 0 = r$, then gluing amounts to taking a disjoint union.
	
	For $\phi, \phi' \in \mathfrak{G}_{n,m}(\ell, r)$, define  the point-wise product $\phi \odot \phi' \in \mathfrak{G}_{n,m}(\ell, r)$ via $(\phi \odot \phi')(\boldsymbol{v}, \boldsymbol{w}) \coloneqq \phi(\boldsymbol{v}, \boldsymbol{w}) \phi'(\boldsymbol{v}, \boldsymbol{w})$.
	When convenient, we write $\phi \phi'$ for the point-wise product $\phi \odot \phi'$.

	\begin{lemma}\label{lem:gluing}
		Let $\boldsymbol{F}, \boldsymbol{F}' \in \mathcal{G}(\ell, r)$.
		Let $\phi, \phi' \in \mathfrak{T}^k_{n,m}(\ell, r)$.
		\begin{enumerate}
			\item If $\boldsymbol{F}, \boldsymbol{F}' \in \mathcal{T}^k(\ell, r)$,
			then $\boldsymbol{F} \odot \boldsymbol{F}' \in \mathcal{T}^k(\ell, r)$.
			\item For $\boldsymbol{v} \in [n]^\ell$ and $\boldsymbol{w} \in [m]^r$,  			\[
				(\boldsymbol{F} \odot \boldsymbol{F}')_{n,m}(\boldsymbol{v}, \boldsymbol{w})  = \boldsymbol{F}_{n,m}(\boldsymbol{v}, \boldsymbol{w}) \cdot \boldsymbol{F}'_{n,m}(\boldsymbol{v}, \boldsymbol{w}).
			\]
			\item The point-wise product $\phi \odot \phi'$ of $\phi$ and $\phi'$ is in $\mathfrak{T}^k_{n,m}(\ell, r)$.
		\end{enumerate}
	\end{lemma}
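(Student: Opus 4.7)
The plan is to prove the three parts in order, with part 1 being the only nontrivial one and parts 2 and 3 following routinely.

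For part 1, I would construct a tree decomposition $(T'', \beta'')$ of $\boldsymbol{F} \odot \boldsymbol{F}'$ from the given tree decompositions $(T, \beta)$ of $\boldsymbol{F}$ and $(T', \beta')$ of $\boldsymbol{F}'$, both of width less than $k$, and both having a distinguished vertex ($s \in V(T)$ and $s' \in V(T')$) whose bag contains all the labelled vertices $a_1, \dots, a_\ell, b_1, \dots, b_r$. Form $T''$ by taking the disjoint union of $T$ and $T'$ and adding the edge $ss'$; let $\beta''$ be $\beta$ on $V(T)$ and $\beta'$ on $V(T')$, after identifying the labelled vertices of $F$ with the corresponding labelled vertices of $F'$ (as required by the gluing operation). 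The bag sizes are unchanged so the width is still less than $k$. Each vertex and edge of $F \odot F'$ lies in some bag because this was true in the respective components. The connectedness condition for non-label vertices is inherited from $(T, \beta)$ or $(T', \beta')$; for label vertices, the subtree in $T$ containing occurrences of a label is connected, the subtree in $T'$ containing occurrences of its identification is connected, and both subtrees contain the node $s$, respectively $s'$, joined by the new edge. Finally, $s$ (or $s'$) is a node of $T''$ whose bag contains all labels, witnessing $\boldsymbol{F} \odot \boldsymbol{F}' \in \mathcal{T}^k(\ell, r)$.

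For part 2, I would unfold the definitions. A homomorphism $h \colon V(F \odot F') \to [n] \uplus [m]$ with $h(\boldsymbol{a}) = \boldsymbol{v}$ and $h(\boldsymbol{b}) = \boldsymbol{w}$ is the same data as a pair $(h_1, h_2)$ where $h_1 \colon V(F) \to [n] \uplus [m]$ and $h_2 \colon V(F') \to [n]\uplus [m]$ both respect the labels, since the only vertices shared by $F$ and $F'$ in the glued graph are precisely the labelled ones, which are forced to be mapped to $\boldsymbol{v}, \boldsymbol{w}$. The edge set of $F \odot F'$ is the disjoint union of $E(F)$ and $E(F')$, so $\prod_{ab \in E(F \odot F')} x_{h(a)h(b)}$ factors as $\prod_{ab \in E(F)} x_{h_1(a)h_1(b)} \cdot \prod_{ab \in E(F')} x_{h_2(a)h_2(b)}$. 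Summing over all such $h$ factors as a product of the corresponding sums, giving the claimed identity.

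For part 3, I would use bilinearity. Writing $\phi = \sum_i \alpha_i \boldsymbol{F}^i_{n,m}$ and $\phi' = \sum_j \beta_j (\boldsymbol{F}'^j)_{n,m}$ with each $\boldsymbol{F}^i, \boldsymbol{F}'^j \in \mathcal{T}^k(\ell, r)$, we obtain
\[
\phi \odot \phi' = \sum_{i,j} \alpha_i \beta_j \, \boldsymbol{F}^i_{n,m} \odot (\boldsymbol{F}'^j)_{n,m} = \sum_{i,j} \alpha_i \beta_j \, (\boldsymbol{F}^i \odot \boldsymbol{F}'^j)_{n,m}
\]
using part 2 pointwise, and each $\boldsymbol{F}^i \odot \boldsymbol{F}'^j \in \mathcal{T}^k(\ell, r)$ by part 1, so $\phi \odot \phi' \in \mathfrak{T}^k_{n,m}(\ell, r)$.

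The main obstacle is really only in part 1, namely verifying that the connectedness axiom of tree decompositions survives identification of the labelled vertices; this works precisely because we required both original tree decompositions to have a single node containing \emph{all} of the labels, so that the two identified subtrees meet at the new edge $ss'$.
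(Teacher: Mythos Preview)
Your proof is correct and follows the expected approach; the paper itself omits the proof of this lemma as routine, and your argument (joining the two tree decompositions by an edge between their distinguished root nodes, then using the universal property of the gluing for part~2 and bilinearity for part~3) is precisely the standard verification the paper leaves implicit.
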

	\begin{proof}
		For a detailed proof, see \cite[Lemma 6.13]{grohe_homomorphism_2025} and \cite[Lemma 3.2.13]{seppelt_homomorphism_2024}.
		Intuitively, for the first claim, if $(T, \beta)$ and $(T', \beta')$ are tree decompositions of $\boldsymbol{F}$ and $\boldsymbol{F}'$ of width less than $k$ with all labelled vertices occurring in the bags of vertices $r \in V(T)$ and $r' \in V(T')$, 
		then are tree decomposition for $\boldsymbol{F} \odot \boldsymbol{F}'$ can be constructed by taking the disjoint union of $T$, $T'$, and a fresh vertex $x$ and connecting $x$ to $r$ and $r'$.
		The bags of the new decomposition are induced by $\beta$ on $T$ and by $\beta'$ on $T'$.
		The bag associated with $x$ contains the labelled vertices in $\boldsymbol{F} \odot \boldsymbol{F}'$.
	\end{proof}

	\Cref{lem:gluing} shows that $\mathfrak{T}^k_{n,m}(\ell,r)$ forms a $\mathbb{Q}$-algebra for all $k,n,m, \ell, r \in \mathbb{N}$.
	If $\ell +r \leq k$, then this algebra is unital by \cref{ex:one}.

 	\subsubsection{Restricted Sums}
	As a modification of \cref{lem:unlabelling},
	we consider the following restricted sum operation.
	It is needed when we analyse the semantics of summation gates in dependence of their support.
	Let $i \in [\ell]$ and $J \subseteq [\ell] \setminus \{i\}$.
	For $\phi \in \mathfrak{G}_{n,m}(\ell, r)$,
	define $\Sigma_{i, J} \phi \in \mathfrak{G}_{n,m}(\ell-1, r)$
	via \[ (\Sigma_{i, J} \phi)(\boldsymbol{v}[i/], \boldsymbol{w}) \coloneqq \sum_{v \in [n] \setminus \{v_j \mid j \in J\}} \phi(\boldsymbol{v}[i/v], \boldsymbol{w}). \]
	for all $\boldsymbol{v} \in [n]^\ell$ and $\boldsymbol{w} \in [m]^r$.
	Note that $\Sigma_i \phi = \Sigma_{i, \emptyset} \phi$.
	
	\begin{lemma}\label{lem:sum-exclude-lincomb}
		Suppose that $\ell+r \leq k$.
		For $\phi \in \mathfrak{T}^k_{n,m}(\ell,r )$,  $i \in [\ell]$, and $J \subseteq [\ell] \setminus \{i\}$,
		we have that $\Sigma_{i, J} (\phi) \in \mathfrak{T}^k_{n,m}(\ell-1, r)$.
	\end{lemma}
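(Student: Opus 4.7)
The plan is to expand the restricted sum via inclusion--exclusion over subsets $S \subseteq J$ and to realise each resulting summand using the already-established closure properties. The set-membership indicator expands as
\[
[v \notin \{v_j \mid j \in J\}] = \prod_{j \in J}\bigl(1 - [v = v_j]\bigr) = \sum_{S \subseteq J}(-1)^{|S|}\,[v = v_j \text{ for all } j \in S],
\]
so substituting into the definition of $\Sigma_{i,J}\phi$ yields
\[
(\Sigma_{i,J}\phi)(\boldsymbol{v}[i/], \boldsymbol{w}) = \sum_{S \subseteq J}(-1)^{|S|} \sum_{v \in [n]} [v = v_j \text{ for all } j \in S]\,\phi(\boldsymbol{v}[i/v], \boldsymbol{w}).
\]

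The next step is to realise the equality indicator as the homomorphism polynomial map of a labelled graph in $\mathcal{T}^k(\ell, r)$. For each $S \subseteq J$, I would define $\boldsymbol{I}_{i, S} \in \mathcal{G}(\ell, r)$ to be the edge-less $(\ell, r)$-labelled bipartite graph whose $\ell - |S|$ left vertices consist of one distinguished vertex carrying all labels at positions $\{i\} \cup S$ together with $\ell - |S| - 1$ further left vertices each carrying one of the remaining left labels, and whose $r$ right vertices each carry a distinct right label. It is edge-less, its $\ell - |S| + r \leq \ell + r \leq k$ labelled vertices fit into a single bag of width at most $k - 1$, and so $\boldsymbol{I}_{i, S} \in \mathcal{T}^k(\ell, r)$. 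Unfolding the definition, no homomorphism $h$ exists unless $v_i = v_j$ for all $j \in S$, and if the condition holds then $h$ is uniquely determined; hence $(\boldsymbol{I}_{i, S})_{n, m}(\boldsymbol{v}, \boldsymbol{w}) = [v_i = v_j \text{ for all } j \in S]$, with $\boldsymbol{I}_{i, \emptyset}$ recovering the constant-one map of Example~\ref{ex:one}.

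The proof then finishes by invoking \cref{lem:gluing,lem:unlabelling}. By \cref{lem:gluing}, the pointwise product $\phi \odot \boldsymbol{I}_{i, S}$ lies in $\mathfrak{T}^k_{n, m}(\ell, r)$ with value $[v_i = v_j\ \text{for all}\ j \in S]\,\phi(\boldsymbol{v}, \boldsymbol{w})$ at $(\boldsymbol{v}, \boldsymbol{w})$. Applying \cref{lem:unlabelling} to unlabel position $i$ gives $\Sigma_i(\phi \odot \boldsymbol{I}_{i, S}) \in \mathfrak{T}^k_{n, m}(\ell - 1, r)$ with value $\sum_{v \in [n]}[v = v_j\ \text{for all}\ j \in S]\,\phi(\boldsymbol{v}[i/v], \boldsymbol{w})$ at $(\boldsymbol{v}[i/], \boldsymbol{w})$. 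The signed finite $\mathbb{Q}$-linear combination over $S \subseteq J$ dictated by the inclusion--exclusion identity therefore belongs to $\mathfrak{T}^k_{n, m}(\ell - 1, r)$ and equals $\Sigma_{i, J}\phi$.

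I do not expect a serious obstacle: once the equality indicator has been packaged as the polynomial map of a labelled graph of treewidth less than $k$, the remainder of the argument is immediate from the closure properties already proved. The hypothesis $\ell + r \leq k$ is used in precisely one place, namely to guarantee that all labelled vertices of each $\boldsymbol{I}_{i, S}$ fit into a single bag of width at most $k - 1$, so that $\boldsymbol{I}_{i, S} \in \mathcal{T}^k(\ell, r)$.
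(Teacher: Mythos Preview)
Your proof is correct and follows the same overall strategy as the paper: build an indicator $\delta \in \mathfrak{T}^k_{n,m}(\ell,r)$ for the event $v_i \notin \{v_j \mid j \in J\}$, then observe $\Sigma_{i,J}\phi = \Sigma_i(\delta \odot \phi)$ and apply \cref{lem:gluing,lem:unlabelling}. The only difference is in how $\delta$ is assembled. The paper introduces, for each $j \in J$, the pairwise-equality graph $\boldsymbol{D}^{i,j}$ (your $\boldsymbol{I}_{i,\{j\}}$), sums these to obtain the \emph{count} $\sum_{j \in J}[v_i=v_j] \in \{0,\dots,\ell\}$, and then applies an interpolating polynomial $p$ with $p(0)=1$ and $p(1)=\dots=p(\ell)=0$. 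You instead expand $\prod_{j\in J}(1-[v_i=v_j])$ by inclusion--exclusion and realise each term directly via the graphs $\boldsymbol{I}_{i,S}$. Your route is marginally more elementary in that it avoids the interpolation step and makes the coefficients explicit; the paper's route has the small advantage that only the two-label-collision graphs $\boldsymbol{D}^{i,j}$ are needed as building blocks, which are reused verbatim in the later proof of \cref{cor:product-exclude-lincomb}. Both arguments use the hypothesis $\ell+r\le k$ in exactly the same place, namely to ensure the edge-less indicator graphs lie in $\mathcal{T}^k(\ell,r)$.
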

	\begin{proof}
		For distinct $i, j \in [\ell]$, 
		let $\boldsymbol{D}^{i,j} \in \mathcal{T}^{k}(\ell, r)$ be the $(\ell, r)$-labelled edge-less $(\ell-1,r)$-vertex bipartite graph whose labels are placed such that the $i$-th and the $j$-th left label
		reside on the same vertex while all other labels are carried by distinct vertices.
		Note that, for all $\boldsymbol{v} \in [n]^{\ell}$ and $\boldsymbol{w} \in [m]^r$,
		\begin{equation}\label{eq:diell}
			\boldsymbol{D}^{i,j}_{n,m}(\boldsymbol{v}, \boldsymbol{w}) = \begin{cases}
				1,& v_i = v_j, \\
				0,& \text{otherwise}.
			\end{cases}
		\end{equation}
		Let $p \in \mathbb{Q}[x]$ be a polynomial such that $p(0) = 1$ and $p(1) = \dots = p(\ell) = 0$.
		By applying \cref{lem:gluing,ex:one},
		define $\delta \coloneqq p(\sum_{j \in J} \boldsymbol{D}^{i,j}_{n,m}) \in \mathfrak{T}^k_{n,m}(\ell, r)$.
		Note that, for all $\boldsymbol{v} \in [n]^{\ell}$ and $\boldsymbol{w} \in [m]^r$,
		\[
			\delta(\boldsymbol{v}, \boldsymbol{w}) = \begin{cases}
				1, & v_i  \not\in \{v_j \mid j \in J\},\\
				0, & \text{otherwise}.
			\end{cases}
		\]
		Hence, $\sum_{i, J} \phi = \sum_{i} (\delta \odot \phi)$.
		The claim follows from \cref{lem:gluing,lem:unlabelling}.
	\end{proof}

	\subsubsection{Products}
	
	Suppose that $\ell \geq 1$.
	In this section, we consider the following product operator:
	For $\phi \in \mathfrak{G}_{n,m}(\ell, r)$ and $i \in [\ell]$,
	define $\Pi_i \phi \in \mathfrak{G}_{n,m}(\ell-1, r)$ via 
	\[ (\Pi_i \phi)(\boldsymbol{v}[i/], \boldsymbol{w}) \coloneqq \prod_{v \in [n]} \phi(\boldsymbol{v}[i/v], \boldsymbol{w}) \] 
	for all $\boldsymbol{v} \in [n]^\ell$ and $\boldsymbol{w} \in [m]^k$.
	Note that, in contrast to $\Sigma_i$, and $\odot$,
	which are linear and bilinear, respectively,
	the operator $\Pi_i$ is not linear.
	Despite that, the following \cref{thm:product-lincomb} shows that applying $\Pi_i$ to a linear combination of homomorphism polynomial maps yields a linear combination of homomorphism polynomial maps.
	This theorem is the most important technical novelty in the proof of \cref{thm:main1}.
	
	\begin{theorem}\label{thm:product-lincomb}
		For $\phi \in \mathfrak{T}^k_{n,m}(\ell,r )$ and $i \in [\ell]$,
		we have that $\Pi_i\phi \in \mathfrak{T}^k_{n,m}(\ell-1, r)$.
	\end{theorem}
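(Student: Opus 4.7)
The plan is to express the product operation via the Newton--Girard identities relating elementary symmetric polynomials to power sums. Setting $X_v \coloneqq \phi(\boldsymbol{v}[i/v], \boldsymbol{w}) \in \mathbb{Q}[\mathcal{X}_{n,m}]$, we have
\[
(\Pi_i \phi)(\boldsymbol{v}[i/], \boldsymbol{w}) = \prod_{v \in [n]} X_v = e_n(X_1, \ldots, X_n),
\]
the $n$-th elementary symmetric polynomial evaluated at the $X_v$. Over $\mathbb{Q}$ (characteristic zero), $e_n$ admits an expression $e_n = P_n(p_1, \ldots, p_n)$ as a rational polynomial without constant term in the power sums $p_k \coloneqq \sum_{v \in [n]} X_v^k$ for $k = 1, \ldots, n$.

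The first step is to verify that each power sum already lies in the target algebra. The pointwise power $\phi^{\odot k}$ belongs to $\mathfrak{T}^k_{n,m}(\ell, r)$ by \cref{lem:gluing}, since pointwise products correspond to gluings of labelled graphs along their labels, which preserve treewidth; then $p_k = \Sigma_i(\phi^{\odot k}) \in \mathfrak{T}^k_{n,m}(\ell-1, r)$ by \cref{lem:unlabelling}. The second step invokes the Newton--Girard recursion $n \cdot e_n = \sum_{k=1}^{n} (-1)^{k-1} p_k e_{n-k}$ inductively: each $e_j$ becomes a $\mathbb{Q}$-polynomial in $p_1, \ldots, p_j$ with no constant term, so substituting $\Pi_i \phi = P_n(p_1, \ldots, p_n)$ requires only $\mathbb{Q}$-linear combinations and pointwise products inside $\mathfrak{T}^k_{n,m}(\ell-1, r)$; both are available via \cref{lem:gluing}, which in fact endows $\mathfrak{T}^k_{n,m}(\ell-1, r)$ with a $\mathbb{Q}$-algebra structure. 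This places $\Pi_i \phi$ in $\mathfrak{T}^k_{n,m}(\ell-1, r)$, completing the proof.

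The conceptual obstacle, and the reason this theorem is flagged as the main technical novelty, is that a direct combinatorial construction fails. If one tries to realise $\Pi_i \phi$ as the homomorphism polynomial map of a single labelled graph $\boldsymbol{G}$ obtained by taking $n$ copies of each summand $\boldsymbol{F}^s$ of $\phi$ and gluing along the $\ell{-}1{+}r$ shared labels, then the $n$ copies of the $i$-th left label survive as $n$ distinct labelled vertices. Since the definition of $\mathcal{T}^k(\ell', r')$ demands a tree decomposition with a single bag containing all labels, this would force bag size $\Omega(n)$ and unbounded treewidth. Newton's identity sidesteps this by rewriting the $n$-fold product as a polynomial expression in only $n$ scalar power sums, each produced by operations that provably preserve treewidth. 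The resulting expression is a linear combination of possibly many homomorphism polynomials, which is harmless: the definition of $\mathfrak{T}^k_{n,m}$ imposes no bound on the number of summands, only on the treewidth of each underlying pattern graph.
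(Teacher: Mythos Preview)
Your proof is correct and takes a genuinely different route from the paper's. The paper proceeds by M\"obius inversion on the partition lattice of $[n]$: it first constructs, for each partition $\pi \vdash n$, an explicit labelled graph $\Pi_i^\pi(\boldsymbol{F}^1,\dots,\boldsymbol{F}^n) \in \mathcal{T}^k(\ell-1,r)$ and then shows that $\Pi_i\phi$ equals a $\mathbb{Q}$-linear combination of the corresponding homomorphism polynomial maps, with coefficients coming from the M\"obius function and the orbit structure of $\Sym_n$ acting on functions $[n]\to T$. This two-stage argument (first a single graph, then linear combinations via orbits) occupies \cref{lem:pi-i-pi,lem:product-one-graph} and the proof of \cref{thm:product-lincomb} proper.

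Your Newton--Girard argument is more direct and more elementary: it never unpacks $\phi$ into its summands, treats the single-graph case and the linear-combination case uniformly, and reduces everything to the already-available closure of $\mathfrak{T}^k_{n,m}(\ell-1,r)$ under $\Sigma_i$, pointwise powers, and $\mathbb{Q}$-linear combinations. The observation that $P_n$ has no constant term is exactly what is needed to avoid assuming $\ell-1+r\le k$ (i.e.\ unitality of the algebra), and you handle this cleanly. What the paper's approach buys is an explicit description of the resulting linear combination in terms of concrete labelled graphs $\Pi_i^\pi(\boldsymbol{F}^{g(1)},\dots,\boldsymbol{F}^{g(n)})$; however, this explicit form is not exploited elsewhere in the paper (the downstream uses, e.g.\ \cref{cor:product-exclude-lincomb,ex:knm}, invoke \cref{thm:product-lincomb} only as a black box), so your shorter argument loses nothing for the purposes of the paper. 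The two approaches are of course related---Newton's identities are themselves an incarnation of M\"obius inversion on the partition lattice---but your packaging is cleaner.
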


	\Cref{thm:product-lincomb} is proved in several steps.
	We first show in \cref{lem:product-one-graph} that $\Pi_i \boldsymbol{F}_{n,m} \in \mathfrak{T}^k_{n,m}(\ell-1, r)$ for all $\boldsymbol{F} \in \mathcal{T}^k(\ell, r)$.
	To that end, consider the following lemma.
	
	\begin{lemma}\label{lem:pi-i-pi}
		Let $\boldsymbol{F}^1, \dots, \boldsymbol{F}^n \in \mathcal{T}^k(\ell, r)$ and $i \in [\ell]$.
		For a partition $\pi$ of $[n]$,
		define the $(\ell-1, r)$-labelled bipartite graph
		\(
			\Pi_i^\pi(\boldsymbol{F}^1, \dots, \boldsymbol{F}^n) \coloneqq \bigodot_{P \in [n]/\pi} \Sigma_i(\bigodot_{v \in P} \boldsymbol{F}^v) \in \mathcal{G}(\ell-1, r).
		\)
		Then
		\[
			(\Pi_i^\pi(\boldsymbol{F}^1, \dots, \boldsymbol{F}^n))_{n,m}(\boldsymbol{v}[i/], \boldsymbol{w}) = \sum_{h \colon [n]/\pi \to [n]}  \prod_{v \in [n]} \boldsymbol{F}^v_{n, m}(\boldsymbol{v}[i/(h\circ \pi)(v)], \boldsymbol{w})
		\]
		for all $\boldsymbol{v} \in [n]^\ell$ and $\boldsymbol{w} \in [m]^r$.
		Furthermore, $\Pi_i^\pi(\boldsymbol{F}^1, \dots, \boldsymbol{F}^n) \in \mathcal{T}^k(\ell-1, r) $. 
	\end{lemma}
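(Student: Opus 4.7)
The plan is to unfold the recursive definition of $\Pi_i^\pi(\boldsymbol{F}^1,\dots,\boldsymbol{F}^n)$ one layer at a time, invoking \cref{lem:gluing,lem:unlabelling} for both the semantic evaluation and the treewidth bookkeeping.

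For the semantic identity, I would fix $\boldsymbol{v} \in [n]^\ell$ and $\boldsymbol{w} \in [m]^r$ and compute from the inside out. By \cref{lem:gluing}, for each part $P \in [n]/\pi$ the inner gluing $\bigodot_{v \in P} \boldsymbol{F}^v$ evaluates to $\prod_{v \in P} \boldsymbol{F}^v_{n,m}(\boldsymbol{v}, \boldsymbol{w})$. Applying $\Sigma_i$ then replaces the $i$-th left coordinate by a summation over a fresh variable $u_P \in [n]$ via \cref{lem:unlabelling}. The outer gluing contributes an outer product again by \cref{lem:gluing}, yielding
$$\prod_{P \in [n]/\pi} \sum_{u_P \in [n]} \prod_{v \in P} \boldsymbol{F}^v_{n,m}(\boldsymbol{v}[i/u_P], \boldsymbol{w}).$$
Distributing the outer product over the sums rewrites this as a single sum over tuples $(u_P)_{P \in [n]/\pi}$, which are in bijection with maps $h \colon [n]/\pi \to [n]$ via $h(P) \coloneqq u_P$. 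Collapsing $\prod_{P \in [n]/\pi} \prod_{v \in P}$ to $\prod_{v \in [n]}$ and using $u_{\pi(v)} = (h \circ \pi)(v)$ in the factor indexed by $v$ produces exactly the formula in the statement.

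For the treewidth assertion, I would iterate the treewidth parts of \cref{lem:gluing,lem:unlabelling} along the same nesting: each inner gluing remains in $\mathcal{T}^k(\ell, r)$, each $\Sigma_i$ step yields a graph in $\mathcal{T}^k(\ell-1, r)$, and the outer gluing keeps the result in $\mathcal{T}^k(\ell-1, r)$.

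I do not expect a genuine obstacle here; the argument is careful bookkeeping. The only subtlety is the combinatorial reorganisation of the product-of-sums into a single sum indexed by $h \colon [n]/\pi \to [n]$, together with matching each factor $\boldsymbol{F}^v$ with the argument indexed by the part $\pi(v)$ containing $v$. The payoff comes in the proof of \cref{thm:product-lincomb}: expanding $\Pi_i$ applied to a linear combination by multilinear distribution produces cross-terms grouped by partitions $\pi$, each of which is captured by exactly one of the graphs $\Pi_i^\pi(\boldsymbol{F}^1, \dots, \boldsymbol{F}^n)$ shown here to lie in $\mathcal{T}^k(\ell-1, r)$.
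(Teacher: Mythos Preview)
Your proposal is correct and follows essentially the same approach as the paper: unfold the definition using \cref{lem:gluing} for both gluings and \cref{lem:unlabelling} for $\Sigma_i$, then distribute the product over the sums and reindex by maps $h \colon [n]/\pi \to [n]$; the treewidth claim is likewise obtained by iterating the treewidth parts of the same two lemmas. The paper's proof is essentially your computation written as a three-line chain of equalities.
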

	\begin{proof}
		The final assertion follows from \cref{lem:gluing,lem:unlabelling}.
		For the first assertion, observe that, by \cref{lem:gluing,lem:unlabelling},
		\begin{align*}
			(\Pi_i^\pi(\boldsymbol{F}^1, \dots, \boldsymbol{F}^n))_{n,m}(\boldsymbol{v}[i/], \boldsymbol{w})
			&= \prod_{P \in [n]/\pi} \sum_{v' \in [n]} \prod_{v \in P} \boldsymbol{F}^v_{n,m}(\boldsymbol{v}[i/v'], \boldsymbol{w}) \\
			&= \sum_{h \colon [n]/\pi \to [n]}  \prod_{P \in [n]/\pi} \prod_{v \in P} \boldsymbol{F}^v_{n,m}(\boldsymbol{v}[i/h(P)], \boldsymbol{w}) \\
			&= \sum_{h \colon [n]/\pi \to [n]}  \prod_{v \in [n]} \boldsymbol{F}_{n, m}^v(\boldsymbol{v}[i/(h\circ \pi)(v)], \boldsymbol{w}). \qedhere
		\end{align*}
	\end{proof}
	
	\Cref{lem:pi-i-pi} is used to prove the following \cref{lem:product-one-graph}.
	Here,  the $\boldsymbol{F}^1, \dots, \boldsymbol{F}^n$ are taken to be all the same $(\ell, r)$-labelled bipartite graph $\boldsymbol{F}$.
	We abbreviate $\Pi_i^\pi \boldsymbol{F} \coloneqq \Pi_i^\pi(\boldsymbol{F}, \dots, \boldsymbol{F})$.
	\Cref{lem:pi-i-pi} then yields
	\begin{equation}\label{eq:pi-i-pi-simplified}
		(\Pi_i^\pi \boldsymbol{F})_{n,m}(\boldsymbol{v}[i/], \boldsymbol{w}) = \sum_{h \colon [n]/\pi \to [n]}  \prod_{v \in [n]} \boldsymbol{F}_{n, m}(\boldsymbol{v}[i/(h\circ \pi)(v)], \boldsymbol{w})
	\end{equation}
	
	\begin{lemma}\label{lem:product-one-graph}
		For $\boldsymbol{F} \in \mathcal{T}^k(\ell, r)$ and $i \in [\ell]$,
		 $\Pi_i (\boldsymbol{F}_{n,m}) \in \mathfrak{T}^k_{n, m}(\ell-1, r)$.
	\end{lemma}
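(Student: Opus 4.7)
The plan is to isolate $\Pi_i \boldsymbol{F}_{n,m}$ out of the family of labelled graphs $\{\Pi_i^\pi \boldsymbol{F}\}_{\pi \in \Pi([n])}$ supplied by \cref{lem:pi-i-pi}, by means of Möbius inversion over the partition lattice of $[n]$. The starting observation, read off from equation \eqref{eq:pi-i-pi-simplified}, is that $(\Pi_i^\pi \boldsymbol{F})_{n,m}(\boldsymbol{v}[i/], \boldsymbol{w})$ equals a sum of products $\prod_{v \in [n]} \boldsymbol{F}_{n,m}(\boldsymbol{v}[i/f(v)], \boldsymbol{w})$ ranging over all maps $f \colon [n] \to [n]$ that are constant on the blocks of $\pi$ — equivalently, over all $f$ with $\pi \leq \ker(f)$ in the refinement order. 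Grouping these maps by their kernel rewrites this as $(\Pi_i^\pi \boldsymbol{F})_{n,m} = \sum_{\rho \geq \pi} S_\rho$, where $S_\rho$ collects the contributions from those $f$ with $\ker(f) = \rho$ exactly.

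The crux is that $\Pi_i \boldsymbol{F}_{n,m}$ is, up to a scalar, the term $S_{\hat 0}$ corresponding to $\rho = \hat 0$, the discrete partition of $[n]$. Indeed, $f \colon [n] \to [n]$ has $\ker(f) = \hat 0$ if and only if $f$ is a bijection, and for every bijection $f \in \Sym_n$ commutativity of the product gives
\[
\prod_{v \in [n]} \boldsymbol{F}_{n,m}(\boldsymbol{v}[i/f(v)], \boldsymbol{w}) = \prod_{u \in [n]} \boldsymbol{F}_{n,m}(\boldsymbol{v}[i/u], \boldsymbol{w}) = (\Pi_i \boldsymbol{F}_{n,m})(\boldsymbol{v}[i/], \boldsymbol{w}),
\]
so $S_{\hat 0} = n! \cdot \Pi_i \boldsymbol{F}_{n,m}$ as a map on $[n]^{\ell-1} \times [m]^r$.

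Standard Möbius inversion on the partition lattice then yields
\[
\Pi_i \boldsymbol{F}_{n,m} = \frac{1}{n!} \sum_{\rho \in \Pi([n])} \mu(\hat 0, \rho) \, (\Pi_i^\rho \boldsymbol{F})_{n,m}.
\]
Since by \cref{lem:pi-i-pi} each $\Pi_i^\rho \boldsymbol{F}$ belongs to $\mathcal{T}^k(\ell-1, r)$, and $\mathfrak{T}^k_{n,m}(\ell-1, r)$ is by definition closed under $\mathbb{Q}$-linear combinations, this displays $\Pi_i \boldsymbol{F}_{n,m}$ as an element of $\mathfrak{T}^k_{n,m}(\ell-1, r)$.

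The main hurdle in this plan is recognising that the ``injective'' contribution $S_{\hat 0}$ collapses to $n!$ identical copies of the target product; this is exactly what reduces a would-be inclusion-exclusion over permutations to a clean statement over the partition lattice, and it relies essentially on the commutativity of multiplication. The resulting linear combination involves $B_n$ many terms and is therefore not efficient, but this is immaterial for the lemma, which only asserts membership in the $\mathbb{Q}$-linear span $\mathfrak{T}^k_{n,m}(\ell-1, r)$.
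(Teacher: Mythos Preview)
Your proposal is correct and follows essentially the same route as the paper: both arrive at the identity $\Pi_i \boldsymbol{F}_{n,m} = \frac{1}{n!}\sum_{\pi}\mu_\pi (\Pi_i^\pi \boldsymbol{F})_{n,m}$ by recognising (via commutativity) that the bijective contribution equals $n!$ copies of the target product and then applying M\"obius inversion over the partition lattice, invoking \cref{lem:pi-i-pi} for membership of each $\Pi_i^\pi\boldsymbol{F}$ in $\mathcal{T}^k(\ell-1,r)$. The only cosmetic difference is that the paper cites \cref{lem:moebius-polynomial} directly, whereas you spell out the inversion by grouping maps $f\colon[n]\to[n]$ by $\ker(f)$.
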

	\begin{proof}
		Write $\mu$ for the Möbius function of the partition lattice of $[n]$, cf.\ \cref{eq:frs}.
		More concretely, 
		we show that $\Pi_i \boldsymbol{F}_{n,m} = \frac{1}{n!} \sum_{\pi \vdash n} \mu_\pi (\Pi_i^\pi \boldsymbol{F})_{n,m}$.
		By \cref{lem:pi-i-pi},  $\Pi^\pi_i\boldsymbol{F} \in \mathcal{T}^k(\ell-1, r)$.
		Hence, it remains to show that
		\[
		\prod_{v \in [n]} \boldsymbol{F}_{n,m}(\boldsymbol{v}[i/v], \boldsymbol{w})
		=  \frac{1}{n!} \sum_{\pi \vdash n} \mu_{\pi} (\Pi^\pi_i\boldsymbol{F})_{n, m}(\boldsymbol{v}[i/], \boldsymbol{w})
		\]
		for all $\boldsymbol{v} \in [n]^\ell$ and $\boldsymbol{w} \in [m]^r$.
		Note that, crucially, the coefficients in the above expression do not depend on $\boldsymbol{v} \in [n]^\ell$ and $\boldsymbol{w} \in [m]^r$.

		The left hand-side expression can be rewritten as follows noting that, for every bijection $h \colon [n] \hookrightarrow [n]$, it is 
		$\prod_{v \in [n]} \boldsymbol{F}_{n,m}(\boldsymbol{v}[i/v], \boldsymbol{w}) = \prod_{v \in [n]} \boldsymbol{F}_{n,m}(\boldsymbol{v}[i/h(v)], \boldsymbol{w})$
		by commutativity of the product.
		The second equality follows from \cref{lem:moebius-polynomial},
		the third from \cref{eq:pi-i-pi-simplified}.
		\begin{align*}
			\prod_{v \in [n]} \boldsymbol{F}_{n,m}(\boldsymbol{v}[i/v], \boldsymbol{w})
			&= \frac{1}{n!} \sum_{h \colon [n] \hookrightarrow [n]} \prod_{v \in [n]} \boldsymbol{F}_{n,m}(\boldsymbol{v}[i/h(v)], \boldsymbol{w}) \\
			&= \frac{1}{n!} \sum_{\pi \vdash n} \mu_\pi \sum_{h \colon [n]/\pi \to [n]} \prod_{v \in [n]} \boldsymbol{F}_{n,m}(\boldsymbol{v}[i/(h \circ \pi)(v)], \boldsymbol{w}) \\
			&= \frac{1}{n!} \sum_{\pi \vdash n} \mu_\pi (\Pi^\pi_i \boldsymbol{F})_{n,m}(\boldsymbol{v}[i/], \boldsymbol{w}). \qedhere
		\end{align*}
	\end{proof}

	Now we consider the general case: applying the product operator to a linear combination of labelled homomorphism polynomials.
	In preparation, we note the following \cref{fact:orbits}.

	\begin{fact}\label{fact:orbits}
		Let $\Gamma$ denote a finite group acting on a finite set $X$.
		For an element $x \in X$, write $\beta_x$ for the number of pairs $(z, \gamma) \in X \times \Gamma$ such that $\gamma(z) = x$.
		If $x, y \in X$ are in the same orbit under $\Gamma$,
		then $\beta_x = \beta_y$.
	\end{fact}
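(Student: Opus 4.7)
The plan is to observe that $\beta_x$ is in fact independent of $x \in X$, not merely constant on orbits. For any fixed $\gamma \in \Gamma$, the map $z \mapsto \gamma(z)$ is a bijection on $X$, so the equation $\gamma(z) = x$ has exactly one solution, namely $z = \gamma^{-1}(x)$. Summing over all $\gamma$ yields $\beta_x = |\Gamma|$, which is manifestly independent of $x$, so in particular $\beta_x = \beta_y$ whenever $x, y$ lie in the same orbit (and indeed whenever they do not).

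If one prefers a proof that uses the same-orbit hypothesis directly, the plan is instead to exhibit an explicit bijection: pick $\delta \in \Gamma$ with $\delta(x) = y$ and verify that the map $(z, \gamma) \mapsto (z, \delta \gamma)$ sends $\{(z, \gamma) \in X \times \Gamma : \gamma(z) = x\}$ to $\{(z, \gamma') \in X \times \Gamma : \gamma'(z) = y\}$ (since $(\delta \gamma)(z) = \delta(x) = y$), with two-sided inverse $(z, \gamma') \mapsto (z, \delta^{-1} \gamma')$. Either approach is entirely elementary, and there is no genuine obstacle; the fact is really a restatement of the observation that each element of $\Gamma$ acts as a bijection on $X$.
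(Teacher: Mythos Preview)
Your proposal is correct. Your second approach is exactly the paper's proof: pick $\iota \in \Gamma$ with $\iota(x) = y$ and observe that $(z,\gamma) \mapsto (z,\iota\gamma)$ is a bijection between the two sets of pairs.

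Your first approach is a genuinely different and slicker argument: you note that for each fixed $\gamma$ the equation $\gamma(z) = x$ has a unique solution, so $\beta_x = |\Gamma|$ for every $x$, independent of orbit membership. This proves strictly more than the stated fact (and more than the paper's proof), since it shows $\beta_x$ is constant on all of $X$, not merely on each orbit. The paper's bijection argument, by contrast, uses the same-orbit hypothesis in an essential way and does not reveal the actual value of $\beta_x$. Either approach is fine here; the first is arguably the more natural one given that the conclusion is immediate from the bijectivity of the group action.
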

	\begin{proof}
		Suppose that $\iota(x) = y$ for some $\iota \in \Gamma$.
		Then the map $(z, \gamma) \mapsto (z, \iota \gamma )$ sends a pair $(z, \gamma)$ such that $\gamma(z) = x$ to the pair $(z, \iota \gamma)$ satisfying $\iota \gamma (z) = \iota(x) = y$.
		This map is bijective.
	\end{proof}

	\begin{proof}[Proof of \cref{thm:product-lincomb}]
		Write $\phi = \sum_{t \in T} \alpha_t \boldsymbol{F}^t_{n,m}$ for some finite set $T$,
		coefficients $\alpha_t \in \mathbb{Q}$ and $\boldsymbol{F}^t \in \mathcal{T}^k(\ell, r)$.
		The symmetric group $\Sym_n$ acts on the set of functions $f \colon [n] \to T$ by composition.
		The orbits of this action are in bijection with maps $\lambda \colon T \to \{0, \dots, n\}$ such that $\sum_{t\in T} \lambda(t) = n$.
		Write $\Lambda$ for the set of all such functions.
		A function $f \colon [n] \to T$ belongs to the orbit $O_\lambda$ for $\lambda \in \Lambda$ if $|f^{-1}(t)| = \lambda(t)$ for all $t \in T$.
		
		Let $\boldsymbol{v} \in [n]^\ell$ and $\boldsymbol{w} \in [m]^r$.
		First, product and sum are interchanged. The resulting sum over all functions $f \colon [n] \to T$ is then grouped by orbits:
		\begin{align*}
			(\Pi_i \phi)(\boldsymbol{v}[i/], \boldsymbol{w})
			&= \prod_{v \in [n]} \sum_{t \in T} \alpha_t \boldsymbol{F}_{n,m}^t(\boldsymbol{v}[i/v], \boldsymbol{w}) \\
			&= \sum_{f \colon [n] \to T} \prod_{v \in [n]} \alpha_{f(v)} \boldsymbol{F}_{n,m}^{f(v)}(\boldsymbol{v}[i/v], \boldsymbol{w}) \\
			&= \sum_{\lambda \in \Lambda} \sum_{f \in O_\lambda} \prod_{v \in [n]} \alpha_{f(v)} \boldsymbol{F}_{n,m}^{f(v)}(\boldsymbol{v}[i/v], \boldsymbol{w}) \\
			&= \sum_{\lambda \in \Lambda} \alpha_\lambda  \sum_{f \in O_\lambda} \prod_{v \in [n]} \boldsymbol{F}_{n,m}^{f(v)}(\boldsymbol{v}[i/v], \boldsymbol{w}).
		\end{align*}
		where $\alpha_\lambda \coloneqq \prod_{t \in T} \alpha_t^{\lambda(t)} \in \mathbb{Q}$.
		
		From now on, we consider each orbit $\lambda \in \Lambda$ separately.
		By \cref{fact:orbits}, there exists a positive number $\beta_\lambda \in \mathbb{N}$ such that, for every $f \in O_\lambda$, the number of pairs $g \colon [n] \to T$ and $h \in \Sym_n$ such that $g \circ h = f$ is $\beta_\lambda$.	
		In any such pair, we necessarily have $g \in O_\lambda$.
		Towards applying Möbius inversion, we denote permutations $h \in \Sym_n$ as injective maps $[n] \hookrightarrow [n]$.
		\begin{align*}
			\beta_\lambda \sum_{f \in O_\lambda} \prod_{v \in [n]} \boldsymbol{F}_{n,m}^{f(v)}(\boldsymbol{v}[i/v], \boldsymbol{w})
			&= \sum_{g \in O_\lambda} \sum_{h \colon [n] \hookrightarrow [n]} \prod_{v \in [n]} \boldsymbol{F}_{n,m}^{(g \circ h)(v)}(\boldsymbol{v}[i/v], \boldsymbol{w}) \\
			&= \sum_{g \in O_\lambda} \sum_{h \colon [n] \hookrightarrow [n]} \prod_{v \in [n]} \boldsymbol{F}_{n,m}^{g(v)}(\boldsymbol{v}[i/h(v)], \boldsymbol{w}) \\
			&= \sum_{g \in O_\lambda} \sum_{\pi \vdash n} \mu_\pi \sum_{h \colon [n]/\pi \to [n]} \prod_{v \in [n]} \boldsymbol{F}_{n,m}^{g(v)}(\boldsymbol{v}[i/(h \circ \pi)(v)], \boldsymbol{w}) \\
			&= \sum_{g \in O_\lambda} \sum_{\pi \vdash n} \mu_\pi (\Pi^\pi_i(\boldsymbol{F}^{g(1)}, \dots, \boldsymbol{F}^{g(n)}))_{n,m}(\boldsymbol{v}[i/], \boldsymbol{w})
		\end{align*}
		Here, the first equality follows from \cref{fact:orbits}.
		The second equality is obtained by first replacing the product over $v \in [n]$ by a product over $h(v)$ for $v \in [n]$ and then replacing the sum over $h \colon [n] \hookrightarrow [n]$ by a sum over $h^{-1}$ for $h \colon [n] \hookrightarrow [n]$.
		The third equality follows from \cref{lem:moebius-polynomial}.
		The final equality is implied by \cref{lem:pi-i-pi}.
		
		Combining the identities above, we obtain
		\begin{equation}\label{eq:product-lincomb}
		(\Pi_i \phi)(\boldsymbol{v}[i/], \boldsymbol{w})
		= \sum_{\lambda \in \Lambda} \frac{\alpha_\lambda}{\beta_\lambda} \sum_{g \in O_\lambda} \sum_{\pi \vdash n} \mu_\pi \Pi_i^\pi(\boldsymbol{F}^{g(1)}, \dots, \boldsymbol{F}^{g(n)})_{n,m}(\boldsymbol{v}[i/], \boldsymbol{w}).
		\end{equation}
		Note that neither the $\alpha_\lambda$ nor $\beta_\lambda$ depend on $\boldsymbol{v}$.
		Hence, $\Pi_i \phi \in \mathfrak{T}^k_{n,m}(\ell-1, r)$.
	\end{proof}

	\begin{remark}
		\cref{thm:product-lincomb} indeed generalises \cref{lem:product-one-graph}.
		If $T = \{\boldsymbol{F}\}$ is a singleton and there are no coefficients,
		then $\Lambda$ consists of a single map $\lambda$.
		Then, $\alpha_\lambda = 1$ and  $\beta_\lambda = n!$.
		The orbit $O_\lambda$ contains only one map $g$ and $\Pi_i^\pi(\boldsymbol{F}^{g(1)}, \dots, \boldsymbol{F}^{g(n)}) = \Pi_i^\pi \boldsymbol{F}$.
	\end{remark}

	\subsubsection{Restricted Products}
	
	Finally, we derive a corollary of \cref{thm:product-lincomb} concerned with the restricted product operator $\Pi_{i, J}$ which is defined as follows:
	Let $i \in [\ell]$ and $J \subseteq [\ell] \setminus \{i\}$.
	For $\phi \in \mathfrak{G}_{n,m}(\ell, r)$,
	define $\Pi_{i, J} \phi \in \mathfrak{G}_{n,m}(\ell-1, r)$
	via \[ (\Pi_{i, J} \phi)(\boldsymbol{v}[i/], \boldsymbol{w}) \coloneqq \prod_{v \in [n] \setminus \{v_j \mid j \in J\}} \phi(\boldsymbol{v}[i/v], \boldsymbol{w}). \]
	for all $\boldsymbol{v} \in [n]^\ell$ and $\boldsymbol{w} \in [m]^r$.
	Note that $\Pi_i \phi = \Pi_{i, \emptyset} \phi$.

	\begin{corollary}\label{cor:product-exclude-lincomb}
		Suppose that $\ell+r \leq k$.
		For $\phi \in \mathfrak{T}^k_{n,m}(\ell,r )$,  $i \in [\ell]$ and $J \subseteq [\ell] \setminus \{i\}$,
		 $\Pi_{i, J} (\phi) \in \mathfrak{T}^k_{n,m}(\ell-1, r)$.
	\end{corollary}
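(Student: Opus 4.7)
The plan is to reduce \cref{cor:product-exclude-lincomb} to \cref{thm:product-lincomb} by replacing $\phi$ with an auxiliary element $\psi \in \mathfrak{T}^k_{n,m}(\ell, r)$ that agrees with $\phi$ on those inputs $(\boldsymbol{v}, \boldsymbol{w})$ with $v_i \notin \{v_j \mid j \in J\}$ and equals $1$ on the remaining inputs. Once such a $\psi$ is in hand, the factors of the unrestricted product $\Pi_i \psi$ indexed by $v \in \{v_j \mid j \in J\}$ contribute a factor of $1$, so $\Pi_i \psi = \Pi_{i, J} \phi$; applying \cref{thm:product-lincomb} to $\psi$ then yields the claim. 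This parallels the argument for the restricted sum in \cref{lem:sum-exclude-lincomb}, except that the multiplicative identity, rather than zero, must be inserted on the excluded inputs — this is why we cannot simply multiply $\phi$ by the indicator $\delta$ used there.

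To construct $\psi$, I would reuse the indicator $\delta \in \mathfrak{T}^k_{n,m}(\ell, r)$ built in the proof of \cref{lem:sum-exclude-lincomb}: fix $p \in \mathbb{Q}[x]$ with $p(0) = 1$ and $p(1) = \dots = p(\ell) = 0$ and set $\delta \coloneqq p(\sum_{j \in J} \boldsymbol{D}^{i,j}_{n,m})$. The hypothesis $\ell + r \leq k$ ensures via \cref{ex:one} that the constant-$1$ map $\boldsymbol{J}_{n,m}$ lies in $\mathfrak{T}^k_{n,m}(\ell, r)$, so together with \cref{lem:gluing} the polynomial expression defining $\delta$ stays inside the algebra, and $\delta(\boldsymbol{v}, \boldsymbol{w})$ takes value $1$ when $v_i \notin \{v_j \mid j \in J\}$ and $0$ otherwise. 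I would then take
\[
	\psi \coloneqq \delta \odot \phi + (\boldsymbol{J}_{n,m} - \delta),
\]
which lies in $\mathfrak{T}^k_{n,m}(\ell, r)$ by closure under $\mathbb{Q}$-linear combinations and pointwise products (\cref{lem:gluing}). A direct two-case check confirms that $\psi(\boldsymbol{v}, \boldsymbol{w}) = \phi(\boldsymbol{v}, \boldsymbol{w})$ when $v_i \notin \{v_j \mid j \in J\}$ and $\psi(\boldsymbol{v}, \boldsymbol{w}) = 1$ otherwise.

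The remaining step is to verify the identity $\Pi_i \psi = \Pi_{i, J} \phi$ pointwise and invoke \cref{thm:product-lincomb}. I do not expect a genuine obstacle here: the only point requiring mild care is that the correction term $\boldsymbol{J}_{n,m} - \delta$ be a valid member of the algebra, which is exactly why the hypothesis $\ell + r \leq k$ is imposed, as in \cref{lem:sum-exclude-lincomb}. All of the heavy lifting — turning a product of a linear combination of labelled homomorphism polynomial maps into another such linear combination — is delegated to \cref{thm:product-lincomb}.
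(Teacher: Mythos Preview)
Your proof is correct and takes a genuinely simpler route than the paper's. The paper proceeds by induction on $|J|$: it peels off one index $j \in J$ at a time and, to express $\Pi_{i,J}\phi$ in terms of $\Pi_{i,J\setminus\{j\}}$, distinguishes cases according to whether $v_j$ already coincides with some $v_{j'}$ for $j' \in J\setminus\{j\}$, builds a separate indicator for that case distinction, and uses the identity $\prod_{v \in V_{J'}}(\phi + \boldsymbol{D}^{i,j}_{n,m}) = (\phi(\boldsymbol{v}[i/v_j],\boldsymbol{w})+1)\prod_{v\in V_J}\phi$ to isolate the single newly excluded factor. Your approach sidesteps the induction entirely by constructing a single $\psi = \delta\odot\phi + (\boldsymbol{J}_{n,m}-\delta)$ that forces all excluded factors to $1$ simultaneously, after which one invocation of \cref{thm:product-lincomb} suffices. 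Both arguments rely on the same ingredients (the graphs $\boldsymbol{D}^{i,j}$, the constant-$1$ map from \cref{ex:one}, and closure under $\odot$ from \cref{lem:gluing}), and both use the hypothesis $\ell+r\le k$ in the same place, but your reduction is shorter and avoids the case analysis altogether.
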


	\begin{proof}
		For $J \subseteq [\ell]$, write $V_J \coloneqq [n] \setminus \{v_j \mid j \in J\}$.
		The proof is by induction on $|J|$.
		The base case $|J| = 0$ is established in \cref{thm:product-lincomb}.
		
		For the inductive step, let $j \in J$ and write $J' \coloneqq J \setminus \{j\}$.
		We first distinguish cases based on the form of the tuple $\boldsymbol{v}$.
		This is a priori problematic since the desired expression is required not to depend on $\boldsymbol{v}$.
		We overcome this problem by implementing the case distinction itself using homomorphism polynomial maps.
		
		Write $\phi = \sum_{t \in T} \alpha_t \boldsymbol{F}^t$ and
		suppose first that $v_j \not\in \{v_{j'} \mid j' \in J'\}$, i.e.\ $V_{J'} = V_{J} \uplus \{v_j\}$.
		Recall the definition of $\boldsymbol{D}^{i,j} \in \mathcal{T}^k(\ell, r)$ from the proof of \cref{lem:sum-exclude-lincomb}.
		We rewrite $\Pi_{i, J} (\phi)$ in terms of expressions to which the inductive hypothesis applies by considering the following polynomial:
		\begin{align*}
			&\prod_{v \in V_{J'}}
			\left( \phi(\boldsymbol{v}[i/v], \boldsymbol{w}) + \boldsymbol{D}^{i, j}_{n, m}(\boldsymbol{v}[i/v], \boldsymbol{w}) \right)  \\
			&= \left( \phi(\boldsymbol{v}[i/v_j], \boldsymbol{w}) + \boldsymbol{D}^{i, j}_{n, m}(\boldsymbol{v}[i/v_j], \boldsymbol{w}) \right)  \prod_{v \in V_{J}}
			\left( \phi(\boldsymbol{v}[i/v], \boldsymbol{w}) + \boldsymbol{D}^{i, j}_{n, m}(\boldsymbol{v}[i/v], \boldsymbol{w}) \right) \\
			&\overset{\eqref{eq:diell}}{=} \left( \phi(\boldsymbol{v}[i/v_j], \boldsymbol{w}) + 1 \right) \prod_{v \in V_{J}} \phi(\boldsymbol{v}[i/v], \boldsymbol{w}) \\
			&= \prod_{v \in V_{J'}}  \phi(\boldsymbol{v}[i/v], \boldsymbol{w}) + \prod_{v \in V_{J}} \phi(\boldsymbol{v}[i/v], \boldsymbol{w}).
		\end{align*}
		If otherwise $v_j \in \{v_{j'} \mid j' \in J'\}$,
		then the expression for $J$ equals the expression for $J'$.
		Formally,
		\[
			\prod_{v \in V_{J'}}\phi(\boldsymbol{v}[i/v], \boldsymbol{w}) 
			= \prod_{v \in V_{J}}\phi(\boldsymbol{v}[i/v], \boldsymbol{w}). 
		\]
		In order to combine the two cases considered above,
		observe that, for every  $\boldsymbol{v} \in [n]^{\ell}$ and $\boldsymbol{w} \in [m]^r$, by \cref{lem:unlabelling},
		\[
			\frac1n \sum_{j' \in J'} (\Sigma_i\boldsymbol{D}^{j', j})_{n,m}(\boldsymbol{v}[i/], \boldsymbol{w}) 
			= \frac1n \sum_{v \in [n]} \sum_{j' \in J'}\boldsymbol{D}^{j', j}_{n,m}(\boldsymbol{v}[i/v], \boldsymbol{w})
			= \sum_{j' \in J'} \boldsymbol{D}^{j', j}_{n,m}(\boldsymbol{v}, \boldsymbol{w})
		\]
		is equal to the number of indices $j' \in J'$ such that $v_j = v_{j'}$.
		This is a value from $\{0, \dots, \ell\}$.
		Here, it is crucial that $i \not\in J$, as assumed.
		Let $p \in \mathbb{Q}[x]$ denote a polynomial such that $p(0) = 0$ and $p(1) = p(2) = \dots = p(\ell) = 1$.
		By \cref{lem:unlabelling,lem:gluing},
		$\psi \coloneqq p(\frac1n \sum_{j' \in J'} (\Sigma_i\boldsymbol{D}^{j', j})_{n,m}) \in \mathfrak{T}_{n,m}^k(\ell-1, r)$.
		Then
		\begin{equation*}
			\psi(\boldsymbol{v}[i/], \boldsymbol{w}) = \begin{cases}
				1, & v_j \in \{v_{j'} \mid j' \in J'\}, \\
				0, & \text{otherwise}.
			\end{cases}
		\end{equation*}
		Combining the above,
		it follows that the expression $\prod_{v \in V_{J}}
		\phi(\boldsymbol{v}[i/v], \boldsymbol{w})$ of interest is equal to
		\begin{align*}
			&\psi(\boldsymbol{v}[i/], \boldsymbol{w})
			\prod_{v \in V_{J'}} \phi(\boldsymbol{v}[i/v], \boldsymbol{w}) \\
			&+
			(1-\psi(\boldsymbol{v}[i/], \boldsymbol{w}))
			\left(\prod_{v \in V_{J'}}
			\left( \phi(\boldsymbol{v}[i/v], \boldsymbol{w}) + \boldsymbol{D}^{i, j}_{n, m}(\boldsymbol{v}[i/v], \boldsymbol{w}) \right) 
			- \prod_{v \in V_{J'}}  \phi(\boldsymbol{v}[i/v], \boldsymbol{w}) \right).
		\end{align*}
		This expression depends uniformly on $\boldsymbol{v}$ and $\boldsymbol{w}$.
		In other words,
		\begin{align*}
			\Pi_{i, J} (\phi) &= \psi \ \Pi_{i, J'}( \phi) + (1- \psi) \left( \Pi_{i,J'}(\phi + \boldsymbol{D}^{i,j}_{n,m}) - \Pi_{i, J'} (\phi) \right).
		\end{align*}
		Here, we regard $1$ as the element of $\mathfrak{T}^k_{n,m}(\ell-1, r)$ which is uniformly $1$, cf.\ \cref{ex:one}.
		The induction hypothesis applies to $\Pi_{i, J'}$.
		Hence, $\Pi_{i, J'}( \phi),\Pi_{i,J'}(\phi + \boldsymbol{D}^{i,j}_{n,m}) \in \mathfrak{T}^{k}_{n,m}(\ell-1,r)$.
		By \cref{lem:gluing}, $\mathfrak{T}^k_{n,m}(\ell-1, r)$ is closed under pointwise products.
		Hence, $\Pi_{i, J} (\phi) \in \mathfrak{T}^k_{n,m}(\ell-1, r)$.
	\end{proof}

 	\subsection{Homomorphism Polynomials from Symmetric Circuits}
 
 	In this section we prove the direction \ref{it:main2} $\Rightarrow$ \ref{it:main1} of \cref{thm:main1} using the results we have established so far. For the proof, we start with a rigid symmetric circuit $C$ with $\maxSup(C) \leq k$, and show via induction from the inputs to the output that at each gate $g$, the computed polynomial is in $\mathfrak{T}^{2k}_{n,m}(\vec{\sup}_L(g), \vec{\sup}_R(g))$.  
 	Recall that $\vec{\sup}_L(g)$ and $\vec{\sup}_R(g)$ denote ordered tuples whose entries are the elements of $\sup_L(g)$ and $\sup_R(g)$ (see \cref{sec:supports}). The ordering of the tuples represents the assignment of the labels to the support elements:
 	The label $i \in[\ell]$ is mapped to the $i$-th entry of $\vec{\sup}_L(g)$, and analogously for $\vec{\sup}_R(g)$. 
 	
 	For the inductive step of the proof, we assume that $g$ is a multiplication or summation gate such that every child $h$ of $g$ computes a polynomial in $\mathfrak{T}_{n,m}^{2k}(\vec{\sup}_L(h), \vec{\sup}_R(h))$. Then we show that the polynomial computed at $g$ is in $\mathfrak{T}_{n,m}^{2k}(\vec{\sup}_L(g), \vec{\sup}_R(g))$. In the following, if $g$ is a gate, then $p_g$ denotes the polynomial that this gate outputs.
  	
  	\begin{observation}\label{lem:labelsMovedByPermutations}
   		Let $n,m, \ell, r \in \mathbb{N}$.
  		Let $\boldsymbol{F} \in \mathcal{G}(\ell, r)$ and let $(\pi,\sigma) \in \Sym_n \times \Sym_m$. 
  		Then $(\pi, \sigma)(\boldsymbol{F}_{n,m}(\boldsymbol{v}, \boldsymbol{w})) = \boldsymbol{F}_{n,m}(\pi(\boldsymbol{v}), \sigma(\boldsymbol{w}))$ for all $\boldsymbol{v} \in [n]^\ell$ and $\boldsymbol{w} \in [m]^r$.
  	\end{observation}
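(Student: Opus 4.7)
The plan is to unfold the definition of $\boldsymbol{F}_{n,m}(\boldsymbol{v},\boldsymbol{w})$, push the action of $(\pi,\sigma)$ inside the sum and product, and then perform a change of summation variable that reindexes the maps $h$ by $h' = (\pi \cup \sigma) \circ h$. The action of $(\pi,\sigma) \in \Sym_n \times \Sym_m$ on $\mathbb{Q}[\mathcal{X}_{n,m}]$ sends $x_{ij}$ to $x_{\pi(i)\sigma(j)}$ and is $\mathbb{Q}$-linear, so it commutes with sums and products and fixes the rational coefficients. Applying it to
$$\boldsymbol{F}_{n,m}(\boldsymbol{v},\boldsymbol{w}) = \sum_{\substack{h \colon A \uplus B \to [n] \uplus [m]\\ h(\boldsymbol{a})=\boldsymbol{v},\ h(\boldsymbol{b})=\boldsymbol{w}}} \prod_{ab \in E(F)} x_{h(a)h(b)}$$
gives a sum over the same indexing set, but with each monomial replaced by $\prod_{ab \in E(F)} x_{\pi(h(a))\sigma(h(b))}$.

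Next I would introduce the bijection on the indexing set of maps $h \colon A \uplus B \to [n] \uplus [m]$ defined by $h \mapsto h'$, where $h'(a) \coloneqq \pi(h(a))$ for $a \in A$ and $h'(b) \coloneqq \sigma(h(b))$ for $b \in B$. This is a bijection on the set of all such maps (its inverse applies $\pi^{-1}$ and $\sigma^{-1}$), and under this substitution the monomial transforms as $\prod_{ab \in E(F)} x_{\pi(h(a))\sigma(h(b))} = \prod_{ab \in E(F)} x_{h'(a)h'(b)}$, while the label constraints transform as $h(\boldsymbol{a}) = \boldsymbol{v} \iff h'(\boldsymbol{a}) = \pi(\boldsymbol{v})$ and $h(\boldsymbol{b}) = \boldsymbol{w} \iff h'(\boldsymbol{b}) = \sigma(\boldsymbol{w})$. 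Reindexing the sum by $h'$ therefore yields exactly $\boldsymbol{F}_{n,m}(\pi(\boldsymbol{v}), \sigma(\boldsymbol{w}))$, as desired.

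There is no real obstacle here; the statement is a straightforward bookkeeping identity that records the natural equivariance of the homomorphism polynomial map under the $\Sym_n \times \Sym_m$-action on variables and on index tuples. The only mild subtlety is keeping track of which permutation acts on which side of the bipartition $A \uplus B \to [n] \uplus [m]$, which is why one should write $h' = (\pi \cup \sigma) \circ h$ explicitly rather than a single permutation composition. Once this is done, the proof fits in a short display and relies only on the definition of the $\Sym_n \times \Sym_m$-action and of $\boldsymbol{F}_{n,m}$ from \cref{def:labelled-hom-poly}.
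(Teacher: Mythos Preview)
Your proposal is correct and follows exactly the paper's approach: the paper's one-line proof records precisely the bijection $h \mapsto (\pi \cup \sigma) \circ h$ on the indexing set of maps that you spell out in detail. Your version is simply a more explicit unfolding of the same reindexing argument.
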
	
  	\begin{proof}
 		For $\boldsymbol{F} = (F, \boldsymbol{a}, \boldsymbol{b})$,
  		we have $(\pi,\sigma)( \{ h\colon A \uplus B \to [n] \uplus [m] \mid h(\boldsymbol{a}) = \boldsymbol{v}, h(\boldsymbol{b}) = \boldsymbol{w} \} ) =  \{ h\colon A \uplus B \to [n] \uplus [m] \mid h(\boldsymbol{a}) = \pi(\boldsymbol{v}) , h(\boldsymbol{b}) = \sigma(\boldsymbol{v}) \}$.
  	\end{proof}	
  	
 	\begin{lemma}
 		\label{lem:supportOfGateDeterminesLabels}
 		Let $C$ be a rigid $\Sym_n \times \Sym_m$-symmetric circuit and $g \in V(C)$ a gate such that the polynomial computed at $g$ is of the form $p_g = \phi(\vec{\sup}_L(g),\vec{\sup}_R(g))$ for some $\phi \in \mathfrak{T}^{2k}_{n,m}(\ell, r)$.
 		Let $g' \in \Orb_{\Sym_{n} \times \Sym_{m}}(g)$. Then the polynomial computed by $g'$ is $p_{g'} = \phi(\vec{\sup}_L(g'),\vec{\sup}_R(g'))$.
 	\end{lemma}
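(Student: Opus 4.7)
The plan is to use the rigidity of $C$ to promote the orbit relation $g' \in \Orb(g)$ to a circuit automorphism, and then apply Observation \ref{lem:labelsMovedByPermutations} to track how labels transform under this automorphism. Since $g' \in \Orb_{\Sym_n \times \Sym_m}(g)$, there exists $(\pi,\sigma) \in \Sym_n \times \Sym_m$ whose unique extension (by rigidity) to a circuit automorphism $\tau \in \Aut(C)$ maps $g$ to $g'$. A circuit automorphism preserves the polynomial structure: the polynomial computed at the image of any gate is obtained from the polynomial at the original gate by the induced renaming of the input variables. Hence $p_{g'} = (\pi,\sigma)(p_g)$.

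Next, writing $\phi = \sum_i \alpha_i \boldsymbol{F}^i_{n,m}$ as a finite linear combination and applying Observation \ref{lem:labelsMovedByPermutations} term-wise (using linearity of the action of $(\pi,\sigma)$ on polynomials),
\[
(\pi,\sigma)(p_g) = (\pi,\sigma)\bigl(\phi(\vec{\sup}_L(g),\vec{\sup}_R(g))\bigr) = \phi(\pi(\vec{\sup}_L(g)),\sigma(\vec{\sup}_R(g))).
\]
By Lemma \ref{lem:groupActionOnSupports} we have $\sup(g') = (\pi,\sigma)(\sup(g))$; in particular $\sup_L(g') = \pi(\sup_L(g))$ and $\sup_R(g') = \sigma(\sup_R(g))$ as unordered sets. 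I would then fix the ordering of $\vec{\sup}(g')$ to be the image of the given ordering of $\vec{\sup}(g)$ under $(\pi,\sigma)$, namely $\vec{\sup}_L(g') \coloneqq \pi(\vec{\sup}_L(g))$ and $\vec{\sup}_R(g') \coloneqq \sigma(\vec{\sup}_R(g))$. This yields the claimed identity $p_{g'} = \phi(\vec{\sup}_L(g'),\vec{\sup}_R(g'))$ directly.

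The main subtlety to address—and the only delicate step—is that this ordering of $\vec{\sup}(g')$ should be well-defined independently of the particular $(\pi,\sigma)$ chosen to map $g$ to $g'$. If $(\pi_1,\sigma_1)$ and $(\pi_2,\sigma_2)$ both send $g$ to $g'$, then their quotient lies in $\Stab(g)$, which by \cref{lem:minSupportExists} stabilises $\sup(g)$ setwise but may permute its elements. The two resulting orderings of $\vec{\sup}(g')$ thus differ by the permutation induced by some $(\pi',\sigma') \in \Stab(g)$. Since $\Stab(g)$ fixes $p_g$ pointwise, applying Observation \ref{lem:labelsMovedByPermutations} again shows $\phi(\pi'(\vec{\sup}_L(g)), \sigma'(\vec{\sup}_R(g))) = (\pi',\sigma')(p_g) = p_g = \phi(\vec{\sup}_L(g),\vec{\sup}_R(g))$, so $\phi$ takes the same value on either of the two orderings. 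Hence the identity $p_{g'} = \phi(\vec{\sup}_L(g'),\vec{\sup}_R(g'))$ holds unambiguously, as required.
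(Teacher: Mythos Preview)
Your argument is correct and follows essentially the same approach as the paper's proof: choose $(\pi,\sigma)$ sending $g$ to $g'$, note that $p_{g'}=(\pi,\sigma)(p_g)$, and apply \cref{lem:labelsMovedByPermutations}. Your additional paragraph verifying that the ordering of $\vec{\sup}(g')$ is well-defined (equivalently, that $\phi$ is insensitive to the choice of $(\pi,\sigma)$) is a careful point the paper leaves implicit, and your resolution via $\Stab(g)$ fixing $p_g$ is sound.
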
	
 	\begin{proof}
 		Let $(\pi, \sigma) \in \Sym_{n} \times \Sym_{m}$ such that $(\pi, \sigma)(g)  = g'$. So the polynomial computed by $g'$ is $(\pi, \sigma)(p)$. 
 		Then the lemma follows from \cref{lem:labelsMovedByPermutations}.
 	\end{proof}

 	The next lemma shows the inductive step in case that the children of the gate $g$ form a single orbit.
 	\begin{lemma}
 		\label{lem:productOfOneChildOrbit}
 		Let $k \in \bbN$.
 		Let $C$ be a rigid $\Sym_{n} \times \Sym_{m}$-symmetric circuit, and $g \in V(C)$ a multiplication or a summation gate with $|\sup(g)| \leq k$.
 		Assume that for every $h \in \child(g)$, it also holds
                that $|\sup(h)| \leq k$ and the polynomial computed at $h$ is $p_h = \phi(\vec{\sup}_L(h), \vec{\sup}_R(h))$ for some $\phi \in \mathfrak{T}^{2k}_{n,m}(|\vec{\sup}_L(h)|, |\vec{\sup}_R(h)|)$.
 		Let $h \in \child(g)$ and let $O_h \coloneqq \Orb_{\StabP(\sup(g))}(h)$.\\
 		Then $\prod_{h' \in O_h} p_{h'}$ and $\sum_{h' \in O_h} p_{h'}$ are in $\mathfrak{T}_{n,m}^{2k}(\vec{\sup}_L(g), \vec{\sup}_R(g))$.
 	\end{lemma}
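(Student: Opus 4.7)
First I would invoke \cref{lem:supportOfGateDeterminesLabels} to obtain a single polynomial map $\phi \in \mathfrak{T}^{2k}_{n,m}(\ell_h, r_h)$ (with $\ell_h = |\vec{\sup}_L(h)|$, $r_h = |\vec{\sup}_R(h)|$) such that $p_{h'} = \phi(\vec{\sup}_L(h'), \vec{\sup}_R(h'))$ for every $h' \in O_h$. Applying \cref{lem:intersectionOfChildSupports} with $h' = h$ gives $\sup(h) \cap \sup(g) = S(h) \subseteq \sup(g)$, so $|\sup(h) \cup \sup(g)| \leq |\sup(h)| + |\sup(g)| \leq 2k$, which is exactly the reason the treewidth budget $2k$ is needed. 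Let $I_L \subseteq [\ell_h]$ and $I_R \subseteq [r_h]$ index the positions of $\vec{\sup}(h)$ whose value lies in $S(h)$, and fix maps $\sigma_L : I_L \to [\ell_g]$ and $\sigma_R : I_R \to [r_g]$ recording the matching position in $\vec{\sup}(g)$. As $h'$ ranges over $O_h$, the entry of $\vec{\sup}(h')$ at $i \in I_L$ is pinned to the $\sigma_L(i)$-th entry of $\vec{\sup}_L(g)$, while the remaining entries range over tuples of pairwise distinct elements in $[n] \setminus \sup_L(g)$ (and analogously on the right).

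\textbf{The sum case.} I would assemble the target map in three stages. First, form $\phi \otimes \boldsymbol{J} \in \mathfrak{T}^{2k}_{n,m}(\ell_h + \ell_g, r_h + r_g)$ using \cref{lem:disjoint-union,ex:one}, where $\boldsymbol{J}$ is the edge-less graph carrying fresh labels for $\vec{\sup}(g)$; this stays in $\mathfrak{T}^{2k}_{n,m}$ because the total label count is at most $2k$. Second, for each $i \in I_L$ pointwise multiply by the delta graph $\boldsymbol{D}^{i, \ell_h + \sigma_L(i)}$ (introduced in the proof of \cref{lem:sum-exclude-lincomb}) via \cref{lem:gluing}, and similarly on the right, and then unlabel each such position using $\Sigma_i$ (\cref{lem:unlabelling}); the delta forces the collapse to the pinned $\vec{\sup}(g)$-coordinate. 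Third, for each remaining free position apply the restricted summation $\Sigma_{i, J}$ (\cref{lem:sum-exclude-lincomb}) with $J$ containing the other yet-to-be-summed free positions together with the $\boldsymbol{J}$-label positions, so that the sum ranges over values disjoint from each other and from $\vec{\sup}(g)$. By closure of $\mathfrak{T}^{2k}_{n,m}$ under all these operations, this builds a polynomial map $\psi \in \mathfrak{T}^{2k}_{n,m}(\ell_g, r_g)$ satisfying $\psi(\vec{\sup}_L(g), \vec{\sup}_R(g)) = s \cdot \sum_{h' \in O_h} p_{h'}$, where $s = |(\Stab(h) \cap \StabP(\sup(g))) / \StabP(\sup(h) \cup \sup(g))|$ is the number of tuples representing the same gate. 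Since $s \in \mathbb{N}$ is a constant determined by $h$ alone (not by $\vec{\sup}(g)$), scaling by $1/s \in \mathbb{Q}$ gives the orbit sum.

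\textbf{The product case and the main obstacle.} The analogous construction with $\Pi_i$ (\cref{thm:product-lincomb}) and $\Pi_{i, J}$ (\cref{cor:product-exclude-lincomb}) in place of the summation operators produces a polynomial map whose evaluation is the tuple product $\prod_{\text{tuples}} \phi(\text{embed})$. By the same overcount analysis and because $\phi$ is constant on $\tilde H$-orbits of free tuples --- where $\tilde H$ is the induced action of $\Stab(h) \cap \StabP(\sup(g))$ on $\sup(h) \setminus \sup(g)$ --- this equals $\bigl(\prod_{h' \in O_h} p_{h'}\bigr)^s$ with $s = |\tilde H|$. This is the main obstacle: unlike the sum case, we cannot remove the factor $s$ by scaling, and extracting an $s$-th root is not an operation of $\mathfrak{T}^{2k}_{n,m}$. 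The natural workaround is to replace the iteration over all free tuples by an iteration over a system of representatives of $\tilde H$-orbits: pick one "canonical" coordinate in each $\tilde H$-orbit and iterate only over it with the remaining coordinates tied to it by delta graphs expressing the $\tilde H$-action. Because $\tilde H \leq \Sym(\sup(h) \setminus \sup(g))$ is a fixed permutation group depending only on $h$, this decomposition is uniform in $\vec{\sup}(g)$ and stays within the operations of \cref{sec:ops}. Formalising this carefully, and verifying that the resulting map still lies in $\mathfrak{T}^{2k}_{n,m}(\ell_g, r_g)$, is the technical heart of the lemma.
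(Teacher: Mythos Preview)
Your overall architecture matches the paper's: both arguments pad $\phi$ by tensoring with the edge-less labelled graph $\boldsymbol{J}$ (\cref{lem:disjoint-union,ex:one}) so as to carry labels for $\vec{\sup}(g)$, and then eliminate the free coordinates using the restricted operators of \cref{lem:sum-exclude-lincomb,cor:product-exclude-lincomb}. The divergence is precisely at the enumeration of $O_h$. The paper does not encounter your ``main obstacle'' at all: it asserts a bijection between the gates of $O_h$ and the \emph{full} Cartesian set $([n]\setminus\sup_L(g))^{\ell-|S(h)_L|}\times([m]\setminus\sup_R(g))^{r-|S(h)_R|}$ (not only the injective tuples), so that the orbit product is literally the iterated $\Pi_{i,J}$ with $J$ containing only the $\vec{\sup}(g)$-positions, and no $s$-th power needs undoing. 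Your route---first restricting to injective tuples disjoint from $\sup(g)$, then handling the multiplicity $s=[\Stab(h)\cap\StabP(\sup(g)):\StabP(\sup(h)\cup\sup(g))]$---is more careful, and for the sum case your division-by-$s$ argument is correct.

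For the product case, however, your proposed fix has a genuine gap. The delta graphs $\boldsymbol{D}^{i,j}$ detect \emph{equality of label values}; they cannot select one representative per $\tilde H$-orbit of injective tuples, because that would require an ordering predicate such as $v_i<v_j$, and no such predicate is available among the operations of \cref{sec:ops}. Thus ``tying the remaining coordinates by delta graphs expressing the $\tilde H$-action'' does not produce a product over orbit representatives, and you are left with $\bigl(\prod_{h'}p_{h'}\bigr)^{s}$ with no way to extract an $s$-th root inside $\mathfrak{T}^{2k}_{n,m}$.

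Either you need a genuinely different device for quotienting the product by $\tilde H$, or you should argue along the paper's line that no overcount arises in the first place. On that last point, your observations that $\Stab(h)$ may strictly contain $\StabP(\sup(h))$ and that non-injective tuples appear in the full Cartesian set are exactly the places where the paper's one-line bijection claim is terse; scrutinising that claim directly is likely more productive than engineering around it.
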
	
 	\begin{proof}
 		As in \cref{lem:intersectionOfChildSupports}, let $S(h) = \bigcap_{h' \in O_h} \sup(h')$, and $S_L(h) = S(h) \cap [n]$, $S_R(h) = S(h) \cap [m]$.
  		Note that for every $h' \in O_h$, the tuples $\vec{\sup}_L(h')$ and $\vec{\sup}_R(h')$ have the elements of $S(h)$ in the same positions because $S(h) \subseteq \sup(g)$ by Lemma \ref{lem:intersectionOfChildSupports}. Since all these gates are in the same $\StabP(\sup(g))$-orbit, the positions of $S(h)$ in $\vec{\sup}_L(h')$ and $\vec{\sup}_R(h')$ are the same. In the following, when we write $\vec{S}_L(h)$ and $\vec{S}_R(h)$, we refer to the ordering of $S_L(h)$ and $S_R(h')$ as in $\vec{\sup}_L(h')$ and $\vec{\sup}_R(h')$, respectively. 
 		
		For simplicity, assume in the following that the first $|S_L(h)|$ positions in each tuple $\vec{\sup}_L(h')$ are occupied by the elements of $S_L(h)$ and the first $|S_R(h)|$ positions in $\vec{\sup}_R(h')$ are occupied by $S_R(h)$. 
 		Since $\Orb_{\StabP(\sup(g))}(h)$ is a subset of $\Orb_{\Sym_n \times \Sym_m}(h)$, by \cref{lem:supportOfGateDeterminesLabels}, all these gates compute the same instantiated homomorphism polynomial map, just with different instantiation of the labels.
 		Let $\phi \in \mathfrak{T}_{n,m}^{2k}(\ell , r)$ denote this homomorphism polynomial map, where $\ell = |\sup_L(h)|, r = |\sup_R(h)|$. 
 		Recall the definition of $\boldsymbol{J}$ from \cref{ex:one}.
 		First consider $\prod_{h' \in O_h} p_{h'}$. 
 		\begin{align*}
 		 \prod_{h' \in O_h} p_{h'} 
 		 &= \prod_{h' \in O_h}  \phi(\vec{\sup}_L(h'), \vec{\sup}_R(h')) \\
 		 &= \prod_{\boldsymbol{v} \in ([n] \setminus \sup_L(g))^{\ell - |S_L(h)|}} \prod_{\boldsymbol{w} \in ([m] \setminus \sup_R(g))^{r - |S_R(h)|}} \phi(\vec{S}_L(h) \boldsymbol{v}, \vec{S}_R(h) \boldsymbol{w})  \\
 		 &= \prod_{\boldsymbol{v} \in ([n] \setminus \sup_L(g))^{\ell - |S_L(h)|}} \prod_{\boldsymbol{w} \in ([m] \setminus \sup_R(g))^{r - |S_R(h)|}} \phi(\vec{S}_L(h) \boldsymbol{v}, \vec{S}_R(h) \boldsymbol{w})
 		  \cdot \boldsymbol{J}_{n,m}(\vec{\sup}_L(g) \setminus \vec{S}_L(h), \vec{\sup}_R(g) \setminus \vec{S}_R(h))\\
 		  &= \prod_{\boldsymbol{v} \in ([n] \setminus \sup_L(g))^{\ell - |S_L(h)|}} \prod_{\boldsymbol{w} \in ([m] \setminus \sup_R(g))^{r - |S_R(h)|}} (\phi \otimes \boldsymbol{J}_{n,m})(\vec{\sup}_L(g) \boldsymbol{v}, \vec{\sup}_R(g) \boldsymbol{w}). 
 		\end{align*}
 		The second equality is true because by Lemma \ref{lem:intersectionOfChildSupports}, for each $h' \in O_h$, 
 		the set $\sup(h') \setminus S(h)$ does not intersect $\sup(g)$, and $O_h$ is the orbit of $h$ under the pointwise stabiliser of $\sup(g)$ in $\Sym_n \times \Sym_m$. Hence, there is a bijection between the gates in $O_h$ and the set $\{  \boldsymbol{v}\boldsymbol{w} \mid  \boldsymbol{v} \in ([n] \setminus \sup_L(g))^{\ell - |S_L(h)|},  \boldsymbol{w} \in ([m] \setminus \sup_R(g))^{r - |S_R(h)|}  \}$.
 		The third equality holds because by \cref{ex:one}, $\boldsymbol{J}_{n,m}(\vec{\sup}_L(g) \setminus \vec{S}_L(h), \vec{\sup}_R(g) \setminus \vec{S}_R(h))= 1$. The last equality is due to the definition of $\otimes$ on homomorphism polynomial maps, cf.\ \cref{lem:disjoint-union}.
 		Analogously, 
 		it holds that
 		\[
 			\sum_{h' \in O_h} p_{h'}
 			=
 			\sum_{\boldsymbol{v} \in ([n] \setminus \sup_L(g))^{\ell - |S_L(h)|}} 
 			\sum_{\boldsymbol{w} \in ([m] \setminus \sup_R(g))^{r - |S_R(h)|}} (\phi \otimes \boldsymbol{J}_{n,m})(\vec{\sup}_L(g) \boldsymbol{v}, \vec{\sup}_R(g) \boldsymbol{w}).
 		\]
 		
 		By \cref{lem:disjoint-union}, $(\phi \otimes \boldsymbol{J}) \in \mathfrak{T}_{n,m}^{2k}(|\sup_L(g)|+|\sup_L(h) \setminus \sup_L(g)|, |\sup_R(g)|+|\sup_R(h) \setminus \sup_R(g)|)$. 
 		Here it is important that $| \sup(g)| + |\sup(h)| \leq 2k$.
 		By repeatedly applying \cref{cor:product-exclude-lincomb} to the last expression above,
 		it follows that $\prod_{h' \in O_h} p_{h'} \in \mathfrak{T}^{2k}_{n,m}(\vec{\sup}_L(g), \vec{\sup}_R(g))$.
 		If $g$ is a summation gate, then similarly, with \cref{lem:sum-exclude-lincomb}, $\sum_{h' \in O_h} p_{h'} \in \mathfrak{T}^{2k}_{n,m}(\vec{\sup}_L(g), \vec{\sup}_R(g))$.
 	\end{proof}
 	
 	We now generalise the above lemma so that it covers all orbits of children of~$g$.
 	
 	 \begin{lemma}
 		\label{lem:productOfAllChildOrbits}
 		Let $k \in \bbN$. 
 		Let $C$ be a rigid $\Sym_{n} \times \Sym_{m}$-symmetric circuit. 
 		Let $g \in V(C)$ be a multiplication or summation gate with $|\sup(g)| \leq k$.
 		Assume that for every $h \in \child(g)$, the polynomial $p_h$ is in $\mathfrak{T}_{n,m}^{2k}(\vec{\sup}_L(h), \vec{\sup}_R(h))$.
 		Then the polynomial computed at $g$ is in $\mathfrak{T}_{n,m}^{2k}(\vec{\sup}_L(g), \vec{\sup}_R(g))$.
 	\end{lemma}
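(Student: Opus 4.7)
The plan is to reduce to \cref{lem:productOfOneChildOrbit} by decomposing $\child(g)$ into its orbits under the action of the pointwise stabiliser $\StabP(\sup(g))$, and then to use the fact that $\mathfrak{T}^{2k}_{n,m}(\vec{\sup}_L(g), \vec{\sup}_R(g))$ is closed under both $\mathbb{Q}$-linear combinations (by definition) and pointwise products (\cref{lem:gluing}).

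First I would list the orbits $O_1, \dots, O_s$ of $\child(g)$ under $\StabP(\sup(g))$; there are finitely many because $C$ is finite. For each $O_i$, pick a representative $h_i$. The hypotheses of \cref{lem:productOfOneChildOrbit} are satisfied at $g$ with child $h_i$: we have $|\sup(g)| \leq k$ by assumption, the bound $|\sup(h_i)| \leq k$ follows from the ambient assumption $\maxSup(C) \leq k$ carried by the outer induction on $C$, and the hypothesis $p_{h_i} \in \mathfrak{T}^{2k}_{n,m}(\vec{\sup}_L(h_i), \vec{\sup}_R(h_i))$ means precisely that $p_{h_i} = \phi_i(\vec{\sup}_L(h_i), \vec{\sup}_R(h_i))$ for some $\phi_i \in \mathfrak{T}^{2k}_{n,m}(\ell_i, r_i)$ with $\ell_i + r_i \leq 2k$. \Cref{lem:productOfOneChildOrbit} therefore yields that $\prod_{h' \in O_i} p_{h'}$ and $\sum_{h' \in O_i} p_{h'}$ both lie in $\mathfrak{T}^{2k}_{n,m}(\vec{\sup}_L(g), \vec{\sup}_R(g))$.

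The second step is to assemble the per-orbit contributions. If $g$ is a multiplication gate then
\[
 p_g \;=\; \prod_{h \in \child(g)} p_h \;=\; \prod_{i=1}^s \prod_{h' \in O_i} p_{h'},
\]
which lies in $\mathfrak{T}^{2k}_{n,m}(\vec{\sup}_L(g), \vec{\sup}_R(g))$ by closure of this set under pointwise products (\cref{lem:gluing}). If $g$ is a summation gate then
\[
 p_g \;=\; \sum_{h \in \child(g)} p_h \;=\; \sum_{i=1}^s \sum_{h' \in O_i} p_{h'},
\]
which lies in $\mathfrak{T}^{2k}_{n,m}(\vec{\sup}_L(g), \vec{\sup}_R(g))$ because this set is by definition closed under $\mathbb{Q}$-linear combinations.

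The only mildly delicate point is bookkeeping with multiedges: the children of $g$ form a multiset rather than a set, but rigid circuit automorphisms preserve the multigraph structure of $C$, so the action of $\StabP(\sup(g))$ on $\child(g)$ is well-defined with multiplicities and the orbit decomposition above is unambiguous. No serious obstacle arises here: the hard work, namely the Möbius-inversion argument expressing products of homomorphism polynomial maps as linear combinations of such maps (\cref{thm:product-lincomb} and \cref{cor:product-exclude-lincomb}), was already done inside \cref{lem:productOfOneChildOrbit}, and the present lemma is a straightforward combination of its per-orbit outputs via the algebra structure of $\mathfrak{T}^{2k}_{n,m}(\vec{\sup}_L(g), \vec{\sup}_R(g))$.
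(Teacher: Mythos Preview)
Your proposal is correct and follows essentially the same approach as the paper: decompose $\child(g)$ into $\StabP(\sup(g))$-orbits, apply \cref{lem:productOfOneChildOrbit} to each orbit, and then combine the per-orbit results using closure under pointwise products (\cref{lem:gluing}) for multiplication gates and closure under linear combinations for summation gates. Your remark that the hypothesis $|\sup(h)|\leq k$ needed by \cref{lem:productOfOneChildOrbit} is supplied by the ambient bound $\maxSup(C)\leq k$ from the outer induction is accurate, and in fact makes explicit something the paper's proof leaves implicit.
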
	
 	\begin{proof}
 		We treat the case when $g$ is a multiplication gate in detail.
 		The other case is analogous.
 		Partition the set $\child(g)$ into orbits under $\StabP(\sup(g))$. 
		To each such orbit $O_h$, we apply \cref{lem:productOfOneChildOrbit}, 
		which shows that $p_{O_h} \coloneqq \prod_{h' \in O_h} p_{h'} \in \mathfrak{T}^{2k}_{n,m}(\vec{\sup}_L(g), \vec{\sup}_R(g))$.
		Let $\Omega \coloneqq \{O_h \mid h \in \child(g)\}$ and let $\phi_{O_h}(\vec{\sup}_L(g), \vec{\sup}_R(g))$ be the instantiated homomorphism polynomial map in $\mathfrak{T}^{2k}_{n,m}(\vec{\sup}_L(g), \vec{\sup}_R(g))$ that is equal to $p_{O_h}$. For each orbit $O_h \in \Omega$, let $m(O_h) \in \bbN$ denote the multiplicity of the wire between $g$ and each gate $h' \in O_h$.
 		The polynomial computed at $g$ is 
 		\begin{align*}
 			p_g = \prod_{O_h \in \Omega} p_{O_h}^{m(O_h)} &= \prod_{O_h \in \Omega} \phi_{O_h}(\vec{\sup}_L(g), \vec{\sup}_R(g))^{m(O_h)}.
 		\end{align*}	
 		This polynomial is in $\mathfrak{T}_{n,m}^{2k}(\vec{\sup}_L(g), \vec{\sup}_R(g))$ by \cref{lem:gluing}.
 	\end{proof}

 	\subsection{Proof of the backward direction of \cref{thm:main1}}
 	\begin{proof}
 		Let $(C_{n,m})_{n,m \in \bbN}$ be a family of circuits such that each $C_{n,m}$ is $\Sym_n \times \Sym_m$-symmetric, and $\maxOrb(C_{n,m})$ is polynomially bounded in $(n+m)$. By \cref{lem:rigidifyCircuits}, we can assume that the circuits are rigid. Then by \cref{lem:constantSupportOfGates}, there exists a constant $k \in \bbN$ such that $\maxSup(C_{n,m}) \leq k$ for all $n,m \in \bbN$. Consider now fixed $n, m \in \bbN$.
 		By induction on the structure of $C_{n,m}$, we show
                that for every gate $g \in V(C_{n,m})$, we have $p_g \in \mathfrak{T}^{2k}_{n,m}(\vec{\sup}_L(g), \vec{\sup}_R(g))$. If $g$ is an input gate labelled with a constant, then $g$ is fixed by every permutation, so $\sup(g) = \emptyset$. Indeed, any constant is a polynomial in $ \mathfrak{T}^{2k}_{n,m}(0, 0)$ by \cref{ex:one}. If $g$ is an input gate labelled with a variable $x_{ij}$, then $\sup_L(g) = \{i\}, \sup_R(g) = \{j\}$. It is clear that the polynomial $x_{ij}$ is in $\mathfrak{T}^{2k}_{n,m}(i, j)$ because $x_{ij} = \boldsymbol{F}_{n,m}(i,j)$ where $\boldsymbol{F}$ is the graph that consists of a single edge. This finishes the base cases. 
 		In the inductive step, $g$ is either a summation or multiplication gate. Both these cases are handled by \cref{lem:productOfAllChildOrbits}. Finally, the output gate of $C_{n,m}$ is stabilised by $\Sym_n \times \Sym_m$, so its minimal support is empty, which means that the polynomial computed by $C_{n,m}$ is in $\mathfrak{T}^{2k}_{n,m}$. 
 	\end{proof}

 	\subsection{Symmetric Circuits for Homomorphism Polynomials}
 	In this section, we prove \cref{thm:hom-circuit-main}. 	
 	Let $F$ be a graph with a tree decomposition $(T, \beta)$ and $r \in V(T)$.
 	For $s \in V(T)$, write $T^s$ for the subtree of $T$ induced by $s$ and all its descendents in the rooted tree $(T, r)$.
 	Write $F^s$ for the subgraph of $F$ induced by $\bigcup_{t \in V(T^s)} \beta(t)$.
 	
 	\begin{lemma} \label{lem:hom-circuit-core-bipartite}
 		Let $k, n, m \in \mathbb{N}$.
 		Let $F$ be a bipartite multigraph with bipartition $A \uplus B$, tree decomposition $(T, \beta)$, and $r \in V(T)$ such that 
 		\begin{enumerate}
 			\item $|\beta(t)| = k$ for all $t \in V(T)$,
 			\item $|\beta(s) \cap \beta(t)| = k-1$ for all $st \in E(T)$, 
			\item every vertex in the rooted tree $(T,r)$ has out-degree at most $k$.
 		\end{enumerate}
 		There exists a rigid $\Sym_n \times \Sym_m$-symmetric circuit of size at most $4  k!^2 k \cdot (n+m)^{k+1} \cdot \abs{V(T)} \cdot \norm{F} $
 		and support size at most $k$
 		which contains, for every 
 		$s \in V(T)$,
 		$\boldsymbol{a} \in A^\ell$,
 		$\boldsymbol{b} \in B^r$,
 		$\ell +r = k$ such that
 		$\beta(s) = \{a_1, \dots, a_\ell, b_1, \dots, b_r\}$, and $\boldsymbol{v} \in [n]^\ell$,
 		$\boldsymbol{w} \in [m]^r$,
 		a gate which computes $\boldsymbol{F}^s_{n,m}(\boldsymbol{v}, \boldsymbol{w})$
 		where $\boldsymbol{F}^s \coloneqq (F^s, \boldsymbol{a}, \boldsymbol{b}) \in \mathcal{T}^{k}(\ell, r)$.
 	\end{lemma}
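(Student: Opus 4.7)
The plan is to build the circuit inductively along the tree decomposition, from the leaves towards the root. For each node $s \in V(T)$, each ordering $(\boldsymbol{a}, \boldsymbol{b}) \in A^\ell \times B^r$ of $\beta(s)$ with $\ell+r = k$, and each pair $(\boldsymbol{v}, \boldsymbol{w}) \in [n]^\ell \times [m]^r$, I introduce a dedicated gate $g_{s, \boldsymbol{a}, \boldsymbol{b}, \boldsymbol{v}, \boldsymbol{w}}$ computing $\boldsymbol{F}^s_{n,m}(\boldsymbol{v}, \boldsymbol{w})$. Up to $k!^2 (n+m)^k$ such gates per node give rise to the $k!^2 (n+m)^{k}$ factor in the size bound; if the resulting circuit is not already rigid, \cref{lem:rigidifyCircuits} is applied at the end.

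The main idea of the recursion is to avoid double-counting edges that lie in overlapping bags. To that end, I would also maintain an auxiliary gate computing $\tilde{\boldsymbol{F}}^{t}_{n,m}(\boldsymbol{v}, \boldsymbol{w})$ at each non-root node $t$ with parent $p(t)$, defined like $\boldsymbol{F}^t_{n,m}(\boldsymbol{v}, \boldsymbol{w})$ except that the product over $E(F^t)$ excludes the edges of $F[\beta(t) \cap \beta(p(t))]$. The connected-subtree property of tree decompositions forces the sets $V^{t_i} \setminus \beta(s)$ for distinct children $t_i$ of $s$ to be pairwise disjoint, so every edge of $F^s$ either lies in $F[\beta(s)]$ or is captured in a unique child subtree. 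This yields the recursion
\[
\boldsymbol{F}^s_{n,m}(\boldsymbol{v}, \boldsymbol{w}) = \prod_{a_i b_j \in E(F[\beta(s)])} x_{v_i w_j} \cdot \prod_{i=1}^{c} \sum_{z} \tilde{\boldsymbol{F}}^{t_i}_{n,m}(\boldsymbol{v}', \boldsymbol{w}'),
\]
where $t_1, \dots, t_c$ are the at most $k$ children of $s$, $z$ ranges over $[n]$ or $[m]$ depending on the side of the unique new vertex in $\beta(t_i) \setminus \beta(s)$, and $(\boldsymbol{v}', \boldsymbol{w}')$ is the extension of $(\boldsymbol{v}, \boldsymbol{w})$ that places $z$ on that new vertex. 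The auxiliary $\tilde{\boldsymbol{F}}^s$ satisfies the same recursion with the first factor restricted to $E(F[\beta(s)]) \setminus E(F[\beta(s) \cap \beta(p(s))])$. At a leaf, both polynomials simplify to an explicit product of at most $\norm{F}$ input-gate variables.

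Symmetry follows from the parametrisation: every $(\pi, \sigma) \in \Sym_n \times \Sym_m$ maps $g_{s, \boldsymbol{a}, \boldsymbol{b}, \boldsymbol{v}, \boldsymbol{w}}$ to $g_{s, \boldsymbol{a}, \boldsymbol{b}, \pi(\boldsymbol{v}), \sigma(\boldsymbol{w})}$, and analogously for the auxiliary and summation gates. The minimal support of $g_{s, \boldsymbol{a}, \boldsymbol{b}, \boldsymbol{v}, \boldsymbol{w}}$ is contained in $\{v_1, \dots, v_\ell\} \cup \{w_1, \dots, w_r\}$ and hence has size at most $k$. For the size estimate, per node I get at most $2 k!^2 (n+m)^k$ product gates (one each for $\boldsymbol{F}^s$ and $\tilde{\boldsymbol{F}}^s$), each with at most $\norm{F} + k$ incoming wires, plus at most $k \cdot k!^2 (n+m)^k$ summation gates, each with at most $n+m$ incoming wires. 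Summing over the $|V(T)|$ nodes gives a total size at most $4 k!^2 k (n+m)^{k+1} |V(T)| \cdot \norm{F}$, matching the claim.

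The main obstacle I expect is the clean combinatorial bookkeeping, i.e.\ ensuring that the recursion reproduces $\boldsymbol{F}^s$ exactly, without double-counting any edge shared between a bag and one of its neighbours or between two sibling subtrees. Maintaining the pair $(\boldsymbol{F}^s, \tilde{\boldsymbol{F}}^s)$ and exploiting the hypothesis $|\beta(s) \cap \beta(t)| = k-1$ together with the connected-subtree property of tree decompositions is what makes the accounting transparent and keeps the support of every intermediate gate bounded by $k$.
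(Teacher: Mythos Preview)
Your proposal is correct and follows the same bottom-up induction along the tree decomposition as the paper: for every node $s$, every ordering $(\boldsymbol{a},\boldsymbol{b})$ of $\beta(s)$, and every assignment $(\boldsymbol{v},\boldsymbol{w})$, install a product gate; pass from a child $t_i$ to its parent $s$ via a summation over the unique vertex in $\beta(t_i)\setminus\beta(s)$; and combine the children with a product. The symmetry, support, and size arguments are the same in spirit as the paper's, and your estimates are sound.

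The one substantive difference is your auxiliary $\tilde{\boldsymbol{F}}^{t}$. The paper instead defines $K^{t_i}$ as the induced subgraph on $\beta(s)\cup V(F^{t_i})$ and writes $\boldsymbol{F}^s_{n,m}=\prod_i \boldsymbol{K}^{t_i}_{n,m}$ by the gluing lemma; however, under $\odot$ every edge lying inside $F[\beta(s)]$ is contributed once per child, so that identity as stated fails whenever $s$ has at least two children and $F[\beta(s)]$ contains an edge. Your device---factor out the edges of $F[\beta(s)]$ once, and glue only the edge-disjoint remainders $\tilde{F}^{t_i}$---is exactly the repair needed, and your identification of this bookkeeping as ``the main obstacle'' is apt. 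Otherwise the two proofs are essentially identical.
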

 	\begin{proof}
 		We build the circuit by induction on the depth of $s$ in $(T, r)$.
 		For the base case, consider all leaves of $T$.
 		For every leaf $s \in V(T)$,
 		all vertices in $ \boldsymbol{F}^s$ are labelled.
 		Hence,
 		$ \boldsymbol{F}^s_{n,m}(\boldsymbol{v},\boldsymbol{w})$ can be represented by a single multiplication gate whose fan-in equals the number of edges in $F^s$.
 		For every leaf, this yields a subcircuit of size at most $\norm{F} n^\ell m^r k! \leq \norm{F} (n+m)^k k!$.
 		
 		For the inductive step, suppose that a circuit has been constructed which contains the required gates for all $s \in V(T)$ of depth $> d$.
 		Let $s \in V(T)$ be a vertex at depth $d$.
 		Fix $\boldsymbol{v} \in [n]^\ell$,
 		$\boldsymbol{w} \in [m]^r$, $\boldsymbol{a} \in A^\ell$,
 		$\boldsymbol{b} \in B^r$,
 		$\ell +r = k$ such that
 		$\beta(s) = \{a_1, \dots, a_\ell, b_1, \dots, b_r\}$.
 		
 		Let $t \in V(T)$ be a child of $s$.
 		Write $u$ for the unique vertex in $\beta(s) \setminus \beta(t)$.
 		Suppose wlog that $u = a_\ell \in A$.
 		Write $K^t$ for the subgraph of $F$ induced by $\beta(s) \cup \bigcup_{t' \in V(T^t)} \beta(t')$.
 		Let $\boldsymbol{K}^t \coloneqq (K^t, \boldsymbol{a}, \boldsymbol{b})$.
 		We first construct a circuit for $\boldsymbol{K}^t_{n,m}(\boldsymbol{v}, \boldsymbol{w})$.
 		Let $u'$ denote the unique vertex in $\beta(t) \setminus \beta(s)$.
 		Distinguish cases:
 		\begin{enumerate}
 			\item If $u' \in A$, let $\boldsymbol{a}' \coloneqq \boldsymbol{a}[\ell/u']$ and define $\boldsymbol{F}^t \coloneqq (F^t, \boldsymbol{a}', \boldsymbol{b})$.
 				  Then
 				  \begin{equation}\label{eq:circuit1}
 				  \boldsymbol{K}^t_{n,m}(\boldsymbol{v}, \boldsymbol{w}) = \sum_{v \in [n]} \boldsymbol{F}^t_{n,m}(\boldsymbol{v}[\ell/v], \boldsymbol{w}) \prod_{j \in [r]} \prod_{\substack{a b \in E(F) \\ a = a_\ell \\ b = b_j}} x_{v_\ell w_j}.
 				  \end{equation}
 			\item If $u' \in B$, let $\boldsymbol{a}' \coloneqq \boldsymbol{a}[\ell/]$ and define $\boldsymbol{F}^t \coloneqq (F^t, \boldsymbol{a}', \boldsymbol{b}u')$.
 				  Then
 				  \begin{equation}\label{eq:circuit2}
 				  \boldsymbol{K}^t_{n,m}(\boldsymbol{v}, \boldsymbol{w}) = \sum_{w \in [m]} \boldsymbol{F}^t_{n,m}(\boldsymbol{v}[\ell/], \boldsymbol{w}w) \prod_{j \in [r]} \prod_{\substack{a b \in E(F) \\ a = a_\ell \\ b = b_j}} x_{v_\ell w_j}.
 				  \end{equation}
 		\end{enumerate}
 		In both cases, the inductive hypothesis applies to $\boldsymbol{F}^t$.
 		In order to construct $\boldsymbol{K}^t_{n,m}(\boldsymbol{v}, \boldsymbol{w})$, we require one summation gate of fan-in $\max\{n,m\}$
 		and a product gate of fan-in at most $\norm{F} + 1$.
 		
 		Let now $t_1, \dots, t_\ell$ for $\ell \leq k$ denote the children of $s$.
 		By \cref{lem:gluing},
 		\begin{equation}\label{eq:circuit3}
 			\boldsymbol{F}^s_{n,m}(\boldsymbol{v},\boldsymbol{w}) = \prod_{i=1}^\ell \boldsymbol{K}^{t_i}_{n,m}(\boldsymbol{v}, \boldsymbol{w}).
 		\end{equation}
 		This polynomial can be realised by one additional product gate of fan-in at most $k$.
 		
 		Thus, in order to construct the additional gates for $s \in V(T)$, 
 		we require at most \[ n^\ell m^r \ell! r! (4 + k  + \max\{n,m\} + \norm{F})  \leq 4(n+m)^{k+1} k!^2 k \cdot \norm{F} \] gates and edges.
 		It follows that the overall circuit has the desired size. In the end, we invoke \cref{lem:rigidifyCircuits} to make the circuit rigid, which allows us to speak about the supports of gates.
 		
 		Let $\boldsymbol{v} \in [n]^\ell$ and $\boldsymbol{w} \in [m]^r$.
 		At any stage, the subcircuit computing $\boldsymbol{F}^s_{n,m}(\boldsymbol{v}, \boldsymbol{w})$ is by construction $\StabP_{\Sym_n \times \Sym_m}(\boldsymbol{v}, \boldsymbol{w})$-symmetric, so $\boldsymbol{v}\boldsymbol{w}$ is a support of the gate computing $\boldsymbol{F}^s_{n,m}(\boldsymbol{v}, \boldsymbol{w})$. Note that $|\boldsymbol{v}\boldsymbol{w}| \leq k$.
 		
 		The overall circuit is $\Sym_n \times \Sym_m$-symmetric,
 		since, by \cref{lem:labelsMovedByPermutations}, $(\pi, \sigma)(\boldsymbol{F}^s_{n,m}(\boldsymbol{v}, \boldsymbol{w})) = \boldsymbol{F}^s_{n,m}(\pi(\boldsymbol{v}), \sigma(\boldsymbol{w}))$ is computed as well for $(\pi, \sigma) \in \Sym_n \times \Sym_m$.
 	\end{proof}
 	
 	The lemma implies the theorem:
 	
 	\thmHomCircuit*
 	\begin{proof}
 		By \cite[Lemma~6]{seppelt_algorithmic_2024} and its proof, 
 		there exists a tree decomposition $(T, \beta)$ of $F$ with a vertex $r \in V(T)$ possessing the properties required by \cref{lem:hom-circuit-core-bipartite}.
 		Note that $|V(T)| \leq |V(F)|$.
 		
 		Write $A \uplus B$ for the bipartition of $F$.
 		Let $\ell, r \in \mathbb{N}$ be such that there exists $\boldsymbol{a} \in A^\ell$ and $\boldsymbol{b} \in B^r$ such that $\beta(r) = \{a_1, \dots, a_\ell, b_1, \dots, b_r\}$.
 		Let $\boldsymbol{F} \coloneqq (F, \boldsymbol{a}, \boldsymbol{b}) \in \mathcal{T}^k(\ell, r)$.
 		By \cref{lem:hom-circuit-core-bipartite}, 
 		for $n,m \in \mathbb{N}$,
 		there exists a $\Sym_n \times \Sym_m$-symmetric circuit computing $\boldsymbol{F}_{n,m}(\boldsymbol{v}, \boldsymbol{w})$
 		for every $\boldsymbol{v} \in [n]^\ell$ and $\boldsymbol{w} \in [m]^r$.
 		The circuit is of size at most $4 k!^2 k \cdot (n+m)^{k+1} \cdot \norm{F}^2$ and support size at most $k$.
 		By \cref{lem:unlabelling}, the desired circuit can be constructed from this circuit by adding one additional summation gate of fan-in $(n+m)^k$.
 		That is,
 		\[
 			\hom_{F, n,m} = \sum_{\boldsymbol{v} \in [n]^\ell} \sum_{\boldsymbol{w} \in [m]^r} \boldsymbol{F}_{n,m}(\boldsymbol{v}, \boldsymbol{w}). \qedhere
 		\]
 	\end{proof}

	\subsection{Polynomial Orbit Size Versus Polynomial Size}
	
	\cref{thm:main1} gives a characterisation for when symmetric circuits with polynomial \emph{orbit size}, rather than polynomial total size, exist. A super-polynomial lower bound on the orbit size of course implies a super-polynomial lower bound on the circuit size itself, but for upper bounds, this is not true: There can exist symmetric circuits with polynomial orbit size but whose total size is super-polynomial. The following theorem shows that under the common assumption $\VP \neq \VNP$, this situation indeed arises: There exist polynomials in $\mathfrak{T}_{n,m}^k$ that do not admit polynomial size circuits (neither symmetric nor asymmetric), unless $\VP = \VNP$. Thus, \cref{thm:main1} is best-possible, and obtaining a characterisation of \emph{total} circuit size rather than orbit size can be expected to be as difficult for symmetric circuits as for general ones.
	
	\hardPolynomialsInT*

	\begin{proof}
		We consider a family of polynomials that was shown to be $\VNP$-hard in \cite[Theorem~1]{curticapean_et_al2022}, and modify it. 
		For a partition $\lambda \vdash d$, write $R$ for the sets of parts of $\lambda$,
		and, for $i \in [d]$, write $s_i$ for the number of size-$i$ parts in $\lambda$.
		Let $n \geq d$ be an integer. 
		Consider the \emph{monomial symmetric polynomial}:
		\[
		m_\lambda(y_1, \dots, y_n) \coloneqq \alpha_\lambda \sum_{f \colon R \hookrightarrow [n]} \prod_{r \in R} y_{f(r)}^{|r|}.
		\]	
		Here, $\alpha_\lambda \coloneqq \prod_{i=1}^d \frac{1}{s_i!}$ is such that all coefficients in $m_\lambda$ are $1$.
		
		In \cite[Theorem~1]{curticapean_et_al2022}, the existence of polynomial functions $r,s \colon \mathbb{N} \to \mathbb{N}$ and of partitions $\lambda_n \vdash r(n)$, $n \in \mathbb{N}$,
		is established such that the family of polynomials $m_{\lambda_n}(y_1, \dots, y_{s(n)})$ is $\VNP$-complete.
		
		Given $m_\lambda(y_1, \dots, y_n)$, we define a polynomial $p_\lambda \in \mathbb{Q}[\mathcal{X}_{n,n}]$ by substituting, for every $v \in [n]$, the expression $\sum_{w \in [n]} x_{vw}$ for $y_v$.
		Clearly, $m_\lambda$ can be recovered from $p_\lambda$ by substituting $\frac1n y_v$ for $x_{vw}$ for all $v, w \in [n]$.
		Hence, the $\VNP$-complete family in \cite[Theorem~1]{curticapean_et_al2022} gives rise to a $\VNP$-complete family of polynomials of the form $p_\lambda \in \mathbb{Q}[\mathcal{X}_{n,n}]$.
		It remains to prove membership in $\mathfrak{T}^2_{n,n}$.
		
		To that end, let $\lambda \vdash d$ and $n \geq d$ be arbitrary.
		Recall that $R$ denotes the set of parts of $\lambda$
		and apply Möbius inversion to the outer sum.
		By \cref{lem:moebius-polynomial},
		\begin{align*}
			m_\lambda(y_1, \dots, y_n) &= \alpha_\lambda \sum_{\pi \in \Pi(R)} \mu_\pi \sum_{f \colon R/\pi \to [n]} \prod_{r \in R} y_{f\pi(r)}^{|r|} \\
			&= \alpha_\lambda \sum_{\pi \in \Pi(R)} \mu_\pi \sum_{f \colon R/\pi \to [n]} \prod_{s \in R/\pi} y_{f(s)}^{\sum_{r \in s}|r|} \\
			&= \alpha_\lambda  \sum_{\pi \in \Pi(R)} \mu_\pi \prod_{s \in R/\pi} \sum_{v \in [n]} y_v^{\sum_{r \in s}|r|}.
		\end{align*}
		For $\phi(v) \coloneqq \sum_{w \in [n]} x_{vw}$ where $v \in [n]$,
		we have $\phi \in \mathfrak{T}^2_{n,n}(1,0)$ by \cref{lem:unlabelling,ex:edge}.
		Similarly, $\sum_{v \in [n]} \phi(v)^{\ell} \in \mathfrak{T}^2_{n,n}$ for every  integer $\ell \geq 1$ by  \cref{lem:gluing,lem:unlabelling}.
		Hence, the desired polynomial $p_{\lambda} = m_{\lambda}(\phi(1), \dots, \phi(n))$ is a linear combination of polynomials in $\mathfrak{T}^2_{n,m}$ and thus in $\mathfrak{T}^2_{n,m}$ itself.
	\end{proof}	
	
	\section{Counting Width and Orbit Size Lower Bounds}
	
	\Cref{thm:main1}, the main result of the preceding section, characterises the polynomials which are computed by symmetric circuits of polynomial orbit size. Ultimately, the goal of the symmetric circuit programme is to show super-polynomial lower bounds on the orbit and hence circuit size, for polynomials for which this is not currently possible in the non-symmetric circuit setting. However, \cref{thm:main1} itself does not give us an immediate tool to prove such lower bounds in general. 
	
	In this section and in the following ones, we consider more restricted families of polynomials with the goal of determining their orbit size.
	To the best of our knowledge, the only available tool for proving orbit size lower bounds is the \emph{counting width} as introduced by \textcite{dawar_definability_2017}.
	It describes the descriptive complexity of a graph parameter.
	
	In order to define counting width, we view $(n,m)$-vertex bipartite graphs as elements of $\{0,1\}^{n \times m}$.
	More generally, an \emph{edge-weighted} $(n,m)$-vertex bipartite graph is an element of $\mathbb{Q}^{n \times m}$. 
	
	Let $\tau \coloneqq \{A, B\} \cup \{R_q \mid q \in \mathbb{Q}\}$ denote the infinite relational signature with two unary symbols $A$ and $B$ and one binary symbol $R_q$ for every $q \in \mathbb{Q}$.
	We view elements $G \in \mathbb{Q}^{n \times m}$ as $\tau$-structures with universe $n \times m$ by interpreting $A^G \coloneqq [n]$, $B^G \coloneqq [m]$, and $R_q^G = \{ij \in n \times m \mid G(ij) = q\}$ for $q \in \mathbb{Q}$.
	An \emph{isomorphism} between $G, H \in \mathbb{Q}^{n \times m}$ is a bijection $h \colon [n] \uplus [m] \to [n] \uplus [m]$ such that $G \circ h = H$.
	
	\begin{definition}[\cite{dawar_definability_2017}]
		\label{def:counting-width}
		Let $n,m\in \mathbb{N}$.
		Let $p$ be an $\Sym_n \times \Sym_m$-invariant function on domain $\mathbb{Q}^{n \times m}$. 
		\begin{enumerate}
			\item The \emph{counting width of $p$ on $(n,m)$-vertex simple graphs} is the least integer $k$ such that, for all $G, H \in \{0,1\}^{n \times m}$, if $G$ and $H$ are $\mathcal{C}^k$-equivalent, then $p(G) = p(H)$.
			\item The \emph{counting width of $p$ on $(n,m)$-vertex edge-weighted graphs} is the least integer $k$ such that, for all $G, H \in \mathbb{Q}^{n \times m}$, if $G$ and $H$ are $\mathcal{C}^k$-equivalent, then $p(G) = p(H)$.
		\end{enumerate}
	\end{definition} 
	
	Since $p$ is assumed to be isomorphism-invariant and every edge-weighted graph can be defined in $\mathcal{C}^{n+m}$ up to isomorphism, the counting width on $(n,m)$-vertex graphs is at most $n+m$ and well-defined.
	Clearly, the counting width on edge-weighted graphs is an upper bound on the counting width on simple graphs.
	
	The following theorem establishes the desired connection between counting width and orbit size.	
	It is essentially due to \textcite{anderson_symmetric_2017}.
	
	\begin{theorem}\label{thm:counting-width}
		Let $(p_{n,m})_{n,m\in \mathbb{N}}$ be a family of polynomials.
		\begin{enumerate}
			\item If the $p_{n,m}$ admit $\Sym_n \times \Sym_m$-symmetric circuits of orbit size polynomial in $n+m$,
			then there is a constant $k \in \bbN$ that bounds the counting width of $p_{n,m}$ on $(n,m)$-vertex edge-weighted graphs, for all $n,m \in \bbN$.
			\item If the $p_{n,m}$ admit $\Sym_n \times \Sym_m$-symmetric circuits of orbit size $2^{o(n)}$, then the counting width of $p_{n,m}$ on $(n,m)$-vertex edge-weighted graphs is at most $o(n)$.
		\end{enumerate}
	\end{theorem}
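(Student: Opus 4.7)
The plan is to reduce to a pebble game argument in the spirit of Anderson and Dawar in the Boolean setting. By \cref{lem:rigidifyCircuits}, we may assume each $C_{n,m}$ is rigid, and then \cref{lem:constantSupportOfGates} directly converts the orbit-size hypotheses into support-size bounds: polynomial orbit size yields a constant $k$ with $\maxSup(C_{n,m}) \leq k$ for part 1, and orbit size $2^{o(n)}$ yields $\maxSup(C_{n,m}) \in o(n)$ for part 2. Thus both conclusions will follow from the key claim: a rigid $\Sym_n \times \Sym_m$-symmetric circuit $C$ with $\maxSup(C) \leq k$ computes a polynomial of counting width at most $k+1$ on $(n,m)$-vertex edge-weighted graphs.

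To prove the claim, fix $G, H \in \bbQ^{n \times m}$ with $G \equiv_{\Cc^{k+1}} H$, so Duplicator has a winning strategy $\mathfrak{S}$ in the $(k+1)$-pebble bijective game on $G, H$, where $\bbQ$-edge weights are encoded by the relations $R_q$. For a gate $g$, let $p_g(G)$ denote its evaluation on input $G$. I plan to prove by induction on the DAG structure of $C$ the following invariant: for every gate $g$, every partial play in which Spoiler's pebbles in $G$ sit exactly on $\sup(g)$, and every bijection $f \colon V(G) \to V(H)$ offered by $\mathfrak{S}$ in that position, we have $p_g(G) = p_{\tau(g)}(H)$ for any $\tau \in \Sym_n \times \Sym_m$ extending $f|_{\sup(g)}$ to a full permutation. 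This is well-defined, since any two such extensions differ by an element of $\StabP(\sup(g)) \leq \Stab(g)$, and $\tau(g)$ denotes the image of $g$ under the unique circuit automorphism extending $\tau$. Applying the invariant to the output gate, whose support is empty and which is therefore fixed by every $\tau$, yields $p_{n,m}(G) = p_{n,m}(H)$.

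The base case is handled by the fact that after Spoiler pebbles the at most two elements of a variable gate $x_{ij}$'s support, Duplicator's bijection preserves the relation $R_q$ for every $q$, so edge-weight values match; constant input gates are immediate. For the inductive step at a summation or multiplication gate $g$, I partition the children of $g$ into $\StabP(\sup(g))$-orbits as in \cref{lem:intersectionOfChildSupports}. Within each orbit, Spoiler places the spare $(k+1)$-th pebble on the at most one new element needed to reach a representative child $h$'s support; $\mathfrak{S}$ then responds with a fresh bijection to which the inductive hypothesis applies, matching $p_h(G)$ to $p_{\tau(h)}(H)$. Summing or multiplying contributions across each orbit preserves this equality, producing the invariant at $g$.

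The hard part will be the bookkeeping in the inductive step: tracking how the orbit structure of children is matched between the two evaluations, verifying that only one spare pebble is ever required (this is exactly what \cref{lem:intersectionOfChildSupports} enforces, since $\sup(h) \setminus \sup(g)$ is controlled there), and checking well-definedness under successive choices of $\tau$. Once this is handled, the passage from the Boolean $\{0,1\}$-valued setting of Anderson--Dawar to $\bbQ$-valued circuits is transparent: the argument propagates equalities of rational values, and the game signature already contains a relation symbol $R_q$ for every rational edge weight, so Duplicator's strategy preserves all arithmetic data that the circuit can see.
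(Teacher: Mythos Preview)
Your overall strategy aligns with the paper's: both rigidify via \cref{lem:rigidifyCircuits}, convert orbit bounds to support bounds via \cref{lem:constantSupportOfGates}, and then invoke a pebble-game argument linking support size to counting width. The paper outsources that last step entirely to \cite[Theorem~4.8]{dawar2021lower}, which states that a symmetric circuit with support sizes bounded by $k/2$ gives equal outputs on any pair of $\Cc^k$-equivalent edge-weighted graphs; you instead attempt to prove this step by a direct induction on the circuit, which is a legitimate alternative.

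There is, however, a genuine gap in your inductive step. You assert that passing from a gate $g$ to a representative child $h$ requires pebbling ``at most one new element'', and you cite \cref{lem:intersectionOfChildSupports} as the reason. That lemma says no such thing: it shows $S(h) = \bigcap_{h' \in O_h} \sup(h') \subseteq \sup(g)$ and that $\sup(h') \setminus S(h)$ is disjoint from $\sup(g)$, but it places no bound on $|\sup(h) \setminus \sup(g)|$. A child's support can be of size $k$ and entirely disjoint from its parent's, so a single spare pebble is not enough to reach it. This is precisely why the external result the paper cites carries a factor of two (support $\leq k/2$ for a $k$-pebble game): one needs enough pebbles to hold $\sup(g)$ while transitioning, element by element, to $\sup(h)$. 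The correction does not harm the theorem, which only claims boundedness rather than a sharp constant, but your argument as written does not go through; you would need to run the induction inside a $2k$-pebble game and move pebbles from $\sup(g)$ to $\sup(h)$ incrementally, or else cite the external result as the paper does.
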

	\begin{proof} 
		We prove the contrapositive of the first statement. Suppose the counting width of $p_{n,m}$ on $(n,m)$-vertex edge-weighted graphs is not bounded by a constant. Then for every $k \in \bbN$, there exist $n,m \in \bbN$ such that we find $\mathcal{C}^{k}$-equivalent $(n,m)$-vertex edge-weighted graphs $G,H$ with $p_{n,m}(G) \neq p_{n,m}(H)$.
		The graphs being $\mathcal{C}^{k}$-equivalent means that Duplicator has a winning strategy in the bijective $k$-pebble played on $(G,H)$. Then by \cite[Theorem~4.8]{dawar2021lower}, any $\Sym_n \times \Sym_m$-symmetric circuit whose support sizes are bounded by $k/2$ yields the same output on the adjacency matrices of $G$ and $H$ (the game in that theorem is not the standard bijective $k$-pebble game, but an equivalent one). Since $p_{n,m}(G) \neq p_{n,m}(H)$, no such circuit can represent $p_{n,m}$. Since this argument can be made for every $k \in \bbN$, this shows that there do not exist $\Sym_n \times \Sym_m$-symmetric circuits with constant support size for all $p_{n,m}$.
		Then by the first part of \cref{lem:constantSupportOfGates} (and the fact that circuits can always be assumed to be rigid), there do not exist symmetric circuits with polynomial orbit size.
		
		The second statement can be shown similarly: Suppose the counting width of $p_{n,m}$ on $(n,m)$-vertex edge-weighted graphs is $\Omega(n)$. Then for every function $f(n) \in o(n)$, there exist $n,m \in \bbN$ such that we find $\mathcal{C}^{f(n)}$-equivalent $(n,m)$-vertex edge-weighted graphs $G,H$ with $p_{n,m}(G) \neq p_{n,m}(H)$. With the same reasoning as before, using the second part of \cref{lem:constantSupportOfGates} now, we conclude that there do not exist $\Sym_n \times \Sym_m$-symmetric circuits for $p_{n,m}$ with orbit size $2^{o(n)}$.
	\end{proof}

	By \cref{thm:counting-width}, if a family of polynomials $p_{n,m} \in \mathbb{Q}[\mathcal{X}_{n,m}]$ admits $\Sym_n\times \Sym_m$-symmetric circuits of orbit size polynomial in $n+m$,
	then the counting width of $p_{n,m}$ is bounded on edge-weighted bipartite graphs.
	In fact, the counting width corresponds precisely to the treewidth of the graphs in the homomorphism representation:
	\begin{theorem}\label{thm:counting-width-explicit}
		Let $k,n,m \in \mathbb{N}$. If $p \in \mathfrak{T}^k_{n,m}$,
		then the counting width of $p$ on edge-weighted bipartite $(n,m)$-vertex graphs is at most $k$.
	\end{theorem}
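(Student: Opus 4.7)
The plan is to reduce by linearity to the case of a single homomorphism polynomial and then apply the edge-weighted version of the Dvořák-Dell-Grohe-Rattan characterisation of Weisfeiler-Leman equivalence in terms of homomorphism counts. Specifically, since $\mathfrak{T}^k_{n,m}$ consists of $\mathbb{Q}$-linear combinations of homomorphism polynomials $\hom_{F,n,m}$ for bipartite multigraphs $F$ of treewidth less than~$k$, and polynomial evaluation is linear, it suffices to show: if $F$ is such a graph and $G, H \in \mathbb{Q}^{n \times m}$ are $\mathcal{C}^k$-equivalent as $\tau$-structures, then $\hom_{F,n,m}(G) = \hom_{F,n,m}(H)$. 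Unfolding the definitions, this amounts to showing that two $\mathcal{C}^k$-equivalent edge-weighted bipartite graphs admit equal weighted homomorphism counts from each bipartite multigraph of treewidth less than~$k$.

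To prove this, I would fix a tree decomposition $(T, \beta)$ of $F$ of width less than~$k$, rooted at some node $r$, and set up the standard dynamic programme over it. For a weighted graph $K \in \mathbb{Q}^{n \times m}$, a node $s \in V(T)$, and a bipartition-respecting assignment $\alpha \colon \beta(s) \to [n] \uplus [m]$, define
\[
	W^K_s(\alpha) \coloneqq \sum_{\substack{h \colon V(F^s) \to [n] \uplus [m] \\ h|_{\beta(s)} = \alpha}} \prod_{ab \in E(F^s)} K(h(a), h(b))
\]
where $F^s$ is the subgraph of $F$ induced by the bags in the subtree $T^s$. These partial weights satisfy the usual tree-decomposition recursion, and $\hom_{F,n,m}(K) = \sum_\alpha W^K_r(\alpha)$. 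Since $|\beta(s)| \leq k$ for every $s$, each bag can be tracked by the $k$ pebble pairs of the bijective game.

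By hypothesis, Duplicator wins the bijective $k$-pebble game on $(G, H)$. Because each rational weight is encoded by its own relation symbol $R_q$ in $\tau$, the partial-isomorphism condition enforces that whenever matched pebbles sit on $u, u' \in V(G)$ and $v, v' \in V(H)$, we have $G(u,u') = H(v,v')$. I would then prove, by induction on the depth of $s$ in~$T$, that $W^G_s(\alpha) = W^H_s(\alpha')$ whenever $(\alpha, \alpha')$ is a matched pebble position reachable by some Duplicator play. The inductive step propagates the equality across an edge of $T$ by first having Spoiler pick up the pebble on the vertex of $\beta(s) \setminus \beta(s')$, then transferring the sum defining the recursion from $G$ to $H$ term-by-term through Duplicator's bijection for that move. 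Summing over all assignments at the root $r$ yields the desired equality.

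The main obstacle is the bookkeeping when $s$ has multiple children $s_1, \dots, s_d$: the recursion multiplies the partial weights of the children after restricting $\alpha$ appropriately, and one must iteratively apply Duplicator's bijections corresponding to the different pebble swaps so as to simultaneously align all $d$ factors with their $H$-counterparts. This argument is essentially the game-theoretic proof of the Dvořák-Dell-Grohe-Rattan theorem \cite{dvorak_recognizing_2010, dell-grohe-rattan}, lifted to the edge-weighted setting via the $\tau$-structure encoding; the only additional observation needed beyond the unweighted case is that the partial-isomorphism guarantee forces equality of the concrete edge-weight factors contributing to $W^G_s$ and $W^H_s$ at every stage.
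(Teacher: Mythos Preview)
Your proposal is correct and follows essentially the same approach as the paper: both reduce by linearity to a single homomorphism polynomial, then prove the equality by induction over a tree decomposition of $F$, using Duplicator's bijection in the $k$-pebble game to transfer the sum at each forget/unlabelling step and the partial-isomorphism condition to match edge weights. The paper phrases the induction in terms of its labelled homomorphism polynomial maps $\boldsymbol{F}_{n,m}(\boldsymbol{v},\boldsymbol{w})$ over a nice tree decomposition (introduce/forget/join nodes), whereas you use the equivalent language of partial weights $W^K_s(\alpha)$, but the underlying argument is the same Dvo\v{r}\'ak--Dell--Grohe--Rattan proof lifted to edge-weighted structures.
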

	\begin{proof}
		Since $p$ is a linear combination of homomorphism polynomials,
		it suffices to show that, if $G, H \in \mathbb{Q}^{n\times m}$ are $\mathcal{C}^k$-equivalent, then $\hom_{F, n,m}(G) = \hom_{F, n,m}(H)$ for every bipartite multigraph $F$ with $\tw(F) < k$.
		To that end, consider the following claim.

		For relational structures $G$ and $H$ over the same universe $U$,
		we say that $(G,\boldsymbol{v})$  and $(H,\boldsymbol{w})$ are \emph{$\mathcal{C}^k$-equivalent} for $\boldsymbol{v},\boldsymbol{w} \in U^\ell$ and $\ell \leq k$
		if, for every $\mathcal{C}^k$-formula $\phi$ with $\ell$ free variables, $G\models \phi(\boldsymbol{v}) \iff H \models \phi(\boldsymbol{w})$.
		We prove the following:
		
		\begin{claim}
			Let $\ell, r \in \mathbb{N}$ such that $\ell +r \leq k$ and $\boldsymbol{F} \in \mathcal{T}^k(\ell, r)$.
			Let $\boldsymbol{v},\boldsymbol{v}' \in [n]^\ell$ and $\boldsymbol{w},\boldsymbol{w}' \in [m]^r$. 
			If $G, G' \in \mathbb{Q}^{n\times m}$ are such that $(G, \boldsymbol{v}, \boldsymbol{w})$ and $(G', \boldsymbol{v}', \boldsymbol{w}')$ are $\mathcal{C}^k$-equivalent,
			then $\boldsymbol{F}_{n,m}(\boldsymbol{v}, \boldsymbol{w})(G) = \boldsymbol{F}_{n,m}(\boldsymbol{v}', \boldsymbol{w}')(G')$.
		\end{claim}
		\begin{claimproof}
			Write $\boldsymbol{F} = (F, \boldsymbol{a}, \boldsymbol{b})$ and $V(F) = A \uplus B$ for its bipartition.
			By \cref{def:treewidth-labelled} and \cite[Lemma~7]{bodlaender_partial_1998},
			there exists a tree decomposition $(T, \beta)$ of $F$ such that
			\begin{enumerate}
				\item there exists a vertex $q \in V(T)$ such that $\beta(q)= \{a_1, \dots, a_\ell, b_1, \dots, b_r \}$,
				\item for every $s \in V(T)$ which has
                                  two children $t_1, t_2 \in V(T)$ in
                                  the rooted tree $(T, q)$, we have $\beta(t_1) = \beta(s) = \beta(t_2)$,
				\item for every $s \in V(T)$ which has
                                  a single child $t \in V(T)$ in the
                                  rooted tree $(T, q)$, we have $\beta(t) \subseteq \beta(s)$ and $|\beta(s)| = |\beta(t)|+1$ or  $\beta(s) \subseteq \beta(t)$ and $|\beta(t)| = |\beta(s)|+1$,
				\item for every $s \in V(T)$, $|\beta(s)| \leq k$.
			\end{enumerate}
			We prove the claim by induction on the height of the rooted tree $(T, q)$.
			
			For the base case, suppose that $q$ is the only vertex in $T$.
			Since all vertices of $F$ are labelled, 
			$\boldsymbol{F}_{n,m}(\boldsymbol{v}, \boldsymbol{w})(G)$ depends solely on the isomorphism type of the edge-weighted bipartite subgraph of $G$ induced by $\boldsymbol{v}$ and $\boldsymbol{w}$.
			Since $\ell + r \leq k$, this isomorphism type can be defined in $\mathcal{C}^k$.
			This implies the desired statement.
			
			For the inductive step, consider three cases:
			\begin{enumerate}
				\item The vertex $q$ has a single child $t \in V(T)$, $\beta(t) \subseteq \beta(q)$, and $|\beta(q)| = |\beta(t)|+1$.
				
				Define $F'$ as the subgraph of $F$ induced by $\bigcup_{s \in V(T) \setminus \{q\}} \beta(s)$.
				Suppose without loss of generality that $\beta(q) \cap B = \beta(t) \cap B$.
				Let $i \in [\ell]$ be such that $a_i \not\in \beta(t)$.
				Then $\boldsymbol{F}' \coloneqq (F', \boldsymbol{a}[i/], \boldsymbol{b})$ is a graph to which the inductive hypothesis applies.
				
				If $(G, \boldsymbol{v}, \boldsymbol{w})$ and $(G', \boldsymbol{v}', \boldsymbol{w}')$ are $\mathcal{C}^k$-equivalent,
				then so are $(G, \boldsymbol{v}[i/], \boldsymbol{w})$ and $(G', \boldsymbol{v}'[i/], \boldsymbol{w}')$.
				By the inductive hypothesis,
				$\boldsymbol{F}'_{n,m}(\boldsymbol{v}[i/], \boldsymbol{w})(G) = \boldsymbol{F}'_{n,m}(\boldsymbol{v}'[i/], \boldsymbol{w}')(G')$.

				Now $\boldsymbol{F}$ differs from $\boldsymbol{F}'$ only in the vertex $a_i$ and the potential edges connecting $a_i$ to the vertices in $\boldsymbol{b}$.
				More precisely,
				\[
					\boldsymbol{F}_{n,m}(\boldsymbol{v}, \boldsymbol{w})(G)
					= \boldsymbol{F}'_{n,m}(\boldsymbol{v}[i/], \boldsymbol{w})(G) \cdot \prod_{a_ib_j \in E(F)} G(v_i w_j).
				\]
				Since $(G, \boldsymbol{v}, \boldsymbol{w})$ and $(G', \boldsymbol{v}', \boldsymbol{w}')$ are $\mathcal{C}^k$-equivalent, $G(v_i w_j) = G'(v'_i w'_j)$ for all $i \in [\ell]$ and $j \in [r]$.
				This implies the claim.
				
				\item The vertex $q$ has a single child $t \in V(T)$, $\beta(t) \supseteq \beta(q)$, and $|\beta(q)|+1 = |\beta(t)|$.
				
				Without loss of generality $\beta(t) \cap B = \beta(q) \cap B$. Write $a$ for the unique element of $\beta(t) \setminus \beta(q)$. 
				The inductive hypothesis applies to $\boldsymbol{F}' \coloneqq (F, \boldsymbol{a}a, \boldsymbol{b})$.
				Here, we insert the vertex $a$ at the end of the tuple only to ease notation.
				If  $(G, \boldsymbol{v}, \boldsymbol{w})$ and $(G', \boldsymbol{v}', \boldsymbol{w}')$ are $\mathcal{C}^k$-equivalent and $\boldsymbol{v}$ and $\boldsymbol{w}$ are tuples of total length at most $k-1$,
				then there exists a bijection $\pi \colon [n] \to [n]$ such that 
				$(G, \boldsymbol{v}v, \boldsymbol{w})$ and $(G', \boldsymbol{v}'\pi(v), \boldsymbol{w}')$ are $\mathcal{C}^k$-equivalent for all $v \in [n]$.
				By the inductive hypothesis,
				$\boldsymbol{F}'_{n,m}(\boldsymbol{v}v, \boldsymbol{w})(G) = \boldsymbol{F}'_{n,m}(\boldsymbol{v}'\pi(v), \boldsymbol{w}')(G')$ for all $v \in [n]$.
				Hence, by \cref{lem:unlabelling},
				\[
					\boldsymbol{F}_{n,m}(\boldsymbol{v}, \boldsymbol{w})(G)	
					= \sum_{v \in [n]} \boldsymbol{F}'_{n,m}(\boldsymbol{v}v, \boldsymbol{w})(G)	
					= \sum_{v \in [n]} \boldsymbol{F}'_{n,m}(\boldsymbol{v}'v', \boldsymbol{w}')(G')
					= 	\boldsymbol{F}_{n,m}(\boldsymbol{v}', \boldsymbol{w}')(G').	
				\]
				
				\item The vertex $q$ has several children $t_1, \dots, t_\ell$.
				
				For $i \in [\ell]$,
				define $F^i$ as the subgraph of $F$ induced by $\beta(t_i)$, and $\beta(s)$ for all descendents $s$ of $t$.
				Then the inductive hypothesis applies to $\boldsymbol{F}^i \coloneqq (F^i, \boldsymbol{a}, \boldsymbol{b})$.
				By \cref{lem:gluing}, $\boldsymbol{F}_{n,m}(\boldsymbol{v}, \boldsymbol{w}) = \prod_{i=1}^\ell \boldsymbol{F}^i_{n,m}(\boldsymbol{v}, \boldsymbol{w})$.
				The claim follows.\qedhere
			\end{enumerate}
		\end{claimproof}
		The claim implies the theorem with $\ell = 0 = r$.	
	\end{proof}
	
	\cref{thm:counting-width-explicit,thm:main1} yield an alternative proof of the first claim of \cref{thm:counting-width}.
We are going to establish the converse of \cref{thm:counting-width-explicit} in various restricted cases, as summarised in \cref{fig:overview}. To start with, we note that if we only care about the semantics of the polynomials as functions on simple graphs, then counting width and treewidth indeed coincide.

	\subsection{Characterisation of polynomials up to multilinearisation}
	The \emph{multilinearisation} of a polynomial $p$ is obtained from $p$ by setting all non-zero exponents to~$1$. 
	If we consider polynomials only up to multilinearisation, i.e.\ up to their evaluation under $\{0,1\}$-assignments, then
	the following converse of \cref{thm:counting-width-explicit} holds.	
	\multilinear*

	\begin{proof}
		For the forward direction, observe that, by \cref{thm:counting-width-explicit},
		the desired statement holds for $q$.
		The polynomials $p$ and $q$ assume the same values on the Boolean cube $\{0,1\}^{n \times m}$.
		Hence, the counting width of $p$ on $(n,m)$-vertex simple bipartite graphs is at most $k$.
		
		For the converse direction,
		we write the function $p \colon \{0,1\}^{n\times m} \to \mathbb{Q}$ as a linear combination of $\mathcal{C}^k$-type indicator functions.
		This argument requires interpolation and thus relies on the fact that there are only finitely many simple $(n,m)$-vertex bipartite graphs.
		
		Let $C_1 \uplus \dots \uplus C_r = \{0,1\}^{n\times m}$ be an enumeration of the $\mathcal{C}^k$-types of simple $(n,m)$-vertex bipartite graphs.
		Pick representatives $G_1, \dots, G_r$.
		By \cite{dvorak_recognizing_2010},
		for $i \neq j$, there  exists a bipartite graph $F_{ij}$ of treewidth at most $k-1$ such that $\hom(F_{ij}, G_i) \neq \hom(F_{ij}, G_j)$.
		For $i \in [r]$, define the polynomial
		\[
			q_i\coloneqq \prod_{j \neq i} \frac{\hom_{F_{ij}, n,m} - \hom(F_{ij}, G_j)}{\hom(F_{ij}, G_i) - \hom(F_{ij}, G_j)} \in \mathfrak{T}^k_{n,m}.
		\]
		Here, $\hom_{F_{ij}, n,m}$ is a polynomial in $\mathfrak{T}^k_{n,m}$ while $\hom(F_{ij}, G_j)$ is merely an integer.
		For all $G \in \{0,1\}^{n\times m}$, $q_i(G) = 1$ if, and only if, $G \in C_i$, and $q_i(G) =0$ otherwise.
		Since $p$ is constant on $\mathcal{C}^k$-types,
		it is
		\[
			p = \sum_{i=1}^r \alpha_i q_i
		\]
		as functions on $\{0,1\}^{n\times m}$ for some coefficients $\alpha_i \in \mathbb{Q}$.
		Hence, $p$ is equal to the polynomial $\sum_{i=1}^r \alpha_i q_i \in \mathfrak{T}^k_{n,m}$ up to multilinearisation.
	\end{proof}
	We see in \cref{ex:matching} that the statements in \cref{thm:multilinear} do not imply that the counting width of $p$ on $(n,m)$-vertex \emph{edge-weighted} graphs is bounded.
	The argument yielding \cref{thm:multilinear} is bound to fail when multilinearisation is not applied:
	\begin{example}\label{ex:simple-interpolation}
		Let $F$ be a graph and $n,m\in \mathbb{N}$.
		Write $C = \{\hom(F, G) \mid G \in \{0,1\}^{n \times m}\}$ for the finite set of homomorphism counts from $F$ to any simple $(n,m)$-vertex graph.
		Consider the polynomial
		\begin{align*}
		p \coloneqq \hom_{F, n,m} \prod_{c \in C} (\hom_{F, n,m} - c)
		&= \hom_{F, n,m} \sum_{C' \subseteq C} \hom_{|C \setminus C'| F, n,m} \prod_{c \in C'} (-1)^{|C'|} \cdot c\\
		&=   \sum_{C' \subseteq C} \hom_{(|C \setminus C'|+1) F, n,m} \prod_{c \in C'} (-1)^{|C'|} \cdot c.
		\end{align*}
		Here, $kF$ for a graph $F$ and an integer $k \geq 1$ denotes the disjoint union of $k$ copies of $F$.
		Then $p$ is a linear combination of homomorphism polynomials and $p(G) = 0$ for all $G \in \{0,1\}^{n \times m}$.
		Hence, the counting width of $p$ on simple $(n,m)$-vertex graphs is zero.
		However, $p \not\in \mathfrak{T}^{k}_{n,m}$ for $k < \tw(F) - 1$ when $n,m \geq |E(F)|$.
	\end{example}

	\section{Symmetric complexity of homomorphism polynomials}
	
	Towards a converse of \cref{thm:counting-width-explicit},
	we characterise the homomorphism polynomials and the linear combinations of homomorphism polynomials of sublinear patterns that have bounded counting width.

\subsection{Single homomorphism polynomials}

	We first consider families of single homomorphism polynomials and show the following sufficient and necessary criterion for tractability.

\begin{theorem}
	\label{thm:dichotomy-chromatic}
	For every family  $(F_{n,m})_{n,m \in \mathbb{N}}$ of bipartite multigraphs,
	the following are equivalent:
	\begin{enumerate}
		\item the treewidth $\tw(F_{n,m})$ is bounded,\label{it:hom1}
		\item the counting width of $\hom_{F_{n,m},n,m}$ on $(n,m)$-vertex simple graphs is bounded,\label{it:hom2}
		\item the counting width of $\hom_{F_{n,m},n,m}$ on $(n,m)$-vertex edge-weighted graphs is bounded,\label{it:hom3}
		\item the $\hom_{F_{n,m}, n, m}$ admit $\Sym_n \times \Sym_m$-symmetric circuits of orbit size polynomial in $n+m$,\label{it:hom5}
		\item the $\hom_{F_{n,m}, n, m}$ admit $\Sym_n \times \Sym_m$-symmetric circuits of size polynomial in $\norm{F_{n,m}} + n +m$ and orbit size polynomial in $n+m$.\label{it:hom4}
	\end{enumerate}
\end{theorem}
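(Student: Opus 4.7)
My plan is to establish the cycle of implications $(1) \Rightarrow (5) \Rightarrow (4) \Rightarrow (3) \Rightarrow (2) \Rightarrow (1)$, numbering the conditions by their position in the statement. All but the last implication chain together results developed earlier in the paper.

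For $(1) \Rightarrow (5)$, if $\tw(F_{n,m}) \leq k$ uniformly, \cref{thm:hom-circuit-main} yields a rigid $\Sym_n \times \Sym_m$-symmetric circuit for $\hom_{F_{n,m}, n, m}$ of size polynomial in $\norm{F_{n,m}} + n + m$ and support size at most $k+1$, and \cref{lem:constantSupportImpliesPolyOrbit} converts this into an orbit-size bound of $(n+m)^{k+1}$. Then $(5) \Rightarrow (4)$ is immediate by weakening the conclusion, $(4) \Rightarrow (3)$ is \cref{thm:counting-width}, and $(3) \Rightarrow (2)$ is trivial since $\{0,1\}^{n\times m}$ sits inside $\mathbb{Q}^{n \times m}$.

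The substantive direction is $(2) \Rightarrow (1)$, which I would prove by contraposition. Suppose the treewidths $\tw(F_{n,m})$ are unbounded; for each $k$, pick $(n,m)$ with $\tw(F_{n,m}) \geq k$ and apply the bipartite CFI construction to $F_{n,m}$ itself to obtain simple bipartite graphs $G_0, G_1$. An analogue of \cref{thm:neuen-bipartite} ensures these are $\mathcal{C}^k$-equivalent, while the analogue of \cref{thm:rob3.13-bipartite} invoked in the proof of \cref{thm:logarithmic-lincomb} guarantees $\hom(F_{n,m}, G_0) \neq \hom(F_{n,m}, G_1)$. The argument is a simpler instance of the one for \cref{thm:logarithmic-lincomb}: with only the single pattern graph $F_{n,m}$ in play, the acyclic oddomorphism digraph collapses to a single node, so there is no need to pick a ``source'' among several patterns.

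The main obstacle I expect is a sizing issue. The bipartite CFI graphs of $F_{n,m}$ naturally live on $(\gamma_A(F_{n,m}), \gamma_B(F_{n,m}))$ vertices, which can grow exponentially in the maximum degree of $F_{n,m}$ and may therefore exceed $n$ and $m$. When the CFI graphs do fit, padding with isolated vertices yields genuine $(n,m)$-vertex inputs: assuming without loss of generality that $F_{n,m}$ has no isolated vertices, padding leaves the hom count unchanged and preserves $\mathcal{C}^k$-equivalence. The delicate case is when $F_{n,m}$ has vertices of very high degree relative to $n$ and $m$. To handle it, I would either pass to a subsequence of $(n,m)$ along which treewidth grows while the CFI sizes remain bounded by $n$ and $m$, or use a more flexible CFI variant---for example, one applied to a bounded-degree minor witnessing the treewidth of $F_{n,m}$, accompanied by a separate argument that the hom-count distinction transfers back to $F_{n,m}$ itself. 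This is where most of the remaining technical work resides.
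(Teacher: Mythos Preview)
Your chain of easy implications $(1)\Rightarrow(5)\Rightarrow(4)\Rightarrow(3)\Rightarrow(2)$ is correct and matches the paper.

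For $(2)\Rightarrow(1)$, your first fallback (pass to a subsequence where the CFI graph of $F_{n,m}$ fits inside $[n]\times[m]$) does not work in general. Take $F_{n,m}=K_{\lfloor\log n\rfloor,\lfloor\log n\rfloor}$: treewidth is $\lfloor\log n\rfloor\to\infty$, but $\gamma_A(F_{n,m})\approx \tfrac12 n\log n>n$ for all large $n$, so the CFI graph never fits on the $A$-side regardless of $m$. No subsequence helps.

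Your second fallback (CFI on a bounded-degree minor, then ``transfer back'') is the right idea and is what the paper does, but the transfer step is the main content of the proof and is not routine. The paper proceeds as follows. First, pick a connected component $F$ of $F_{n,m}$ with $\tw(F)\geq k$ and use the Grid Minor Theorem to extract a connected \emph{induced} minor $P$ of $F$ with $\tw(P)$ still unbounded in $k$ and $\gamma(P)$ small enough to fit. Second, build $G_i \coloneqq (\mathrm{CFI}_i(P)\cdot K_2)\boxplus K_2$ using lexicographic product and full join (\cref{lem:minor-chromatic}). Expanding $\hom(F,G_i)$ via \cref{eq:boxplus,eq:lexprod} gives a sum whose coefficients are all \emph{nonnegative} products of hom counts; the summand with $F[U]/\mathcal{R}\cong P$ (which exists precisely because $P$ is an induced minor) is strictly larger for $i=0$ than for $i=1$ by \cref{thm:rob3.13}. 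This yields both $\hom(F,G_0)>\hom(F,G_1)$ and $\hom(F',G_0)\geq\hom(F',G_1)$ for \emph{every} $F'$, so the product over the connected components of $F_{n,m}$ still differs---this handles disconnected $F_{n,m}$, which your sketch does not address. Finally, since an induced minor of a bipartite graph need not be bipartite, the $G_i$ may fail to be bipartite; the paper passes to bipartite double covers and uses \cref{lem:bipartite1,lem:bipartite2,lem:ck-double-cover} to recover bipartition-respecting hom counts and $\mathcal{C}^k$-equivalence.

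So the gap in your proposal is not the overall plan but the absence of the positivity mechanism: without the induced-minor requirement and the specific $\cdot K_c$ and $\boxplus K_c$ construction, there is no evident way to propagate a CFI distinction on $P$ up to a distinction on $\hom(F_{n,m},-)$.
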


	Given the results of the preceding section,  it remains to show \enquote{\ref{it:hom2} $\Rightarrow$ \ref{it:hom1}}.
	To that end, 
	we prove a counting width lower bound for homomorphism counts of unbounded treewidth.

	\subsubsection{The Non-Uniform Homomorphism Distinguishing Closure}
	\label{sec:non-uniform-closure}
	
	Since it is conceivable that a more general statement will contribute to future work,
	we formulate our counting width lower bound in the language of homomorphism indistinguishability.
	Two graphs\footnote{In \cref{sec:non-uniform-closure,sec:bipartite-respecting}, all graphs are simple. 
		Considering homomorphism counts from simple graphs suffices in the context of 
		 \cref{thm:dichotomy-chromatic,thm:lincomb} since we prove lower bounds on the counting width on simple graphs.} $G$ and $H$ are \emph{homomorphism indistinguishable} over a graph class $\mathcal{F}$, in symbols $G \equiv_{\mathcal{F}} H$,
	if $\hom(F, G) = \hom(F, H)$ for all $F \in \mathcal{F}$.
	By \cite{dvorak_recognizing_2010,dell-grohe-rattan},
	two graphs are $\mathcal{C}^k$-equivalent
	if, and only if,
	they are homomorphism indistinguishable over all graphs of treewidth less than $k$.
	Similar characterisations are known for many graph isomorphism relaxations from finite model theory \cite{grohe_counting_2020,fluck_going_2024}, quantum information theory \cite{mancinska_quantum_2020}, and category theory \cite{dawar_lovasz-type_2021,montacute_pebble-relation_2024} etc.,
	cf.\ the monograph \cite{seppelt_homomorphism_2024}.
	The distinguishing power of such relations is well-understood via the so called \emph{homomorphism distinguishing closure} \cite{roberson_oddomorphisms_2022} of a graph class $\mathcal{F}$ defined as
	\[
		\cl(\mathcal{F}) \coloneqq \{F \mid \forall G, H.\ G \equiv_{\mathcal{F}} H \implies \hom(F, G) = \hom(F, H)\}.
	\]
	For example, it is known \cite{neuen_homomorphism-distinguishing_2024} that the class  $\mathcal{TW}_k$ of all graphs of treewidth $\leq k$ is \emph{homomorphism distinguishing closed}, i.e.\ $\cl(\mathcal{TW}_k) = \mathcal{TW}_k$.
	By \cite{dvorak_recognizing_2010},
	this means that the function $\hom(F, -)$ has counting width 
	at most $k$ if, and only if, $\tw(F) < k$.
	
	In order to treat homomorphism counts $\hom_{F_n, n}$ of patterns $F_n$ depending on the host size $n$,
	we introduce the \emph{non-uniform homomorphism distinguishing closure} of a graph class.
	For $n \in \mathbb{N}$, let
	\[
		\cl_n(\mathcal{F}) \coloneqq \{F \mid \forall G, H \text{ on $n$ vertices}.\ G \equiv_{\mathcal{F}} H \implies \hom(F, G) = \hom(F, H)\}.
	\]

	Our motivation for introducing this notion is the observation that 
	$\hom_{F_n, n}$ has counting width at most $k$
	if, and only if,
	$F_n \in \cl_n(\mathcal{TW}_{k-1})$ for all $n \in \mathbb{N}$.
	Note that $\cl_n(\mathcal{F})$ differs from $\cl(\mathcal{F})$
	since e.g.\ $K_{n+1} \in \cl_n(\mathcal{F})$ for every graph class $\mathcal{F}$,
	which this is not the case for $\cl(\mathcal{F})$.
	See \cref{app:cl} for further properties of $\cl_n(\mathcal{F})$.
	
	The following theorem
	determines the non-uniform homomorphism distinguishing closure
	of $\mathcal{TW}_k$.
	It also applies to the class of all graphs of pathwidth $\leq k$ \cite{seppelt_homomorphism_2024},
	treedepth $\leq k$ \cite{fluck_going_2024},
	or the class of all planar graphs \cite{roberson_oddomorphisms_2022}.
	All these graph classes satisfy the following \cref{proviso}.
	
	\begin{proviso}\label{proviso}
		Let $\mathcal{F}$ be a graph class such that
		\begin{enumerate}
			\item $\mathcal{F}$ is closed under taking minors,
			\item $\mathcal{F}$ is non-empty, i.e.\ contains the one-vertex graph $K_1$,
			\item $\mathcal{F}$ is closed under disjoint unions, i.e.\ if $F_1, F_2 \in \mathcal{F}$, then $F_1 + F_2 \in \mathcal{F}$, and
			\item $\mathcal{F}$ is closed under weak oddomorphisms, i.e.\ if $F \to G$ is a weak oddomorphism, cf.\ \cref{def:oddomorphism}, and $F \in \mathcal{F}$, then $G \in \mathcal{F}$.
		\end{enumerate}
	\end{proviso}

	\begin{theorem}\label{lem:non-uniform-closure}
		For every graph class $\mathcal{F}$ satisfying \cref{proviso},
		there exists a function $f \colon \mathbb{N} \to \mathbb{N}$
		such that, 
		for every graph $F$ with chromatic number $\chi(F)$,
		if $F \in \cl_{f(\chi(F))}(\mathcal{F})$, then $F \in \mathcal{F}$.
	\end{theorem}
	
	\begin{proof}
		By the Robertson--Seymour theorem \cite{robertson_graph_2004},
		$\mathcal{F}$ is characterised by a finite a set of forbidden minors. 
		Let $N$ denote the maximum number of vertices in any such minor.
		Let $f \colon \mathbb{N} \to \mathbb{N}$ be the function $c \mapsto c (N2^{N-1} +1)$.
		
		By contraposition,
		suppose that $F \not\in \mathcal{F}$.
		Write $c \coloneqq \chi(F)$.
		Hence, $F$ contains a minor on at most $N$ vertices which does not belong to $\mathcal{F}$.
		In particular, $F$ contains an induced minor $P$ on at most $N$ vertices which does not belong to $\mathcal{F}$.
		An \emph{induced minor} of a simple graph $F$ is a graph which can be obtained from an induced subgraph of $F$ by contracting edges.
		By \cite[Lemma~4]{bulian_fixed-parameter_2017},
		$P$ can be assumed to be connected.

		Let $G'_0$ and $G'_1$ denote the \textsmaller{CFI} graphs of $P$ (see also Appendix \ref{sec:cfi}).
		Since $\mathcal{F}$ is closed under weak oddomorphisms, by \cref{thm:rob3.13},
		$G'_0$ and $G'_1$ are homomorphism indistinguishable over $\mathcal{F}$.
		Construct $G_0$ and $G_1$ by making the lexicographic products $G'_0 \cdot K_c$ and $G'_1 \cdot K_c$, respectively,
		adjacent with a copy of $K_c$, i.e.\ 
		$G_i \coloneqq (G'_i \cdot K_c) \boxplus K_c$ for $i \in \{0,1\}$.
		The resulting graphs have at most $N2^{N-1} c + c = f(c)$ vertices.
		
		By \cite[Theorem~14]{seppelt_logical_2024}, $G'_0 \cdot K_c$ and $G'_1 \cdot K_c$ are
		homomorphism indistinguishable over $\mathcal{F}$.
		By applying the argument yielding \cite[Theorem~14]{seppelt_logical_2024} to \cref{eq:boxplus}, 
		it follows that $G_0$ and $G_1$ are also homomorphism indistinguishable over $\mathcal{F}$.
		
		We proceed by showing that $\hom(F, G_0) \neq  \hom(F, G_1)$.
		For $i \in \{0,1\}$,
		\begin{align}
			\hom(F, G_i)
			&\overset{\eqref{eq:boxplus}}{=} \sum_{U \subseteq V(F)} \hom(F[U], G'_i \cdot K_c) \hom(F - U, K_c) \notag \\
			&\overset{\eqref{eq:lexprod}}{=} \sum_{U \subseteq V(F)} \sum_{\mathcal{R} \in \Gamma(F[U])} \hom(F[U]/\mathcal{R}, G'_i) \hom(\coprod_{R \in \mathcal{R}} F[R], K_c) \hom(F - U, K_c). \label{eq:lem-induced-minor}
		\end{align}
		Recall that $\Gamma(K)$ for a graph $K$ denotes the set of all partitions $\mathcal{R}$ of $V(K)$ such that $K[R]$ is connected for all $R \in \mathcal{R}$.
		
		By \cref{thm:rob3.13},
		$\hom(F[U]/\mathcal{R}, G'_0) \geq \hom(F[U]/\mathcal{R}, G'_1)$
		for all $U$ and $\mathcal{R}$.
		Furthermore,\\ $\hom(\coprod_{R \in \mathcal{R}} F[R], K_c) > 0$ for all $\mathcal{R}$ and $\hom(F - U, K_c)  > 0$ for all $U$ by construction.
		Moreover, $\hom(F[U]/\mathcal{R}, G'_0) > \hom(F[U]/\mathcal{R}, G'_1)$ for $U$ and $\mathcal{R}$ such that $F[U]/\mathcal{R} \cong P$ by \cref{thm:rob3.13}.
		This choice of $U$ and $\mathcal{R}$ exists as $P$ is an induced minor of $F$.
		It follows that $\hom(F, G_0) \neq \hom(F, G_1)$.
	\end{proof}

	\subsubsection{Homomorphisms Respecting Bipartitions}
	\label{sec:bipartite-respecting}
	
	In order to apply \cref{lem:non-uniform-closure} 
	to prove \cref{thm:dichotomy-chromatic},
	we need to handle the technicality that the graph parameters this article is concerned with do not count general homomorphisms but homomorphisms respecting fixed bipartitions.
	
	For a bipartite graph $F$ without distinguished bipartition and a bipartition $V(F) = A \uplus B$,
	we write $F_{A, B}$ for the bipartite graph $F$ with distinguished bipartition $A \uplus B$.
	The number of all graph homomorphisms
	from $F$ to $G$  which do not necessarily respect any fixed bipartition is denoted by $\hom(F, G)$ while $\hom(F_{A, B}, G_{X, Y})$ denotes the number of homomorphisms $h \colon F \to G$ such that $h(A) \subseteq X$ and $h(B) \subseteq Y$.

	For a class $\mathcal{F}$ of graphs with fixed bipartitions, consider
	\[
		\cl_{n,m}^{\textup{bip}}(\mathcal{F})
		\coloneqq \left\{F \mid \forall G, H \text{ on $(n,m)$ vertices}.\ G \equiv_{\mathcal{F}} H \implies \hom(F, G) = \hom(F,H) \right\}.
	\]
	Here, $F$, $G$, and $H$ range over graphs with fixed bipartitions.
	We identify a class $\mathcal{F}$ of bipartite graphs with the class of all graphs $F_{A,B}$ with fixed bipartition for $F \in \mathcal{F}$ and any bipartition $A,B$ of~$F$. 
	We show the following:
	\begin{proposition}\label{prop:bip-cl}
		For every class $\mathcal{F}$ of bipartite graphs
		and $n \in \mathbb{N}$,
		it holds that
		$\cl_{n,n}^{\textup{bip}}(\mathcal{F}) \subseteq \cl_n(\mathcal{F})$.
	\end{proposition}

	Towards \cref{prop:bip-cl}, we make the following observations:
	\begin{fact}\label{lem:bipartite3}
		For every connected bipartite graph $F$,
		$\hom(F, K_2) = 2$.
	\end{fact}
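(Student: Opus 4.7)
The plan is to observe that a homomorphism $h \colon F \to K_2$ is precisely a proper $2$-colouring of $F$: writing $V(K_2) = \{0, 1\}$, the condition $h(u)h(v) \in E(K_2)$ for every edge $uv \in E(F)$ is equivalent to saying $h(u) \neq h(v)$. So it suffices to count the number of proper $2$-colourings of $F$.

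First I would use the fact that $F$ is bipartite to exhibit at least one such $2$-colouring: let $V(F) = A \uplus B$ be a bipartition and define $h_0(v) = 0$ for $v \in A$ and $h_0(v) = 1$ for $v \in B$. Swapping the roles gives a second colouring $h_1$ distinct from $h_0$ (using the fact that connectedness of $F$ forces both $A$ and $B$ to be nonempty, unless $F$ is the one-vertex graph, in which case the two constant maps are the two homomorphisms). So $\hom(F, K_2) \geq 2$.

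For the upper bound, I would argue that any homomorphism $h$ is determined by the value it takes on a single fixed vertex $v_0 \in V(F)$. Indeed, given any other vertex $w \in V(F)$, connectedness of $F$ provides a walk $v_0 = u_0, u_1, \dots, u_\ell = w$ in $F$, and the constraint $h(u_{i-1}) \neq h(u_{i+1}) \bmod 2$ along this walk forces $h(w) = h(v_0) + \ell \pmod 2$. Since there are only two possible values for $h(v_0)$, we obtain $\hom(F, K_2) \leq 2$.

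This is entirely routine; no step presents a real obstacle. The only very minor care needed is the degenerate case of a single-vertex graph, which is handled by noting that both constant maps are valid homomorphisms there.
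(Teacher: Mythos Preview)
Your argument is correct and is the standard proof of this well-known fact; the paper itself does not give a proof at all, merely stating the result as well-known. One small slip: you write ``$h(u_{i-1}) \neq h(u_{i+1}) \bmod 2$'' where you presumably mean $h(u_i) \neq h(u_{i+1})$, but the intended reasoning and the conclusion $h(w) = h(v_0) + \ell \pmod 2$ are clear and correct.
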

	In other words, every connected bipartite graph admits a bipartition which is unique up to flipping the sides. 
	
	\begin{lemma}\label{lem:bipartite1}
		Let $F$ and $G$ be bipartite graphs
		with bipartitions $V(F) = A \uplus B$ and $V(G) = X \uplus Y$. Suppose that $F$ is connected.
		Then
		$
		\hom(F, G) = \hom(F_{A, B}, G_{X, Y}) + \hom(F_{A, B}, G_{Y, X}).
		$ 
	\end{lemma}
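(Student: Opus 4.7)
The plan is to partition the set of homomorphisms $F \to G$ according to which "side" of the bipartition of $G$ the two sides of $F$ land in. The key input is \cref{lem:bipartite3}: since $F$ is connected and bipartite, $\hom(F, K_2) = 2$, which is the statement that the bipartition $A \uplus B$ is forced (up to a global flip) by the edge relation of $F$ alone.

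More concretely, I would proceed as follows. Since $G$ is bipartite with bipartition $X \uplus Y$, the map $\chi_G \colon V(G) \to V(K_2) = \{0,1\}$ sending $X \mapsto 0$ and $Y \mapsto 1$ is a graph homomorphism $\chi_G \colon G \to K_2$. Hence, for any homomorphism $h \colon F \to G$, the composition $\chi_G \circ h \colon F \to K_2$ is again a homomorphism. By \cref{lem:bipartite3}, there are only two such homomorphisms $F \to K_2$, and (using that $F$ is connected and bipartite with parts $A$ and $B$) they are precisely $A \mapsto 0, B \mapsto 1$ and $A \mapsto 1, B \mapsto 0$.

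Correspondingly, every $h \in \hom(F,G)$ falls into exactly one of two classes: either $h(A) \subseteq X$ and $h(B) \subseteq Y$, or $h(A) \subseteq Y$ and $h(B) \subseteq X$. These classes count $\hom(F_{A,B}, G_{X,Y})$ and $\hom(F_{A,B}, G_{Y,X})$ respectively. They are disjoint because $X \cap Y = \emptyset$ and $F$ has at least one vertex, so at least one of the conditions $h(A) \subseteq X$, $h(B) \subseteq Y$ is a non-vacuous constraint that conflicts with the opposite choice. Summing the two cases yields the claimed identity. No step here is really an obstacle; the only subtlety is the edge case where $F$ is a single isolated vertex, but in that situation the identity reduces to $|V(G)| = |X| + |Y|$, which holds trivially.
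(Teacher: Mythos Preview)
Your proof is correct and is in fact cleaner than the paper's. Both arguments ultimately rest on \cref{lem:bipartite3}, but they deploy it differently. The paper first argues that any homomorphism $h \colon F \to G$ satisfies $h(A) \cap h(B) = \emptyset$ (via an odd-path/odd-cycle contradiction), then applies \cref{lem:bipartite3} to the \emph{image} $G' \subseteq G$ of $h$ to conclude that the two candidate bipartitions of $G'$ must coincide up to a flip. You instead compose each $h$ with the colouring $\chi_G \colon G \to K_2$ and apply \cref{lem:bipartite3} directly to $F$, which immediately forces $\chi_G \circ h$ to be one of the two canonical $2$-colourings of $F$. This bypasses both the odd-cycle step and the detour through the image subgraph, and makes the partition of $\hom(F,G)$ into the two bipartition-respecting classes transparent. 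Your handling of the disjointness and of the single-vertex edge case is also fine (and indeed the main argument already covers that edge case, since an isolated vertex still has exactly two homomorphisms to $K_2$).
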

	\begin{proof}
		First observe that $
		\hom(F, G) \geq \hom(F_{A, B}, G_{X, Y}) + \hom(F_{A, B}, G_{Y, X}).
		$ 
		This is because bipartition-respecting homomorphisms are homomorphisms and the sets of homomorphisms counted by $\hom(F_{A, B}, G_{X, Y}) $ and $\hom(F_{A, B}, G_{Y, X})$ are disjoint.
		
		For the converse inequality, observe that for every homomorphism $h \colon F \to G$, it holds $h(A) \cap h(B) = \emptyset$.
		Indeed, if there are $a \in A$ and $b \in B$ such that $h(a) = h(b)$, then the odd-length path connecting $a$ and $b$ in $F$ is mapped to an odd-length cycle in $G$,
		contradicting that $G$ is bipartite.
		
		Let $G'$ denote the subgraph of $G$ containing the edges and vertices in the image of $F$ under $h$.
		As the image of a connected graph,  $G'$ is connected.
		As subgraph of $G$, $G'$ is bipartite.
		Hence, the bipartition of $G'$ is unique up to flipping sides by \cref{lem:bipartite3}.
		
		This bipartition is, on the one hand, given by $(X \cap V(G')) \uplus (Y \cap V(G'))$ and, on the other hand, 
		given by $h(A) \uplus h(B)$.
		It follows that $h(A) \subseteq X$ and $h(B) \subseteq Y$, or $h(B) \subseteq X$ and $h(A) \subseteq Y$, as desired.
	\end{proof}

	The \emph{bipartite double cover} of a graph $G$ is the graph $G \times K_2$ where  $\times$ denotes the categorical graph product.
	Its vertex set is $V(G) \times \{0,1\}$ and $(v, i)$ and $(w,j)$ are adjacent if, and only if, $vw \in E(G)$ and $i \neq j$.

	\begin{lemma}\label{lem:bipartite2}
		Let $F$ and $G$ be bipartite graphs.
		Let $H$ denote the bipartite double cover of $G$.
		Let $V(F) = A \uplus B$ be the bipartition of $F$ and $V(H) = X \uplus Y \coloneqq (V(G) \times \{0\}) \uplus (V(G) \times \{1\})$ be the canonical bipartition of the bipartite double cover of $G$.
		Then
		$
		\hom(F_{A, B}, H_{X, Y})
		= \hom(F_{A, B}, H_{Y, X}).
		$
	\end{lemma}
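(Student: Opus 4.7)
The plan is to exhibit an explicit bijection between the two sets of homomorphisms using the natural involution on the bipartite double cover. Let $\sigma \colon V(H) \to V(H)$ be the map defined by $\sigma(v, i) \coloneqq (v, 1-i)$. I would first check that $\sigma$ is a graph automorphism of $H$: by the definition of the categorical product with $K_2$, we have $(v,i)(w,j) \in E(H)$ iff $vw \in E(G)$ and $i \neq j$, and this condition is preserved under flipping both second coordinates. Moreover, by construction $\sigma(X) = Y$ and $\sigma(Y) = X$.

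Now define the map $\Phi \colon \hom(F_{A,B}, H_{X,Y}) \to \hom(F_{A,B}, H_{Y,X})$ by $\Phi(h) \coloneqq \sigma \circ h$. Since $\sigma$ is an automorphism, $\sigma \circ h$ is still a homomorphism from $F$ to $H$; and since $h(A) \subseteq X$, $h(B) \subseteq Y$, we get $(\sigma \circ h)(A) \subseteq \sigma(X) = Y$ and $(\sigma \circ h)(B) \subseteq \sigma(Y) = X$, so $\Phi(h)$ indeed belongs to $\hom(F_{A,B}, H_{Y,X})$. The map $\Phi$ is its own inverse (since $\sigma$ is an involution), hence bijective, which yields the equality of the two homomorphism counts.

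I do not anticipate any real obstacle: the argument rests only on the fact that swapping the two "copies" of $V(G)$ in the bipartite double cover is a graph automorphism interchanging the two sides of the canonical bipartition. No assumption on connectedness of $F$ is needed for this direction, in contrast to \cref{lem:bipartite1}.
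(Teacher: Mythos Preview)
Your proof is correct and is essentially identical to the paper's own argument: the paper also uses the involutive automorphism $(v,i)\mapsto(v,1-i)$ of $H$ to set up the bijection between the two homomorphism sets. You have simply spelled out in more detail the verification that this map is an automorphism swapping $X$ and $Y$.
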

	\begin{proof}
		Write $V(H) = V(G) \times \{0,1\}$. 
		The automorphism $\phi$ of $H$ given by $(v, i) \mapsto (v, 1-i)$ for all $v \in V(G)$ and $i \in \{0,1\}$ induces a bijection between the set of homomorphisms $F_{A, B}  \to H_{X, Y}$ and the set of homomorphisms $F_{A, B} \to H_{Y, X}$.
	\end{proof}
	
	\begin{corollary}\label{lem:double-cover-combined}
		Let $F$ be a bipartite graph with bipartition $V(F) = A \uplus B$.
		Let $G$ be a graph.
		Write $H$ for the bipartite double cover of $G$  
		and write $X \uplus Y = V(H)$ for its canonical bipartition.
		Then $\hom(F, G) = \hom(F_{A,B}, H_{X,Y})$.
	\end{corollary}
	\begin{proof}
		Write $F = F^1 + \dots+ F^r$ for the connected components of $F$.
		Then
		\[
			\hom(F, G)
			= \prod_{i =1}^r \hom(F_i, G)
			=  \prod_{i =1}^r \frac{\hom(F_i, H)}{2}
			= \prod_{i =1}^r  \hom((F_i)_{A, B}, H_{X, Y})
			= \hom(F_{A,B}, H_{X,Y}).
		\]
		Here, the first identity follows from \cref{eq:coproduct},
		the second from \cref{lem:bipartite3} together with \cref{eq:product},
		the third from \cref{lem:bipartite2,lem:bipartite1},
		and the fourth from \cref{eq:coproduct}.
	\end{proof}
	
	We now prove \cref{prop:bip-cl}:
	\begin{proof}[Proof of \cref{prop:bip-cl}]
		Let $G$ and $H$ be graphs on $n$ vertices
		such that $G \equiv_{\mathcal{F}} H$.
		By \cref{lem:double-cover-combined},
		the bipartite double covers $G'_{W, X}$ and $H'_{Y, Z}$ of $G$ and $H$
		are homomorphism indistinguishable over $\mathcal{F}$
		as bipartite graphs with fixed bipartitions.
		Let $F_{A,B} \in \cl_{n,n}^{\textup{bip}}(\mathcal{F})$.
		By \cref{lem:double-cover-combined}, 
		$\hom(F, G) = \hom(F_{A,B}, G'_{W, X}) = \hom(F_{A,B}, H'_{Y, Z}) = \hom(F, H)$,
		since $G'_{W,X}$ and $H'_{Y,Z}$ have $(n,n)$ vertices.
		Hence, $F \in \cl_n(\mathcal{F})$.
	\end{proof}

	\subsubsection{Proof of \cref{thm:dichotomy-chromatic}}

	\begin{proof}[Proof of \cref{thm:dichotomy-chromatic}]
		\Cref{thm:hom-circuit-main} yields that \ref{it:hom1} implies \ref{it:hom4}.
		That \ref{it:hom4} implies \ref{it:hom5} is immediate.
		That \ref{it:hom5} implies \ref{it:hom3} follows from \cref{thm:counting-width}.
		By \cref{def:counting-width},
		it holds that \ref{it:hom3} implies \ref{it:hom2}.
		Thus, it remains to show that \ref{it:hom2} implies \ref{it:hom1}.
		
		Write $\mathcal{B}$ for the class of all bipartite graphs and fix a family $(F_{n,m})_{n,m \in \bbN}$ of bipartite multigraphs, as in \cref{thm:dichotomy-chromatic}.
		Analogous to arguments in \cite{dvorak_recognizing_2010},
		if $\hom_{F_{n,m}, n,m}$ has counting width at most $k+1$  on $(n,m)$-vertex simple bipartite graphs,
		then $F_{n,m} \in \cl^{\textup{bip}}_{n,m}(\mathcal{TW}_k \cap \mathcal{B})$ for all $n,m \in \bbN$.
		By \cref{prop:bip-cl,lem:non-uniform-closure},
		for $\ell \coloneqq \min\{n,m\}$ sufficiently larger than $k$,
		\[
			F_{n,m} 
			\in \cl^{\textup{bip}}_{n,m}(\mathcal{TW}_k \cap \mathcal{B}) \cap \mathcal{B}
			\subseteq \cl^{\textup{bip}}_{\ell,\ell}(\mathcal{TW}_k \cap \mathcal{B})  \cap \mathcal{B}
			\subseteq \cl_\ell(\mathcal{TW}_k \cap \mathcal{B})  \cap \mathcal{B}
			\subseteq \cl_\ell(\mathcal{TW}_k) \cap \mathcal{B}
			\subseteq \mathcal{TW}_k.
		\]
		Hence, the $F_{n,m}$ have bounded treewidth.
	\end{proof}
	
	\subsection{Linear Combinations of Homomorphism Polynomials for Patterns of Sublinear Size}
	
	In general, families of symmetric polynomials are not single homomorphism polynomials but linear combinations of homomorphism polynomials (see \cref{lem:gnm-sym}).
	In this section, we tackle this general situation, at least for pattern graphs of sublinear size:
	
	\begin{theorem}
		\label{thm:lincomb}
		For $n,m \in \mathbb{N}$, 
		let $F_{n,m,i}$ be bipartite multigraphs and $\alpha_{n,m,i} \in \mathbb{Q} \setminus \{0\}$.
		Let $p_{n,m} \coloneqq \sum_i \alpha_{n,m,i} \hom_{F_{n,m,i}, n,m}$.
		Suppose that, for all $\nu, \mu \in \mathbb{N}$,
		there are $n', m' \in \mathbb{N}$ 
		such that $\max_i |F_{\nu n, \mu m,i}| \leq \min \{n,m\}$ for all $n > n'$ and $m > m'$.		
		Then the following are equivalent:
		\begin{enumerate}
			\item\label{it:lincomb1} $\max_i \tw(F_{n,m,i})$ is bounded,
			\item\label{it:lincomb2} the counting width of $p_{n,m}$ on $(n,m)$-vertex simple graphs is bounded,
			\item the counting width of $p_{n,m}$ on $(n,m)$-vertex edge-weighted graphs is bounded,
			\item the $p_{n,m}$ admit $\Sym_n\times \Sym_m$-symmetric circuits of orbit size polynomial in $n+m$.
		\end{enumerate}
	\end{theorem}
	
	We prove the theorem by establishing a descriptive complexity monotonicity for non-uniform graph motif parameters whose patterns are of sublinear size. 
	This monotonicity resembles a similar statement proven by \textcite{curticapean_homomorphisms_2017}
	for graph motif parameters and fixed-parameter computations. 
	They show that a linear combination $\sum \alpha_F \hom(F, -)$ is as hard for \textsmaller{FPT} computations as the hardest term $\hom(F, -)$ occurring in it.
	We prove a similar monotonicity for the descriptive complexity of homomorphism counts whose patterns may depend on the size of the host, thus
	generalising \cite[Lemma~6]{seppelt_logical_2024}.
	In order to minimise the technical overhead,
	we first consider the case of not necessarily bipartite graphs.

	
	\begin{theorem}[Descriptive Complexity Monotonicity] \label{thm:complexity-monotonicty}
		Let $p_{n}(-) = \sum_i \alpha_{n,i} \hom(F_{n,i}, -)$ for graphs $F_{n,i}$ and non-zero rational coefficients $\alpha_{n,i}$.
		Suppose that $v(n) \coloneqq \max_i |F_{n,i}| \in o(n)$
		and that $\max_i \chi(F_{n,i}) \in O(1)$.
		Let $\mathcal{F}$ be as in \cref{proviso} and suppose that
		\[
			G \equiv_{\mathcal{F}} H \implies
			p_n(G) = p_n(H)
		\]
		for all graphs $G$, $H$ on $n$ vertices.
		Then $F_{n,i} \in \mathcal{F}$ for all sufficiently large $n$ and all $i$.
	\end{theorem}
	\begin{proof}
		By \cref{proviso}
		and since $\hom(F +K_1, G) = n \hom(F, G)$ if $G$ has $n$ vertices,
		we may suppose that none of the patterns in $p_n$ contains isolated vertices.
		Under this assumption,
		\begin{equation}\label{eq:shift}
			p_n(G + \ell K_1) = p_n(G)
		\end{equation}
		for all $(n-\ell)$-vertex graphs $G$ and $n, \ell \in \mathbb{N}$.
		The proof is conducted in two steps.
		We first show the following:
		\begin{claim}\label{claim:monotonicity}
			For every $\nu \in \mathbb{N}$,
			it holds that $F_{\nu n, i} \in \cl_\nu(\mathcal{F})$ for all $n \gg \nu$ and all $i$.
		\end{claim}
		\begin{claimproof}
			Let $G$ and $H$ be $\nu$-vertex graphs such that $G \equiv_{\mathcal{F}} H$.
			Since $v(n) \in o(n)$,
			there exists $n' \in \mathbb{N}$
			such that $v(\nu n) \leq n$ for all $n > n'$.
			Let $K$ be an arbitrary graph on $\ell \leq n$ vertices.
			By \cref{eq:product},
			$G \times K \equiv_{\mathcal{F}} H \times K$.
			By \cref{proviso} and \cite[Theorem~7]{seppelt_logical_2024},
			$G \times K + \nu (n - \ell)K_1 \equiv_{\mathcal{F}} H \times K +  \nu (n - \ell)K_1$.
			Hence, by assumption, noting that $G \times K +  \nu (n - \ell) K_1$ has $\nu n$ vertices, 
			\[
			p_{\nu n}(G \times K) \overset{\eqref{eq:shift}}{=} p_{\nu n}(G \times K +  \nu (n - \ell) K_1 ) = p_{\nu n}(H \times K +  \nu (n - \ell) K_1 )  \overset{\eqref{eq:shift}}{=}  p_{\nu n}(H \times K).
			\]
			By \cref{eq:product},
			\[
				p_{\nu n}(G \times K) = \sum_i \alpha_{\nu n,i} \hom(F_{\nu n,i}, G \times K)  = \sum_i \alpha_{\nu n,i} \hom(F_{\nu n,i}, G)  \hom(F_{\nu n,i}, K) 
			\]
			By \cite{lovasz_operations_1967}, cf.\ \cite[Lemma~6]{seppelt_logical_2024},
			the matrix $(\hom(F, K))_{F, K}$ where $F$, $K$ range over graphs on at most $n$ vertices is invertible.
			Hence,
			\[
				\alpha_{\nu n,i} \hom(F_{\nu n,i}, G) = \alpha_{\nu n,i} \hom(F_{\nu n,i}, H)
			\]
			for all $i$.
			In other words, $F_{\nu n,i} \in \cl_{\nu}(\mathcal{F})$ for all $i$. 
		\end{claimproof}
	
		In order to deduce the theorem from the claim,
		fix by \cref{lem:non-uniform-closure} an integer $\nu \in \mathbb{N}$ such that $F \in \cl_\nu(\mathcal{F}) $ implies $F \in \mathcal{F}$
		for all graphs $F$ whose chromatic number does not exceed $\max_n \max_i \chi(F_{n,i})$.
		
		By \cref{claim:monotonicity,lem:non-uniform-closure},
		$F_{\nu n,i} \in \mathcal{F}$ for all sufficiently large $n$ and all $i$.
		It remains to take care of the $F_{n,i}$ whose first index is not a multiple of $\nu$.
		We do so by padding the $p_n$.
		
		To that end, for every $j < \nu$, 
		consider the graph parameter $q^j$
		defined via $q_n^j(G) \coloneqq p_{n+j}(G + jK_1) = p_{n+j}(G)$ for every $n$-vertex graph $G$,
		by \cref{eq:shift}.
		By \cref{proviso} and \cite[Theorem~7]{seppelt_logical_2024},
		for all graphs $G$ and $H$ on $n$ vertices,
		\[
			G \equiv_{\mathcal{F}} H
			\implies G+jK_1 \equiv_{\mathcal{F}} H+ jK_1
			\implies p_{n+j}(G+jK_1) = p_{n+j}(H+jK_1)
			\implies q_n^j(G) = q_n^j(H).
		\]
		Hence, $q^j$ satisfies the assumptions of the theorem
		and it is $F_{\nu n+j, i} \in \cl_{\nu}(\mathcal{F})$ for all $i$ and
		all $n \gg \nu$ by \cref{claim:monotonicity}.
		Since the $\nu n + j$ for $j < \nu$ and $n \gg \nu$ cover all indices sufficiently larger than $\nu$,
		it follows that $F_{n,i} \in \cl_\nu(\mathcal{F})$ for all $n \gg \nu$ and all $i$.
		The theorem follows by \cref{lem:non-uniform-closure}.
	\end{proof}
	
	For graphs with fixed bipartitions, the proof is analogous.
	
	\begin{theorem}[Descriptive Complexity Monotonicity, Bipartite Case] \label{thm:complexity-monotonicty-bipartite}
		~\\
		Let $p_{n,m}(-) = \sum_i \alpha_{n,m,i} \hom(F_{n,m,i}, -)$ for bipartite graphs $F_{n,m,i}$ and non-zero rational coefficients $\alpha_{n,m,i}$.
		Suppose that, for all $\nu, \mu \in \mathbb{N}$, there exist $n', m' \in \mathbb{N}$,
		such that $\max_i |F_{\nu n, \mu  m,i }| \leq \min\{n,m\}$ for all $n > n'$ and $m > m'$.
		Let $\mathcal{F}$ be a class of bipartite graphs as in \cref{proviso} and suppose that
		\[
		G \equiv_{\mathcal{F}} H \implies
		p_{n,m}(G) = p_{n,m}(H)
		\]
		for all bipartite graphs $G$, $H$ on $(n,m)$ vertices with fixed bipartitions.
		Then $F_{n,m,i} \in \mathcal{F}$ for all sufficiently large $n,m$, and all $i$.
	\end{theorem}
	\begin{proof}
		As in the proof of \cref{thm:complexity-monotonicty},
		it may be supposed that the graphs $F_{n,m,i}$ do not contain isolated vertices.
		The proof is conducted in two steps.
		We first show the following:
		\begin{claim}
			For every $\nu, \mu \in \mathbb{N}$,
			it holds that $F_{\nu n, \mu m, i} \in \cl_{\nu, \mu}^{\textup{bip}}(\mathcal{F})$ for all $n \gg \nu$, $m \gg \mu$, and all $i$.
		\end{claim}
		\begin{claimproof}
			Let $G$ and $H$ be $(\nu, \mu)$-vertex bipartite graphs such that $G \equiv_{\mathcal{F}} H$.
			By assumption,
			there exist $n'$ and $m'$ depending on $\nu$ and $\mu$
			such that  $\max_i |F_{\nu n, \mu  m,i }| \leq \min \{n,m\}$
			for all $n > n'$ and $m > m'$.
			Let $K$ be an arbitrary bipartite graph on $(n'',m'')$ vertices for $n'' \leq n$ and $m'' \leq n$.
			By \cref{eq:product},
			$G \times K \equiv_{\mathcal{F}} H \times K$.
			By \cref{proviso} and \cite[Theorem~7]{seppelt_logical_2024},
			\[ G \times K + \nu(n-n'') K_1 + \mu (m-m'') K_1 \equiv_{\mathcal{F}} H \times K  + \nu(n-n'') K_1 + \mu (m-m'') K_1. \]
			By assumption,
			\begin{align*}
				p_{\nu n, \mu m}(G \times K) 
				&=  p_{\nu n, \mu m}(G \times K + \nu(n-n'') K_1 + \mu (m-m'') K_1 ) \\
				&=  p_{\nu n, \mu m}(H \times K+ \nu(n-n'') K_1 + \mu (m-m'') K_1) \\
				&= p_{\nu n, \mu m}(H \times K).
			\end{align*}
			By \cref{eq:product},
			\[
			p_{\nu n, \mu m}(G \times K) 
			= \sum_i \alpha_{\nu n, \mu m,i} \hom(F_{\nu n, \mu m,i}, G \times K)  
			= \sum_i \alpha_{\nu n, \mu m,i} \hom(F_{\nu n, \mu m,i}, G) \hom(F_{\nu n, \mu m,i}, K) 
			\]
			By \cite{lovasz_operations_1967}, cf.\ \cite[Lemma~6]{seppelt_logical_2024},
			the matrix $(\hom(F, K))_{F, K}$ where $F$, $K$ range over $(n'',m'')$-vertex bipartite graphs for $n'' \leq n$ and $m'' \leq m$ is invertible.\footnote{Strictly speaking, this requires the theorem of \textcite{lovasz_operations_1967} for the category of graphs with fixed bipartitions and bipartition-respecting homomorphisms. One may verify that the original proof goes through in this setting or consult a more category-theoretical reference such as \cite{pultr_isomorphism_1973}.}
			Hence,
			\[
			\alpha_{\nu n, \mu m,i} \hom(F_{\nu n, \mu m,i}, G) = \alpha_{\nu n, \mu m,i} \hom(F_{\nu n, \mu m,i}, H)
			\]
			for all $i$.
			In other words, $F_{\nu n, \mu m,i} \in \cl_{\nu, \mu}^{\textup{bip}}(\mathcal{F})$ for all $i$. 
		\end{claimproof}
		
		In order to deduce the theorem from the claim,
		fix by \cref{lem:non-uniform-closure,prop:bip-cl} an integer $\nu \in \mathbb{N}$ such that $F \in \cl_{\nu,\nu}^{\textup{bip}}(\mathcal{F}) $ implies $F \in \mathcal{F}$
		for all bipartite graphs $F$.
		For every $j, \ell < \nu$,
		consider the graph parameter $q^{j, \ell}$
		defined via 
		$q_{n,m}^{j, \ell}(G) \coloneqq p_{n+j, m+\ell}(G +jK_1 + \ell K_1)$ for every $(n,m)$-vertex bipartite graph $G$.
		As we may assume that the patterns in $p$ do not contain isolated vertices, it holds that
		$q^{j, \ell}_{n, m}(G) = p_{n+j, m+\ell}(G + jK_1 + \ell K_1) = \sum \alpha_{n+j, m+\ell, i} \hom(F_{n+j, m+\ell,i}, G)$ 
		and hence $q^{j, \ell}$ is as in the assumption of the theorem.
		
		By \cref{proviso} and \cite[Theorem~7]{seppelt_logical_2024},
		\begin{align*}
			G \equiv_{\mathcal{F}} H
			&\implies G+jK_1 + \ell K_1 \equiv_{\mathcal{F}} H+ jK_1 + \ell K_1 \\
			&\implies p_{n+j, m+\ell}(G+jK_1 + \ell K_1) = p_{n+j, m+\ell}(H+jK_1+\ell K_1)\\
			&\implies q_{n,m}^{j, \ell}(G) = q_{n,m}^{j, \ell}(H).
		\end{align*}
		Hence, $q^{j,\ell}$ satisfies the assumptions of the theorem
		and it is $F_{\nu n +j, \nu m + \ell, i} \in \cl_{\nu, \nu}^{\text{bip}}(\mathcal{F})$ for all $i$ and
		all $n, m \gg \nu$ by the claim.
		It follows that $F_{n, m,i} \in \cl_{\nu, \nu}^{\textup{bip}}(\mathcal{F})$ for all $n,m \gg \nu$ and all $i$.
	\end{proof}

	\begin{proof}[Proof of \cref{thm:lincomb}]
		Given \cref{thm:main1},
		it suffices to prove that \ref{it:lincomb2} implies \ref{it:lincomb1}.
		This follows from \cref{thm:complexity-monotonicty-bipartite}.
	\end{proof}
	
	\section{Symmetric complexity of subgraph polynomials}

	\Cref{thm:main1} characterises the polynomials admitting symmetric circuits of polynomial orbit size in terms of their representation as linear combinations of homomorphism polynomials.
	By \cref{lem:gnm-sym}, all symmetric polynomials also admit an alternative presentation as linear combinations of \emph{subgraph polynomials}.
	The results from this section suggest that in this subgraph representation, $\mathfrak{T}^k_{n,m}$ can be characterised via another natural graph parameter, namely a variant of the \emph{vertex cover number} that is closed under graph complementation. 
	We establish this characterisation at least in the case of subgraph polynomials for single pattern graphs of sublinear size.
		
	As a first step, we give a sufficient criterion for a subgraph polynomial to be in $\mathfrak{T}^k_{n,m}$.
	This criterion involves the \emph{hereditary treewidth} $\hdtw(F)$ of a bipartite multigraph $F$ defined as the maximum of the treewidth $\tw(F')$ of all bipartite multigraphs $F'$ which admit a vertex- and edge-surjective bipartition-respecting homomorphism $F \to F'$.
	As already observed in \cite{curticapean_homomorphisms_2017}, these graphs $F'$ are precisely those that appear when writing the function $\sub(F, -)$ as a linear combination of functions $\hom(F', -)$ via Möbius inversion, cf.\ \cref{thm:sub-hom}.
	
	\begin{theorem}\label{thm:sub-hom-poly}
		Let $F$ be a bipartite multigraph and $n,m \in \mathbb{N}$.
		Then $\sub_{F, n,m} \in \mathfrak{T}^{k+1}_{n,m}$ for $k \coloneqq \hdtw(F)$.
	\end{theorem}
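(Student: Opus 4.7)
The plan is to invoke the classical Möbius-inversion expansion of subgraph counts in terms of homomorphism counts and then read off the treewidth bound from the definition of hereditary treewidth. Concretely, I expect $\sub_{F,n,m}$ to decompose as a finite $\mathbb{Q}$-linear combination
\[
\sub_{F,n,m} \;=\; \sum_{F'} \alpha_{F'}\, \hom_{F', n, m},
\]
where $F'$ ranges over (isomorphism classes of) bipartite multigraphs admitting a vertex- and edge-surjective bipartition-preserving homomorphism $F \twoheadrightarrow F'$, and the coefficients $\alpha_{F'} \in \mathbb{Q}$ are the Möbius coefficients on the lattice of partitions of $V(F)$ refining the bipartition. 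This is exactly the identity flagged by the authors as \cref{thm:sub-hom} (the bipartite analogue of Lovász's identity relating embedding counts, injective hom counts, and hom counts).

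First I would invoke \cref{thm:sub-hom} to obtain the displayed expansion. The canonical projections $F \twoheadrightarrow F/\pi$ arising from partitions $\pi$ of $V(F)$ that respect the bipartition are vertex- and edge-surjective bipartition-preserving homomorphisms, so every target $F'$ appearing in the sum falls within the family over which $\hdtw(F)$ is defined as a maximum. Importantly, the coefficients $\alpha_{F'}$ depend only on $F$ and $F'$, not on $n$ or $m$.

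Next, by the very definition of hereditary treewidth, $\tw(F') \leq \hdtw(F) = k$ for every $F'$ appearing in the sum. Hence each summand $\hom_{F', n, m}$ is the homomorphism polynomial of a bipartite multigraph of treewidth strictly less than $k+1$, and therefore lies in $\mathfrak{T}^{k+1}_{n,m}$. Since $\mathfrak{T}^{k+1}_{n,m}$ is by definition closed under $\mathbb{Q}$-linear combinations, the displayed expansion witnesses $\sub_{F,n,m} \in \mathfrak{T}^{k+1}_{n,m}$, as required.

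The only point that requires genuine care is the precise form of the Möbius inversion in the bipartite multigraph setting: one must make sure that (i) the partitions summed over are exactly those that do not merge vertices across the two sides of the bipartition, so that the quotients $F/\pi$ remain bipartite; (ii) multi-edges produced by the quotienting are retained so that the surjection $F \twoheadrightarrow F/\pi$ is edge-surjective, matching the quantifier in the definition of $\hdtw(F)$; and (iii) the coefficients obtained from Möbius inversion are absorbed into rational scalars $\alpha_{F'}$ that are constant in $n,m$. All three points are standard bookkeeping, and once \cref{thm:sub-hom} is invoked in the right form, the rest of the argument is immediate; I do not anticipate any genuine obstacle beyond citing the appropriate version of the identity.
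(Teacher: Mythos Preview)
Your proposal is correct and matches the paper's own proof essentially verbatim: the paper simply cites \cref{thm:sub-hom} (the Möbius-inversion identity $\sub_{F,n,m} = \frac{1}{|\Aut(F)|}\sum_{\pi \in \Pi(A)}\sum_{\sigma \in \Pi(B)} \mu_\pi\mu_\sigma\,\hom_{F/(\pi,\sigma),n,m}$) together with the definition of hereditary treewidth. The three bookkeeping points you flag are exactly the content that makes this citation go through, and none of them presents any difficulty.
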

	\begin{proof}
		By \cref{thm:sub-hom}, $\sub_{F, n,m}$ can be written as linear combination of homomorphism polynomials $\sub_{F', n,m}$ for quotients $F'$ of $F$.
		By definition of hereditary treewidth, $\tw(F') \leq \hdtw(F) = k$.
		Hence, $\sub_{F, n,m} \in \mathfrak{T}^{k+1}_{n,m}$ since the $k+1$ in $ \mathfrak{T}^{k+1}_{n,m}$ denotes the size of the largest bag rather than the treewidth.
	\end{proof}

	The expression for $\sub_{F, n,m}$ derived in \cref{thm:sub-hom-poly} does not depend on $n,m$.
	For this reason, \cref{thm:sub-hom-poly} fails to characterise membership of subgraph polynomials in $\mathfrak{T}^k_{n,m}$,
	as illustrated by the following example which is provided by the machinery developed in \cref{sec:ops}.
	\begin{example} \label{ex:knm}
		For all $n,m \in \mathbb{N}$,
		$\sub_{K_{n,m}, n,m} \in \mathfrak{T}^2_{n,m}$.
	\end{example}
	\begin{proof}
		Since $\sub_{K_{n,m}, n,m} = \prod_{v\in [n]}
                \prod_{w \in [m]} x_{vw}$, this polynomial is in $\mathfrak{T}^2_{n,m}$ by \cref{ex:edge,thm:product-lincomb}.
	\end{proof}
	
	\Cref{ex:knm} shows that $\mathfrak{T}^k_{n,m}$ contains subgraph polynomials of dense graphs.
	Hence, a characterisation of $\mathfrak{T}^k_{n,m}$ in terms of subgraph polynomials cannot involve a monotone graph parameter such as (hereditary) treewidth.
	Towards such a characterisation, we make the following observation:
	
	\begin{theorem}\label{thm:knn}
		For a function $f \colon \mathbb{N} \to \mathbb{N}$ such that $f(n) \leq n$ for all $n \in \mathbb{N}$,
		the following are equivalent:
		\begin{enumerate}
			\item there exists a constant $k \in \mathbb{N}$ such that $\min\{f(n), n- f(n)\} \leq k$ for all $n \in \mathbb{N}$,\label{it:knn1}
			\item the counting width of $\sub_{K_{f(n), f(n)}, n,n}$ on $(n,n)$-vertex simple graphs is bounded,\label{it:knn2}
			\item the counting width of $\sub_{K_{f(n), f(n)}, n,n}$ on $(n,n)$-vertex edge-weighted graphs is bounded,\label{it:knn3}
			\item the $\sub_{K_{f(n), f(n)}, n,n}$ admit $\Sym_n \times \Sym_n$-symmetric circuits of orbit size polynomial in $n$,\label{it:knn4}
			\item there exists a constant $k \in \mathbb{N}$ such that $\sub_{K_{f(n), f(n)}, n,n} \in \mathfrak{T}^k_{n,n}$ for all $n \in \mathbb{N}$,\label{it:knn5}
			\item the polynomials $\sub_{K_{f(n), f(n)}, n,n}$ admit $\Sym_n \times \Sym_n$-symmetric circuits of size polynomial in $n$.\label{it:knn6}
		\end{enumerate}
	\end{theorem}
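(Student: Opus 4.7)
The plan is to establish the cycle $(1) \Rightarrow (5) \Rightarrow (4) \Rightarrow (3) \Rightarrow (2) \Rightarrow (1)$ together with $(5) \Rightarrow (6) \Rightarrow (4)$. The links $(5) \Rightarrow (4)$, $(4) \Rightarrow (3)$, $(3) \Rightarrow (2)$, and $(6) \Rightarrow (4)$ follow respectively from \cref{thm:main1}, \cref{thm:counting-width}, \cref{def:counting-width}, and the trivial observation that orbit size is bounded by total size.

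For $(1) \Rightarrow (5)$ and $(1) \Rightarrow (6)$, I split on which side of $\min(f(n), n-f(n))$ is bounded. If $f(n) \leq k$, then Möbius inversion (\cref{thm:sub-hom}) expresses $\sub_{K_{f(n), f(n)}, n,n}$ as a finite linear combination of polynomials $\hom_{K_{r', r''}, n, n}$ with $r', r'' \leq k$, each of treewidth at most $k$; \cref{thm:hom-circuit-main} then supplies a polynomial-size rigid symmetric circuit for each. If instead $n - f(n) \leq k$, I build $\sub_{K_{n-k, n-k}, n, n}$ through the operator calculus of \cref{sec:ops}: start from $x_{vw} \in \mathfrak{T}^2_{n,n}(1,1)$ (\cref{ex:edge}), tensor with the all-ones map (\cref{ex:one,lem:disjoint-union}) to introduce $k$ dummy left and $k$ dummy right labels, apply \cref{cor:product-exclude-lincomb} on each side (using \cref{lem:reverse} to swap) to realise $\prod_{v \notin S} \prod_{w \notin T} x_{vw}$, and finally use \cref{lem:sum-exclude-lincomb} to sum over distinct labelled subsets $S, T \subseteq [n]$ and renormalise by $(k!)^2$. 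The same chain of operations instantiates as a polynomial-size symmetric circuit, giving both $(5)$ and $(6)$.

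The substantive implication $(2) \Rightarrow (1)$ is proved by contrapositive, in two subcases. \emph{If $f(n) \to \infty$}, set $r := f(n)$. The $(2) \Rightarrow (1)$ direction of \cref{thm:dichotomy-chromatic} applied to $K_{r,r}$ (of treewidth $r$) yields $\mathcal{C}^{r-1}$-equivalent bipartite graphs $G_0, G_1$ on $O(r)$ vertices per side with $\hom(K_{r,r}, G_0) \neq \hom(K_{r,r}, G_1)$; padding with isolated vertices embeds them in $[n] \uplus [n]$ without affecting either count since $K_{r,r}$ is connected. Expanding $\sub_{K_{r,r}, n,n}$ via Möbius inversion as $\sum c_{r',r''} \hom_{K_{r',r''},n,n}$ with $c_{r,r} \neq 0$, and noting that $\mathcal{C}^{r-1}$-equivalence preserves $\hom_{K_{r',r''}}$-values whenever $\min(r', r'') < r-1$, only the finitely many near-top terms $(r', r'') \in \{(r,r-1),(r-1,r),(r-1,r-1)\}$ can cancel the leading contribution. \emph{If $n - f(n) \to \infty$}, I use the $\mathcal{C}^k$-invariance of bipartite complementation to transfer the analysis to $\bar{G}$: the polynomial $\sub_{K_{n-s,n-s}, n,n}(G)$ with $s := n-f(n)$ counts $(s,s)$-bipartite independent sets in $\bar{G}$, which by inclusion-exclusion is a linear combination of $\sub_{K_{s',s''}, n, n}(\bar{G})$ for $s', s'' \leq s$, reducing to the first subcase.

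The main obstacle is the fine-grained bookkeeping in the first subcase of $(2) \Rightarrow (1)$: ensuring that the near-top Möbius contributions do not annihilate the leading $(r,r)$ difference. My plan is to exploit the combinatorial characterisation of distinguishing patterns for the oddomorphism-based \textsmaller{CFI} construction used in \cref{lem:minor-chromatic} (underlying \cref{thm:rob3.13-bipartite}), engineering the base graph so that $K_{r,r}$ is the unique minimal distinguishing pattern; an interpolation argument in the spirit of \cref{ex:simple-interpolation} is a plausible fallback.
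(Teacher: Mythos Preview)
Your treatment of the easy implications and of $(1)\Rightarrow(5),(6)$ is correct (if somewhat heavier than the paper's two-line explicit formulas $\sub_{K_{k,k},n,n}=\sum_{|A|=|B|=k}\prod_{a\in A,b\in B}x_{ab}$ and its dual). The real problem is $(2)\Rightarrow(1)$, where your argument has genuine gaps.

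First, the case split is mis-posed. The negation of (1) is that $\min\{f(n),n-f(n)\}$ is unbounded, which means that for every $k$ there is an $n$ with \emph{both} $f(n)>k$ and $n-f(n)>k$ simultaneously. There is no dichotomy into ``$f(n)\to\infty$'' versus ``$n-f(n)\to\infty$''; both are large at once, and you should exploit that jointly.

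Second, the first subcase is, as you admit, unfinished: you have no argument that the near-top M\"obius terms $K_{r-1,r},K_{r,r-1},K_{r-1,r-1}$ do not cancel the $(r,r)$ contribution, and ``engineering the base graph'' via oddomorphisms is not worked out. The third subcase is actually wrong: $\sub(K_{n-s,n-s},G)$ counts $(n{-}s,n{-}s)$-sized \emph{bipartite independent sets} in $\overline{G}$, and such counts are \emph{not} linear combinations of $\sub(K_{s',s''},\overline{G})$ for small $s',s''$ --- expanding $\prod(1-A_{uv})$ over an independent pair produces subgraph counts of \emph{all} bipartite patterns on $(n{-}s,n{-}s)$ vertices, not just bicliques. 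So the reduction to the first subcase does not go through.

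The paper avoids all of this with a direct embedding count. Given $k$, pick $n$ with $\min\{f(n),n-f(n)\}\geq k2^{k-1}-k$, so that $f(n)\geq k$ and the \textsmaller{CFI} blow-up fits into $[n]$. Inside $F=K_{f(n),f(n)}$, choose a subset $P$ inducing $K_{k,k}$ and replace only the vertices of $P$ by \textsmaller{CFI} gadgets, obtaining $G_0^P,G_1^P$ (which are $\mathcal{C}^k$-equivalent by \cref{thm:neuen-bipartite}). Now count $\emb(F,G_i^P)$ by classifying embeddings $e$ according to the projection $h=\rho\circ e\colon F\to F$. Two short claims do the work: if $h$ misses some vertex of $P$ then $\emb_h(F,G_0^P)=\emb_h(F,G_1^P)$ (shift the odd weight to the missed gadget); if $h$ is surjective onto $P$ but non-injective then both counts vanish (a collision in a gadget forces a non-embedding, because two distinct gadget vertices sharing a neighbour in every opposite gadget must coincide). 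Hence only bijective $h$ contribute to the difference, and these factor as $\emb(K_{k,k},(K_{k,k})_i)$ times a common nonzero combinatorial factor; \cref{cor:sub-cfi} gives $\emb(K_{k,k},(K_{k,k})_0)\neq\emb(K_{k,k},(K_{k,k})_1)$. This single construction uses \emph{both} inequalities $f(n)\geq k$ and $n-f(n)\geq k2^{k-1}-k$ at once, which is why no case split or complementation is needed.
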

	\begin{proof}
		First consider the implication \ref{it:knn1} $\Rightarrow$ \ref{it:knn6}:
		Observe that
		\begin{align*}
			\sub_{K_{k,k}, n,n} &= \sum_{\substack{A \subseteq [n] \\ |A| = k}}\sum_{\substack{B \subseteq [n] \\ |B| = k}} \prod_{a \in A} \prod_{b \in B} x_{ab}, \\
			\sub_{K_{n-k,n-k}, n,n} &= \sum_{\substack{A \subseteq [n] \\ |A| = k}}\sum_{\substack{B \subseteq [n] \\ |B| = k}} \prod_{a \in [n] \setminus A} \prod_{b \in [n] \setminus B} x_{ab}.
		\end{align*}
		These formulas represent $\Sym_n \times \Sym_n$-symmetric circuits of size $O(n^{2k})$.
		
		The implications \ref{it:knn6} $\Rightarrow$ \ref{it:knn4} $\Leftrightarrow$ \ref{it:knn5} follow from \cref{thm:main1}.
		The implications \ref{it:knn4} $\Rightarrow$ \ref{it:knn3} $\Rightarrow$ \ref{it:knn2} follow from \cref{thm:counting-width,def:counting-width}.
		The remaining implication \ref{it:knn2} $\Rightarrow$ \ref{it:knn1} is proved by contraposition.
		
		Let $k \geq 2$ be arbitrary.
		By assumption, there exists $n \in \mathbb{N}$ large enough such that
		\[
			\min\{ f(n), n - f(n) \} \geq  k2^{k-1} - k.
		\]
		In particular,  $f(n) \geq k$.
		Let $F \coloneqq K_{f(n), f(n)}$ and write $P \subseteq V(F)$ for a set of vertices such that $F[P] \cong K_{k,k}$.
		Blow up the vertices in $P$ to \textsmaller{CFI} gadgets (see Appendix \ref{sec:cfi}) and call the resulting bipartite graphs $G_0^P$ and $G_1^P$.
		Note that both graphs have $k 2^{k-1} + f(n) - k \leq n$ vertices on each side.
		It follows from \cref{thm:neuen-bipartite} that
		$G_0^P$ and $G_1^P$ are $\mathcal{C}^k$-equivalent as bipartite graphs with fixed bipartitions.
		
		Write $\rho \colon G_i^P \to F$ for the projection map, cf.\ \cref{sec:cfi}.
		We count embeddings $\emb(F, G_i^P)$.
		For a homomorphism $h \colon F \to F$, 
		write $\emb_h(F, G_i^P)$ for the number of embeddings $e \colon F \to G_i^P$ such that $\rho \circ e = h$.
		Consider the following claims:
		
		\begin{claim}\label{cl:knn1}
			If $h \colon F\to F$ is not surjective onto $P$,
			then $\emb_h(F, G_0^P) = \emb_h(F, G_1^P)$.
		\end{claim}
		\begin{claimproof}
			For $u \in V(P)$, write $G_u^P$ for the \textsmaller{CFI} graph where the vertex $u$ carries the odd weight.
			By e.g.\ \cite[Lemma~11]{neuen_homomorphism-distinguishing_2024}, for every $u, v \in P$, there exists an isomorphism $\phi \colon G_u^P \to G_v^P$ such that $\rho \circ \phi = \rho$.
			Hence, $\emb_h(F, G_1^P) = \emb_h(F, G_u^P)$ for every vertex $u \in P$.
			The graphs obtained from $G_0^P$ and $G_u^P$ by removing the gadgets corresponding to $u$ are isomorphic.
			Hence,
			$\emb_h(F, G_0^P) = \emb_h(F, G_u^P) = \emb_h(F, G_1^P)$ for every $u \in P$ such that $u \not\in h(V(F))$.
		\end{claimproof}
		
		\begin{claim}\label{cl:knn2}
			If $h \colon F \to F$ is surjective onto $P$
			and non-injective,
			then $\emb_h(F, G^P_0) = 0 = \emb_h(F, G^P_1)$.
		\end{claim}
		\begin{claimproof}
			By assumption, there exist distinct $a, b \in V(F)$ such that $h(a) = h(b) \eqqcolon v$.
			If $v \not\in P$,
			then $\emb_h(F, G_0^P) = 0 = \emb_h(F, G_1^P)$ since any map $e \colon F \to G_i^P$ such that $\rho \circ e = h$ is non-injective and thus not an embedding.
			Hence, $v \in P$.

			Let $e \colon F \to G_i^P$ be an embedding such that $\rho \circ e = h$.
			Since $h$ is surjective onto $P$,
			both $e(a)$ and $e(b)$ have a shared neighbour in every \textsmaller{CFI} gadget on the opposing side of the bipartition.
			This implies that $e(a) = e(b)$ by \cref{def:cfi},
			contradicting that $e$ is injective. 
		\end{claimproof}
		
		By \cref{cl:knn1,cl:knn2},
		\begin{align*}
			\emb(F, G_0^P) -
			\emb(F, G_1^P)
			& = \sum_{h \colon F \to F} \emb_h(F, G_0^P) - \sum_{h \colon F \to F} \emb_h(F, G_1^P) \\
			&= \sum_{\substack{h \colon F \to F \\ \text{surjective onto } P \\ \text{injective}}} \emb_h(F, G_0^P)
			- \sum_{\substack{h \colon F \to F \\ \text{surjective onto } P \\ \text{injective}}} \emb_h(F, G_1^P)
		\end{align*}
		Here, all $h$ are bipartition-respecting. For $i \in \{0,1\}$, 
		\[
			\sum_{\substack{h \colon F \to F \\ \text{surjective onto } P \\ \text{injective}}} \emb_h(F, G_i^P)
			= f(n)^{\underline{k}} \cdot f(n)^{\underline{k}} \cdot \emb(K_{k,k}, (K_{k,k})_i) \cdot
							 \emb(K_{f(n)-k,f(n)-k}, K_{f(n)-k, f(n)-k})
		\]
		where $n^{\underline{k}} \coloneqq n(n-1) \cdots (n-k+1)$ denotes the falling factorial and $(K_{k,k})_i$ the \textsmaller{CFI} graphs of $K_{k,k}$.
		By \cref{cor:sub-cfi},
		$\emb(F, G_0^P) \neq \emb(F, G_1^P)$.
		Thus, the counting width of $\sub_{K_{f(n), f(n)}, n,n}$ on $(n,n)$-vertex simple graphs is at least $k$.
		As $k$ was chosen arbitrarily, the desired implication follows.
	\end{proof}
	
	\Cref{thm:knn} shows that $\sub_{K_{k,k}, n,n}$ admits small symmetric circuits if, and only if, $k$ is small or $n-k$ is small.
	We generalise the backward direction of this argument for all pattern graphs by introducing a new graph parameter which captures non-uniformity and relaxes hereditary treewidth.
	To that end, we first recall the \emph{vertex cover number} $\vc(F)$ of a graph $F$ which is defined as the minimum size of a set $C \subseteq V(F)$ such that every edge in $F$ is incident to a vertex in $C$.
	Vertex cover number, hereditary treewidth, and matching number are functionally equivalent, cf.\ \cref{sec:hdtw} and \cite{curticapean_homomorphisms_2017}.
	\begin{lemma}\label{lem:vc-mn-hdtw}
		For every graph $F$,
		\( \frac12 \hdtw(F) \leq \mn(F) \leq \vc(F) \leq 2\mn(F) \leq (\hdtw(F) +2)^2. \)
	\end{lemma}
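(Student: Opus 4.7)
The two middle inequalities $\mn(F) \leq \vc(F) \leq 2\mn(F)$ are classical: distinct vertex-cover elements must hit each matching edge, while the endpoint set $V(M)$ of a maximum matching $M$ is itself a vertex cover, since any uncovered edge would augment $M$. For the leftmost inequality I would prove the stronger assertion $\hdtw(F) \leq \vc(F)$ and combine it with $\vc(F) \leq 2\mn(F)$. Given a vertex- and edge-surjective homomorphism $\phi \colon F \to F'$ witnessing $\hdtw$ and a vertex cover $C$ of $F$, the image $\phi(C)$ is a vertex cover of $F'$: every edge of $F'$ lifts, by edge-surjectivity, to an edge of $F$ meeting $C$, and its image meets $\phi(C)$. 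Combined with the elementary fact that $\tw(G) \leq \vc(G)$---witnessed by the tree decomposition with root bag $C$ and a leaf bag $C \cup \{v\}$ for each $v \notin C$---this yields $\tw(F') \leq \vc(F') \leq |\phi(C)| \leq \vc(F)$, and hence $\hdtw(F) \leq \vc(F)$.

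For the final inequality $2\mn(F) \leq (\hdtw(F)+2)^2$, the plan is to construct a vertex- and edge-surjective homomorphism from $F$ onto a graph of high treewidth, using the matching as the combinatorial resource. Set $k \coloneqq \mn(F)$ and let $t$ be the largest integer with $\binom{t+1}{2} \leq k$, so that $k < \binom{t+2}{2} = (t+1)(t+2)/2$ and therefore $2k < (t+2)^2$. Fix a maximum matching $M = \{u_1v_1, \ldots, u_kv_k\}$ and a bijection between some size-$\binom{t+1}{2}$ subset of $M$ and the edges of $K_{t+1}$; for each paired matching edge $u_iv_i \leftrightarrow \{a,b\}$, set $\phi(u_i) \coloneqq a$ and $\phi(v_i) \coloneqq b$. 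Extending $\phi$ consistently to the remaining vertices of $F$ produces a vertex- and edge-surjective homomorphism $F \to K_{t+1}$, whence $\hdtw(F) \geq \tw(K_{t+1}) = t$ and $2\mn(F) \leq (t+2)^2 \leq (\hdtw(F)+2)^2$.

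The main obstacle will be the consistent extension of $\phi$ to the non-matching vertices of $F$ without introducing loops at non-matching edges: two endpoints sharing a label would demand a loop in $K_{t+1}$. In the bipartite regime prevalent in the rest of the paper this issue is sidestepped by targeting a bipartite-complete graph $K_{a,b}$ with $ab \leq k$, whose bipartition itself forbids loops; in the general-graph setting one argues that if no consistent extension exists then $F$ must have chromatic number exceeding $t+1$, so the canonical surjection $F \to K_{\chi(F)}$ arising from any optimal colouring already gives $\hdtw(F) \geq \chi(F)-1 > t$. The additive slack in the constant $(\cdot+2)^2$ is what absorbs this case analysis.
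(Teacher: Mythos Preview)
Your first three inequalities are fine; the bound $\hdtw(F)\le\vc(F)$ via pushforward of a vertex cover is essentially what the paper does as well (it pushes forward $V(M)$ for a maximum matching $M$, which is a vertex cover of size $2\mn(F)$, and builds the same star-shaped tree decomposition of the quotient).

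The final inequality has a real gap: your chromatic-number fallback is false. Take $F=K_{3,3}$, so $\mn(F)=3$, $t=2$, and the target is $K_3$. No edge-surjective homomorphism $K_{3,3}\to K_3$ exists: any such map is a proper $3$-colouring, but since every $b_j$ is adjacent to all three $a_i$, the $a$-side can use at most two colours, leaving one edge of $K_3$ unhit. So no matching-assignment extends; yet $\chi(K_{3,3})=2\not>3=t+1$, so your fallback does not fire either. The lemma itself still holds here because $\hdtw(K_{3,3})=\tw(K_{3,3})=3\ge 2$, but neither branch of your dichotomy delivers this. Your bipartite route through $K_{a,b}$ is sound, but it only gives $\mn(F)<(\hdtw(F)+1)^2$, which fails to imply the stated bound once $\hdtw(F)\ge 3$. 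The paper avoids all of this by invoking \cite[Fact~3.4]{curticapean_homomorphisms_2017} as a black box: every graph with at most $\mn(F)$ edges and no isolated vertices has treewidth at most $\hdtw(F)$; taking that graph to be the clique $K_{t+1}$ finishes.
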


	We relax the graph parameters above as follows:
	
\begin{definition}
	\label{def:bcc}
	Let $n,m \in \mathbb{N}$.
	Let $F$ be a simple $(n,m)$-vertex bipartite graph with bipartition $A \uplus B$.
	Define the simple $(n,m)$-vertex bipartite graph $\overline{F}$ via $V(\overline{F}) \coloneqq V(F)$ and $E(\overline{F}) \coloneqq (A \times B) \setminus E(F)$.
	The \emph{logical vertex cover number of $F$} is
	$
	\cc_{n,m}(F) \coloneqq \min\{\vc(F), \vc(\overline{F})\}.
	$
	For $n' \geq n$ and $m' \geq m$, define $\cc_{n',m'}(F) \coloneqq \cc_{n',m'}(F')$ where $F'$ is obtained from $F$ by adding $n'-n$ isolated vertices on the left side and $m'-m$ isolated vertices on the right side.
	If $n' < n$ or $m' < m$, let $\cc_{n',m'}(F) \coloneqq 0$.
\end{definition}

	For an $(n,m)$-vertex graph $F$, we have $\cc_{n,m}(F) = k$ if, and only if, there exists a set $C \subseteq V(F)$ of size at most $k$ such that $F - C$ is an independent set or a biclique.
	Equipped with this definition,
	we first show that a bound on the logical cover number of a pattern yields polynomial size symmetric circuits for their subgraph polynomials.
	
	\begin{theorem}\label{thm:sub-small-circuit}
		For every family  $(F_{n,m})_{n,m \in \mathbb{N}}$ of simple bipartite graphs,
		if $\cc_{n,m}(F_{n,m})$ is bounded,
		then the $\sub_{F_{n,m}, n,m}$ admit  $\Sym_n \times \Sym_m$-symmetric circuits of size polynomial in $n+m$.
	\end{theorem}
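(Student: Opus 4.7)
My plan is to argue by case analysis on whether the bound on $\cc_{n,m}(F_{n,m})$ comes from $\vc(F_{n,m})$ or from $\vc(\overline{F_{n,m}'})$, constructing explicit polynomial-size $\Sym_n \times \Sym_m$-symmetric circuits in each case.

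\textbf{Case 1: $\vc(F_{n,m}) \leq k$.} Fix a minimum vertex cover $C$ of $F_{n,m}$; the complement $L \uplus R = V(F_{n,m}) \setminus C$ is then an independent set split across the bipartition. I group $L$ by neighborhood in $C \cap B$ into at most $2^{|C|} \leq 2^k$ \emph{types}, and analogously for $R$. The subgraph polynomial splits as
\[
	\sub_{F_{n,m}, n, m} = \frac{1}{|\Aut(F_{n,m})|} \sum_{h_C \colon C \hookrightarrow [n] \uplus [m]} P_C(h_C) \cdot S_L(h_C) \cdot S_R(h_C),
\]
where $P_C(h_C)$ is the product of edge variables inside $h_C(C)$ and $S_L(h_C)$, $S_R(h_C)$ weight the injective extensions of $h_C$ to $L$ and $R$. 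Using the type grouping, $S_L(h_C)$ equals $\prod_T \ell_T!$ times the coefficient of $\prod_T y_T^{\ell_T}$ in the generating function $\prod_{i \in [n] \setminus h_C(C \cap A)} (1 + \sum_T y_T \prod_{u \in T} x_{i, h_C(u)})$. I will build a polynomial-size symmetric subcircuit for this coefficient by first computing power sums of the $y$-coefficients of the individual factors and then applying the truncated exponential-of-logarithm identity. Summing the resulting template subcircuit over the $\leq (n+m)^{|C|}$ choices of $h_C$ yields the desired overall circuit.

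\textbf{Case 2: $\vc(\overline{F_{n,m}'}) \leq k$.} If $|A| \leq k$ or $|B| \leq k$, then $A$ or $B$ itself is a size-$\leq k$ vertex cover of $F_{n,m}$ and we are back in Case 1. Otherwise, since each padded vertex of $F_{n,m}'$ is adjacent in $\overline{F_{n,m}'}$ to every vertex on the opposite side, any size-$k$ cover of $\overline{F_{n,m}'}$ must contain all padded vertices, forcing $n - |A| \leq k$ and $m - |B| \leq k$. Let $C$ be the restriction of the cover to $V(F_{n,m})$; then $|C| \leq k$ and $V(F_{n,m}) \setminus C$ induces a biclique in $F_{n,m}$ with parts $K_A, K_B$. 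For any embedding the image $h(K_A)$ occupies all but at most $2k$ positions of $[n]$, so I enumerate this small complement jointly with $h_C \colon C \hookrightarrow [n] \uplus [m]$, giving only polynomially many choices. For each choice, the biclique contribution $\prod_{i \in h(K_A), j \in h(K_B)} x_{ij}$ is a single polynomial of circuit size $O(nm)$, and the remaining injection sums over cover-to-biclique edges are dealt with by the same power-sum construction from Case 1.

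\textbf{Main obstacle.} The principal technical challenge is implementing the generating-function coefficient extraction by a $\Sym$-symmetric polynomial-size subcircuit, since a naive factor-by-factor iterated product breaks symmetry in the position index. I resolve this by passing through power sums of the individual factors and the truncated exponential-of-logarithm identity; all subsequent operations then act on polynomials in the formal variables $y$ with $\mathbb{Q}$-coefficients and hence automatically preserve position symmetry. A secondary nuisance is the case split from \cref{def:bcc} in Case 2, which becomes transparent once one observes the adjacency pattern of padded vertices in $\overline{F_{n,m}'}$.
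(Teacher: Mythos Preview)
Your approach follows the same overall structure as the paper's proof: enumerate injections of the small cover $C$, classify the remaining vertices by their neighbourhood type in $C$ (at most $2^k$ types), and recover the sum over injective extensions as a specified coefficient of a generating polynomial in auxiliary type variables. Your decomposition $\sub_{F,n,m} = \tfrac{1}{|\Aut(F)|}\sum_{h_C} P_C(h_C)\, S_L(h_C)\, S_R(h_C)$ is correct, as is your Case~2 analysis (the observation that padded vertices of $F'$ are universal in $\overline{F'}$ forces $n-|A|,\, m-|B|\le k$ once $|A|,|B|>k$, and the biclique product then factors out cleanly).

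The only substantive difference from the paper is the coefficient-extraction step. The paper extracts the coefficient by multivariate interpolation: it evaluates the generating polynomial at $O((n+m)^{2^k})$ points obtained by substituting rationals for the type variables, and recovers the desired coefficient as a fixed $\mathbb{Q}$-linear combination of these evaluations via \cref{cor:interpolationTrickCircuits}. This is clean because each evaluation is just a hard-wired copy of the obvious circuit for the generating polynomial, and symmetry is automatic since the copies are permuted among themselves by $\StabP(h_C(C))$. Your exp-of-log route is also valid---the power sums $p_{\boldsymbol d}=\sum_i\prod_T m_{T,i}^{d_T}$ are manifestly $\StabP(h_C(C))$-symmetric, and the truncated exponential can be computed by iterated convolution in the $y$-degree---but you should spell out the size and symmetry analysis explicitly: there are $O(n^{2^k})$ multidegrees of total $y$-degree at most $|L|\le n$, and computing $(\log Q)^j$ for $j\le n$ iteratively costs $O(n^{1+2\cdot 2^k})$ coefficient operations, which is polynomial for fixed $k$. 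Interpolation sidesteps this bookkeeping entirely, at the price of a somewhat larger exponent.
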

	\begin{proof}
		Fix $F \coloneqq F_{n,m}$ for some $n,m \in \mathbb{N}$ and write $A \uplus B$ for the fixed bipartition of $F$.
		We assume first that $F$ has no isolated vertices and that removing the logical vertex cover results in an independent set rather than a biclique (that is, the logical vertex cover $F$ is indeed a vertex cover).
		Let $K = K_A \uplus K_B$ be a vertex cover of $F$, such that $|K| \leq k$ and $K_A \subseteq A$, $K_B \subseteq B$. We perform the following circuit construction for every $\iota\colon K \hookrightarrow [n] \uplus [m]$. 
		Given~$\iota$, let $p_{K,\iota} \coloneqq \prod_{uv \in E(F[K])} x_{\iota(u)\iota(v)}$. 
		This polynomial is represented by a constant-size circuit which is symmetric under $\StabP(\iota(K))$, the pointwise stabiliser of the set $\iota(K)$ in $\Sym_n \times \Sym_m$.
		Now for every type $S \subseteq K_A$ or $S \subseteq K_B$, introduce a fresh variable $t_S$. If $S \subseteq K_A$, we call $S$ an $A$-type, otherwise a $B$-type. Let
		\[
		q_{\iota} \coloneqq p_{K,\iota} \cdot \prod_{j \in [n] \uplus [m] \setminus \iota(K)} \left( \sum_{S \subseteq K \text{ an A-type} } t_S \cdot \prod_{i \in \iota(S)} x_{ij} + \sum_{S \subseteq K \text{ a B-type} }  t_S \cdot \prod_{i \in \iota(S)} x_{ji}\right)
		\]
		The obvious circuit representation of this has size $O((n+m) \cdot 2^k \cdot k)$ (the number of different types is at most $2^k$), and the circuit is also $\StabP(\iota(K))$-symmetric.
		Now we view $q_{\iota}$ as a polynomial in the variables $\{t_S \mid S \text{ a type} \}$ and with coefficients in $\bbQ[\Xx_{n,m}]$.
		For a type $S$, let $\#(S)$ denote the number of vertices in $V(F) \setminus K$ whose $E(F)$-neighbourhood in $K$ is precisely $S$.
		
		The coefficient of the monomial $\prod_{S \text{ a type}} t^{\#(S)}_S$ in $q_{\iota}$ is the part of $\sub_{F,n,m}$ that sums over all injections $V(F) \hookrightarrow [n] \uplus [m]$ which extend~$\iota$. More formally, this coefficient is
		\[
		\sub_{F,n,m;\iota} = \sum_{\substack{\iota' \colon V(F) \hookrightarrow [n] \uplus [m] \\ \iota'|_K = \iota}} \prod_{vw \in E(F)} x_{\iota'(v)\iota'(w)} .
		\]
		Given a symmetric circuit for $q_{\iota}$, we can compute $\sub_{F,n,m;\iota}$ by interpolation. 
		The polynomial $q_\iota$ has at most $2^k$ variables.
		In each of these variables, its maximal degree is $n+m-k$.
		Thus, by \cref{cor:interpolationTrickCircuits}, $\sub_{F,n,m;\iota}$ can be computed with a $\StabP(\iota(K))$-symmetric circuit of size $O((n+m)^{2^k+1} \cdot 2^k \cdot k)$.
		
		Let $C_{\iota}$ denote the circuit that realises this. The polynomial $\sub_{F,n,m}$ is then computed by the circuit $C$ that simply sums up all $C_{\iota}$, for all injections $\iota: K \hookrightarrow [n] \uplus [m]$. The size of $C$ is $O((n+m)^{2^k+k+1} \cdot 2^k \cdot k)$. It remains to argue that $C$ is symmetric. 
		For this, we firstly observe that for any two $\iota_1, \iota_2 \colon K \hookrightarrow [n] \uplus [m]$, the polynomials $q_{\iota_1}$ and $q_{\iota_2}$ are symmetric to each other via any $(\pi, \sigma) \in \Sym_n \times \Sym_m$ that maps $\iota_1$ to $\iota_2$. Since we use the same points $a_1,...,a_{n+m-k} \in \bbQ$ for the interpolation in each subcircuit $C_{\iota}$, and the coefficients in the linear combination given by \cref{lem:multivariate-polynomial-interpolation} are the same, $C_{\iota_1}$ and $C_{\iota_2}$ are also symmetric to each other. This finishes the case where removing the logical vertex cover results in an independent set. 
		
		In the other case, the graph after removing the logical vertex cover is a biclique.
		Then we perform the same construction as above, with the difference that now, $p_{K,\iota} = \prod_{uv \in E(F[K])} x_{\iota(u)\iota(v)} \cdot \prod_{i \in [n] \setminus \iota(K_A), j\in [m] \setminus \iota(K_B)} x_{ij}$.
		Finally, if isolated vertices are present in $F$, then we just multiply our circuit with the appropriate constant factor to obtain the subgraph count polynomial.
	\end{proof}	
	
	We conjecture that the converse of \cref{thm:sub-small-circuit} holds.
	That is, the parameter $\cc_{n,m}$ is the right parameter for measuring the symmetric circuit complexity of subgraph polynomials.
	\begin{conjecture}[restate=conjSub,label=conj:sub,name=]
		For every family $(F_{n,n})_{n,m \in \mathbb{N}}$ of simple bipartite graphs,
		the following are equivalent:
		\begin{enumerate}
			\item $\cc_{n,m}(F_{n,m})$ is bounded,\label{it:subconj1}
			\item the counting width of $\sub_{F_{n,m},n,m}$ on $(n,m)$-vertex edge-weighted graphs is bounded,\label{it:subconj2}
			\item the $\sub_{F_{n,m}, n, m}$ admit $\Sym_n \times \Sym_m$-symmetric circuits of orbit size polynomial in $n+m$,\label{it:subconj3}
			\item the $\sub_{F_{n,m}, n, m}$ admit $\Sym_n \times \Sym_m$-symmetric circuits of size polynomial in $n+m$.\label{it:subconj4}
		\end{enumerate}
	\end{conjecture}
	
	The implications \ref{it:subconj1} $\Rightarrow$ \ref{it:subconj4} $\Rightarrow$ \ref{it:subconj3} $\Rightarrow$ \ref{it:subconj2} follow from \cref{def:counting-width,thm:counting-width,thm:sub-small-circuit}.
	We prove \cref{conj:sub} for families of patterns of sublinear size.
	\begin{theorem}
		\label{thm:sublinear}
		Let $(F_{n,m})_{n,m \in \mathbb{N}}$ be a family of simple bipartite graphs such that, 
		for all $\nu, \mu \in \mathbb{N}$,
		there exist $n',m' \in \mathbb{N}$,
		such that $|F_{\nu n, \mu m}| \leq \min \{n,m\}$ for all $n > n'$ and $m > m'$.
		The following are equivalent:
		\begin{enumerate}
			\item $\cc_{n,m}(F_{n,m})$ is bounded,\label{it:sublinear1}
			\item $\vc(F_{n,m})$ is bounded,\label{it:sublinear1a}
			\item the counting width of $\sub_{F_{n,m},n,m}$ on $(n,m)$-vertex simple graphs is bounded,\label{it:sublinear2}
			\item the counting width of $\sub_{F_{n,m},n,m}$ on $(n,m)$-vertex edge-weighted graphs is bounded,\label{it:sublinear3}
			\item the $\sub_{F_{n,m}, n, m}$ admit $\Sym_n \times \Sym_m$-symmetric circuits of orbit size polynomial in $n+m$,\label{it:sublinear5}
			\item the $\sub_{F_{n,m}, n, m}$ admit $\Sym_n \times \Sym_m$-symmetric circuits of size polynomial in $n+m$.\label{it:sublinear4}
		\end{enumerate}
	\end{theorem}

	\begin{proof}
		The implications \ref{it:sublinear1} $\Rightarrow$ \ref{it:sublinear4} $\Rightarrow$ \ref{it:sublinear5} $\Rightarrow$ \ref{it:sublinear3} $\Rightarrow$ \ref{it:sublinear2} follow from \cref{thm:sub-small-circuit,thm:counting-width,def:counting-width}.
		By \cref{def:bcc}, the vertex cover number of $F$ is generally an upper bound for $\cc_{n,m}(F)$, so \ref{it:sublinear1a} implies \ref{it:sublinear1}. 
		The remaining implication \ref{it:sublinear2} $\Rightarrow$ \ref{it:sublinear1a} follows from \cref{thm:lincomb,thm:mn-hdtw,fact:mn-vc}.
	\end{proof}
	
	In the remainder of this section,
	we give evidence for the remaining implication.
	
	
	\Cref{conj:sub} predicts that, for every simple $(n,m)$-vertex bipartite graph $F$, $\sub_{F,n,m}$ has bounded counting width if, and only if, $\sub_{\overline{F},n,m}$ has bounded counting width.
	We prove this consequence (independently of the conjecture).

	\begin{theorem}\label{thm:complements}
		Let $n,m \in \mathbb{N}$.
		For every $(n,m)$-vertex simple bipartite graph $F$,
		the polynomials $\sub_{F, n,m}$ and $\sub_{\overline{F}, n,m}$ have the same counting width on $(n,m)$-vertex edge-weighted graphs.
	\end{theorem}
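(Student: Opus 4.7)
The plan is to reduce the statement to a complement-symmetry of an auxiliary \emph{edge-weighted induced-subgraph polynomial}, and then transfer the resulting equality back to $\sub$. For a bipartite graph $F$ with bipartition $A \uplus B$, I would introduce
\[
	\operatorname{ind}_{F, n, m}(G) \coloneqq \frac{1}{|\Aut(F)|} \sum_{h \colon A \uplus B \hookrightarrow [n] \uplus [m]}
	\prod_{ab \in E(F)} G(h(a)h(b))
	\prod_{ab \in E(\overline{F})} \bigl(1 - G(h(a)h(b))\bigr),
\]
which agrees with the usual induced-subgraph count on $\{0,1\}$-matrices. A direct inspection of the summand yields the identity $\operatorname{ind}_{F, n, m}(G) = \operatorname{ind}_{\overline{F}, n, m}(1 - G)$, since substituting $1 - G$ for $G$ swaps the two products, which is exactly the effect of swapping $F$ with $\overline{F}$.

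Next, I would observe that the coordinate-wise complementation $\tau \colon G \mapsto 1 - G$ preserves $\mathcal{C}^k$-equivalence on edge-weighted graphs: it is induced by the bijective renaming $R_q \leftrightarrow R_{1-q}$ of the binary relations in the signature, which extends canonically to a bijection of $\mathcal{C}^k$-formulas. Consequently, the counting width of any polynomial $p$ equals that of $p \circ \tau$, and combining this with the first identity shows that $\operatorname{ind}_{F, n, m}$ and $\operatorname{ind}_{\overline{F}, n, m}$ have the same counting width on $(n, m)$-vertex edge-weighted graphs.

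Finally, I would transfer this from $\operatorname{ind}$ to $\sub$. Möbius inversion over the edge-subset lattice of the complete bipartite graph on $A \uplus B$ gives the triangular identities
\[
	\sub_{F, n, m} = \sum_{F' \supseteq F} \frac{|\Aut(F')|}{|\Aut(F)|} \operatorname{ind}_{F', n, m},
	\qquad
	\operatorname{ind}_{F, n, m} = \sum_{F' \supseteq F} (-1)^{|E(F')|-|E(F)|} \frac{|\Aut(F')|}{|\Aut(F)|} \sub_{F', n, m},
\]
where $F'$ ranges over bipartite graphs with vertex set $V(F)$. Combining these with the bijection $F' \leftrightarrow \overline{F'}$ between supergraphs of $F$ and subgraphs of $\overline{F}$, and evaluating through the $\operatorname{ind}$-identity, I would pair up contributions on the two sides so as to conclude $w(\sub_{F, n, m}) = w(\sub_{\overline{F}, n, m})$.

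The main obstacle is this last step: the triangular inversion only yields the crude bound $w(\sub_F) \leq \max_{F' \supseteq F} w(\operatorname{ind}_{F'})$, which is too weak by itself. I expect the correct argument to proceed by a simultaneous induction on $|E(F)|$ for $F$ and $\overline{F}$, using the complementation bijection at each step to cancel the contributions of strictly larger supergraphs and to keep the counting-width bookkeeping on the two sides aligned.
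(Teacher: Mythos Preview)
Your first four steps are correct and establish that $\operatorname{ind}_{F,n,m}$ and $\operatorname{ind}_{\overline{F},n,m}$ have the same counting width. The gap is precisely where you locate it. The triangular relation $\sub_F = \operatorname{ind}_F + (\text{combination of }\operatorname{ind}_{F'}\text{ for }F'\supsetneq F)$ yields only the crude bound $w(\sub_F) \le \max\bigl\{w(\operatorname{ind}_F),\ \max_{F'\supsetneq F}w(\sub_{F'})\bigr\}$, and the analogous bound for $\overline{F}$ involves $\max_{F''\supsetneq\overline F}w(\sub_{F''})$. These two maxima run over different collections of graphs (supergraphs of $F$ versus supergraphs of $\overline{F}$), and there is no reason for them to agree; in any case an inequality of upper bounds does not give equality of the widths themselves. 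I do not see how the induction you sketch overcomes either issue, and indeed it cannot start: at the extreme $F$ edgeless one has $\sub_{F,n,m}\equiv 1$ while $\sub_{\overline F,n,m}=\prod_{i,j}x_{ij}$, and no inductive hypothesis is available.

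The paper avoids $\operatorname{ind}$ altogether by exploiting a feature of the hypothesis you never use: $F$ is an $(n,m)$-vertex graph, so every injection $V(F)\hookrightarrow[n]\uplus[m]$ appearing in $\sub_{F,n,m}$ is a \emph{bijection}, whence $h(E(\overline{F}))=([n]\times[m])\setminus h(E(F))$ for each such $h$. Replace the distinct edge weights of $G$ (and of $H$) by formal variables $y_1,\dots,y_r$, with colour classes $C_i\subseteq[n]\times[m]$. Then the coefficient of $\prod_i y_i^{n_i}$ in $\sub_{F,n,m}(G'')$ equals the coefficient of $\prod_i y_i^{|C_i|-n_i}$ in $\sub_{\overline{F},n,m}(G'')$: the two are the same polynomial up to a fixed involution of the monomial basis. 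Since $\mathcal{C}^k$-equivalence forces the colour-class sizes in $G$ and $H$ to agree, this involution is the same on both sides, and one obtains $\sub_{F}(G'')=\sub_{F}(H'')$ iff $\sub_{\overline{F}}(G'')=\sub_{\overline{F}}(H'')$ as formal polynomials; witnesses $G',H'$ then come from a separating evaluation of the $y_i$. The complement symmetry therefore holds directly for $\sub$, with no induction or supergraph expansion needed; your route through $\operatorname{ind}$ and its M\"obius inverse discards precisely the simplification coming from $|V(F)|=n+m$.
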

	\begin{proof}
		The proof is by constructing, given $(n,m)$-vertex bipartite edge-weighted graphs $G$ and $H$ such that $G \equiv_{\mathcal{C}^k} H$ and $\sub_{F, n,m}(G) \neq \sub_{F, n,m}(H)$,
		two $(n,m)$-vertex bipartite edge-weighted graphs $G'$ and $H'$ such that $G' \equiv_{\mathcal{C}^k} H'$ and $\sub_{\overline{F}, n,m}(G') \neq \sub_{\overline{F}, n,m}(H')$.
		
		To that end, we turn the edge-weights from $\mathbb{Q}$ in $G$ into formal variables.
		Let $y_1, \dots, y_r$ denote formal variables, one for every value appearing as an edge weight in $G$ and $H$.
		Let $G''$ and $H''$ denote the $(n,m)$-vertex bipartite graphs whose edges are annotated by these formal variables according to the corresponding values in $G$ and $H$.
		Write $C_1 \uplus \dots \uplus C_r = E(G)$ for the partition of the edges according to this correspondence.
		Write $V(F) = A \uplus B$ and $V(G) = [n] \uplus [m]$ for the bipartitions.
		Consider the following polynomials in $\mathbb{Q}[y_1, \dots, y_r]$:
		\begin{align*}
		\sub_{F, n,m}(G'')
		&= \frac{1}{|\Aut(F)|} \sum_{\substack{h \colon A \uplus B \to [n] \uplus [m] \\ \text{bijective}}} \prod_{i = 1}^r y_i^{|h(E(F)) \cap C_i|}, \\
		\sub_{\overline{F}, n,m}(G'') 
		&= \frac{1}{|\Aut(F)|} \sum_{\substack{h \colon A \uplus B \to [n] \uplus [m] \\ \text{bijective}}} \prod_{i = 1}^r y_i^{|h(E(\overline{F})) \cap C_i|}
		= \frac{1}{|\Aut(F)|}  \sum_{\substack{h \colon A \uplus B \to [n] \uplus [m] \\ \text{bijective}}} \prod_{i = 1}^r y_i^{|C_i| - |h(E(F)) \cap C_i|}.
		\end{align*}
		The final equality holds since $h(E(\overline{F})) \cap C_i = ([n] \times [m] \setminus h(E(F))) \cap C_i = C_i \setminus (h(E(F)) \cap C_i)$  as $F$ and $G$ have the same number of vertices.
		
		It follows that $\sub_{F, n,m}(G''), \sub_{\overline{F}, n,m}(G'') \in \mathbb{Q}[y_1, \dots, y_r]$ are the same polynomials up to a permutation of coefficients.
		More precisely,
		the coefficient of $y_1^{n_1} \cdots y_r^{n_r}$ in $\sub_{F, n,m}(G'')$ for $n_i \in \{0, \dots, |C_i|\}$, $i \in [r]$, 
		equals the number of bijective maps $h \colon A \uplus B \to [n] \uplus [m]$ such that $|h(E(F)) \cap C_i| = n_i$ for $i \in [r]$.
		This number equals the coefficient of $y_1^{|C_1| - n_1} \cdots y_r^{|C_r| - n_r}$ in $ \sub_{\overline{F}, n,m}(G'')$.
		Hence, as formal polynomials in $\mathbb{Q}[y_1, \dots, y_r]$, 
		\[
			 \sub_{F, n,m}(G'') =  \sub_{F, n,m}(H'') \iff  \sub_{\overline{F}, n,m}(G'') =  \sub_{\overline{F}, n,m}(H'').
		\]
		Finally, the edge-weighted graphs $G$ and $H$ such that $\sub_{F, n,m}(G) \neq \sub_{F, n,m}(H)$
		give assignments to the variables $y_1, \dots, y_r$ such that the polynomials $ \sub_{F, n,m}(G'')$ and $\sub_{F, n,m}(H'')$ evaluate to different values.
		Hence,  $ \sub_{\overline{F}, n,m}(G'') \neq  \sub_{\overline{F}, n,m}(H'')$
		and there exist assignments of rationals to $y_1, \dots, y_r$ such that these polynomials evaluate to different values.
		These assignments correspond to $(n,m)$-vertex edge-weighted graphs $G'$ and $H'$ such that $\sub_{\overline{F}, n,m}(G') \neq \sub_{\overline{F}, n,m}(H')$.
		The graphs $G'$ and $H'$ can be obtained from $G$ and $H$ by replacing the edge weights according to some function.
		Hence, $G'$ and $H'$ are $\mathcal{C}^k$-equivalent if $G$ and $H$ are.
	\end{proof}
	\Cref{thm:complements} allows us to separate counting width on simple graphs and counting width on edge-weighted graphs.
	This is in stark contrast to \cref{thm:dichotomy-chromatic}, which shows that, for single homomorphism polynomials, the simple and edge-weighted counting width coincide.
	\begin{example}\label{ex:matching}
		Let $n \in \mathbb{N}$
		and write $F \coloneqq \overline{nK_{1,1}}$ for the bipartite complement of the matching on $(n,n)$-vertices.
		The counting width of the multilinear polynomial $\sub_{F, n,n}$ on $(n,n)$-vertex simple graphs is $2$ while its counting width on $(n,n)$-vertex edge-weighted graphs is $\Theta(n)$.
	\end{example}
	\begin{proof}
		For the first claim, let $G$ be an arbitrary simple $(n,n)$-vertex bipartite graph with bipartition $V(G) = X \uplus Y$.
		If $G$ contains a subgraph isomorphic to $F$,
		then all its vertices have degree $n$ or $n-1$.
		This property can be defined in $\mathcal{C}^2$.
		Write $X' \subseteq X$ and $Y' \subseteq Y$ for the sets of vertices of degree $n$ on each side of the bipartition.
		By the handshaking lemma,  $|X'| = |Y'| \eqqcolon \ell$.
		It holds that $\sub(F,G) = \ell!$.
		Clearly, the number $\ell$ is definable in $\mathcal{C}^2$.
		Hence, the counting width of $\sub_{F,n,n}$ is $2$.

		The second claim follows from \cref{thm:complements} and \cite[Theorem~7.2]{dawar_symmetric_2025} which states that the counting width of the permanent, which equals $\sub_{\overline{F}, n,n}$, on simple graphs is $\Theta(n)$.
	\end{proof}

	\section{Symmetric complexity of the immanants}
	\label{sec:immanants}
	The permanent and determinant are the extreme intractable and tractable case of \emph{immanant} polynomials. Immanants are families of $\Sym_n$-symmetric polynomials $(p_n)_{n \in \bbN}$ that are typically defined via so-called \emph{irreducible characters} of the symmetric group. An irreducible character $f\colon \Sym_n \to \mathbb{C}$ is a class function, i.e.\ it only depends on the multiset of cycle lengths of the input permutation. Given such an $f$, the corresponding immanant is defined as
	\[
	\imm_{f} = \sum_{\pi \in \Sym_n} f(\pi) \prod_{i \in [n]} x_{i,\pi(i)}
	\]
	We view these as polynomials over the field $\mathbb{C}$. 
	If $f$ is constantly $1$, then $\imm_f$ is the permanent. The determinant is obtained by letting $f = \sgn$.
	
	Note that while the permanent is $\Sym_n \times
        \Sym_n$-symmetric, this is not true for every immanant
        (indeed, not even of the determinant).
	But for every $f$, $\imm_{f}$ is $\Sym_n$-symmetric because for any $\sigma \in \Sym_n$, the monomial $\prod_{i \in [n]} x_{\sigma(i),\sigma(\pi(i))}$ encodes a permutation from the same conjugacy class as $\pi$, so it has the same value under $f$. Therefore, we study the complexity of $\Sym_n$-symmetric algebraic circuits for $\imm_{f}$. 
	The $\Sym_n \times \Sym_m$-symmetric polynomials that we have mostly been working with naturally expressed properties of weighted undirected bipartite graphs; now, $\Sym_n$-symmetric polynomials express properties of weighted \emph{directed} (not necessarily bipartite) graphs (see also \cref{rem:squareSymmetricPolynomials}).
	
	In \cite{curticapean2021full}, Curticapean shows a complexity dichotomy for the immanants: In the tractable case, $\imm_{f}$ is in $\VP$ (i.e.\ admits polynomial size algebraic circuits) and is computable in polynomial time. In the intractable case, the family of polynomials is not in $\VP$, unless $\VFPT = \VW$, and also, it is not computable in polynomial time, unless $\FPT = \sharpW$. The complexity is controlled by a certain parameter of $f$. 
	
	We show that the same parameter of $f$ produces the analogous dichotomy with regards to the complexity of $\Sym_n$-symmetric algebraic circuits for the immanants. The benefit of considering symmetric circuits is that the lower bound is not conditional on any complexity-theoretic assumptions. 
	We now introduce the relevant parameter of $f$. As explained in \cite{curticapean2021full}, the irreducible characters $f$ correspond naturally to partitions of $[n]$. If $\lambda$ is a partition of $[n]$, we denote by $\chi^{\lambda}\colon \Sym_n \to \mathbb{C}$ its corresponding irreducible character, and we also write $\imm_{\lambda}$ for $\imm_{\chi^\lambda}$. Partitions of $[n]$ are denoted as tuples $(k_1,\dots ,k_s)$, where each entry $k_i$ denotes a part of size $k_i$.
	For example, the permanent is $\imm_{(n)}$, and the determinant is $\imm_{(1,\dots ,1)}$. For other partitions, the rule for computing their corresponding irreducible characters is more involved and not needed here (for details, see \cite{curticapean2021full}).
	 If $\lambda$ is a partition of $[n]$, let $b(\lambda) = n-s$, where $s$ denotes the number of parts. 
	
	Formally, \cite{curticapean2021full} considers a family $\Lambda$ of partitions. It is said that $\Lambda$ \emph{supports growth} $g \colon \bbN \to \bbN$ if for every $n \in \bbN$ there is a partition $\lambda^{(n)}$ in $\Lambda$ with $b(\lambda^{(n)}) \geq g(n)$ and size $\Theta(n)$. According to the dichotomy from \cite{curticapean2021full}, $(\imm_{\lambda})_{\lambda \in \Lambda}$ is tractable if there is a constant that bounds $b(\lambda)$ for all $\lambda \in \Lambda$. Otherwise, if $\Lambda$ supports growth $\omega(1)$, then the immanant family is (conditionally) intractable. If $\Lambda$ even supports growth $\omega(n^k)$, then $(\imm_{\lambda})_{\lambda \in \Lambda}$ is $\VNP$-complete.
	
Our analogous result for symmetric circuits is the following.
\immanantDichotomy*

This theorem subsumes and generalises the results from \cite{dawar_symmetric_2025}.

\subsection{The tractable case}

We show the first part of \cref{thm:immanantDichotomy}.
In \cite{hartmann1985complexity}, Hartmann gives an algorithm for $\imm_{\lambda}$ that runs in polynomial time if $b(\lambda)$ is bounded. This algorithm is essentially a formula that expresses the immanant in terms of the determinant. For the determinant (over fields of characteristic 0), we know that efficient symmetric circuits exist: 
\begin{lemma}[\cite{dawar_symmetric_2025}]
	\label{lem:determinantEfficient}
	The family $(\det_n)_{n \in \bbN}$ admits $\Sym_n$-symmetric algebraic circuits of polynomial size. 
\end{lemma}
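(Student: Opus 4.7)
The natural approach is to realise the Leverrier--Faddeev (equivalently, Newton identities) identity that expresses the determinant as a rational combination of traces of matrix powers. The key observation is that both ingredients---traces of powers and the recurrence itself---are manifestly $\Sym_n$-symmetric, so the resulting polynomial-size circuit will inherit the desired symmetry essentially for free.

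Concretely, I would first construct, for each $k \in [n]$, a symmetric subcircuit that computes every entry of $A^k$ where $A = (x_{ij})_{i,j \in [n]}$. Set $g_{ij}^{(1)} \coloneqq x_{ij}$ and, for $k \geq 2$, define a gate $g_{ij}^{(k)} \coloneqq \sum_{\ell \in [n]} g_{i\ell}^{(1)} g_{\ell j}^{(k-1)}$. Under the natural action of $\pi \in \Sym_n$ sending $x_{ij} \mapsto x_{\pi(i)\pi(j)}$, the map $g_{ij}^{(k)} \mapsto g_{\pi(i)\pi(j)}^{(k)}$ is an automorphism of this subcircuit, since each defining sum is stable under simultaneous reindexing. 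Summing the diagonal yields $p_k \coloneqq \tr(A^k) = \sum_{i \in [n]} g_{ii}^{(k)}$, which is invariant under $\Sym_n$. The collection of all these subcircuits, for $k = 1, \dots, n$, has size $O(n^4)$ and is $\Sym_n$-symmetric.

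Next, I would use Newton's identities to compute the coefficients $e_1, \dots, e_n$ of the characteristic polynomial of $A$ from the power sums $p_1, \dots, p_n$, via the recurrence
\[
e_k = \frac{1}{k} \sum_{i=1}^{k} (-1)^{i-1} e_{k-i} p_i, \qquad e_0 = 1.
\]
Each $e_k$ is obtained from previously computed $e_j$ ($j < k$) and the symmetric $p_i$ by multiplications, additions, and multiplication by the rational constants $1/k$ and $\pm 1$, all of which are operations allowed in an algebraic circuit over $\mathbb{Q}$. By induction on $k$, each gate computing $e_k$ is fixed by every $\pi \in \Sym_n$ (the $p_i$ are fixed and the recurrence involves only constants and previously fixed gates), so the resulting subcircuit is $\Sym_n$-symmetric. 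Finally, $\det(A) = (-1)^n e_n$, so multiplying $e_n$ by the constant $(-1)^n$ yields the output gate. The whole construction has size $O(n^4)$ and is rigidly $\Sym_n$-symmetric by construction.

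There is no serious obstacle here beyond book-keeping; the only point requiring care is verifying that the circuit built above is literally $\Sym_n$-symmetric rather than merely computing an invariant polynomial. For this one has to identify gates in different orbits carefully (for instance, the diagonal gates $g_{ii}^{(k)}$ lie in a single orbit disjoint from the off-diagonal orbit $\{g_{ij}^{(k)} : i \neq j\}$), and to note that each $e_k$ is computed by a unique gate in the trivial orbit. After applying \cref{lem:rigidifyCircuits} to obtain a rigid version, the size bound is preserved and the symmetric circuit realising $\det_n$ in polynomial size is established.
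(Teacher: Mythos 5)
Your construction is correct, and it is essentially the proof given in the cited source \cite{dawar_symmetric_2020}: the paper itself only cites that reference for this lemma, and the argument there is precisely the Faddeev--Le Verrier/Newton-identities construction you describe, whose division by $k$ is exactly why the statement is restricted to characteristic $0$. The only cosmetic slip is the sign bookkeeping at the end (with $e_k$ the elementary symmetric functions of the eigenvalues one has $\det(A)=e_n$; the factor $(-1)^n$ belongs to the convention where $e_n$ is the constant coefficient of $\det(tI-A)$), which does not affect the argument.
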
	

We now present the relevant details of Hartmann's formula and show that using the above result, we can express it with a symmetric circuit. 
Let $m \leq n$ and let $I = (i_1,\dots ,i_m)$ be a tuple of natural numbers. Define a class function $f_I\colon \Sym_n \to \mathbb{C}$ as follows. 
\[
f_I(\pi) \coloneqq \sgn(\pi)\cdot \prod_{\ell \in [m]} \alpha_\ell(\pi)^{i_\ell}, 
\]
where $\alpha_\ell(\pi)$ denotes the multiplicity of the cycle of length $\ell$ in the cycle decomposition of $\pi$.\\ 

Our first subgoal is a circuit for $\imm_{f_I}$. Following \cite{hartmann1985complexity}, we define certain matrices and use their determinants to compute $\imm_{f_I}$.
A cycle in the graph corresponding to the matrix $(x_{ij})_{1 \leq i,j \leq n}$ is a sequence of variables such that each $x_{ij}$ in this sequence is followed by a variable $x_{jk}$, and the right index of the last variable is equal to the left index of the first variable of the sequence. 
Given a fixed tuple $I = (i_1,\dots ,i_m)$ of cycle multiplicities, we consider tuples of cycles in the graph $(x_{ij})_{1 \leq i,j \leq n}$:
Let $(\sigma_1^{(1)}, \sigma_1^{(2)}, \dots , \sigma_1^{(i_1)}, \sigma_2^{(1)}, \dots , \sigma_2^{(i_2)}, \dots , \sigma_m^{(i_m)})$ denote a tuple of (directed) cycles in the graph $(x_{ij})_{1\leq i,j \leq n}$, where the subscript $\ell$ of each $\sigma$ denotes the length of the cycle, and the superscript indexes the cycles of length $\ell$ from $1$ up to $i_\ell$. 

For a given tuple of cycles, we define a matrix $A(\sigma_1^{(1)},\dots ,\sigma_{m}^{(i_m)})$ in variables $(x_{ij})_{1 \leq i,j \leq n}$ and $(t_\ell^{(k)})$, where $\ell$ and $k$ range over the corresponding sub- and superscripts of $\sigma$. For $i,j \in [n]$, let $X_{ij}$ denote the set of index pairs $(k,\ell)$ such that the cycle $\sigma_{\ell}^{(k)}$ contains the variable $x_{ij}$. The matrix $A(\sigma_1^{(1)},\dots ,\sigma_{m}^{(i_m)})$ is defined by letting
\[
a_{ij} \coloneqq x_{ij}\prod_{(k,\ell) \in X_{ij}} t^{(k)}_{\ell}. 
\]
We extend the action of $\Sym_n$ from the variables $x_{ij}$ to the variables $t_\ell^{(k)}$ by just letting $\pi(t_\ell^{(k)}) = t_\ell^{(k)}$.
It can be checked that every $\pi \in \Sym_n$ maps $A \coloneqq A(\sigma_1^{(1)},\dots ,\sigma_{m}^{(i_m)})$ to $A' \coloneqq A(\pi(\sigma_1^{(1)},\dots ,\sigma_{m}^{(i_m)}))$ such that $\pi(a_{ij}) = a'_{\pi(i),\pi(j)}$. 
The following is a $\Sym_n$-invariant polynomial:
\[
S \coloneqq \sum_{(\sigma_1^{(1)},\dots ,\sigma_{m}^{(i_m)}) \in \Cc} \det A(\sigma_1^{(1)},\dots ,\sigma_{m}^{(i_m)}),
\]
where $\Cc$ is the set of all tuples of cycles $(\sigma_1^{(1)},\dots ,\sigma_{m}^{(i_m)})$ (with the respective lengths as given by the subscripts) such that the union of their edges is a partial cycle cover of $(x_{ij})_{1 \leq i,j \leq n}$ (i.e.\ a disjoint union of cycles). 
Note that $\Cc$ is $\Sym_n$-invariant. 
By \cref{lem:determinantEfficient}, the polynomial $S$ is computable by a $\Sym_n$-symmetric circuit $C_S$ of size $|\Cc| \cdot \text{poly}(n)$: Each $\det A(\sigma_1^{(1)},\dots ,\sigma_{m}^{(i_m)})$ in the sum can be represented by a poly-size $\Sym_n$-symmetric circuit $C(\sigma_1^{(1)},\dots ,\sigma_{m}^{(i_m)})$ that is just like the circuit for $\det_n$ where the inputs are modified to be the products $a_{ij}$ instead of single variables $x_{ij}$. If $\pi \in \Sym_n$ does not fix the tuple of cycles $(\sigma_1^{(1)},\dots ,\sigma_{m}^{(i_m)})$, then the circuit automorphism of $C(\sigma_1^{(1)},\dots ,\sigma_{m}^{(i_m)})$ that $\pi$ extends to turns $C(\sigma_1^{(1)},\dots ,\sigma_{m}^{(i_m)})$ into $C(\pi(\sigma_1^{(1)},\dots ,\sigma_{m}^{(i_m)}))$. 
Thus, because $\Cc$ is $\Sym_n$-invariant, $C_S$ is $\Sym_n$-symmetric.\\
\begin{claim}\label{imm:claim1} If we view $S$ as a polynomial in the $t_\ell^{(k)}$-variables, then the coefficient of the monomial $\prod (t_\ell^{(k)})^\ell$ in $S$ (where the product ranges over the entire tuple of variables $(t_1^{(1)},\dots ,t_1^{(i_1)},\dots ,t_m^{(i_m)})$) is exactly $\imm_{f_I}$.
\end{claim}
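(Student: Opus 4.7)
The plan is to expand each determinant in $S$ via the Leibniz formula, extract the coefficient of the target monomial, swap the order of summation, and recognise the result as $\imm_{f_I}$.

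First, for a fixed tuple of cycles $(\sigma_1^{(1)},\dots,\sigma_m^{(i_m)}) \in \mathcal{C}$, I would expand
\[
\det A(\sigma_1^{(1)},\dots,\sigma_m^{(i_m)}) = \sum_{\pi \in \Sym_n} \sgn(\pi)\prod_{i \in [n]} a_{i,\pi(i)} = \sum_{\pi \in \Sym_n}\sgn(\pi)\prod_{i \in [n]} x_{i,\pi(i)}\prod_{i \in [n]}\prod_{(k,\ell)\in X_{i,\pi(i)}} t_\ell^{(k)}.
\]
The degree of $t_\ell^{(k)}$ in the $\pi$-term equals the number of edges of the cycle $\sigma_\ell^{(k)}$ that are edges of $\pi$. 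This is at most $\ell$, with equality if and only if every edge of $\sigma_\ell^{(k)}$ is an edge of $\pi$, that is, $\sigma_\ell^{(k)}$ is a cycle in the cycle decomposition of $\pi$. Consequently, the coefficient of $\prod_{(k,\ell)} (t_\ell^{(k)})^\ell$ in $\det A(\sigma_1^{(1)},\dots,\sigma_m^{(i_m)})$ equals $\sum \sgn(\pi)\prod_i x_{i,\pi(i)}$ where the sum is over those $\pi \in \Sym_n$ containing every $\sigma_\ell^{(k)}$ as one of its cycles.

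Next, I would sum over $(\sigma_1^{(1)},\dots,\sigma_m^{(i_m)}) \in \mathcal{C}$ and interchange the two sums, obtaining
\[
\Big[\prod_{(k,\ell)}(t_\ell^{(k)})^\ell\Big]\,S \;=\; \sum_{\pi \in \Sym_n} \sgn(\pi)\,N(\pi)\prod_{i\in[n]} x_{i,\pi(i)},
\]
where $N(\pi)$ is the number of tuples $(\sigma_1^{(1)},\dots,\sigma_m^{(i_m)}) \in \mathcal{C}$ such that each $\sigma_\ell^{(k)}$ is a length-$\ell$ cycle of $\pi$. The central combinatorial step is to identify $N(\pi) = \prod_\ell \alpha_\ell(\pi)^{i_\ell}$. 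Two distinct cycles of the same permutation $\pi$ are vertex-disjoint, and identical cycles are obviously compatible, so \emph{any} choice of $\sigma_\ell^{(k)}$ (for each of the $i_1+\cdots+i_m$ indices) among the $\alpha_\ell(\pi)$ cycles of $\pi$ of length $\ell$ automatically satisfies the partial-cycle-cover condition defining $\mathcal{C}$. The choices are therefore independent, yielding the product $\prod_\ell \alpha_\ell(\pi)^{i_\ell}$.

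Substituting this back, the coefficient becomes
\[
\sum_{\pi \in \Sym_n} \sgn(\pi)\prod_{\ell=1}^m \alpha_\ell(\pi)^{i_\ell}\prod_{i\in[n]} x_{i,\pi(i)} \;=\; \sum_{\pi \in \Sym_n} f_I(\pi)\prod_{i\in[n]} x_{i,\pi(i)} \;=\; \imm_{f_I},
\]
by the definition of $f_I$ as a class function. I do not anticipate any serious obstacle: all steps are routine once the combinatorial interpretation of $N(\pi)$ is fixed. The one subtlety to flag explicitly is that repetitions of the same cycle among the $\sigma_\ell^{(k)}$ must be allowed in $\mathcal{C}$; otherwise $N(\pi)$ would be a product of falling factorials $\alpha_\ell(\pi)^{\underline{i_\ell}}$ rather than the ordinary powers required to recover $f_I$.
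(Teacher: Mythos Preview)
Your proposal is correct and follows essentially the same approach as the paper: expand each determinant via Leibniz, observe that the target monomial appears in the $\pi$-term precisely when every $\sigma_\ell^{(k)}$ is a cycle of $\pi$, swap the order of summation, and count the contributing tuples as $\prod_\ell \alpha_\ell(\pi)^{i_\ell}$. Your presentation is slightly more explicit about the Leibniz expansion and the ``degree at most $\ell$'' argument, and you correctly flag the key subtlety that repetitions must be allowed in $\mathcal{C}$---which the paper also notes.
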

\begin{claimproof} Let $M$ be a monomial in $\det A(\sigma_1^{(1)},\dots ,\sigma_{m}^{(i_m)})$. Then its submonomial consisting of $x_{ij}$-variables is a monomial in $\det_n$, so it describes a permutation $\pi \in \Sym_n$. Its submonomial in $t_\ell^{(k)}$-variables is equal to $\prod (t_\ell^{(k)})^\ell$ if and only if every cycle in $(\sigma_1^{(1)},\dots ,\sigma_{m}^{(i_m)})$ is a cycle in the cycle decomposition of $\pi$ (note that $(\sigma_1^{(1)},\dots ,\sigma_{m}^{(i_m)})$ may contain the same cycle multiple times, and then this is also true). 
Thus, the coefficient of $\prod (t_\ell^{(k)})^\ell$ in $S$ is of the form
\[
\sum_{\pi \in P}  \sgn(\pi) \cdot \beta_{\pi} \cdot \prod_{i \in [n]} x_{i\pi(i)},  
\]
where $P \subseteq \Sym_n$ is the set of all permutations whose cycle decomposition contains at least one cycle of length $\ell$, for each $\ell \in [m]$. The coefficient $\beta_\pi$ is the number of tuples 
$(\sigma_1^{(1)},\dots ,\sigma_{m}^{(i_m)}) \in \Cc$ such that each cycle $\sigma_\ell^{(k)}$ appears in the cycle decomposition of $\pi$. Since duplicate cycles can appear in tuples in $\Cc$, we have that $\sgn(\pi) \cdot \beta_\pi = \sgn(\pi) \cdot \prod_{\ell \in [m]} \alpha_\ell(\pi)^{i_\ell} = f_I(\pi)$ because for every cycle length $\ell$, we have $i_\ell$ many slots in each tuple in $\Cc$, each of which can be filled with one of the $\alpha_\ell(\pi)$ many cycles of $\pi$. This proves the claim.
\end{claimproof}

This means that it remains to compute the coefficient of the monomial $\prod (t_\ell^{(k)})^\ell$ in $S$, which can be done by the interpolation method.\\

\begin{claim}\label{imm:claim2} Let $d$ be the maximum degree of a $t_\ell^{(k)}$-variable in $S$, and let $t$ denote the number of $t_\ell^{(k)}$-variables. Then $d^t \leq m^{\sum_{\ell \in [m]} i_\ell}$.
\end{claim}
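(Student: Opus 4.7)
The plan is to bound $d$ and $t$ separately and combine them. The number $t$ of $t$-variables is immediate: the indices $(\ell, k)$ range over $\ell \in [m]$ and $k \in [i_\ell]$, so
\[
t = \sum_{\ell \in [m]} i_\ell
\]
by the definition of the variable set.

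Next I would bound $d$. Fix any particular tuple $(\sigma_1^{(1)}, \dots, \sigma_m^{(i_m)}) \in \Cc$ and consider the matrix $A = A(\sigma_1^{(1)}, \dots, \sigma_m^{(i_m)})$. By construction, the variable $t_\ell^{(k)}$ appears as a factor in the entry $a_{ij}$ if, and only if, $(i,j)$ is an edge of the cycle $\sigma_\ell^{(k)}$. Since $\sigma_\ell^{(k)}$ is a directed cycle of length $\ell$, there are exactly $\ell$ such edges, and crucially their source vertices $i$ are pairwise distinct (each vertex of the cycle is the source of exactly one of its edges).

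Now every monomial of $\det A = \sum_{\pi \in \Sym_n} \sgn(\pi) \prod_{i \in [n]} a_{i\pi(i)}$ selects one entry from each row. Because the $\ell$ entries of $A$ containing $t_\ell^{(k)}$ lie in $\ell$ distinct rows, the multiplicity of $t_\ell^{(k)}$ in any such monomial is at most $\ell$. Hence the degree of $t_\ell^{(k)}$ in $\det A$ is at most $\ell \leq m$, and this bound is independent of the chosen tuple in $\Cc$. Summing over $\Cc$ preserves the per-variable degree bound, so the degree of $t_\ell^{(k)}$ in $S$ is at most $m$, whence $d \leq m$.

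Combining the two bounds yields $d^t \leq m^t = m^{\sum_{\ell \in [m]} i_\ell}$, as claimed. There is no real obstacle here: the only point that needs care is to observe that the $\ell$ edges of the cycle $\sigma_\ell^{(k)}$ are distributed across $\ell$ distinct rows of $A$, which is what forces the per-row selection in the determinant expansion to pick $t_\ell^{(k)}$ at most $\ell$ times.
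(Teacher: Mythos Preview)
Your proof is correct and follows essentially the same approach as the paper: count $t=\sum_{\ell\in[m]} i_\ell$ directly, then bound $d\le m$ by observing that $t_\ell^{(k)}$ can occur in at most $\ell$ entries of $A$ picked up by a determinant monomial. Your explicit ``distinct rows'' justification is a bit more detailed than the paper's, but the argument is the same.
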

\begin{claimproof}
We know $t = \sum_{\ell \in [m]} i_\ell$. It remains to show that $d \leq m$.
Let $M$ be a monomial in $S$. Then it is a monomial that appears in some $\det A(\sigma_1^{(1)},\dots ,\sigma_{m}^{(i_m)})$.
This means that the exponent of a variable $t_\ell^{(k)}$ in $M$ can be at most $\ell \leq m$ because $t_\ell^{(k)}$ corresponds to the cycle $\sigma_\ell^{(k)}$ of length $\ell$ and can only appear in a monomial in $\det A(\sigma_1^{(1)},\dots ,\sigma_{m}^{(i_m)})$ together with a variable on the cycle $\sigma_\ell^{(k)}$.
\end{claimproof}
\begin{claim}\label{imm:claim3}
\(
|\Cc| \leq \prod_{\ell \in [m]} \left(\frac{n!}{(n-\ell)!} \right)^{i_\ell}.
\)
\end{claim}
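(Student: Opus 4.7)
The plan is to prove the bound by ignoring the disjointness constraint imposed by the definition of $\Cc$ and simply bounding the number of ways to fill each slot of the tuple independently.

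First I would observe that for every $\ell \in [m]$, the number of directed cycles of length $\ell$ in the complete directed graph on $[n]$ is at most $\frac{n!}{(n-\ell)!}$. Indeed, any such cycle is determined (with some overcounting by cyclic rotations, by a factor of exactly $\ell$) by an ordered sequence of $\ell$ distinct vertices from $[n]$, and the number of such sequences equals the falling factorial $n(n-1)\cdots (n-\ell+1) = \frac{n!}{(n-\ell)!}$.

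Next, I would note that any tuple in $\Cc$ consists of $\sum_{\ell \in [m]} i_\ell$ cycles, where for each pair $(\ell,k)$ with $\ell \in [m]$ and $k \in [i_\ell]$, the entry $\sigma_\ell^{(k)}$ is a directed cycle of length $\ell$ in $[n]$. Hence $\Cc$ is a subset of the Cartesian product over these slots of the sets of all directed cycles of the prescribed lengths (with the edge-disjointness requirement dropped), giving
\[
|\Cc| \,\leq\, \prod_{\ell \in [m]} \prod_{k \in [i_\ell]} \left(\frac{n!}{(n-\ell)!}\right) \,=\, \prod_{\ell \in [m]} \left(\frac{n!}{(n-\ell)!}\right)^{i_\ell},
\]
which is the desired bound. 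There is no real obstacle here: the bound is deliberately loose (one could sharpen it by a factor of $\prod_\ell \ell^{i_\ell}$ from the cyclic-rotation overcount, and further by accounting for disjointness), but the stated estimate suffices for the subsequent interpolation argument that combines Claims \ref{imm:claim2} and \ref{imm:claim3} to control the total circuit size.
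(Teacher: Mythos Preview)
Your proposal is correct and takes essentially the same approach as the paper's proof: both bound $|\Cc|$ by forgetting the partial-cycle-cover constraint and counting the choices for each slot independently, using that a length-$\ell$ directed cycle on $[n]$ is determined by a sequence of $\ell$ distinct vertices, of which there are $\frac{n!}{(n-\ell)!}$. The paper's one-line proof simply asserts this count; your version spells out the Cartesian-product structure and notes the (irrelevant) slack from cyclic rotations.
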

\begin{claimproof} The quantity $\frac{n!}{(n-\ell)!}$ is just the number of different ordered length-$\ell$ cycles in the $n$-vertex graph $(x_{ij})_{1 \leq i,j \leq n}$.
\end{claimproof}

Using \cref{imm:claim1,cor:interpolationTrickCircuits}, we can obtain a symmetric circuit $C_I$ for $\imm_{f_I}$ from the circuit $C_S$. With \cref{imm:claim2,imm:claim3} it follows that 
\[
\norm{C_I} \leq m^{\sum_{\ell \in [m]} i_\ell} \cdot \norm{C_S} \leq  m^{\sum_{\ell \in [m]} i_\ell} \cdot \prod_{\ell \in [m]} \left(\frac{n!}{(n-\ell)!} \right)^{i_\ell} \cdot \operatorname{poly}(n).
\]

This finishes the first subgoal of the proof. Now we use $C_I$ to obtain a circuit for $\imm_{\lambda}$.
The proof of Theorem 1 in \cite{hartmann1985complexity} continues as follows. Let $r$ denote the size of the largest part of $\lambda$, and let $m \coloneqq n-s$, where $s$ is the number of parts of $\lambda$.
Let $\Ii$ be the set of all tuples $(k_1,\dots ,k_m)$ such that $m-r+1 \leq \sum_{\ell \in [m]} \ell k_\ell \leq m$.

Let $\eta$ be the function that takes as input a tuple $(k_1,\dots ,k_m) \in \Ii$ and is defined as
\begin{align*}
\eta(k_1,\dots ,k_m) \coloneqq \Big\{ (c,i_1,\dots ,i_m) \mid &c \in \mathbb{C}, i_1,\dots ,i_m \in \bbN,\\
 &\text{ the monomial } c\cdot \prod_{\ell \in [m]} \alpha_\ell^{i_\ell}  \text{ appears in }  \binom{\alpha_1}{k_1} \dots   \binom{\alpha_m}{k_m} \Big\}.
\end{align*}
In the above definition,  $\binom{\alpha_1}{k_1} \dots  \binom{\alpha_m}{k_m}$ is viewed as a polynomial in formal variables $\alpha_1,\dots ,\alpha_m$.
The following identity is from \cite{hartmann1985complexity}:
\begin{align}
\imm_{\lambda} = \sum_{(k_1,\dots ,k_m) \in \Ii} \chi(k_1,\dots ,k_m) \cdot \Big( \sum_{(c,i_1,\dots ,i_m) \in \eta(k_1,\dots ,k_m)}  c \cdot \imm_{f_{(i_1,\dots ,i_m)}} \Big),\label{eq:star1}
\end{align}
where $\chi(k_1,\dots ,k_m)$ is some $\mathbb{C}$-valued function that we do not need to elaborate on further. In a circuit, we can simply hardwire its respective value.
We have 
\[
| \Ii | \leq \sum_{m-r+1\leq i \leq m} p(i),
\]
where $p(i)$ denotes the number of partitions of $[i]$.
As argued in \cite{hartmann1985complexity}, the polynomial $\binom{\alpha_1}{k_1}\cdot \dots  \cdot \binom{\alpha_m}{k_m}$ has at most $\prod_{\ell \in [m]} (k_\ell+1) \leq \binom{2m}{m}$ many monomials and the degree of each $\alpha_\ell$ is at most $k_\ell$.
Therefore, $|\eta(k_1,\dots ,k_m)| \leq \binom{2m}{m}$, and for every tuple $(c,i_1,\dots ,i_m) \in \eta(k_1,\dots ,k_m)$, it holds $i_\ell \leq k_\ell$ for every $\ell \in [m]$. Using this, we can upper-bound one of the products in the size bound of $C_I$ as follows:
\[
m^{\sum_{\ell \in [m]} i_\ell} \leq m^{\sum_{\ell \in [m]} k_\ell} \leq m^m.
\]	

In total, we can use \cref{eq:star1} and plug in the symmetric circuits $C_I$ for the respective $\imm_{f_{(i_1,\dots ,i_m)}}$. Thus we obtain a symmetric circuit $C_\lambda$ for $\imm_\lambda$ with
\[
\norm{C_\lambda} \in O\Big( \Big(\sum_{m-r+1\leq i \leq m} p(i)\Big) \cdot \binom{2m}{m} \cdot  m^m \cdot \prod_{\ell \in [m]} \Big(\frac{n!}{(n-\ell)!} \Big)^{i_\ell} \cdot \text{poly}(n) \Big).
\]
From \cite{hartmann1985complexity}, we get the following estimation:
\begin{align*}
 \Big(\sum_{m-r+1\leq i \leq m} p(i)\Big) \cdot \binom{2m}{m} \cdot m^m \cdot \prod_{\ell \in [m]} \Big(\frac{n!}{(n-\ell)!} \Big)^{i_\ell} &\leq (m+1)e^{3m^{1/2}}e^{2m} \cdot m^m \cdot n^m\\
  &\leq n^{6(n-s)+4} \cdot (n-s)^{(n-s)}.
\end{align*}
The last step holds for $n > 3$ and uses the fact that $m+1 \leq n$ and $m=n-s$. Since we are in the first case of \cref{thm:immanantDichotomy}, we assume that $m = n-s$ is bounded by a constant $k$. In particular, the factor $(n-s)^{(n-s)}$ and the exponent of $n$ become constant.
So $\norm{C_\lambda}$ is in $O( n^{6k+4})$, multiplied with the symmetric complexity of the determinant, which is also polynomial in $n$.

\subsection{The hard case}
Let us first focus on the second part of \cref{thm:immanantDichotomy}.
Fix a family $\Lambda$ of partitions such that $b(\Lambda)$ is unbounded, i.e.\ it supports growth $g \in \omega(1)$.
We exploit the detailed analysis in \cite{curticapean2021complexitydichotomyimmanantfamilies} to show that the counting width of $(\imm_{\lambda})_{\lambda \in \Lambda}$ on edge-weighted directed graphs is unbounded. The hardness proof in \cite{curticapean2021full} reduces the problem of counting $k$-matchings, for growing $k$, to the immanant. 
Consider the following family of graph parameters, for a given function $h\colon \bbN \to \bbN$. Let $n \in \bbN$ and let $G$ be an undirected graph with $n$ vertices. Then $\Match_{n,h}(G)$ denotes the number of matchings in $G$ consisting of exactly $h(n)$ many edges.
\begin{lemma}
	\label{lem:kMatchingUnboundedCountingWidth}
	For any super-constant function $h(n)$ with $h(n) \leq n/2$, the graph parameter $\Match_{n,h}$ has unbounded counting width on undirected $(n,n)$-vertex bipartite graphs with fixed bipartition.
	Moreover, the number of \emph{perfect} matchings is a graph parameter with counting width $\Omega(n)$ in undirected $(n,n)$-vertex bipartite graphs with fixed bipartition.
\end{lemma}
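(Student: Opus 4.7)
The plan is to derive both parts from a single source of hardness, namely the known $\Omega(n)$ lower bound on the counting width of the permanent on $(n,n)$-vertex simple bipartite graphs with fixed bipartition, which is \cite[Theorem~15]{dawar_symmetric_2020}. Since the permanent of the biadjacency matrix of such a graph equals the number of perfect matchings, the second statement follows immediately by unpacking definitions: for every $k$ there are $(m,m)$-vertex bipartite graphs $G_0, H_0$ with $m = \Theta(k)$ that are $\mathcal{C}^k$-equivalent and differ in their perfect matching count.

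For the first statement, fix a super-constant $h$ with $h(n) \leq n/2$ and an arbitrary constant $k$. Since $h(n) \to \infty$, there exists $n$ large enough so that $h(n) \geq m_k$, where $m_k$ is large enough to host, via the Dawar--Wilsenach construction, a pair $G_0, H_0$ of $(h(n), h(n))$-vertex simple bipartite graphs (with the canonical bipartition) which are $\mathcal{C}^k$-equivalent and satisfy $\perm(G_0) \neq \perm(H_0)$. Form $G, H$ by padding each of $G_0, H_0$ with $n - h(n)$ isolated vertices on each side. Since $n - h(n) \geq h(n) \geq 0$, these are legal $(n,n)$-vertex bipartite graphs.

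Two observations conclude the proof. First, any matching of size $h(n)$ in $G$ must use $h(n)$ vertices on each side; as the padded vertices are isolated, these vertices are forced to be exactly the $h(n)$ vertices of the $G_0$-part, so an $h(n)$-matching in $G$ is precisely a perfect matching of $G_0$. Consequently $\Match_{n,h}(G) = \perm(G_0) \neq \perm(H_0) = \Match_{n,h}(H)$. Second, Duplicator has a winning strategy in the bijective $k$-pebble game on $G$ and $H$: she uses her winning strategy on $G_0, H_0$ for pebbles placed on non-isolated vertices, and responds on isolated vertices by any fixed bijection between the two sets of padded isolated vertices on each side. Since isolated vertices have the same $\mathcal{C}^k$-type, no partial-isomorphism violation can be caused this way. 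Hence $G \equiv_{\mathcal{C}^k} H$, and so for every $k$ the counting width of $\Match_{n,h}$ exceeds $k$ for some $n$.

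The only subtlety is checking that the $\mathcal{C}^k$-equivalence of the Dawar--Wilsenach pair, which is usually phrased for graphs without a distinguished bipartition, transfers to bipartite graphs with fixed bipartition; this is routine because the bipartition classes of a connected bipartite graph are definable in $\mathcal{C}^2$, and the construction can equally be set up with the bipartition marked by the unary relations $A, B$ as used throughout this paper. No other step poses a genuine obstacle.
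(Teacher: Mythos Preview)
Your proposal is correct and follows essentially the same approach as the paper's proof: both invoke the Dawar--Wilsenach lower bound on the counting width of the permanent to obtain $\mathcal{C}^k$-equivalent bipartite graphs with differing perfect-matching counts, pad them with isolated vertices to reach $(n,n)$ vertices, and observe that $h(n)$-matchings in the padded graph correspond exactly to perfect matchings in the original (since isolated vertices cannot participate). The paper cites Theorem~19 of \cite{dawar_symmetric_2020} while you cite Theorem~15, but these refer to the same underlying result; the only minor point you and the paper both leave implicit is that the Dawar--Wilsenach construction can be arranged to yield graphs with exactly $h(n)$ vertices on each side (e.g.\ by padding with isolated matching edges, which multiplies both perfect-matching counts by the same nonzero factor).
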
	
\begin{proof}
	Let $k \in \bbN$ be arbitrary.
	For a large enough $n \in \bbN$, we can do the following:
	Let $H'_0, H'_1$ be large enough \textsmaller{CFI} graphs from \cite[Theorem~7.2]{dawar_symmetric_2025}, possibly each augmented with a disjoint copy of a size-$t$ matching, for an appropriate $t \in \bbN$, so that the size of a perfect matching in each of them is precisely $h(n)$; the additional $t$-matching may be needed because $h(n) \in \bbN$ is not necessarily a number that can be the size of a perfect matching in any \textsmaller{CFI} graph.
	The graphs $H'_0$ and $H'_1$ are $\mathcal{C}^k$-equivalent because the \textsmaller{CFI} graphs are by \cite[Theorem~7.2]{dawar_symmetric_2025}, and taking the disjoint union with a $t$-matching in each graph does not change this. The graphs are bipartite (and we can assume the bipartition to be marked with unary relation symbols without breaking $\mathcal{C}^k$-equivalence, cf.\ \cref{thm:neuen-bipartite}).
	Let $H_0, H_1$ be defined by padding $H_0', H_1'$, respectively, with isolated vertices so that they both have exactly $n$ vertices. 
	Then still, $H_0 \equiv_{\Cc^k} H_1$. Moreover, the number of size-$h(n)$ matchings in $H_0$ and $H_1$, respectively, is equal to the number of perfect matchings in $H_0'$ and $H_1'$, respectively, and these numbers are different by \cite{dawar_symmetric_2025} (having added an additional $t$-matching to the graphs does not change the number of perfect matchings).
	Therefore, $\Match_{n,h}(H_0) \neq \Match_{n,h}(H_1)$. Since $k$ was arbitrary, this shows that the counting width of $\Match_{n,h}$ is unbounded.
	For the second part of the lemma, we do not even have to add isolated vertices to $H'_0$ and $H'_1$. They are $n$-vertex bipartite graphs with different numbers of perfect matchings. It also follows from \cite{dawar_symmetric_2025} that $H'_0 \equiv_{\mathcal{C}^{o(n)}} H'_1$.
\end{proof}	
Now one has to go through the proof of \cite[Theorem 2]{curticapean2021complexitydichotomyimmanantfamilies} in some detail: Let $h(n) = \sqrt{g(n)}/24$.
Fix an arbitrary $k \in \bbN$.
By \cref{lem:kMatchingUnboundedCountingWidth}, there exists an $n \in \bbN$ and $\mathcal{C}^k$-equivalent undirected bipartite $(n,n)$-vertex graphs $H_0, H_1$ such that 
$\Match_{n,h}(H_0) \neq \Match_{n,h}(H_1)$. Due to the growth condition on $\Lambda$, there is a $\lambda \in \Lambda$ with $b(\lambda) \geq 24h(n) = \sqrt{g(n)}$ and such that $\lambda$ satisfies certain further properties that we do not need to introduce here. The properties of $\lambda$ fall into two cases, and in each of them, it is shown in \cite{curticapean2021complexitydichotomyimmanantfamilies} that $\Match_{n,h}$ is reducible to $\imm_{\lambda}$. 
In the following, the term \emph{digon} refers to a directed $2$-cycle.

\paragraph*{First case}
In the first case, the second part of Lemma 22 from \cite{curticapean2021full} is used. It states that for any given undirected bipartite $(n,n)$-vertex graph $H$, there exists a directed edge-weighted graph $G(H)$ with edge weights $\{0,1,x\}$ ($x$ being a formal variable), such that for the $\lambda$ at hand we have:
\begin{equation}
\Match_{n,h}(H) = c(\lambda, h(n)) \cdot [x^{2h(n)}] \imm_{\lambda}(G(H)) \label{eq:star2}
\end{equation}
Here, $[x^{2h(n)}] \imm_{\lambda}(G(H))$ denotes the coefficient of the monomial $x^{2h(n)}$ in $\imm_{\lambda}(G(H))$, and $c(\lambda, h(n)) \in \bbQ$ is a constant depending only on $\lambda$ and $h(n)$ but not on the graph $H$ (see \cite[Lemma 26]{curticapean2021complexitydichotomyimmanantfamilies}).

The precise construction of $G(H)$ can be found in  \cite[Lemma 20]{curticapean2021complexitydichotomyimmanantfamilies}:
Let $L \uplus R = V(H)$ be the bipartition of $H$. To obtain $G(H)$, all edges in $H$ are directed from $L$ to $R$, and all edges $ R \times L$ are added to $G$. A set $T$ of fresh vertices is added (whose size depends only on $\lambda$), together with all edges $R \times T$ and $T \times L$. For every $v \in V(H)$, a so-called switch vertex $s_v$ is added, with a self-loop of weight $x$, and a digon between $v$ and $s_v$. Finally, a number (that depends only on $\lambda$) of padding digons and padding vertices with self-loops is added. 
For $q \in \bbQ$, denote by $G_q$ the graph constructed as just described, but with edge weight $q$ instead of $x$.
\begin{lemma}
	\label{lem:raduReductionPreservesCkEquivalence}
	For every $q \in \bbQ$ and bipartite graphs $H_0$, $H_1$ with fixed bipartitions,
	if $H_0 \equiv_{\mathcal{C}^k} H_1$, then $G_q(H_0) \equiv_{\mathcal{C}^{k}} G_q(H_1)$.
\end{lemma}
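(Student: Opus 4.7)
The plan is to lift Duplicator's winning strategy in the bijective $k$-pebble game on $(H_0, H_1)$ (with bipartitions marked, as in \cref{thm:neuen-bipartite}) to a winning strategy on $(G_q(H_0), G_q(H_1))$, exploiting the fact that every part of the construction $G_q(\cdot)$ beyond the core $V(H)$ either depends only on the bipartition of $H$ or is canonically paired with a single vertex of $V(H)$.

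Partition $V(G_q(H))$ into four parts: the \emph{core} $V(H) = L \uplus R$; the \emph{switch set} $S = \{s_v \mid v \in V(H)\}$; the fresh set $T$; and the padding set $P$. Since $|T|$ and $|P|$ depend only on $\lambda$, fix arbitrary bijections $\tau\colon T_0 \to T_1$ and $\rho\colon P_0 \to P_1$ preserving all padding digons and self-loops within each part.

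The invariant I maintain at any position $(\vec{a}, \vec{b})$ of the game on $(G_q(H_0), G_q(H_1))$ is: each paired pebble lies in corresponding parts of the partition; pebbles on $T$ and $P$ are paired via $\tau$ and $\rho$; and the \emph{projected position} on $(H_0, H_1)$, obtained by replacing each switch pebble $s_v$ with $v$ and dropping the $T$- and $P$-pebbles, is a winning position for Duplicator in the bijective $k$-pebble game on $(H_0, H_1)$ with marked bipartitions. When Spoiler moves pebble pair $p$, Duplicator consults her $H$-strategy to obtain a bijection $f_H \colon V(H_0) \to V(H_1)$ which handles the projected move, and then plays the extension $f\colon V(G_q(H_0)) \to V(G_q(H_1))$ defined by $f|_{V(H_0)} = f_H$, $f(s_v) = s_{f_H(v)}$, $f|_{T_0} = \tau$, and $f|_{P_0} = \rho$.

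The verification reduces to checking that the pebbled map is always a partial isomorphism of weighted directed graphs, by cases on the edge types of $G_q(H)$: the directed $L \to R$ edges inherited from $H$ are preserved because the projection to $V(H)$ is a partial $H$-isomorphism; the universal $R \times L$ biclique, the $R \times T$, and the $T \times L$ edges are preserved because $f_H$ respects the bipartition and $\tau$ is fixed; the digons $v \leftrightarrow s_v$ and the weight-$q$ self-loops on $S$ are preserved by the coupling $f(s_v) = s_{f_H(v)}$; and the padding is handled by $\rho$. The main bookkeeping obstacle is to verify that when pebbles simultaneously occupy some $v$ and $s_v$, the projection collapses them to a single $H$-vertex, producing a possibly degenerate position of the $H$-game with fewer than $k$ distinct pebble locations; this is unproblematic because winning strategies in bijective pebble games cope with repeated vertex occupancies. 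Thus Duplicator wins on $(G_q(H_0), G_q(H_1))$, establishing $\mathcal{C}^k$-equivalence.
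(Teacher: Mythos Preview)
The proposal is correct and takes essentially the same approach as the paper: both lift Duplicator's bijective $k$-pebble game strategy from $(H_0,H_1)$ to $(G_q(H_0),G_q(H_1))$ via the projection $s_v \mapsto v$ on switch vertices, with the only difference being that you handle the entire construction in a single game argument while the paper verifies each construction step (directing edges, adding $T$, adding switch vertices, padding) separately.
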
	
\begin{proof}
	It can be checked that after every construction step of $G_q$, the Duplicator still has a winning strategy in the bijective $k$-pebble game played on the two graphs.  
	The first step is directing the edges from $L$ to $R$, and adding all edges $ R \times L$. This new edge relation is definable by a quantifier-free FO-formula because $H_0$ and $H_1$ have their bipartitions marked with unary relations.
	It is clear that $\mathcal{C}^k$-equivalence is preserved through such quantifier-free FO-definitions because any formula speaking about the new graphs can be translated into a formula speaking about the original graphs by replacing the symbol for the edge relation with its defining formula. This does not increase the number of variables, so no $k$-variable formula can distinguish the new pair of graphs.
	 
	The second step is to add a set $T$ of fresh isolated vertices. We can assume them to be coloured with a new unary relation symbol $T$. It is clear that if the two graphs up to this point are $\mathcal{C}^k$-equivalent, then adding the same number of $T$-coloured isolated vertices to each graph does not change this fact because Duplicator still has a winning strategy. In the next step, the edges $R \times T$ and $T \times L$ are inserted. These edges are again definable with a quantifier-free formula, using the colours, so again, $\mathcal{C}^k$-equivalence is preserved.
	The next step is to add a switch vertex $s_v$ for each $v \in L \cup R$, to connect it with $v$ via a digon, and to add a self-loop with weight $q$ (represented as a new binary relation) to every $s_v$.
	Here, one can easily define a winning strategy for Duplicator in the bijective $k$-pebble game on the new pair of graphs, using the Duplicator strategy on the pair before inserting the switch vertices: Any position in the game on the modified structures corresponds to a position in the game on the original structures: If a switch vertex $s_v$ is pebbled, this corresponds to $v$ being pebbled in the original graph. Then if $f$ is Duplicator's chosen bijection in the game on the original graphs, this can be lifted to a bijection on the modified graphs by just mapping $s_{v}$ to $s_{f(v)}$. It is easy to see that Duplicator wins with this strategy. 
	Finally, the same number of isolated digons and vertices with self-loops is added to both graphs, which clearly does not change the fact that Duplicator has a winning strategy.
\end{proof}	

Putting \cref{eq:star2}  together with the equation $\Match_{n,h}(H_0) \neq \Match_{n,h}(H_1)$, we have:
\[
 [x^{2h(n)}] \imm_{\lambda}(G(H_0)) \neq [x^{2h(n)}] \imm_{\lambda}(G(H_1)).
\]
It follows that there must exist a $q \in \bbQ$ such that $\imm_{\lambda}(G_q(H_0)) \neq \imm_{\lambda}(G_q(H_1))$, for if we had $\imm_{\lambda}(G_q(H_0)) = \imm_{\lambda}(G_q(H_1))$ for every $q \in \bbQ$, then $\imm_{\lambda}(G(H_0))$ and $\imm_{\lambda}(G(H_1))$ would be the same polynomial in $\bbQ[x]$, contradicting the statement above. 
By \cref{lem:raduReductionPreservesCkEquivalence}, we have $G_q(H_0) \equiv_{\mathcal{C}^k} G_q(H_1)$, which proves that $\imm_{\lambda}$ is not $\mathcal{C}^k$-invariant on $\bbQ$-edge-weighted graphs. 

\paragraph*{Second case}
In the second case, the properties of $\lambda$ are different, and the second part of \cite[Lemma 28]{curticapean2021full} has to be applied. 
This describes another graph construction, let us call it $G'(H)$, such that
\begin{equation}
\Match_{n,h}(H) = c'(n, |E(H)|, h(n), \lambda) \cdot \imm_{\lambda}(G'(H)) \label{eq:2stars}
\end{equation}
Here, the constant $c'$ depends on $n=|V(H)|, |E(H)|, h(n)$ and $\lambda$ (see \cite[Lemma 28]{curticapean2021complexitydichotomyimmanantfamilies}), but not on $H$ itself.
The graph $G'(H)$ is directed and has edge weights $\{0,1,-1\}$ if $H$ is unweighted.
The construction of $G'(H)$ is the following \cite[Lemma 28]{curticapean2021complexitydichotomyimmanantfamilies}: Replace
every edge $uv$ in $H$ with an edge gadget consisting of $u, v$, two
fresh vertices, and five digons, as depicted in \cref{fig:gadget} (see also page~6 of \cite{curticapean2021complexitydichotomyimmanantfamilies}).
\begin{figure}
\centering
\begin{tikzpicture}[->, thick, shorten <= 2pt, shorten >= 2pt, every node/.style={circle, draw, fill=black, minimum size=0.3cm}]
	\node[label=$u$] (A) {};
	\node[above right = 1cm and 3cm of A] (B) {};
	\node[below right = 1cm and 3cm of A] (C) {};
	\node[label=$v$, right = 6cm of A] (D) {};
	\draw (A) edge[bend left = 20] (B); 
	\draw (B) edge[bend left = 20] (A); 
	
	\draw (A) edge[bend left = 20] (C); 
	\draw (C) edge[bend left = 20] (A);
	
	\draw (B) edge[bend left = 20] node[draw=none, fill=none, midway, right]{$-1$} (C); 
	\draw (C) edge[bend left = 20] (B); 
	
	\draw (B) edge[bend left = 20] (D); 
	\draw (D) edge[bend left = 20] (B); 
	
	\draw (C) edge[bend left = 20] (D); 
	\draw (D) edge[bend left = 20] (C); 
\end{tikzpicture}
\caption{The gadget that replaces an edge $uv$.\label{fig:gadget}}
\end{figure}

This edge gadget is symmetric between $u$ and $v$. Next, for each vertex $v \in V(H)$, add a fresh \enquote{switch vertex} $s_v$ connected to $v$ with a digon. Add another $2h(n)$ many fresh vertices, called \enquote{receptor vertices}, and add an edge between each pair of receptor and switch vertex. Replace each of these new edges with the aforementioned edge gadget. 
Finally, a number of isolated digons and a number of isolated vertices is added. These numbers depend on $\lambda, |V(H)|, |E(H)|$, and $h(n)$.
\begin{lemma}
	\label{lem:raduReductionPreservesCkEquivalence2}
	Let $k \geq 4$.
	If $H_0, H_1$ are undirected simple bipartite graphs with fixed bipartition, and $H_0 \equiv_{\mathcal{C}^k} H_1$, then $G'(H_0) \equiv_{\mathcal{C}^{k/2}} G'(H_1)$.
\end{lemma}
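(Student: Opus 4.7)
The proof follows the template of \cref{lem:raduReductionPreservesCkEquivalence}: decompose the construction of $G'(H)$ into stages and verify that each stage preserves $\mathcal{C}^{k'}$-equivalence for a suitable $k'$. The stages of adding switch vertices joined to originals by digons, introducing $2h(n)$ receptor vertices as a fresh colour class, and padding with isolated digons and self-loop vertices all preserve $\mathcal{C}^k$-equivalence exactly as in the earlier proof---each is either quantifier-free FO-definable or a trivial bijection-extension to a new coloured class. The entire factor-of-$2$ loss arises from the single new ingredient: replacing each edge (original edges $uv \in E(H)$ as well as the subsequently inserted switch--receptor edges) by the $4$-vertex, $5$-digon gadget.

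To handle this stage I would present it as a two-dimensional quantifier-free first-order interpretation. Work with an enlarged, multi-sorted coloured version of $H$ that already contains fresh sorts of constant (in $H$) size for the receptor vertices and the various padding components; this enlargement itself preserves $\mathcal{C}^k$-equivalence since the new elements carry only colour data independent of $H$. Each vertex of the gadgeted graph is then encoded by a pair $(u,v)$ of elements of the enlarged universe together with a type tag---original, left-middle, right-middle, switch, receptor, padding---drawn from a fixed finite set of unary predicates, so that both the universe and the signed edge relation of $G'(H)$ become quantifier-free FO-definable from the edge, bipartition, and colour markers. The standard fact that a dimension-$d$ FO-interpretation transfers $\mathcal{C}^k$-equivalence to $\mathcal{C}^{\lfloor k/d \rfloor}$-equivalence of the interpreted structures then yields the claimed bound $G'(H_0) \equiv_{\mathcal{C}^{k/2}} G'(H_1)$.

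The main obstacle is delicate but routine: one must check that the gadget replacement really is quantifier-free (no counting within $H$ or existential access to unpebbled vertices is needed) and that the finite set of type tags suffices to distinguish all vertex classes while faithfully capturing the $-1$-weighted middle digon that breaks the internal symmetry of each edge gadget. The interpretation perspective also avoids the trap that a direct bijective-game lift would fall into: Duplicator's winning bijection $f\colon V(H_0)\to V(H_1)$ need not map $E(H_0)$ bijectively onto $E(H_1)$, so it cannot be lifted vertex-by-vertex to edge-gadget vertices without ad-hoc matching of non-edges to edges. Working at the formula-translation level sidesteps this issue, and combined with the easy stages above it delivers the claim under the hypothesis $k \geq 4$ (which is only needed to ensure that the resulting $\mathcal{C}^{k/2}$-equivalence is non-trivial).
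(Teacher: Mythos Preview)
Your approach is correct and takes a genuinely different route from the paper. The paper argues directly at the level of the bijective pebble game: a position with $k/2-1$ pebbles on $G'(H_0), G'(H_1)$ is pulled back to at most $k-2$ pebbles on $H_0, H_1$ (each gadget-internal pebble becoming two pebbles on the endpoints of the corresponding edge), and Duplicator then lifts her strategy by playing a type-preserving bijection $f$ on vertices \emph{together with} a type-preserving bijection $f_2$ on vertex \emph{pairs} for the gadget-internal vertices---the paper explicitly notes that $f$ and $f_2$ need not agree. This is exactly the ``ad-hoc matching'' you flag as a trap; the paper just does it by hand. Your packaging via a two-dimensional quantifier-free interpretation is the same underlying idea abstracted: the existence of such an $f_2$ is precisely what makes a $2$-dimensional interpretation transfer $\mathcal{C}^k$-equivalence to $\mathcal{C}^{\lfloor k/2 \rfloor}$-equivalence. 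Your route is cleaner, more modular, and generalises immediately to other fixed-shape edge gadgets, while the paper's is fully self-contained and does not invoke the interpretation lemma as a black box.

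One minor imprecision worth flagging: the receptor and padding classes are not literally of constant size---the number of receptors is $2h(n)$ and the padding depends on $|E(H)|$ as well as $\lambda$ and $|V(H)|$. This is harmless for your argument, since $H_0 \equiv_{\mathcal{C}^k} H_1$ with $k \geq 2$ forces $|V(H_0)| = |V(H_1)|$ and $|E(H_0)| = |E(H_1)|$, so the same number of fresh coloured elements is added on both sides; but you should say this explicitly rather than calling the sizes ``constant''.
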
	
\begin{proof}
Again, we verify that every construction step of $G'$ preserves $\mathcal{C}^k$-equivalence. 
Most of it works just like in \cref{lem:raduReductionPreservesCkEquivalence}. The only additional construction step here is the replacement of every edge with the gadget depicted above. 
Since these edge gadgets contain ``internal'' vertices, they give Spoiler the possibility to fix an edge using just one pebble. This is why we need to halve the number of pebbles in this construction step to ensure that Duplicator can still win. 
The argument is the following. We now start with $H_0$ and $H_1$ and
consider only the construction step where each edge is replaced with
the gadget. Consider a position in the bijective $k$-pebble game on
$H_0$ and $H_1$. The Duplicator winning strategy gives us a bijection
$f\colon V(H_0) \to V(H_1)$ that respects the $\mathcal{C}^k$-types of
the vertices, with the $\leq k-1$ pebbles on the board as
parameters. This is the bijection that Duplicator plays. If at most
$k-2$ pebbles are currently on the board, then in addition to $f$,
there also exists a bijection $f_2\colon V(H_0)^2 \to V(H_1)^2$ on the
pairs of vertices which respects the parametrised
$\mathcal{C}^k$-types of the pairs. If such a bijection did not exist,
then the tuples of pebbled vertices would have different
$\mathcal{C}^k$-types and Duplicator would not be able to win from
this position. (Note that $f$ and $f_2$ in general do not agree on the vertices, unless $H_0$ and $H_1$ are isomorphic.)   
With this in mind, we can define a Duplicator winning strategy in the $k/2$-pebble game on the modified graphs: A position on the modified graphs translates back into a pebble position on the original graphs as follows: Pebbles on original vertices are kept, and each pebble on an internal vertex of the gadget of an edge $uv$ is replaced by two pebbles, one on $u$ and one on $v$, in the original graph. Before Duplicator gets to specify a bijection in the game on the modified structures, there are at most $k/2-1$ many pebbles on each graph, so the corresponding pebble position in the original graphs has at most $k-2$ many pebbles on each graph. Because Duplicator has a winning strategy in the $k$-pebble game on the original graphs, there exist a $\mathcal{C}^k$-type preserving bijection $f$ on vertices, and a $\mathcal{C}^k$-type preserving bijection $f_2$ on the vertex pairs, as argued before. Because the existence of an edge is part of the $\mathcal{C}^k$-type of a pair, $f_2$ preserves edges. We use $f$ and $f_2$ to define Duplicator's bijection in the game on the modified graphs: The internal vertices of the edge gadget of an edge $e$ are mapped to the corresponding internal vertices of the gadget of the edge $f_2(e)$. The non-gadget vertices are mapped according to $f$. Because $f$ and $f_2$ are type-preserving, after Spoiler's move in the game on the modified graphs, when the position is translated back to the other game, the pebbled tuples again have the same $\mathcal{C}^k$-type, so Duplicator can indeed maintain this invariant and win the game.

The other construction steps can be dealt with as in \cref{lem:raduReductionPreservesCkEquivalence}, noting that the introduction of edges between all pairs of receptor and switch vertices is FO-definable if we assume the set of receptor and switch vertices to be coloured with a distinguished colour, respectively. 
In the final construction step, we just have to note that the number of isolated vertices and digons that are added depend on $\lambda, |V(H)|, |E(H)|$, and $h(n)$, and we have $|V(H_0)| = |V(H_1)|$ and $|E(H_0)| = |E(H_1)|$ because $H_0$ and $H_1$ are $\mathcal{C}^k$-equivalent, and $k \geq 2$.
\end{proof}	
The constant $c'(n, |E(H)|, h(n), \lambda) \in \bbQ$ in \cref{eq:2stars} is in particular equal for $\mathcal{C}^k$-equivalent graphs $H_0, H_1$, as they have the same number of edges and vertices.
Thus, as before, \cref{eq:2stars} implies that $\imm_{\lambda}(G'(H_0)) \neq \imm_{\lambda}(G'(H_1))$. The two graphs are $\mathcal{C}^{k/2}$-equivalent by \cref{lem:raduReductionPreservesCkEquivalence2}, so $\imm_{\lambda}$ is not $\mathcal{C}^{k/2}$-invariant on edge-weighted directed graphs. Since $k$ was arbitrary, we find such an example of graphs for every $k$, so the counting width of $\imm_{\lambda}$ is unbounded.

The two cases together show the second part of \cref{thm:immanantDichotomy} because they prove that the family $\imm_{\Lambda}$ is not $\mathcal{C}^k$-invariant for any fixed $k \in \bbN$. 
	
The first part of \cref{thm:counting-width} then implies that $\imm_{\Lambda}$ does not admit polynomial size symmetric circuits.

The third part of \cref{thm:immanantDichotomy} is shown analogously. The proof in \cite{curticapean2021complexitydichotomyimmanantfamilies} is then again divided into two cases, in each of which we get \cref{eq:star2,eq:2stars}, respectively, but where $\Match_{n,h}(H)$ is now replaced with the number of \emph{perfect} matchings. \cref{lem:kMatchingUnboundedCountingWidth} then gives us graphs that differ with respect to the number of perfect matchings and are $\mathcal{C}^{o(n)}$-equivalent. It is easy to check that \cref{lem:raduReductionPreservesCkEquivalence} and \cref{lem:raduReductionPreservesCkEquivalence2} are also true for the graph constructions from \cite{curticapean2021complexitydichotomyimmanantfamilies} that are used here (they use exactly the same gadgets as before).
So in total, if $g \in \Omega(n^k)$, then it follows that for any sublinear function $f(n)$, and every large enough $n$, we can find graphs $G(H_0) \equiv_{\mathcal{C}^{f(n)}} G(H_1)$ with $\imm_\lambda(G(H_0)) \neq \imm_\lambda(G(H_1))$. Hence, the counting width of the immanant is $\Omega(n)$.

The $2^{\Omega(n)}$ size lower bound on symmetric circuits follows from the second part of \cref{thm:counting-width}.

\section{Conclusion}

In recent years, a rich theory of symmetric computation has been emerging which has established a tight and surprising relationship between logical definability, circuit complexity and also computation in other models such as linear programming (see~\cite{dawar_csl2020,dawar_icalp2024} for pointers).  A central plank of this is the use of variations of the Cai-F\"urer-Immerman construction to establish unconditional lower bounds in the context of models of computation with natural symmetries.  The work in~\cite{dawar_symmetric_2020,dawar_symmetric_2025} is a significant example of this, showing an unconditional exponential separation between the complexity of the determinant and the permanent polynomials in the context of symmetric algebraic circuits.

The present work gives a sweeping generalisation of these results in two different directions.  First, considering matrix polynomials symmetric under arbitrary permutations of the rows and columns (i.e.\ the $\Sym_n \times \Sym_m$ action), we give a complete characterisation of the tractable cases in terms of homomorphism polynomials of bounded treewidth, linking this study to other complexity classifications such as~\cite{curticapean_homomorphisms_2017}.  For polynomials on \emph{square} matrices symmetric under the simultaneous application of a permutation to the rows and columns (i.e.\ the $\Sym_n$ action), we generalise the separation of the determinant and permanent by giving a complete classification of the symmetric complexity of all immanants.

The work raises a number of directions for future research.  It suggests that there is a close connection between the counting width of polynomials, which we define, and their symmetric algebraic complexity. We have not established this in full generality and thus the first important direction is to prove or disprove \cref{conj:sub}. A step in this direction would be to understand the complexity of linear combinations of homomorphism polynomials where neither \cref{thm:lincomb} nor \cref{thm:dichotomy-chromatic} applies. 
That is to say, there is more than one homomorphism polynomial involved and the size of the pattern graphs is not sublinearly bounded.

Our most general results are for matrix polynomials symmetric under $\Sym_n \times \Sym_m$, where we have a complete characterisation of tractable families. The polynomials on square matrices symmetric under the $\Sym_n$ action include many that are not $\Sym_n \times \Sym_n$-symmetric (for instance, the determinant and many other immanants).  While we give a complete classification of the immanant polynomials, there are many other such polynomial families. It seems plausible that a complete classification in terms of homomorphism polynomials from directed graphs might be achievable in this case, and we leave it for future work.

It would be interesting to relate this to other methods for
establishing lower bounds in algebraic complexity and to natural
algorithmic methods in that field.  Indeed, one of the striking
features of the study of symmetric computation models is that a
surprising range of algorithms that are used (for instance in
combinatorial optimisation) are indeed symmetric and thus the
unconditional lower bounds show the limitations of standard
techniques.

Finally, this work may provide the means to prove lower bounds for
symmetric circuits for interesting polynomial families beyond the
determinant and the permanent.  One potential application would be in
proof complexity.  There has been considerable recent interest in the
\emph{Ideal Proof System}~\cite{grochow_pitassi}, in which refutations of unsatisfiable Boolean formulas or systems of polynomial equations take the
form of algebraic circuits.  One could conceivably use the methods
developed here to show that certain families of formulas with natural
symmetries do not admit succinct symmetric refutations.

	\newpage
	\printbibliography
	\newpage
	
	\appendix 
	
	\section{Möbius Inversion over Polynomial Rings}
	
	Let $L$ be a finite poset and $R$ be a ring.
	Define the \emph{Möbius function} $\mu \colon L \times L \to R$ of $L$ recursively via
	\begin{align*}
		\mu(s,s) &= 1 && \text{for all } s \in L, \\
		\mu(s,u) &= - \sum_{s \leq t < u} \mu(s, t) && \text{for all } s < u \text{ in } L.
	\end{align*}
	
	\begin{lemma}[{\cite[Proposition~3.7.2]{stanley_enumerative_1997}}]\label{lem:moebius}
		Let $L$ be a finite poset, $R$ a ring, and $M$ an $R$-module.
		Let $f, g \colon L \to M$. Then
		\[
		g(s) = \sum_{t \geq s} f(t) \text{ for all } s\in L
		\iff
		f(s) = \sum_{t \geq s} g(t) \mu(s,t) \text{ for all } s \in L.
		\]
	\end{lemma}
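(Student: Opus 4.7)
The plan is to prove Möbius inversion by the standard incidence-algebra argument, which cleanly handles both directions of the equivalence. Although $L$ is just a poset, we can view functions $\alpha \colon \{(s,t) : s \leq t\} \to R$ as upper-triangular matrices indexed by a linear extension of $L$, and reason about them multiplicatively.

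The first step is to extract the key identity
\[
\sum_{s \leq t \leq u} \mu(s, t) = \delta_{s,u} \qquad \text{for all } s \leq u \text{ in } L, \tag{$*$}
\]
directly from the defining recursion: for $s < u$, the recursion rearranges to $\sum_{s \leq t \leq u} \mu(s,t) = 0$, and for $s = u$ we have $\mu(s,s) = 1$. With $(*)$ in hand, the forward direction of the lemma is immediate: assuming $g(s) = \sum_{t \geq s} f(t)$ and substituting,
\[
\sum_{t \geq s} \mu(s, t)\, g(t)
= \sum_{t \geq s} \mu(s,t) \sum_{u \geq t} f(u)
= \sum_{u \geq s} \left(\sum_{s \leq t \leq u} \mu(s,t)\right) f(u)
= f(s),
\]
where the interchange of sums is legitimate because $L$ is finite.

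The reverse direction requires the dual identity
\[
\sum_{s \leq t \leq u} \mu(t, u) = \delta_{s,u}, \tag{$\dagger$}
\]
and establishing $(\dagger)$ is the main obstacle, since the recursion only directly yields $(*)$. The cleanest argument is algebraic: let $\zeta(s,t) = 1$ if $s \leq t$ and $0$ otherwise, and view $\mu$ and $\zeta$ as matrices indexed by a linear extension of $L$, so that both are upper-triangular with $1$'s on the diagonal. Identity $(*)$ states exactly that $\mu \zeta = I$ in this finite matrix algebra. Since $\zeta$ is upper-triangular with invertible diagonal, it is invertible, and a two-sided inverse equals the one-sided inverse, so also $\zeta \mu = I$, which is precisely $(\dagger)$. (Alternatively, one can prove $(\dagger)$ by reverse induction on $s$ in $[s_0, u]$, using $(*)$ to anchor and eliminate terms.)

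Once $(\dagger)$ is established, the reverse direction follows by the same substitute-and-interchange calculation as above: assuming $f(s) = \sum_{t \geq s} \mu(s,t)\, g(t)$,
\[
\sum_{t \geq s} f(t)
= \sum_{t \geq s} \sum_{u \geq t} \mu(t,u)\, g(u)
= \sum_{u \geq s} \left(\sum_{s \leq t \leq u} \mu(t,u)\right) g(u)
= g(s),
\]
completing the proof. The whole argument is formal and takes place entirely inside the $R$-module $M$; no additional structure on $R$ is needed.
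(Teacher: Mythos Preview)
Your proof is correct and follows the same substitute-and-interchange approach as the paper for the forward direction. The paper's own proof simply says ``the converse follows analogously'' for the reverse direction; you go further and explicitly justify the dual identity $(\dagger)$ via the incidence-algebra / matrix-inverse argument, which is the honest way to fill that gap. So your argument is essentially the paper's, but with the reverse direction spelled out more carefully.
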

	\begin{proof}
		Assume the first identity holds. Then
		\begin{align*}
			\sum_{t \geq s} g(t) \mu(s,t) &= \sum_{t \geq s} \mu(s,t) \sum_{u \geq t} f(u) \\
			&= \sum_{u \geq s} f(u) \sum_{u \geq t \geq s} \mu(s, t) \\ 
			&= \sum_{u \geq s} f(u) \delta_{u = s} \\
			&= f(s).
		\end{align*}
		The converse follows analogously.
	\end{proof}

	\Cref{lem:moebius} is applied to the lattice $\Pi(A)$ of partitions of some finite set $A$.
	The Möbius function of the partition lattice is given by 
	the Frucht--Rota--Schützenberger formula \cite[(A.2)]{lovasz_large_2012}.
	For a partition $\pi \in \Pi(A)$,
	\begin{equation}\label{eq:frs}
		\mu_{\pi} \coloneqq (-1)^{|A| - |A/\pi|} \prod_{ P \in A/\pi} (|P|-1)!.
	\end{equation}

	\begin{lemma} \label{lem:moebius-polynomial}
		Let $A$ and $I$ be finite sets. 
		For every map $h \colon A \to I$, let $p_h \in M$.
		Then, the following hold:
		\begin{align*}
			\sum_{h \colon A \to I} p_h &= \sum_{\pi \in \Pi(A)} \sum_{h \colon A/\pi \hookrightarrow I} p_{h \circ \pi}, \\
			\sum_{h \colon A \hookrightarrow I} p_h &= \sum_{\pi \in \Pi(A)} \mu_\pi \sum_{h \colon A/\pi \to I} p_{h \circ \pi}.
		\end{align*}
	\end{lemma}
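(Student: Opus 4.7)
}

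The plan is to prove the first identity by reindexing according to the kernel partition of each map, and then to derive the second identity from the first by Möbius inversion on the partition lattice $\Pi(A)$.

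For the first identity, the key observation is that every map $h \colon A \to I$ factors uniquely as $h = h' \circ \pi$, where $\pi \in \Pi(A)$ is the \emph{kernel partition} of $h$, i.e.\ the partition whose blocks are the non-empty fibers of $h$, and $h' \colon A/\pi \hookrightarrow I$ is the induced injective map. Conversely, every pair $(\pi, h')$ with $\pi \in \Pi(A)$ and $h' \colon A/\pi \hookrightarrow I$ injective gives rise to a map $h = h' \circ \pi \colon A \to I$ whose kernel partition is exactly $\pi$. Grouping the sum on the left-hand side by the kernel partition of $h$ immediately yields the first identity.

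For the second identity, define for every $\pi \in \Pi(A)$
\[
F(\pi) \coloneqq \sum_{h' \colon A/\pi \hookrightarrow I} p_{h' \circ \pi}, \qquad G(\pi) \coloneqq \sum_{h' \colon A/\pi \to I} p_{h' \circ \pi}.
\]
I would then apply the first identity to the set $A/\pi$ in place of $A$: every map $A/\pi \to I$ factors uniquely through its kernel partition, and kernel partitions of $A/\pi$ are in canonical bijection with partitions $\pi' \in \Pi(A)$ satisfying $\pi' \geq \pi$ (so that $A/\pi' = (A/\pi)/\pi^{*}$ with $\pi^{*}$ the induced kernel partition of $A/\pi$). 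Moreover $\pi^{*} \circ \pi = \pi'$ as maps $A \to A/\pi'$, so the polynomial parameter matches up. This gives the relation
\[
G(\pi) = \sum_{\pi' \geq \pi} F(\pi').
\]
Applying \cref{lem:moebius} to the poset $\Pi(A)$ with $f = F$ and $g = G$ inverts this to
\[
F(\pi) = \sum_{\pi' \geq \pi} \mu(\pi, \pi') G(\pi').
\]
Specialising to $\pi = \hat{0}$, the partition of $A$ into singletons, and recalling that $A/\hat{0}$ canonically identifies with $A$, so that $F(\hat{0}) = \sum_{h \colon A \hookrightarrow I} p_h$, yields the second identity with $\mu_{\pi'} = \mu(\hat{0}, \pi')$ as given by the Frucht--Rota--Schützenberger formula \eqref{eq:frs}.

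The only step that requires genuine care is verifying the canonical bijection used in the derivation of $G(\pi) = \sum_{\pi' \geq \pi} F(\pi')$, and in particular that $\pi^{*} \circ \pi = \pi'$; this is essentially bookkeeping but is the place where the argument could most easily be mis-stated. Everything else is either bijective reindexing or a direct invocation of \cref{lem:moebius}.
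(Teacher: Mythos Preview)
Your proposal is correct and follows essentially the same approach as the paper: the first identity via the unique factorisation of a map through its kernel partition, and the second by defining $F(\pi)$ and $G(\pi)$ exactly as the paper's $f(\pi)$ and $g(\pi)$, establishing $G(\pi) = \sum_{\pi' \geq \pi} F(\pi')$ via the first identity applied to $A/\pi$, and then invoking \cref{lem:moebius} at $\pi = \hat{0}$. The paper's proof is structurally identical, differing only in notation ($\bot$ for $\hat{0}$, $f,g$ for $F,G$).
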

	\begin{proof}
		We first verify the first identity.
		To that end, note that the set of maps $h \colon A \to I$ can be partitioned according to the partition $\pi \in \Pi(A)$ the map $h$ induces.
		In other words,
		\begin{align*}
			\{(\pi, h) \mid \pi \in \Pi(A), h \colon A/\pi \hookrightarrow I \} &\to I^A \\
			(\pi, h) & \mapsto h \circ \pi
		\end{align*}
		is a bijection.
		Towards the second identity, we verify the assumptions of \cref{lem:moebius}.
		For a partition $\pi \in \Pi(A)$,
		let
		\[
			f(\pi) \coloneqq \sum_{h \colon A/\pi \hookrightarrow I} p_{h \circ \pi}, \quad \text{and} \quad
			g(\pi) \coloneqq \sum_{h \colon A/\pi \to I} p_{h \circ \pi}.
		\]
		Now, the first identity reads as $g(\bot) = \sum_{\pi \in \Pi(A)} f(\pi)$
		while the second identity reads as $f(\bot) = \sum_{\pi \in \Pi(A)} \mu_\pi g(\pi)$.
		Towards the assumptions of \cref{lem:moebius}, let $\pi \in \Pi(A)$ be arbitrary.
		Then
		\begin{align*}
			g(\pi) &= \sum_{h \colon A/\pi \to I} p_{h \circ \pi} \\
				   &= \sum_{\sigma \in \Pi(A/\pi)} \sum_{h \colon (A/\pi)/\sigma \hookrightarrow I} p_{(h \circ \sigma) \circ \pi}  \\
				   &= \sum_{\substack{\sigma \in \Pi(A) \\ \sigma \geq \pi}} \sum_{h \colon A/\sigma \hookrightarrow I} p_{h \circ \sigma} \\
				   &= \sum_{\substack{\sigma \in \Pi(A) \\ \sigma \geq \pi}} f(\sigma),
		\end{align*}
		as desired.
		\Cref{lem:moebius} yields the second identity.
	\end{proof}
	
	Consider the following corollary.
	Let $F$ be a bipartite graph with bipartition $A \uplus B = V(F)$.
	For $n, m\in \mathbb{N}$, consider the \emph{embedding polynomial}
	\[
	\emb_{F, n,m} \coloneqq \sum_{h \colon A \uplus B \hookrightarrow [n] \uplus [m]} \prod_{vw \in E(F)} x_{h(v)h(w)}.
	\]
	Clearly, $\emb_{F, n,m} = |\Aut(F)| \cdot \sub_{F, n,m}$.

	\begin{corollary} \label{thm:sub-hom}
		Let $F$ be a bipartite multigraph with bipartition $A \uplus B = V(F)$.
		Let $n,m \in \mathbb{N}$.
		Then
		\[
		\sub_{F, n,m} = \frac{1}{|\Aut(F)|} \sum_{\pi \in \Pi(A)} \mu_{\pi} \sum_{\sigma \in \Pi(B)} \mu_{\sigma} \hom_{F/(\pi, \sigma), n, m}.
		\]
	\end{corollary}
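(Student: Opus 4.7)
Since $\sub_{F,n,m} = \frac{1}{|\Aut(F)|}\emb_{F,n,m}$ by definition, it is enough to establish
\[
\emb_{F,n,m} = \sum_{\pi \in \Pi(A)} \mu_{\pi} \sum_{\sigma \in \Pi(B)} \mu_{\sigma}\, \hom_{F/(\pi, \sigma), n, m},
\]
where $F/(\pi,\sigma)$ is the bipartite multigraph with vertex set $(A/\pi)\uplus(B/\sigma)$ obtained from $F$ by identifying vertices according to $\pi$ on the left and $\sigma$ on the right, keeping parallel edges. Because $\pi$ partitions only $A$ and $\sigma$ partitions only $B$, no loops are created, so this is a well-defined bipartite multigraph and the corresponding homomorphism polynomial expands as $\sum_{h_A \colon A/\pi \to [n]}\sum_{h_B\colon B/\sigma \to [m]} \prod_{ab\in E(F)} x_{h_A(\pi(a))h_B(\sigma(b))}$.

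The plan is to apply \cref{lem:moebius-polynomial} (the injective-to-arbitrary Möbius inversion) twice, once for each side of the bipartition. First I would split any map $h\colon A\uplus B \to [n]\uplus[m]$ into its components $h_A \colon A \to [n]$ and $h_B\colon B \to [m]$ and write
\[
\emb_{F,n,m} = \sum_{h_A\colon A\hookrightarrow [n]} \sum_{h_B\colon B\hookrightarrow [m]} \prod_{ab\in E(F)} x_{h_A(a)h_B(b)}.
\]
For fixed injective $h_B$, set $p_{h_A} \coloneqq \prod_{ab\in E(F)} x_{h_A(a)h_B(b)} \in \mathbb{Q}[\mathcal{X}_{n,m}]$ and apply the second identity of \cref{lem:moebius-polynomial} with $I=[n]$ to rewrite the inner sum over injections $A\hookrightarrow[n]$ as $\sum_{\pi\in\Pi(A)}\mu_\pi \sum_{h_A\colon A/\pi \to [n]} p_{h_A\circ\pi}$.

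Then I would pull the outer Möbius sum outside of the sum over $h_B$ and apply the lemma a second time with $I=[m]$ and the polynomial $q_{h_B} \coloneqq \prod_{ab\in E(F)} x_{h_A(\pi(a))h_B(b)}$ (now parameterised by $h_A$ and $\pi$), converting $\sum_{h_B\colon B\hookrightarrow [m]}$ into $\sum_{\sigma\in\Pi(B)}\mu_\sigma \sum_{h_B\colon B/\sigma \to [m]} q_{h_B\circ\sigma}$. After both inversions the expression becomes
\[
\sum_{\pi\in\Pi(A)}\mu_\pi \sum_{\sigma\in\Pi(B)}\mu_\sigma \sum_{\substack{h_A\colon A/\pi\to[n]\\ h_B\colon B/\sigma\to [m]}} \prod_{ab\in E(F)} x_{h_A(\pi(a))h_B(\sigma(b))},
\]
and the innermost double sum is exactly $\hom_{F/(\pi,\sigma), n, m}$ by the expansion noted above. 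Dividing by $|\Aut(F)|$ yields the claimed identity.

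There is no real obstacle: the only mildly subtle point is making sure that the quotient $F/(\pi,\sigma)$ is interpreted as a multigraph preserving edge multiplicities (so that the product over $E(F)$ matches the product over $E(F/(\pi,\sigma))$), which is automatic because the bipartition-respecting nature of $\pi$ and $\sigma$ forbids loops. Everything else is bookkeeping justifying the two successive applications of \cref{lem:moebius-polynomial}.
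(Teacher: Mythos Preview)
Your proposal is correct and follows essentially the same approach as the paper: both proofs reduce to $\emb_{F,n,m}$, split the bipartition-respecting maps into their $A$- and $B$-components, and apply \cref{lem:moebius-polynomial} once on each side to convert injections into arbitrary maps, then recognise the resulting inner sum as $\hom_{F/(\pi,\sigma),n,m}$. The only cosmetic difference is that the paper introduces the shorthand $p_{h,h'}$ for the monomial and carries out the two applications in parallel rather than sequentially.
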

	
	\begin{proof}
		Applying \cref{lem:moebius-polynomial},
		we treat the two parts of the bipartition separately.
	
		For maps $h \colon A \to [n]$ and $h' \colon B \to [m]$, define the polynomial
		\[
			p_{h,h'} \coloneqq \prod_{vw \in E(F)} x_{h(v)h'(w)}.
		\]
		Applying \cref{lem:moebius-polynomial} twice,
		\begin{align*}
			\emb_{F, n, m} 
			&= \sum_{h \colon A \hookrightarrow [n]} \sum_{h' \colon B \hookrightarrow [m]} p_{h,h'} \\
			&= \sum_{\pi \in \Pi(A)} \mu_\pi \sum_{h \colon A/\pi \to [n]} \sum_{h' \colon B \hookrightarrow [m]} p_{h \circ \pi,h'} \\
			&= \sum_{\pi \in \Pi(A)} \mu_\pi \sum_{h \colon A/\pi \to [n]} \sum_{\sigma \in \Pi(B)} \mu_\sigma \sum_{h' \colon B/\sigma \to [m]} p_{h \circ \pi, h' \circ \sigma} \\
			&= \sum_{\pi \in \Pi(A)} \sum_{\sigma \in \Pi(B)} \mu_\pi \mu_\sigma \hom_{F/(\pi, \sigma), n,m}. \qedhere
		\end{align*}
	\end{proof}
	
	\begin{example}
		Consider the $F = P_3$, i.e.\ the path on three vertices where $A$ is a singleton and $B$ is of size two. Then
		\[
		2\sub_{F, n, m} = \sum_{i \in [n]} \sum_{\substack{k, j \in [m] \\ k \neq j}} x_{ij}x_{ik}.
		\]
		There is only one partition in $\Pi(A)$, namely $\bot = \top$ with $\mu_\bot = 1$.
		In $\Pi(B)$, there are two partitions, namely $\bot$ and $\top$ with $\mu_\bot = 1$ and $\mu_\top = -1$.
		It follows that
		\[
		2\sub_{F, n, m} =  \hom_{F/(\bot, \bot), n,m} - \hom_{F/(\bot, \top), n,m}
		= \hom_{P_3, n, m} - \hom_{K_2^2, n, m}
		=  \sum_{i \in [n]} \sum_{k, j \in [m]} x_{ij}x_{ik} - \sum_{i \in [n]} \sum_{k \in [m]} x_{ik}^2
		\]
		where $K_2^2$ denotes the loopless multigraph with two vertices and two edges.
		In particular, for $m = 1$, the polynomial $\sub_{F, n, m}$ is the zero polynomial.
	\end{example}

	\section{Interpolation}
	
	\begin{lemma}[Multivariate Polynomial Interpolation] \label{lem:multivariate-polynomial-interpolation}
		Let $\mathbb{K}$ be a field and $A$ be a $\mathbb{K}$-algebra.
		Let $p \in A[x_1, \dots, x_t]$ be a multivariate polynomial such that the degree in each of the variables is less than~$n$.
		Let $a_1, \dots, a_n \in \mathbb{K}$ be distinct.
		Then the coefficients of $p$ are $\mathbb{K}$-linear combinations of the $p(a_{j_1}, \dots, a_{j_t})$ for $j_1, \dots, j_t \in [n]$. The coefficients of the respective linear combination only depend on $a_1, \dots, a_n$ and on the coefficient of $p$ to be expressed by the linear combination.
	\end{lemma}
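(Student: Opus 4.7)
The plan is to reduce the multivariate case to the univariate case and then invoke Lagrange/Vandermonde interpolation. First I would handle the univariate case ($t = 1$): writing $p = \sum_{i=0}^{n-1} c_i x_1^i$ with $c_i \in A$, the evaluations $p(a_j) = \sum_i c_i a_j^i$ for $j \in [n]$ form a linear system whose matrix is the $n \times n$ Vandermonde matrix $V = (a_j^{i})_{j \in [n], 0 \leq i < n}$. Since the $a_j$ are distinct elements of the field $\mathbb{K}$, $V$ is invertible over $\mathbb{K}$; the entries of $V^{-1}$ are purely functions of $a_1,\dots,a_n$. Multiplying the vector $(p(a_1),\dots,p(a_n))^\top \in A^n$ by $V^{-1}$ yields each $c_i$ as a $\mathbb{K}$-linear combination of the $p(a_j)$ whose coefficients depend only on $a_1,\dots,a_n$ and on the index $i$. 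Crucially, scaling elements of $A$ by elements of $\mathbb{K}$ is well-defined because $A$ is a $\mathbb{K}$-algebra, so no invertibility in $A$ is required.

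For the general case I would proceed by induction on~$t$. Viewing $p \in A[x_1,\dots,x_t]$ as a polynomial in $x_1$ with coefficients in the $\mathbb{K}$-algebra $A' \coloneqq A[x_2,\dots,x_t]$, the base case gives each coefficient polynomial $q_i \in A'$ (the coefficient of $x_1^i$ in $p$) as a $\mathbb{K}$-linear combination of the specialisations $p(a_{j_1}, x_2, \dots, x_t) \in A'$, with coefficients depending only on $a_1,\dots,a_n$ and on $i$. Each such $q_i$ has degree less than $n$ in each of $x_2,\dots,x_t$, so the inductive hypothesis applies and expresses every coefficient of $q_i$ as a $\mathbb{K}$-linear combination of the values $q_i(a_{j_2},\dots,a_{j_t})$. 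Composing the two linear combinations yields each coefficient of $p$ as a $\mathbb{K}$-linear combination of the full evaluations $p(a_{j_1},\dots,a_{j_t})$, with coefficients depending only on $a_1,\dots,a_n$ and on the multi-index $(i_1,\dots,i_t)$.

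There is no real obstacle here: the statement is essentially a formal restatement of iterated Lagrange interpolation. The only point that requires any care is keeping track of the fact that the scalars used to invert the Vandermonde system all live in $\mathbb{K}$ (so that the inversion makes sense when the coefficients of~$p$ lie in an arbitrary $\mathbb{K}$-algebra rather than in a field), and that the resulting linear combination for a given coefficient is the same regardless of which polynomial $p$ satisfying the degree bound we started from, depending only on the interpolation nodes and on the targeted coefficient.
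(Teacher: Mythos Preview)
Your proof is correct and essentially matches the paper's approach: both arguments rest on the invertibility of the Vandermonde matrix $V = (a_j^i)$ over $\mathbb{K}$. The only cosmetic difference is that the paper handles all $t$ variables at once by inverting the Kronecker product $W = V \otimes \cdots \otimes V$, whereas you unwind the same computation one variable at a time via induction on~$t$.
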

	
	Note that the polynomial is evaluated at $n^t$ points in $\mathbb{K}^t$.
	We think of applying the lemma to $\mathbb{K} = \mathbb{Q}$ and $A = \mathbb{Q}[y_1,\dots, y_t]$.
	
	\begin{proof}
		Write $p = \sum_{0 \leq i_1,\dots, i_t < n} \alpha_{i_1\dots i_t} x^{i_1}_1 \cdots x^{i_t}_t$ 
		for $\alpha_{i_1,\dots, i_t} \in A$.
		Then for all $j_1, \dots, j_t \in [n]$
		\[
		p(a_{j_1}, \dots, a_{j_t}) = \sum_{0 \leq i_1,\dots, i_t < n}  a^{i_1}_{j_1} \cdots a^{i_t}_{j_t} \alpha_{i_1\dots i_t}
		\]
		Define the $\{1, \dots, n\} \times \{0, \dots, n-1\}$-matrix $V$ such that $V_{ji} \coloneqq a_{j}^i$.
		By assumption, $V$ is invertible with entries in $\mathbb{K}$.
		Hence, the Kronecker product $W \coloneqq V \otimes \dots \otimes V$ of $t$ copies of $V$  is invertible. 
		The above equation now reads as
		\[
		p(a_{j_1}, \dots, a_{j_t}) = \sum_{0 \leq i_1,\dots, i_t < n}  W_{j_1\dots j_t, i_1\dots i_t} \alpha_{i_1\dots i_t}
		\]
		Hence, for all $0 \leq i_1,\dots, i_t < n$,
		\[
		\alpha_{i_1\dots i_t} = \sum_{j_1, \dots, j_t \in [n]} W^{-1}_{i_1\dots i_t, j_1\dots j_t} p(a_{j_1}, \dots, a_{j_t}).
		\]
		Thus, the coefficients $\alpha_{i_1\dots i_t}$ are linear combinations of the $p(a_{j_1}, \dots, a_{j_t})$ with coefficients in $\mathbb{K}$.
	\end{proof}

	\begin{corollary}
		\label{cor:interpolationTrickCircuits}
		Let $\bbF$ be a field. 	
		Let $p \in \bbF[x_1, \dots, x_t, y_1,\dots y_r]$. 
		Let $n$ be an upper bound on the degree of any of the variables $x_i$ in $p$.
		Let $q \in \bbF[y_1,\dots y_r]$ be the coefficient of a monomial in the $x_i$-variables in $p$, when $p$ is viewed as a polynomial in $(\bbF[y_1,\dots y_r])[x_1,\dots, x_t]$.
		If $p$ admits a circuit representation of size $s$, then $q$ admits a circuit representation of size at most $n^t \cdot s$.
		If the circuit for $p$ is $\Gamma$-symmetric for a group $\Gamma$ that acts on the $y_i$-variables and fixes the $x_i$-variables pointwise, then so is the the circuit for $q$.
	\end{corollary}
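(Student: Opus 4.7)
The plan is to apply the multivariate polynomial interpolation lemma \cref{lem:multivariate-polynomial-interpolation} directly, taking $A = \bbF[y_1,\dots,y_r]$ and viewing $p$ as an element of $A[x_1,\dots,x_t]$ whose degree in each $x_i$ is at most $n-1$.

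First, I would pick $n$ distinct field elements $a_1,\dots,a_n \in \bbF$ (assuming $\bbF$ is large enough; otherwise one passes to a field extension, which is harmless for the circuit construction). For each tuple $(j_1,\dots,j_t)\in[n]^t$, I form the subcircuit $C_{j_1\cdots j_t}$ obtained from the given circuit for $p$ by replacing each input gate labelled $x_i$ with the constant $a_{j_i}$. Each such circuit computes $p(a_{j_1},\dots,a_{j_t},y_1,\dots,y_r)\in\bbF[y_1,\dots,y_r]$ and has size at most $s$. Laying out all $n^t$ such circuits in parallel uses size at most $n^t\cdot s$.

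Next, by \cref{lem:multivariate-polynomial-interpolation} applied in the algebra $A=\bbF[y_1,\dots,y_r]$, the target coefficient $q$ is a $\bbF$-linear combination
\[
q = \sum_{(j_1,\dots,j_t)\in[n]^t} \lambda_{j_1\cdots j_t}\, p(a_{j_1},\dots,a_{j_t},y_1,\dots,y_r)
\]
where the $\lambda_{j_1\cdots j_t}\in\bbF$ depend only on $a_1,\dots,a_n$ and on the chosen monomial. Hence a circuit for $q$ is obtained by taking the linear combination of the $C_{j_1\cdots j_t}$ with the scalars $\lambda_{j_1\cdots j_t}$ as multiplicative constants, giving total size at most $n^t\cdot s$ (with the asymptotically negligible overhead for the final sum and scalar multiplications absorbed into the bound as stated).

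Finally, for the symmetry claim: since $\Gamma$ acts trivially on the variables $x_1,\dots,x_t$ and preserves every field element, substituting constants $a_{j_i}\in\bbF$ for the $x_i$-inputs commutes with the action of $\Gamma$ on the circuit. Thus each $C_{j_1\cdots j_t}$ is again $\Gamma$-symmetric, and the final linear combination, whose coefficients $\lambda_{j_1\cdots j_t}$ lie in $\bbF$ and are therefore fixed by $\Gamma$, is also $\Gamma$-symmetric. There is no real obstacle here; the only point requiring care is confirming that the scalar interpolation coefficients do not disturb symmetry, which is immediate because $\Gamma$ fixes $\bbF$ pointwise.
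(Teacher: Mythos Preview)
Your proposal is correct and follows exactly the same approach as the paper's proof: apply \cref{lem:multivariate-polynomial-interpolation} to write $q$ as an $\bbF$-linear combination of $n^t$ evaluations of $p$ at field constants, realise each evaluation by substituting constants into copies of the circuit for $p$, and observe that these substitutions preserve $\Gamma$-symmetry because $\Gamma$ fixes the $x_i$ pointwise. Your write-up is in fact more detailed than the paper's (e.g.\ you flag the field-size issue and spell out why the interpolation coefficients do not disturb symmetry), but the underlying argument is identical.
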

	\begin{proof}
		\cref{lem:multivariate-polynomial-interpolation} tells us that we can express the desired $q$ as a linear combination of $n^t$ many different evaluations of $p(x_1, \dots, x_t)$. Therefore, we just need $n^t$ many copies of the circuit for $p$, where different constants are hardwired to the inputs in each copy.
		If the circuit for $p$ is symmetric under a group $\Gamma$ that acts on the $y$-variables and fixes the $x$-variables pointwise, then substituting the $x$-variables with constants does not change these symmetries, so the resulting circuit are a sum of $\Gamma$-symmetric circuits.
	\end{proof}

	\section{\textsmaller{CFI} Graphs}
	\label{sec:cfi}
	In this section, we recall the \textsmaller{CFI} graphs studied in \cite{roberson_oddomorphisms_2022} and some of their properties.
	In the literature, this construction is known as \textsmaller{CFI} graphs without internal vertices.
	Throughout, we highlight properties of bipartite \textsmaller{CFI} graphs.
	
	For a connected simple graph $G$ and $v \in V(G)$, write $E(v) \subseteq E(G)$ for the set of edges incident to~$v$.
	
	\begin{definition} \label{def:cfi}
		For a connected simple graph $G$ and a vector $U \in \mathbb{Z}_2^{V(G)}$, define the \textsmaller{CFI} graph $G_U$ via
		\begin{align*}
			V(G_U) &\coloneqq \left\{(v, T) \ \middle| \ v \in V(G), T \in \mathbb{Z}_2^{E(v)}, \sum_{e \in E(v)} T(e) = U(v) \right\}, \\
			E(G_U) &\coloneqq \left\{(v, T)(u, S) \ \middle|\ vu \in E(G), T(uv) + S(uv) = 0 \right\}.
		\end{align*}
	\end{definition}
	The set of vertices of the form $(v,T) \in V(G_U)$ associated with $v \in V(G)$ is also called the \emph{CFI gadget} for $v$.
	We write $\gamma(G) \coloneqq \sum_{v\in V(G)}2^{\deg_G(v) -1}$ for the size of $G_U$.
	Even though we have defined one graph $G_U$ for every vector $U \in \mathbb{Z}_2^{V(G)}$, there are in fact only two such graphs up to isomorphism.
	\begin{lemma}[{\cite[Corollary~3.7]{roberson_oddomorphisms_2022}}] \label{lem:cfi-hom-base-graph}
		For a connected simple graph $G$ and a vector $U \in \mathbb{Z}_2^{V(G)}$, the following are equivalent:
		\begin{enumerate}
			\item $\sum_{v \in V(G)} U(v) = 0$,
			\item $G_0 \cong G_U$,
			\item $\hom(G, G_0) = \hom(G, G_U)$.
		\end{enumerate}
	\end{lemma}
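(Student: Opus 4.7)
The plan is to establish the cycle of implications $(1) \Rightarrow (2) \Rightarrow (3) \Rightarrow (1)$. The implication $(2) \Rightarrow (3)$ is immediate from the isomorphism-invariance of homomorphism counts.

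For $(1) \Rightarrow (2)$, I would build an isomorphism $G_U \cong G_0$ out of elementary \emph{edge-flip} isomorphisms. For any edge $e = uv \in E(G)$ and any $U$, define $\phi_e : G_U \to G_{U + \mathbf{1}_u + \mathbf{1}_v}$ by sending $(w, T)$ to itself if $w \notin \{u,v\}$, and to $(w, T + \mathbf{1}_e)$ if $w \in \{u,v\}$. A direct check shows this respects the parity constraint (flipping one coordinate at $u$ and at $v$ shifts each sum by $1$), it is its own inverse on vertices, and it preserves edges (across $e$ both endpoints flip the same coordinate, so $T(e) + S(e)$ is unchanged; other edges are unaffected). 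Now if $\sum_v U(v) = 0$, the support of $U$ has even size; pick a spanning tree of $G$, pair up the support vertices, and for each pair use the path in the tree between them to compose edge-flips, cancelling the twist at both endpoints while introducing cancelling pairs along the path. Iterating gives an isomorphism $G_U \cong G_0$.

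For the contrapositive of $(3) \Rightarrow (1)$, assume $\sum_v U(v) = 1$ and exhibit $\hom(G,G_0) \neq \hom(G, G_U)$. Let $\rho : G_U \to G$ be the natural projection. Partitioning by the shadow, $\hom(G, G_U) = \sum_{\sigma : G \to G} L_U(\sigma)$ where $L_U(\sigma)$ counts lifts $h$ with $\rho h = \sigma$. A lift chooses, for each vertex $v$ of $G$, a vector $T_v \in \mathbb{Z}_2^{E(\sigma(v))}$ subject to the parity constraint $\sum_e T_v(e) = U(\sigma(v))$ and, for each edge $vw$, the agreement condition $T_v(\sigma(vw)) = T_w(\sigma(vw))$. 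This is an affine system over $\mathbb{Z}_2$ whose solution space, when non-empty, has size $2^{d(\sigma)}$ for some integer $d(\sigma)$ depending only on $\sigma$. The consistency of the system is a parity condition on $U$: for the identity $\sigma = \mathrm{id}$ it is precisely $\sum_v U(v) = 0$, so the identity's contribution drops from $2^{d(\mathrm{id})} > 0$ in the $U = 0$ case to $0$ in the twisted case.

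The main obstacle is ensuring these identity-projection contributions are not cancelled by shifts in $L_U(\sigma)$ for other $\sigma$. The cleanest route is to instead count embeddings $\emb(G, G_U) = \sum_U L_U$ restricted to injective $h$, so $\rho h$ ranges only over automorphisms of $G$; each such automorphism $\sigma$ satisfies $\sum_v U(\sigma(v)) = \sum_v U(v)$, so either every automorphism admits lifts or none does, and one deduces $\emb(G, G_0) > 0 = \emb(G, G_U)$. General homomorphism counts are then recovered via Möbius inversion over the partition lattice (cf.\ \cref{thm:sub-hom}), writing $\hom(G, -) = \sum_{\pi} \emb(G/\pi, -)$ and observing that each $G/\pi$ is a quotient of the connected graph $G$ to which the same argument applies with $U$ replaced by its pushforward. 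The sign and connectedness bookkeeping in this inversion step is the delicate part and is where one would invoke the weak-oddomorphism machinery of~\cite{roberson_oddomorphisms_2022} to conclude that the net difference $\hom(G, G_0) - \hom(G, G_U)$ is strictly positive.
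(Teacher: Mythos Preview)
The paper does not prove this lemma; it is quoted from \cite{roberson_oddomorphisms_2022}. So there is no paper proof to compare against, only the original source.

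Your argument for $(1)\Rightarrow(2)$ via edge-flip isomorphisms is correct and standard, and $(2)\Rightarrow(3)$ is immediate.

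For $(3)\Rightarrow(1)$ you set up exactly the right framework but then miss the one-line observation that finishes it, and instead take a detour that is both unnecessary and incompletely justified. You correctly write $\hom(G,G_U)=\sum_{\sigma\colon G\to G} L_U(\sigma)$ and observe that $L_U(\sigma)\in\{0,2^{d(\sigma)}\}$ with $d(\sigma)$ independent of $U$. The point you overlook is that when $U=0$ the all-zero assignment $T_v\equiv 0$ satisfies every parity constraint and every agreement constraint, so $L_0(\sigma)>0$, hence $L_0(\sigma)=2^{d(\sigma)}\ge L_U(\sigma)$ for \emph{every} $\sigma$. Combined with your computation that $L_U(\mathrm{id})=0$ when $\sum_v U(v)=1$, this gives $\hom(G,G_0)-\hom(G,G_U)\ge L_0(\mathrm{id})-L_U(\mathrm{id})>0$ immediately. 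This is essentially Roberson's argument.

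Your proposed detour through embeddings has two problems. First, the assertion that for an embedding $h\colon G\hookrightarrow G_U$ the projection $\rho h$ must be an automorphism of $G$ is not justified; you would need to rule out two non-adjacent vertices of $G$ landing in the same fibre, and you give no argument for this. Second, even granting that step, the M\"obius route you sketch (writing $\hom$ as a sum of $\emb$ over quotients and applying ``the same argument'' to each $G/\pi$) does not close: the graphs $G/\pi$ are quotients of $G$, but the target $G_U$ is the CFI graph of $G$, not of $G/\pi$, so the fibre structure no longer matches and your embedding argument does not transfer. You yourself flag this as the delicate part and defer to the weak-oddomorphism machinery --- but that machinery is precisely what this lemma is a corollary of, so the argument becomes circular.
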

	
	By virtue of \cref{lem:cfi-hom-base-graph}, we define $G_1$ as any of the graph $G_U$ for $\sum_{v\in V(G)} U(v) = 1$.
	The graphs $G_0$ and $G_1$ are the \emph{even} and the \emph{odd \textsmaller{CFI} graphs} of $G$.
	
	In \cite{roberson_oddomorphisms_2022}, a characterisation of the graphs $F$ was given that are such that $\hom(F, G_0) = \hom(F, G_1)$ for some connected graph $G$.
	We recall this characterisation for the purpose of proving \cref{lem:non-uniform-closure}.
	
	\begin{definition}[{\cite[Definition~3.9]{roberson_oddomorphisms_2022}}]\label{def:oddomorphism}
		Let $F$ and $G$ be simple graphs and $\phi \colon F \to G$ be a homomorphism.
		A vertex $a \in V(F)$ is \emph{$\phi$-even} / \emph{$\phi$-odd} if $|N_F(a) \cap \phi^{-1}(v)|$ is even / odd for all $v \in N_G(\phi(a))$.
		The map $\phi$ is an \emph{oddomorphism} if 
		\begin{enumerate}
			\item every vertex $a \in V(F)$ is $\phi$-odd or $\phi$-even and
			\item every fibre $\phi^{-1}(v) \subseteq V(F)$ for $v \in V(G)$ contains an odd number of $\phi$-odd vertices.
		\end{enumerate}
		The map $\phi$ is a \emph{weak oddomorphism} if there exists a subgraph $F' \subseteq F$ such that $\phi|_{F'} \colon F' \to G$ is an oddomorphism.
	\end{definition}

	For example, the identity map $G \to G$ is an oddomorphism.
	
	\begin{theorem}[{\cite[Theorem~3.13]{roberson_oddomorphisms_2022}}]\label{thm:rob3.13}
		Let $G$ be a connected simple graph.
		For every simple graph $F$,
		\[
			\hom(F, G_0) \geq \hom(F, G_1)
		\]
		with strict inequality if, and only if, there exists a weak oddomorphism $F \to G$.
	\end{theorem}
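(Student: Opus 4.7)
The plan is to refine $\hom(F, G_i)$ according to the canonical projection $\pi_i\colon G_i \to G$, $(v, T) \mapsto v$. Every homomorphism $\phi\colon F \to G_i$ induces $\psi := \pi_i \circ \phi\colon F \to G$, yielding the decomposition $\hom(F, G_i) = \sum_{\psi\colon F \to G} L_i(\psi)$, where $L_i(\psi)$ counts lifts of $\psi$. I will compare $L_0(\psi)$ and $L_1(\psi)$ for each $\psi$ and then translate the algebraic condition for strict inequality into the combinatorial oddomorphism condition.

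Unpacking \cref{def:cfi}, a lift of $\psi$ is a family $(T_a)_{a \in V(F)}$ with $T_a \in \mathbb{F}_2^{E(\psi(a))}$ satisfying the parity equation $\sum_{e \in E(\psi(a))} T_a(e) = U(\psi(a))$ for every $a$, together with the edge-agreement equation $T_a(\psi(a)\psi(b)) = T_b(\psi(a)\psi(b))$ for every $ab \in E(F)$. This is an affine $\mathbb{F}_2$-system whose homogeneous version admits the zero solution, so $L_0(\psi) = 2^{d(\psi)}$ for some $d(\psi) \geq 0$, whereas $L_1(\psi) \in \{0, 2^{d(\psi)}\}$ depending on consistency of the system for $U$ of odd weight (cf.\ \cref{lem:cfi-hom-base-graph}). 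Summing over $\psi$ gives the non-strict inequality. By Fredholm duality over $\mathbb{F}_2$, $L_1(\psi) = 0$ iff there exists $y\colon V(F) \to \mathbb{F}_2$ such that: (i) for every edge $e = uv \in E(G)$ and every connected component $C$ of the bipartite graph on $\psi^{-1}(\{u,v\})$ whose edges are the $F$-edges mapping to $e$, $\sum_{a \in C} y_a = 0$ in $\mathbb{F}_2$; and (ii) $\sum_{a \in \psi^{-1}(v_0)} y_a = 1$ for some $v_0 \in V(G)$, taking $U$ to be the indicator of $v_0$.

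It remains to identify the certificates $y$ with weak oddomorphisms $\psi\colon F \to G$, which is the main obstacle. For the easy direction, given an oddomorphism $\psi$, set $y_a = 1$ iff $a$ is $\psi$-odd: in each port-class bipartite graph above $e = uv$, a vertex $a \in \psi^{-1}(u)$ has degree $|N_F(a) \cap \psi^{-1}(v)|$, which is odd iff $a$ is $\psi$-odd; handshaking within the component forces (i), while fibre condition~2 of \cref{def:oddomorphism} gives (ii). Weak oddomorphisms are handled by running the argument inside the witness subgraph $F'$ and extending $y$ by zero outside $V(F')$, noting that port classes of $F$ split into port classes of $F'$ plus extra components that contribute zero. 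The harder direction is the converse: given $y$ satisfying (i)–(ii), one must extract a subgraph $F' \subseteq F$ (e.g.\ spanned by the support of $y$) on which $\psi|_{F'}$ is a genuine oddomorphism. The delicate point is that (i) constrains only sums over whole port classes, not individual neighbourhoods, so guaranteeing that every $a \in V(F')$ is uniformly $\psi|_{F'}$-odd or $\psi|_{F'}$-even across all $v \in N_G(\psi(a))$ requires choosing $F'$ extremally with respect to a carefully chosen property, and then combining the port-class parity condition with an inductive parity argument on the components of the preimage graphs.
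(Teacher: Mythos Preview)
This theorem is quoted from \cite{roberson_oddomorphisms_2022} and is not proved in the present paper; it is imported as a black box (the bipartite \cref{thm:rob3.13-bipartite} later cites \cite[Theorems~3.6 and~3.13]{roberson_oddomorphisms_2022} for the relevant steps). So there is no paper-proof to compare against, and your proposal is an attempt to reconstruct Roberson's argument.

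Your framework is the right one and matches Roberson's: decompose by the projected homomorphism $\psi = \pi_i \circ \phi$, observe that lifts are governed by an affine $\mathbb{F}_2$-system, and invoke the Fredholm alternative. The dual characterisation you write down --- existence of $y$ satisfying (i) and (ii) --- is correct, and your proof of the easy direction (weak oddomorphism $\Rightarrow$ certificate $y$) via handshaking in the fibre-bipartite graphs $H_e$ is complete.

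The gap is the converse. You correctly sense that one must produce $F'$ from $y$, but the gesture toward an ``extremal'' choice and an ``inductive parity argument'' is unnecessary and does not lead anywhere concrete. The clean construction is this: since $y$ satisfies (i), for every $e = uv \in E(G)$ the $\mathbb{F}_2$-boundary problem on $H_e$ (find an edge-labelling whose vertex-sums equal $y$) is solvable; pick any solution $z$. Each $F$-edge lies over a unique $G$-edge, so these choices assemble into a global $z \colon E(F) \to \mathbb{F}_2$ without any compatibility issue. Let $F'$ be the spanning subgraph of $F$ with edge set $\{ab : z_{ab} = 1\}$. For every vertex $a$ and every $v \in N_G(\psi(a))$, the $F'$-degree of $a$ toward $\psi^{-1}(v)$ equals $\sum_{b} z_{ab} \equiv y_a \pmod 2$ by construction; hence $a$ is $\psi|_{F'}$-odd iff $y_a = 1$ and $\psi|_{F'}$-even otherwise, giving condition~1 of \cref{def:oddomorphism}. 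For condition~2, sum (i) over all components of a fixed $H_e$ to obtain $Y_u = Y_v$ in $\mathbb{F}_2$, where $Y_v \coloneqq \sum_{a \in \psi^{-1}(v)} y_a$; connectedness of $G$ then propagates $Y_{v_0} = 1$ to $Y_v = 1$ for every $v$, so each fibre has an odd number of $\psi|_{F'}$-odd vertices. No extremal or inductive argument is needed.
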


	The details of \cref{def:oddomorphism} are irrelevant for this article.
	In the following observation, we state some straightforward properties of weak oddomorphisms.
	\begin{observation} \label{obs:oddo-surjective}
		Let $F$ and $G$ be simple graphs. Every weak oddomorphism $\phi \colon F\to G$ is surjective on vertices and edges.
	\end{observation}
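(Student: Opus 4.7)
The claim is really a direct unpacking of \cref{def:oddomorphism}, so my plan is to prove both surjectivity statements from the two clauses in the definition of an oddomorphism, applied to the witnessing subgraph $F' \subseteq F$ for which $\phi|_{F'}\colon F' \to G$ is an oddomorphism. Since any edge of $F'$ is also an edge of $F$, it suffices to show that $\phi|_{F'}$ is surjective on vertices and edges.

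For vertex surjectivity, I will take an arbitrary $v \in V(G)$ and invoke the second clause of \cref{def:oddomorphism}: the fibre $(\phi|_{F'})^{-1}(v)$ contains an odd, hence positive, number of $\phi|_{F'}$-odd vertices. In particular it is non-empty, so $\phi|_{F'}$ hits every vertex of $G$. This step is entirely routine and I expect no obstacles.

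For edge surjectivity, let $vw \in E(G)$ be arbitrary. By the previous paragraph there exists a $\phi|_{F'}$-odd vertex $a \in (\phi|_{F'})^{-1}(v)$. Since $a$ is $\phi|_{F'}$-odd and $w \in N_G(v) = N_G(\phi|_{F'}(a))$, the first clause of \cref{def:oddomorphism} yields that $|N_{F'}(a) \cap (\phi|_{F'})^{-1}(w)|$ is odd, hence strictly positive. Picking any element $b$ of this set gives an edge $ab \in E(F')$ with $\phi|_{F'}(ab) = vw$. Thus every edge of $G$ is hit by some edge of $F' \subseteq F$, which finishes the proof.

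There is really no hard step here; the only thing to watch is to pick a $\phi|_{F'}$-odd rather than merely $\phi|_{F'}$-even vertex in the fibre over $v$, since only the odd vertices are guaranteed to produce an incident edge mapping to $vw$ via clause~(1) of \cref{def:oddomorphism}. This is ensured by clause~(2), which provides an odd (and therefore positive) count of odd vertices in each fibre.
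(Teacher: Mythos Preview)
Your proof is correct and follows essentially the same argument as the paper: clause~(2) of \cref{def:oddomorphism} gives vertex surjectivity, and then picking a $\phi|_{F'}$-odd vertex in a fibre and applying clause~(1) gives edge surjectivity. You are in fact slightly more explicit than the paper in passing to the witnessing subgraph $F'$ for the weak case, but the underlying idea is identical.
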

	\begin{proof}
		For every $v \in V(G)$, $\phi^{-1}(v)$ contains an odd number of $\phi$-odd vertices. Hence, $\phi^{-1}(v)$ is non-empty and $\phi$ is surjective on vertices.
		Let $uv \in E(G)$ be an edge and $a \in \phi^{-1}(v)$ be some $\phi$-odd vertex. 
		Then $a$ has an odd number of neighbours in $\phi^{-1}(u)$.
		In particular, there is a neighbour $b \in
                \phi^{-1}(u)$ of $a$ and $\phi(ab) = uv$.
		Hence, $\phi$ is surjective on edges.
	\end{proof}
	
	\begin{theorem}[{\cite[Lemma~12, Corollary~13]{neuen_homomorphism-distinguishing_2024}}]\label{thm:neuen}
		\begin{enumerate}
			\item Let $G$ be a connected simple graph and $k \in \mathbb{N}$.
			If $\tw(G) \geq k$, then $G_0$ and $G_1$ are $\mathcal{C}^k$-equivalent.
			\item For all simple graphs $F$ and $G$ for
                          which there exists a weak oddomorphism $F
                          \to G$, we have $\tw(F) \geq \tw(G)$.
		\end{enumerate}
	\end{theorem}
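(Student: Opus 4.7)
The plan is to handle the two parts separately, treating part 1 as the genuine combinatorial content and deducing part 2 from part 1 via hom\-omorphism indistinguishability.

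For part 1, I would use the $k$-pebble bijective game characterisation of $\mathcal{C}^k$-equivalence recalled in the preliminaries. The goal is to exhibit a winning strategy for Duplicator on the pair $(G_0, G_1)$ of \textsmaller{CFI} graphs. The standard way to do this is to combine two ingredients. First, a \emph{twist-transport lemma} for the \textsmaller{CFI} gadgets: for any set $U \subseteq V(G)$ and any two vectors $X, Y \in \mathbb{Z}_2^{V(G)}$ with $\sum X = \sum Y$ whose supports are disjoint from $U$, there is an isomorphism $\phi \colon G_X \to G_Y$ that fixes every gadget over $U$ pointwise. Using \cref{lem:cfi-hom-base-graph}, this lets Duplicator freely relocate the ``parity discrepancy'' between $G_0$ and $G_1$ along any path in $G$ that avoids the currently pebbled gadgets. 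Second, the treewidth hypothesis $\tw(G) \geq k$ is used through the cops-and-robbers game: $\tw(G) \geq k$ is equivalent to the existence of a robber strategy against $k-1$ cops (or $k$ in the standard convention, depending on the off-by-one). At each round Duplicator plays the bijection $f \colon V(G_0) \to V(G_1)$ obtained from the twist-transport lemma, where the twist is placed at the vertex of $G$ selected by the robber strategy against the set of base-graph vertices $\rho(\vec{a})$ of currently pebbled gadgets. The resulting partial map is always a partial isomorphism because the transport is an actual isomorphism on the visible part.

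For part 2, I would argue by contraposition. Suppose $\tw(F) < \tw(G) =: k$. By part 1 applied to $G$, we have $G_0 \equiv_{\mathcal{C}^{k}} G_1$. By the Dvo\v{r}\'ak / Dell-Grohe-Rattan theorem (cited in the introduction), $\mathcal{C}^{k}$-equivalence is equivalent to $\hom(F',\cdot)$-indistinguishability for all $F'$ with $\tw(F') < k$. Since $\tw(F) < k$, this yields $\hom(F, G_0) = \hom(F, G_1)$. By \cref{thm:rob3.13}, this rules out the existence of a weak oddomorphism $F \to G$, contradicting the hypothesis. So in fact $\tw(F) \geq \tw(G)$, as required.

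The main obstacle will be part 1 and, within it, writing down the twist-transport lemma \emph{without} internal vertices. In the classical \textsmaller{CFI} construction the gadgets have internal vertices that act as a buffer and make the transport argument routine; in the \cite{roberson_oddomorphisms_2022} construction used here, parity is carried by the vertex labels $T \in \mathbb{Z}_2^{E(v)}$ directly, so one must check that edge-parities can still be re-routed along a path without disturbing any gadget whose projection under $\rho$ is pebbled. I would handle this by splitting a desired twist-move $X \to Y$ with $X = Y + \chi_u + \chi_v$ into a sequence of elementary moves along a single edge $uv$, each realised by the map that flips $T(uv)$ at both endpoints, and chaining these along a $u$-to-$v$ path in $G$ avoiding the pebbled base-vertices, which exists by the robber strategy. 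The off-by-one between $k$-treewidth, $k$-cops, and $\mathcal{C}^k$ versus $\mathcal{C}^{k+1}$ is a secondary technical annoyance that must be tracked carefully to make part 2's invocation of Dvo\v{r}\'ak come out right.
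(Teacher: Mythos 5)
The paper does not actually prove this statement: it is imported verbatim from \cite[Lemma~12, Corollary~13]{neuen_homomorphism-distinguishing_2024}, so there is no in-paper argument to compare against. Your reconstruction is, as far as I can tell, essentially the proof given in that cited source: part~1 is the standard \textsmaller{CFI} indistinguishability argument (twist-transport along robber-escape paths), adapted to the internal-vertex-free gadgets of \cite{roberson_oddomorphisms_2022}, and part~2 is exactly Neuen's derivation of his Corollary~13 from his Lemma~12 — contrapose, use Dvo\v{r}\'ak/Dell--Grohe--Rattan to turn $\mathcal{C}^{k}$-equivalence of $G_0,G_1$ into $\hom(F,G_0)=\hom(F,G_1)$ for $\tw(F)<k$, and contradict \cref{thm:rob3.13}. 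Your elementary edge-flip move is correct: sending $(w,T)$ to $(w,T+\chi_{uv}|_{E(w)})$ for $w\in\{u,v\}$ and fixing everything else is an isomorphism $G_X\to G_{X+\chi_u+\chi_v}$, and the indexing in part~2 comes out right ($\mathcal{C}^k$-equivalence corresponds to homomorphism indistinguishability over treewidth $\leq k-1$, which is what $\tw(F)<\tw(G)=k$ feeds into).

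Two points to tighten. First, part~2 is stated for \emph{all} simple graphs $G$, but both part~1 and \cref{thm:rob3.13} require $G$ connected (\cref{def:cfi} is only given for connected $G$). You should note that a weak oddomorphism restricts to one onto each connected component of $G$ (edges cannot cross components, and the fibre and parity conditions are component-local), so you may replace $G$ by a component of maximum treewidth and $F$ by the preimage of that component before running the contrapositive. Second, in part~1 the step where Duplicator must commit to a bijection \emph{before} seeing where Spoiler drops the new pebble — so the twist vertex $w'$ might itself get pebbled — is the one genuinely delicate point of the \textsmaller{CFI} game argument; your sketch elides it. It is handled in the standard way (assemble the bijection gadget-by-gadget, using on each gadget $\rho^{-1}(v)$ the transport isomorphism whose twist sits at the robber's escape from cop position $v$), and this is worked out in the cited source, but as written your proposal does not yet contain that resolution.
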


	\subsection{Bipartite \textsmaller{CFI} Graphs}
	
	In this section, the results stated above are adjusted for bipartite graphs with fixed bipartition.
	Recall \cref{def:cfi}.
	The map $\rho \colon G_U \to G$ given by $(v, T) \mapsto v$ is a homomorphism.
	Hence, if $G$ is bipartite,
	then so is $G_U$.
	If the fixed bipartition of $G$ is $A \uplus B$,
	then we fix the bipartition $\rho^{-1}(A) \uplus \rho^{-1}(B)$ of~$G_U$.
	We write $\gamma_A(G) \coloneqq \sum_{v\in A}2^{\deg_G(v) -1}$ and $\gamma_B(G) \coloneqq \sum_{v\in B}2^{\deg_G(v) -1}$ for the size of left and the right side of the bipartition of $G_U$.
	
	\begin{corollary}\label{thm:rob3.13-bipartite}
		Let $G$ be a connected simple bipartite graph.
		For every simple bipartite graph $F$,
		\[
		\hom(F, G_0) \geq \hom(F, G_1)
		\]
		with strict inequality if, and only if, there exists a bipartition-respecting weak oddomorphism $F \to G$.
	\end{corollary}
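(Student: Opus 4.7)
The plan is to deduce this corollary from the non-bipartite \cref{thm:rob3.13} via a per-homomorphism decomposition of the hom-count difference. First I would observe that the projection $\rho \colon G_U \to G$, $(v,T) \mapsto v$, respects the fixed bipartitions: by definition the bipartition of $G_U$ is $\rho^{-1}(A) \uplus \rho^{-1}(B)$. Consequently, a homomorphism $h \colon F \to G_i$ is bipartition-respecting (with respect to the fixed bipartition of $F$) if and only if $\rho \circ h \colon F \to G$ is. Grouping bipartition-respecting homomorphisms $F \to G_i$ by their projection then gives
\[
	\hom_{\mathrm{bip}}(F, G_i) = \sum_{\phi} \bigl\lvert \{ h \colon F \to G_i \mid \rho \circ h = \phi \} \bigr\rvert,
\]
where $\phi$ ranges over bipartition-respecting homomorphisms $F \to G$.

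Next I would invoke a per-$\phi$ refinement of \cref{thm:rob3.13}: for every homomorphism $\phi \colon F \to G$, the number of lifts to $G_0$ is at least the number of lifts to $G_1$, with strict inequality if and only if $\phi$ is a weak oddomorphism. This is in fact the content of the proof in~\cite{roberson_oddomorphisms_2022}: for fixed $\phi$, the set of lifts is cut out from $\prod_{v \in V(F)} \mathbb{Z}_2^{E(\phi(v))}$ by a linear system over $\mathbb{Z}_2$ (the per-vertex parity conditions from \cref{def:cfi} and, for each edge $uv$ of $F$, the cocycle condition on the coordinate $\phi(u)\phi(v)$). Changing the parity vector $U$ at a single base vertex either preserves solvability of the system (so the lift counts for $G_0$ and $G_U$ agree) or breaks it in exactly the pattern captured by \cref{def:oddomorphism}. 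Crucially, this analysis depends only on the abstract graph $G$ and not on any bipartition.

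Combining the two observations yields
\[
	\hom_{\mathrm{bip}}(F, G_0) - \hom_{\mathrm{bip}}(F, G_1)
	= \sum_{\substack{\phi \colon F \to G \\ \text{bip-respecting}}} \bigl( \lvert \mathrm{Lift}_0(\phi) \rvert - \lvert \mathrm{Lift}_1(\phi) \rvert \bigr) \geq 0,
\]
with strict inequality precisely if some bipartition-respecting $\phi \colon F \to G$ is a weak oddomorphism, which is the desired statement.

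The main obstacle is establishing the per-$\phi$ refinement cleanly: while it is the actual intermediate result used to prove \cref{thm:rob3.13} in~\cite{roberson_oddomorphisms_2022}, it is not packaged there as a standalone lemma. I would therefore either cite the relevant passage of~\cite[\S 3]{roberson_oddomorphisms_2022} explicitly or, if a self-contained argument is preferred, reprove the cocycle computation in a short lemma and then apply it twice: once in the non-bipartite setting (to recover \cref{thm:rob3.13}) and once restricted to bipartition-respecting~$\phi$ to obtain the present corollary.
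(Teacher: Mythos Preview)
Your proposal is correct and follows essentially the same approach as the paper: decompose the bipartition-respecting hom-count by the projected homomorphism $\phi = \rho \circ h$, then invoke the per-$\phi$ refinement from \cite{roberson_oddomorphisms_2022}. The paper resolves your packaging concern by citing \cite[Theorem~3.6]{roberson_oddomorphisms_2022} for the per-$\phi$ inequality and the proof of \cite[Theorem~3.13]{roberson_oddomorphisms_2022} for the strict-inequality characterisation, so no separate reproof of the cocycle computation is needed.
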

	Note that the identity map $G \to G$ is a bipartition-respecting weak oddomorphism. Hence, $\hom(G, G_0) > \hom(G, G_1)$.
	\begin{proof}
		For a homomorphism $\psi \colon F\to G$, write $\hom_\psi(F, G_i)$ for the number of homomorphisms $h \colon F \to G_i$ such that $\rho h = \psi$.
		Note that
		\[
			\hom(F, G_i) = \sum_{\substack{\psi \colon F \to G \\ \psi \text{ respects bipartitions}}} \hom_\psi(F, G_i). 
		\]
		By \cite[Theorem~3.6]{roberson_oddomorphisms_2022},
		$\hom_\psi(F, G_0) \geq \hom_\psi(F, G_1)$ for every homomorphism $\psi \colon F \to G$.
		This implies that $\hom(F, G_0) \geq \hom(F, G_1)$ where $F$, $G_0$, and $G_i$ are regarded as graphs with fixed bipartition.
		
		It holds that $\hom(F, G_0) > \hom(F, G_1)$ if, and only if, there exists a bipartition-respecting homomorphism $\psi \colon F\to G$ such that $\hom_\psi(F, G_0) > \hom_\psi(F, G_1)$.
		By the proof of \cite[Theorem~3.13]{roberson_oddomorphisms_2022}, $\hom_\psi(F, G_0) > \hom_\psi(F, G_1)$ if, and only if, $\psi$ is a weak oddomorphism.
	\end{proof}

	\begin{corollary}\label{cor:sub-cfi}
		For every connected simple bipartite graph $G$,
		$\emb(G, G_0) > \emb(G, G_1)$.
	\end{corollary}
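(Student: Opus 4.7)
The plan is to prove $\emb(G, G_0) > \emb(G, G_1)$ by combining the standard hom-to-embedding decomposition with the strict inequality $\hom(G, G_0) > \hom(G, G_1)$ provided by Corollary~\ref{thm:rob3.13-bipartite} (applied using $\id_G$ as a bipartition-respecting weak oddomorphism), while showing that the contributions from all proper quotients vanish.

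First, I would record the standard factorisation: every bipartition-respecting homomorphism $h \colon F \to G_i$ factors uniquely as a surjection $F \twoheadrightarrow F/\pi_h$ followed by an embedding $F/\pi_h \hookrightarrow G_i$, where $\pi_h$ is the partition of $V(F)$ by $h$-fibres. Since $G_i$ is loopless and $h$ preserves the bipartition, $\pi_h$ is bipartition-respecting and each of its parts is independent in $F$. Summing over $\pi$ yields
\[
\hom(F, G_i) = \sum_{\pi} \emb(F/\pi, G_i),
\]
where $\pi$ ranges over bipartition-respecting partitions of $V(F)$ with independent parts. Writing $\Phi(F) \coloneqq \hom(F, G_0) - \hom(F, G_1)$ and $\Psi(F) \coloneqq \emb(F, G_0) - \emb(F, G_1)$, this becomes $\Phi(F) = \sum_{\pi} \Psi(F/\pi)$.

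Next, I claim that $\Psi(F) = 0$ for every bipartite graph $F$ with $|V(F)| < |V(G)|$. Indeed, by Observation~\ref{obs:oddo-surjective} every weak oddomorphism is surjective on vertices, so no bipartition-respecting weak oddomorphism from such an $F$ to $G$ can exist; Corollary~\ref{thm:rob3.13-bipartite} therefore gives $\Phi(F) = 0$. A straightforward induction on $|V(F)|$, using the identity $\Phi(F) = \Psi(F) + \sum_{\pi \neq \bot} \Psi(F/\pi)$ together with the fact that $|V(F/\pi)| < |V(F)|$ for $\pi \neq \bot$, promotes this to $\Psi(F) = 0$.

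Finally, applying the decomposition to $F = G$ itself: every proper quotient $G/\pi$ (for $\pi \neq \bot$) has strictly fewer vertices than $G$, so by the claim $\Psi(G/\pi) = 0$, yielding $\Psi(G) = \Phi(G)$. The identity map $\id_G \colon G \to G$ is a bipartition-respecting oddomorphism (each vertex is $\id_G$-odd since its fibre is a singleton), so Corollary~\ref{thm:rob3.13-bipartite} gives $\Phi(G) > 0$ and hence $\emb(G, G_0) > \emb(G, G_1)$. The only subtlety I foresee is correctly setting up the hom-to-embedding decomposition in the bipartition-respecting category; the cancellation of the proper-quotient terms via the vertex-cardinality obstruction to surjectivity is routine.
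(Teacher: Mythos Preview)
Your proof is correct and rests on the same two key ingredients as the paper's argument: Corollary~\ref{thm:rob3.13-bipartite} (strict inequality iff a bipartition-respecting weak oddomorphism exists) and Observation~\ref{obs:oddo-surjective} (weak oddomorphisms are vertex-surjective, so no proper quotient of $G$ admits one to $G$). The difference lies in how you relate $\emb$ and $\hom$. The paper applies Möbius inversion (\cref{thm:sub-hom}) to write $\emb(G,G_i)=\sum_{\pi,\sigma}\mu_\pi\mu_\sigma\hom(G/(\pi,\sigma),G_i)$ directly; then every term with $(\pi,\sigma)\neq(\bot,\bot)$ cancels in the difference $\emb(G,G_0)-\emb(G,G_1)$ in a single stroke, since the corresponding $\hom$-values already agree. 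You instead use the uninverted identity $\hom(F,G_i)=\sum_\pi\emb(F/\pi,G_i)$ and run an induction on $|V(F)|$ to upgrade $\Phi(F)=0$ to $\Psi(F)=0$ for all $F$ smaller than $G$. Your route avoids citing the Möbius formula and is arguably more self-contained, but at the cost of the extra inductive step; the paper's route is a one-liner once \cref{thm:sub-hom} is available. Either way the cancellation mechanism is the same: proper quotients have too few vertices to surject onto $G$.
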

	\begin{proof}
		Write $A \uplus B$ for the fixed bipartition of $G$.
		By \cref{thm:sub-hom}, for $i \in \{0,1\}$,
		\[
		\emb(G, G_i) = \sum_{\pi \in \Pi(A)} \sum_{\sigma \in \Pi(B)} \mu_\pi \mu_\sigma \hom(G/(\pi, \sigma), G_i).
		\]
		Here, $G/(\pi, \sigma)$ can be assumed to be simple by discarding possible multiedges.
		The number of vertices in $G/(\pi, \sigma)$ is $|A/\pi|$ on the left side and $|B/\sigma|$ on the right side.
		By \cref{thm:rob3.13-bipartite,obs:oddo-surjective},
		\begin{align*}
			\emb(G, G_0) - \emb(G, G_1) 
			&= \mu_\bot \sigma_\bot \left( \hom(G/(\bot, \bot), G_0) - \hom(G/(\bot, \bot), G_1) \right) \\
			&= \hom(G, G_0) - \hom(G, G_1)
		\end{align*}
		where $\bot$ denotes the discrete partitions of $A$ and $B$.
	\end{proof}

	\begin{corollary}\label{thm:neuen-bipartite}
		Let $G$ be a connected simple bipartite graph and $k \in \mathbb{N}$.
		If $\tw(G) \geq k$, then $G_0$ and $G_1$ are $\mathcal{C}^k$-equivalent as bipartite graphs with fixed bipartition.
	\end{corollary}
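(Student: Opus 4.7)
The plan is to reduce the bipartite statement to the non-bipartite \cref{thm:neuen} by observing that Duplicator's winning strategy in the standard \textsmaller{CFI} argument automatically respects the bipartition. Concretely, I would argue via the bijective $k$-pebble game on the structures $(G_0)_{A,B}$ and $(G_1)_{A,B}$ obtained from $G_0$ and $G_1$ by marking $\rho^{-1}(A)$ and $\rho^{-1}(B)$ with two extra unary predicates.

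First I would recall that \cref{thm:neuen}, which is proved in \cite{neuen_homomorphism-distinguishing_2024}, gives Duplicator a winning strategy in the bijective $k$-pebble game on the unmarked graphs $G_0$ and $G_1$. Inspecting that proof, one sees that at every position Duplicator responds with a bijection that is induced by a \emph{twist isomorphism} of the \textsmaller{CFI} construction, i.e.\ by an isomorphism $\phi\colon G_0 \to G_U$ (for some $U$ with $\sum_v U(v) = 0$) which satisfies $\rho \circ \phi = \rho$, where $\rho \colon (v,T) \mapsto v$ is the canonical projection onto the base graph $G$. This key property holds because all local \textsmaller{CFI} isomorphisms act by flipping edge-labels along paths or cycles in $G$, and never move a vertex to a different fibre of $\rho$.

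Next I would use this property to lift the strategy to the bipartition-marked game. Since the bipartition on each $G_i$ is exactly the preimage of the bipartition $A \uplus B$ of $G$ under $\rho$, any bijection that commutes with $\rho$ in the sense above sends $\rho^{-1}(A)$ to $\rho^{-1}(A)$ and $\rho^{-1}(B)$ to $\rho^{-1}(B)$. Thus the bijections Duplicator plays in the unmarked game already preserve the unary predicates for the bipartition, so the same strategy witnesses that $(G_0)_{A,B}$ and $(G_1)_{A,B}$ satisfy the same $\mathcal{C}^k$-sentences. The invariant maintained in the game is simply that the partial map given by the pebbles is a partial isomorphism of the marked structures, which follows from the unmarked invariant together with the fact that $\rho$ is respected.

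The main obstacle is justifying the claim that the Duplicator bijections in Neuen's proof commute with $\rho$. If one does not wish to open up that proof, an alternative route is to mimic the argument used for \cref{lem:ck-double-cover}: maintain the invariant that for any pebbled position $(\vec a, \vec b)$ in the marked game, the position $(\vec a, \vec b)$ is winning for Duplicator in the unmarked game on $G_0, G_1$ \emph{and} $\rho(a_i) \in A \iff \rho(b_i) \in A$. When Spoiler moves a pebble, Duplicator first obtains a winning bijection $f \colon V(G_0) \to V(G_1)$ from the unmarked strategy; one then shows that $f$ can be replaced by a bijection that additionally respects the $A$/$B$-partition, using that no $\mathcal{C}^k$-definable partition of a CFI graph can separate fibres of $\rho$ lying over the same side of $A \uplus B$. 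This second route is more self-contained but requires unfolding the twist structure of the CFI construction, whereas the first route is a one-line corollary once the projection-commuting property of Neuen's strategy is extracted.
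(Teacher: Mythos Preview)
Your first route is exactly the paper's proof: the paper simply notes that the Duplicator strategy built in \cite[Lemmas~11 and~12]{neuen_homomorphism-distinguishing_2024} respects the \textsmaller{CFI} gadgets (equivalently, commutes with the projection $\rho$), and hence already wins the game on the bipartition-marked structures. Your elaboration about twist isomorphisms and the invariant is a faithful unpacking of that one-line observation; the alternative second route is unnecessary here.
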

	\begin{proof}
		Note that the Duplicator strategy constructed in \cite[Lemmas~11 and~12]{neuen_homomorphism-distinguishing_2024} respects the gadgets of the \textsmaller{CFI} graphs.
		Thus, this strategy allows Duplicator to win even when the bipartitions are fixed.
	\end{proof}

	\section{Hereditary Treewidth, Vertex Cover Number, and Matching Number}
	\label{sec:hdtw}
	In this section, we show that the three graph parameters hereditary treewidth, vertex cover number, and matching number are functionally equivalent.
	That is, we prove \cref{lem:vc-mn-hdtw}.

	A \emph{matching} of $F$ is a set $M \subseteq E(F)$ such that for all distinct $e_1, e_2 \in M$ it holds that $e_1 \cap e_2 = \emptyset$.
	Define the \emph{matching number} $\mn(F)$ of $F$ as the size of the largest matching in $F$.
	\begin{theorem} \label{thm:mn-hdtw}
		For every graph $F$,
		$\frac12 \hdtw(F) \leq \mn(F) \leq \binom{\hdtw(F) + 2}{2}$.
	\end{theorem}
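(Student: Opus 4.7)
The plan is to prove the two inequalities separately. The lower bound $\frac{1}{2}\hdtw(F) \leq \mn(F)$ will follow by transferring matchings from an optimal quotient of $F$ back to $F$ itself via the standard chain $\tw \leq \vc \leq 2\mn$. The upper bound $\mn(F) \leq \binom{\hdtw(F)+2}{2}$ will be proved contrapositively, by taking $\binom{k+2}{2}$ matching edges of $F$ and quotienting $F$ down to a graph that contains $K_{k+2}$.

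For the lower bound, let $\phi \colon F \to F'$ be a vertex- and edge-surjective homomorphism witnessing $\tw(F') = \hdtw(F)$. A maximum matching $M'$ of $F'$ lifts to a matching of the same size in $F$: for every $e' \in M'$ edge-surjectivity supplies some $e \in \phi^{-1}(e')$, and since distinct edges of $M'$ have disjoint endpoint sets in $F'$ and $\phi$ is a function, the lifted edges are pairwise vertex-disjoint in $F$. Hence $\mn(F) \geq \mn(F')$. Combining this with the well-known inequalities $\tw(F') \leq \vc(F') \leq 2\mn(F')$ — the first because a vertex cover together with each remaining vertex yields a tree decomposition, the second because both endpoints of a maximum matching form a vertex cover by maximality — gives $\hdtw(F) \leq 2\mn(F)$.

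For the upper bound, let $k$ be the largest integer with $\binom{k+2}{2} \leq \mn(F)$ and fix $\binom{k+2}{2}$ pairwise disjoint edges of $F$, indexed by the two-element subsets of $[k+2]$. For each edge $e_{\{i,j\}}$, send one endpoint to $i$ and the other to $j$, and send every remaining vertex of $F$ to the vertex $1$. The resulting map $\phi \colon V(F) \to [k+2]$ induces a quotient multigraph $F'$ on $[k+2]$ whose underlying simple graph contains $K_{k+2}$ by construction, and $\phi$ is a vertex- and edge-surjective homomorphism $F \to F'$. Since $\tw(K_{k+2}) = k+1$ and the treewidth of a multigraph coincides with that of its underlying simple graph, one obtains $\hdtw(F) \geq k+1$. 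By maximality of $k$, this yields $\binom{\hdtw(F)+2}{2} \geq \binom{k+3}{2} > \mn(F)$, as required.

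The only real bookkeeping point is that non-matching edges of $F$ may collapse to loops or multiedges under $\phi$; but neither affects the treewidth of the underlying simple graph, nor does it prevent $\phi$ from being a valid vertex- and edge-surjective homomorphism onto $F'$. I therefore expect this step to be routine rather than the true technical obstacle of the argument.
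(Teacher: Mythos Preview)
Your lower-bound argument is correct and amounts to the same idea as the paper's, just organised differently. The paper pushes a maximum matching of $F$ forward along the surjection and directly builds a star-shaped tree decomposition of every quotient $F'$; you instead pull a maximum matching of the extremal quotient $F'$ back to $F$ (edge-surjectivity plus disjointness of the matching edges makes this immediate) and then invoke $\tw\le\vc\le 2\mn$ on $F'$. Both routes are short and valid.

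The upper-bound argument has a genuine gap. In the paper's conventions a multigraph is loopless (edges lie in $\binom{V}{2}$), and a homomorphism must send every edge of $F$ to an edge of $F'$; consequently your $\phi\colon F\to F'$ is a homomorphism only if every fibre $\phi^{-1}(i)$ is an independent set in $F$. Your construction does not guarantee this: any non-matching edge whose two endpoints land in the same fibre---for instance two ``remaining'' vertices, both sent to $1$, that happen to be adjacent in $F$, or two matched endpoints that both receive the same label $i$---forces $\phi$ to collapse an edge to a loop, so $\phi$ is not a homomorphism and $F'$ is not a legitimate witness for $\hdtw(F)$. Already $F=K_4$ with $k=0$ breaks the construction: every pair of vertices is adjacent, no nontrivial identification is admissible, and the only homomorphic image of $K_4$ is $K_4$ itself. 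You flag precisely this point in your final paragraph and dismiss it as routine bookkeeping, but it is in fact the entire technical content of this direction; one cannot simply declare the loops harmless without either changing the definition of $\hdtw$ or supplying a different construction. The paper does not attempt an explicit construction here at all: it invokes \cite[Fact~3.4]{curticapean_homomorphisms_2017} as a black box (any graph with at most $\mn(F)$ edges and no isolated vertices has treewidth at most $\hdtw(F)$) and instantiates it with the clique $K_{k+1}$.
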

	
	We first consider the upper bound:
	
	\begin{lemma}
		For every graph $F$,
		$\mn(F) \leq \binom{\hdtw(F)+2}{2}$.
	\end{lemma}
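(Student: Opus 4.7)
The plan is to use the maximum matching in $F$ to exhibit an explicit quotient of $F$ that contains a large clique, and hence has treewidth bounded below by $t+1$ for a suitable $t$. Since $\hdtw(F)$ is the maximum treewidth over all vertex- and edge-surjective homomorphic images of $F$, any such quotient witnesses a lower bound on $\hdtw(F)$, and the arithmetic choice of $t$ will match the claimed binomial bound.

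Concretely, let $k \coloneqq \mn(F)$, fix a maximum matching $M = \{u_1v_1, \dots, u_kv_k\}$ in $F$, and let $t$ be the largest nonnegative integer with $\binom{t+2}{2} \leq k$; the case $k = 0$ is trivial. Enumerate the edges of $K_{t+2}$ on vertex set $[t+2]$ as $\{a_\ell, b_\ell\}$ for $\ell \in \bigl[\binom{t+2}{2}\bigr]$. Define $\phi \colon V(F) \to [t+2]$ by $\phi(u_\ell) \coloneqq a_\ell$ and $\phi(v_\ell) \coloneqq b_\ell$ for $\ell \leq \binom{t+2}{2}$, and $\phi(x) \coloneqq 1$ for every remaining $x \in V(F)$. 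Let $\pi$ be the partition of $V(F)$ into the fibres of $\phi$, and set $F' \coloneqq F/\pi$. The canonical quotient map $F \to F'$ is the candidate surjective homomorphism.

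The key step is to show that $F'$ contains $K_{t+2}$ as a subgraph. For each $\ell$, the matching edge $u_\ell v_\ell \in E(F)$ has endpoints in the distinct classes $\phi^{-1}(a_\ell)$ and $\phi^{-1}(b_\ell)$, which by the definition of $F/\pi$ contributes an edge between the vertices $a_\ell$ and $b_\ell$ of $F'$. As $\ell$ varies, $\{a_\ell, b_\ell\}$ ranges over all pairs in $\binom{[t+2]}{2}$, so $F'$ contains every edge of $K_{t+2}$. Monotonicity of treewidth under subgraphs then gives $\tw(F') \geq \tw(K_{t+2}) = t+1$. The quotient map is edge-surjective by the definition of $F/\pi$ and vertex-surjective because every $i \in [t+2]$ appears as some $a_\ell$ or $b_\ell$, hence indexes a nonempty class. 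So $F'$ witnesses $\hdtw(F) \geq t+1$, and combined with the maximality of $t$ this yields $\mn(F) = k < \binom{t+3}{2} \leq \binom{\hdtw(F)+2}{2}$.

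I do not expect any genuine obstacle here; the only point needing care is that no matching edge is collapsed to a loop, which is ensured by $a_\ell \neq b_\ell$ (as $\{a_\ell, b_\ell\}$ is an edge of $K_{t+2}$), and that each of the $t+2$ target vertices receives a preimage, which is automatic from the enumeration since every vertex of $K_{t+2}$ is incident to some edge. The remaining parts of \cref{lem:vc-mn-hdtw}, namely $\mn(F) \leq \vc(F) \leq 2\mn(F)$, are the standard König/LP-style inequalities, and $\frac12\hdtw(F) \leq \mn(F)$ is proved separately; the construction above is what drives the only nontrivial direction.
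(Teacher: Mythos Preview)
Your approach matches the paper's: both use a maximum matching to realise a large clique as a quotient of $F$, and the paper simply cites \cite[Fact~3.4]{curticapean_homomorphisms_2017} for the step you carry out by hand (mapping matching edges onto clique edges). The arithmetic is identical up to renaming, with your $t+1$ playing the role of the paper's $k$.

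There is one genuine gap. You assert that the canonical map $F \to F/\pi$ is a surjective homomorphism, but under the paper's loopless definition of $F/\pi$ this fails whenever some edge of $F$ has both endpoints in a single fibre of $\phi$. You verify this only for the first $\binom{t+2}{2}$ matching edges; the leftover matching edges $u_\ell v_\ell$ with $\ell > \binom{t+2}{2}$ are both sent to $1$, and non-matching edges can also collapse. For instance, take $F = K_3$: then $\mn(F)=1$, $t=0$, the unmatched vertex is sent to $1$, and one of its two incident edges necessarily lies inside $\phi^{-1}(1)$, so the quotient map is not a homomorphism to the loopless $F/\pi$. The easy fix is to let $F'$ be the multigraph image of $\phi$, adding a loop wherever an edge collapses; then $\phi \colon F \to F'$ is automatically a vertex- and edge-surjective homomorphism, $F'$ still contains $K_{t+2}$ as a subgraph, and since loops do not affect treewidth you get $\tw(F') = t+1$ and the rest of your argument goes through. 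This is precisely the convention under which \cite[Fact~3.4]{curticapean_homomorphisms_2017} is stated, so the paper's own use of the clique target implicitly relies on it as well.
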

	\begin{proof}
		Let $F$ be a graph.
		By \cite[Fact~3.4]{curticapean_homomorphisms_2017},
		if $F'$ is a graph with at most $\mn( F)$ edges and no isolated vertices then $\tw(F') \leq \hdtw(F)$.
		Let $k \in \mathbb{N}$ be such that $\binom{k+2}{2} \geq \mn(F) \geq \binom{k+1}{2}$.
		Then $F'$ can be taken to be the $(k+1)$-clique, which has treewidth $k$.
		Hence, $\hdtw(F) \geq k$.
		Thus, $\binom{\hdtw(F) +2}{2} \geq \binom{k+2}{2} \geq \mn(F)$.
	\end{proof}
	
	By \cite[220]{curticapean_homomorphisms_2017}, one gets an asymptotic linear upper bound on $\mn(F)$ in terms of $\hdtw(F)$.
	
	\begin{lemma}
		For every graph $F$,
		$\hdtw(F) \leq 2\mn(F)$.
	\end{lemma}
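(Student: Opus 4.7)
The plan is to combine two standard inequalities with the monotonicity of matching number under surjective homomorphisms. Let $F'$ be a graph (respectively, bipartite multigraph with its bipartition) admitting a vertex- and edge-surjective (bipartition-preserving) homomorphism $\phi \colon F \to F'$ realising $\tw(F') = \hdtw(F)$.

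First, I would show that $\mn(F') \leq \mn(F)$. Given a matching $M' \subseteq E(F')$, choose for each $e' \in M'$ an arbitrary preimage edge $\phi^{-1}(e') \ni e_{e'} \in E(F)$, using edge-surjectivity of $\phi$. If $e'_1, e'_2 \in M'$ are distinct, then they share no endpoint in $F'$; since the endpoints of $e_{e'_i}$ are mapped by $\phi$ to the endpoints of $e'_i$, the edges $e_{e'_1}, e_{e'_2}$ share no endpoint in $F$. Hence $\{e_{e'} \mid e' \in M'\}$ is a matching in $F$ of size $|M'|$, giving $\mn(F) \geq \mn(F')$.

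Second, I would apply the classical chain $\tw(F') \leq \vc(F') \leq 2 \mn(F')$. For the first inequality, a vertex cover $C \subseteq V(F')$ yields a tree decomposition consisting of one bag equal to $C$ together with, for each $v \in V(F') \setminus C$, a bag $C \cup \{v\}$ attached to it; every edge of $F'$ is incident to $C$ and is therefore contained in some bag. This has width $|C|$. For the second inequality, if $M$ is a maximum matching of $F'$, then the set of its $2 \mn(F')$ endpoints is a vertex cover of $F'$: an uncovered edge could be added to $M$, contradicting maximality. Combining, $\hdtw(F) = \tw(F') \leq 2\mn(F') \leq 2 \mn(F)$.

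There is no real obstacle here; the only thing to double-check is that the argument in the first step goes through for the notion of hereditary treewidth actually in use (bipartition-preserving surjective homomorphisms in the bipartite multigraph setting of this paper), which is immediate because the lifting of $M'$ to $F$ does not care whether the homomorphism respects a bipartition, only that it is edge-surjective.
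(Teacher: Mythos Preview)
Your proof is correct. Both your argument and the paper's ultimately rest on the same fact---that the endpoints of a maximum matching in $F$ form a vertex cover whose structure survives passage to a homomorphic image---but you organise it differently. The paper pushes the vertex cover forward: it takes a maximum matching $M$ in $F$, lets $V$ be its endpoints, and directly builds a star-shaped tree decomposition of $F'$ with centre bag $h(V)$ and leaves $h(V)\cup\{w'\}$ for the remaining vertices $w'$, checking that no two such $w'$ are adjacent. You instead pull a matching back: you lift a maximum matching of $F'$ edge-by-edge to a matching of $F$, establishing $\mn(F')\le\mn(F)$, and then apply the standard chain $\tw(F')\le\vc(F')\le 2\mn(F')$ to $F'$ alone. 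Your route is a bit more modular, isolating the monotonicity of $\mn$ under edge-surjective maps as a clean lemma; the paper's route is a bit more hands-on but avoids introducing $\vc(F')$ explicitly. Either way the bound is the same.
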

	\begin{proof}
		We first show that $\tw(F) \leq 2\mn(F)$ by constructing a tree decomposition of width $\mn(F)$ from a maximum matching.
		
		Let $M$ be a maximum matching in $F$, i.e.\ a matching of cardinality $\mn(F)$.
		Write $V \coloneqq \bigcup_{e \in M} e$ for the set of vertices incident to matched edges.
		Every edge $e \in V(F)$ is incident to at least one vertex in $V$.
		Let $W \coloneqq V(F) \setminus V$ denote the set of vertices not incident to any matched edge.
		
		Define a tree decomposition $\beta$ over the star graph with centre $x$ and tips indexed by $w \in W$ as follows.
		Let $\beta(x) \coloneqq V$ and $\beta(w) \coloneqq \{w\} \cup V$.
		Since no two vertices in $W$ are adjacent, this is a valid tree decomposition of width $2 \mn(F)$.
		
		Now consider a surjective homomorphism $h \colon F \to F'$.
		Let $V' \coloneqq h(V) \subseteq V(F')$ and write $W' \coloneqq h(W) \setminus V \subseteq V(F')$.
		Define a tree decomposition $\beta'$ for $F'$ over the star graph with centre $x'$ and tips indexed by $w' \in W'$.
		Let $\beta'(x') \coloneqq V'$ and $\beta(w') \coloneqq \{w'\} \cup V'$.
		Every bag is of size at most $2\mn(F) + 1$ and the bags cover the entire graph $F'$ as $h$ is an edge- and vertex-surjective homomorphism.
		It is a valid decomposition because all $w'_1 \neq w'_2$ in $W'$ are non-adjacent since their preimages under $h$ are non-adjacent.
	\end{proof}
	
	%

	A \emph{vertex cover} is a set $C \subseteq V(F)$ such that every edge $e \in E(F)$ is incident to $C$, i.e.\ $e \cap C \neq \emptyset$.
	Write $\vc(F)$ for the size of the smallest vertex cover in $F$.
	
	\begin{fact} \label{fact:mn-vc}
		For every graph $F$, $\mn(F) \leq \vc(F) \leq 2\mn(F)$.
	\end{fact}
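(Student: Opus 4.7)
The statement is the classical functional equivalence of matching number and vertex cover number that holds for any graph (not just bipartite, where König's theorem gives an exact equality). The plan is to prove each inequality by a short direct argument; no induction or extremal trickery is needed.

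For $\mn(F) \leq \vc(F)$, I would fix a maximum matching $M$ and an arbitrary vertex cover $C$. Every edge in $M$ must have at least one endpoint in $C$ by definition of vertex cover, and since edges in $M$ are pairwise vertex-disjoint, distinct edges of $M$ contribute distinct vertices to $C$. Hence $|C| \geq |M| = \mn(F)$. Taking $C$ of minimum size gives $\vc(F) \geq \mn(F)$.

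For $\vc(F) \leq 2\mn(F)$, I would take a maximum matching $M$ and let $C \coloneqq \bigcup_{e \in M} e$ be the set of all endpoints of matched edges, so $|C| = 2|M| = 2\mn(F)$. The key claim is that $C$ is a vertex cover. Suppose towards a contradiction that some edge $e \in E(F)$ has $e \cap C = \emptyset$; then $e$ is vertex-disjoint from every edge in $M$, so $M \cup \{e\}$ is a matching of size $|M|+1$, contradicting the maximality of $M$. Hence every edge is incident to $C$, and $\vc(F) \leq |C| = 2\mn(F)$.

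Neither step presents an obstacle; the only care needed is to note that the first inequality uses an arbitrary vertex cover together with a maximum matching, while the second uses an arbitrary edge together with a maximum matching. Both arguments are purely combinatorial and work for multigraphs as well, since loops and parallel edges only reduce both sides uniformly.
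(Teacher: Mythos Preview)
Your proof is correct and is the standard argument for this classical fact. The paper itself states this as a \textbf{Fact} without proof, so there is nothing to compare; your write-up would serve perfectly well as the omitted justification.
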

	
	\Cref{thm:mn-hdtw,fact:mn-vc} yield \cref{lem:vc-mn-hdtw}.
	
	\section{The Non-Uniform Homomorphism Distinguishing Closure}
	\label{app:cl}
	
	In this appendix, we prove some further properties of the non-uniform homomorphism distinguishing closure and conduct a comparison to the homomorphism distinguishing closure introduced by \textcite{roberson_oddomorphisms_2022}.
	
	Firstly note that it does not matter if the definition of $\cl_n(\mathcal{F})$ is replaced with the following:
	\[
	\cl_{\leq n}(\mathcal{F}) \coloneqq \{F \mid \forall G, H \text{ on $\leq n$ vertices}.\ G \equiv_{\mathcal{F}} H \implies \hom(F, G) = \hom(F, H)\}.
	\]
	
	\begin{lemma}\label{lem:cl-leq}
		For $\mathcal{F}$ as in \cref{proviso} and all $n \in \mathbb{N}$,
		$\cl_n(\mathcal{F}) = \cl_{\leq n}(\mathcal{F})$.
	\end{lemma}
	\begin{proof}
		Clearly,
		$\cl_{\leq n}(\mathcal{F}) \subseteq \cl_n(\mathcal{F})$.
		For the converse direction,
		let $G$ and $H$ be graphs on at most $n$ vertices.
		Suppose that $G \equiv_{\mathcal{F}} H$.
		By \cref{proviso}, $K_1 \in \mathcal{F}$ and hence $G$ and $H$ both have $m \leq n$ vertices.
		By \cite[Theorem~7]{seppelt_logical_2024},
		$G + (m-n)K_1 \equiv_{\mathcal{F}} H + (m-n) K_1$.
		Let $F \in \cl_n(\mathcal{F})$
		and write $F = F' + \ell K_1$ where $F'$ does not contain isolated vertices and $\ell \geq 0$.
		Then
		\begin{align*}
			\hom(F, G) 
			&= \hom(F' + \ell K_1, G)\\
			&= \hom(F', G) \cdot m^\ell \\
			&= \hom(F', G + (m-n)K_1) \cdot m^\ell \\
			&= \hom(F, G + (m-n)K_1) \cdot \frac{m^\ell}{n^\ell}
 		\end{align*}
 		Hence, $\hom(F, G) = \hom(F, H)$, i.e.\
 		$F \in \cl_{\leq n}(\mathcal{F})$.
	\end{proof}
	
	\begin{corollary}
		For $\mathcal{F}$ as in \cref{proviso} and all $n \in \mathbb{N}$,
		$\cl_{n+1}(\mathcal{F}) \subseteq \cl_n(\mathcal{F})$.
	\end{corollary}
	\begin{proof}
		By \cref{lem:cl-leq},
		$\cl_{n+1}(\mathcal{F}) = \cl_{\leq n+1}(\mathcal{F}) \subseteq \cl_{\leq n}(\mathcal{F}) = \cl_n(\mathcal{F})$
		where the middle inclusion is immediate from the definition.
	\end{proof}

	\begin{lemma}
		For $\mathcal{F}$ as in \cref{proviso},
		$\cl(\mathcal{F}) = \bigcap_{n \in \mathbb{N}} \cl_n(\mathcal{F})$. 
	\end{lemma}
	\begin{proof}
		Clearly, $\cl(\mathcal{F}) \subseteq \bigcap_{n \in \mathbb{N}} \cl_n(\mathcal{F})$.
		For the converse inclusion,
		let $F \in \bigcap_{n \in \mathbb{N}} \cl_n(\mathcal{F})$ and $G$ and $H$ such that $G \equiv_{\mathcal{F}} H$.
		Since $K_1 \in \mathcal{F}$,
		the graphs $G$ and $H$ have the same number of vertices $n$, say.
		Since $F \in  \cl_n(\mathcal{F})$,
		$G$ and $H$ have the same number of homomorphisms from $F$.
		Hence, $F \in \cl(\mathcal{F})$. 
	\end{proof}

\end{document}